\documentclass[11pt]{article}

\usepackage[margin=1in]{geometry}
\usepackage{amsmath,amssymb,amsthm}
\usepackage{mathrsfs}
\usepackage{enumitem}
\usepackage{microtype}
\usepackage{xcolor}
\usepackage{booktabs}
\usepackage{cite}

\usepackage[colorlinks=true,linkcolor=blue!50!black,citecolor=blue!50!black,urlcolor=blue!50!black]{hyperref}

\usepackage[capitalise,nameinlink]{cleveref}
\usepackage{mathtools}

\newtheorem{theorem}{Theorem}[section]

\newtheorem{definition}[theorem]{Definition}

\newtheorem{lemma}[theorem]{Lemma}
\newtheorem{remark}[theorem]{Remark}
\newtheorem{proposition}[theorem]{Proposition}
\newtheorem{corollary}[theorem]{Corollary}

\newtheorem{construction}[theorem]{Construction}

\newcommand{\F}{\mathbb F}
\newcommand{\N}{\mathbb N}
\newcommand{\R}{\mathbb R}
\newcommand{\C}{\mathbb C}

\newcommand{\cD}{\mathcal D}

\newcommand{\cF}{\mathcal F}

\newcommand{\cL}{\mathcal L}
\newcommand{\cM}{\mathcal M}
\newcommand{\cP}{\mathcal P}

\newcommand{\cV}{\mathcal V}

\newcommand{\cZ}{\mathcal Z}

\newcommand{\calL}{\mathcal L}

\newcommand{\eps}{\varepsilon}

\newcommand{\ip}[2]{\left\langle #1,#2\right\rangle}

\newcommand{\defeq}{\coloneqq}

\DeclareMathOperator{\rank}{rank}
\DeclareMathOperator{\row}{row}
\DeclareMathOperator{\im}{im}
\DeclareMathOperator{\codim}{codim}
\DeclareMathOperator{\wt}{wt}
\DeclareMathOperator{\Span}{span}

\DeclareMathOperator{\Prb}{Pr}
\DeclareMathOperator{\Exp}{\mathbb{E}}

\DeclareMathOperator{\Hom}{Hom}
\DeclareMathOperator{\Hamm}{Hamm}
\DeclareMathOperator{\poly}{poly}
\DeclareMathOperator{\Tr}{Tr}
\DeclareMathOperator{\dis}{dis}
\newcommand{\ket}[1]{\lvert #1\rangle}

\newcommand{\PCSS}{\operatorname{\mathsf{PCSS}}}
\newcommand{\Nest}{\operatorname{\mathsf{Nest}}}
\newcommand{\Flag}{\operatorname{Flag}}
\newcommand{\LDist}{\operatorname{LDist}}

\def\showauthornotes{1}

\ifnum\showauthornotes=0
\newcommand{\fnote}[1]{{\sf\small\color{blue}{[Fernando: #1]}}}
\newcommand{\nnote}[1]{{\sf\small\color{orange}{[Nikhil: #1]}}}
\newcommand{\xnote}[1]{\textcolor{green}{ {\textbf{(Xiaojuan: #1)}}}}
\else
\newcommand{\nnote}[1]{}
\newcommand{\fnote}[1]{}
\newcommand{\xnote}[1]{}
\fi

\title{From Random Quantum Codes to Explicit qLDPC Codes \\
        via Local Properties}

\author{
Fernando Granha Jeronimo\thanks{{\tt University of Illinois, Urbana-Champaign}. {\tt granha@illinois.edu}. }
\and
Xiaojuan Ma\thanks{{\tt University of Illinois, Urbana-Champaign}. {\tt xm20@illinois.edu}. }
\and
Nikhil Shagrithaya\thanks{{\tt nshagri@umich.edu}. }
}
\date{}

\begin{document}
\maketitle

\begin{abstract}
Constructing explicit codes matching the parameters of random codes has been a central and largely elusive question in coding theory. The quantum setting is even more challenging since it is highly desirable that the quantum code be an LDPC code.

Local coordinate-wise linear (LCL) [Levi, Mosheiff, and Shagrithaya, FOCS 2025] witnesses provide a unifying language for many coding-theoretic properties, from distance to list decoding and list recovery. In particular, it provides a framework to study properties of random linear codes, which achieve optimal parameters for many properties of linear codes.

For CSS quantum codes, however, a local witness has two distinct ranks: its physical rank before quotienting by stabilizers and its logical rank after quotienting. We develop a quantum version of the LCL framework for nested spaces $S \subseteq C$, in which local constraints are imposed on physical representatives while independence is measured in the logical quotient $C/S$. The resulting theory gives a threshold theorem for random CSS codes, and as a consequence shows that the per-sector rate threshold is equal to the classical rate threshold. We also define a quantum analogue of subspace design [Guruswami and Xing, STOC 2013] and show that they can be described in a natural manner within the quantum-LCL framework.

Finally, we give explicit constructions for arbitrary folded quantum-LCL properties, in a manner similar to the LCL derandomization of [Jeronimo and Shagrithaya, STOC 2026]. As a consequence, we obtain the first explicit constructions of quantum list-decodable codes and list-recoverable codes that have optimal list sizes, in addition to explicit quantum subspace design codes. We note that all our explicit constructions are qLDPC codes, an important property for quantum error-correcting codes.
\end{abstract}

\clearpage
\tableofcontents
\clearpage

\section{Introduction}\label{sec:introduction}

Random codes are known to achieve many optimal properties. However, constructing explicit such codes has been a central and largely elusive question in coding theory. The quantum setting is even more challenging since it is highly desirable that a 
quantum code be LDPC. Even the question of obtaining good quantum LDPC was only recently resolved in the breakthrough work of Panteleev and Kalachev~\cite{PK22}, whereas random quantum codes were long known to be good but not LDPC in the seminal work of Calderbank, Shor and Steane (CSS) \cite{CSS1,CSS2}.

LDPC codes feature very prominently in the field of quantum error-correction, where the locality of the parity checks is a key feature in allowing a code to be feasible for fault-tolerance \cite{Gottesman2014}. Therefore, considerable effort has been devoted towards developing quantum LDPC, or qLDPC codes. A long line of work gave various constructions starting from the seminal Kitaev toric code~\cite{Kitaev2003}, the hypergraph product code~\cite{TillichZemor2014}, constructions based on high-dimensional expanders \cite{EvraKaufmanZemor2020, KaufmanTessler2021}, fiber bundle construction \cite{HastingsHaahODonnell2021} and quasi-cyclic constructions \cite{PanteleevKalachev2022} and balanced product \cite{BreuckmannEberhardt2021}. The first asymptotically good qLDPC codes were obtained by Panteleev and Kalachev~\cite{PK22}, with subsequent unique decoding algorithms in later works~\cite{LZ22, DHLV23, GuPattisonTang2023}.

A central motivating question of this work is
\begin{center}
{\emph{Can we construct explicit quantum LDPC codes that possess many of the optimal properties of random quantum codes?}}
\end{center}

Before addressing the quantum challenges, we take a detour to through classical error correction literature which has recently developed an extensive toolkit to understand various properties of random codes and how to convert them into explicit codes. Classical coding theory has being an important source of ideas and inspiration for quantum codes, particularly through the CSS construction. Multiple code properties studied in the literature can be defined in terms of exclusion of small sets of vectors. For example, a code with minimum distance $\delta$ avoids containing pairs of vectors that have Hamming distance less than $\delta$. Other properties such as list decoding and list recovery can be defined in a similar manner.
Such code properties are commonly known as \emph{local properties}, and in a recent work, Levi, Mosheiff, and Shagrithaya~\cite{LMS25} created a framework to study them in the large alphabet regime. Known as the Local Coordinate-Wise Linear framework, it was used to prove the existence of rate thresholds for all LCL properties, in the context of random linear codes. In a later work, Jeronimo and Shagrithaya~\cite{JS26} used the Alon-Edmonds-Luby (AEL) construction procedure to obtain explicit constructions for all LCL properties. Owing to the use of AEL, these explicit constructions were LDPC codes.

A well-known object in the study of coding theory and pseudorandomness is that of subspace designs. In a seminal work, Guruswami and Xing introduced subspace designs in the study of algebraic list decoding, where one needs large collections of high-dimensional subspaces that have small total intersection with every low-dimensional test space~\cite{GX13}. Guruswami and Kopparty gave near-optimal explicit constructions over large fields~\cite{GK13}.  Subspace designs have been a central object in linear-algebraic pseudorandomness and in code constructions.  More recently, Brakensiek, Chen, Dhar, and Zhang showed that the subspace-design property simultaneously captures all LCL properties: a nearly optimal subspace-design code matches random linear codes with respect to every local property~\cite{BCDZ26a}.  The explicit subspace design codes obtained through their result, folded Reed--Solomon and univariate multiplicity codes, is over an alphabet that grows polynomially with the block length. Subsequently, Goyal, Guruswami, and Hsieh~\cite{GGH26} constructed explicit constant-alphabet subspace-design codes via the AEL framework.

Given the fundamental nature and broad applicability of LCL and subspace designs in the classical setting, it is natural to ask the following questions:
\begin{center}
\begin{itemize}
    \item[(i)] What are quantum generalizations of local properties and subspace designs? 
    \item[(ii)] Can we obtain explicit quantum CSS constructions for LCL properties and subspace designs via the local-to-global AEL paradigm?
\end{itemize}
\end{center}

In order to answer these questions, this paper first develops the quantum analogue of the LCL framework needed for CSS quantum codes. The guiding principle is:
\begin{center}
\emph{Impose local constraints on physical representatives, but measure distinctness in the logical quotient.}
\end{center}

The notion of quantum list decodable codes has also been studied in previous works, beginning with Leung and Smith~\cite{LS08}. These codes did not come with efficient list-decoding algorithms, and this was addressed in later works by Bergamaschi, Golowich, and Gunn~\cite{BGG24} with a non-LDPC construction, where a notion of quantum list recovery was introduced. Subsequently, Bergamaschi, Jeronimo, Mittal, Srivastava and Tulsiani~\cite{BJMST24} give efficient list docding algorithms for qLPDC up to the quantum Johnson bound. More recently, Gay, Jeronimo and Shukul \cite{GayJeronimoShukul27} give near-linear time list decoding at capacity for qLDPC codes. Quantum list-decodable codes were built by \cite{BGG24} as a tool to help build Approximate Quantum Error Correcting Codes, or AQECCS. These codes could correct errors all the way up to the quantum Singleton bound, with the caveat that the decoder would make an exponentially small error in the recovery process.

\paragraph{Stabilizer codes, normalizers, and syndromes.}
A \emph{stabilizer code} \(Q\) is the common \(+1\)-eigenspace of an
abelian Pauli subgroup \(\mathcal S(Q)\) containing no nontrivial
scalar operator.
Its \emph{normalizer} \(\mathcal N(Q)\) consists of the Pauli
operators commuting with every element of \(\mathcal S(Q)\), and
\(\mathcal N(Q)/\mathcal S(Q)\) describes the logical Pauli operators.

Two Pauli errors \(E,E'\) are \emph{stabilizer-equivalent} if
\(E^\dagger E'\in\mathcal S(Q)\), up to phase.
Stabilizer-equivalent errors have the same action on the code space up
to a global phase.

Fix stabilizer generators \(S_1,\ldots,S_r\).  The \emph{syndrome} of
a Pauli error \(E\) is
\(\operatorname{syn}_Q(E):=(\sigma_1,\ldots,\sigma_r)\in\F_p^r\),
where \(S_jE=\omega^{\sigma_j}ES_j\).
Syndromes are additive, and \(\operatorname{syn}_Q(E)=0\) exactly
when \(E\in\mathcal N(Q)\).  Consequently, two errors \(E,E'\) have
the same syndrome exactly when \(E^\dagger E'\in\mathcal N(Q)\), up
to phase.

\paragraph{CSS codes.}
Let \(C_1,C_2\subseteq(\F_q^s)^n\cong\F_q^{sn}\) be
\(\F_q\)-linear codes of dimensions \(k_1,k_2\), with duals taken
with respect to the standard \(\F_q\)-bilinear dot product on
\(\F_q^{sn}\), and suppose
\(C_2^\perp\subseteq C_1\) (equivalently
\(C_1^\perp\subseteq C_2\)).
For a CSS code \(Q=(C_1,C_2)\), ignoring phases, the stabilizer and normalizer
label spaces are
\(\mathcal S_Q=C_2^\perp\oplus C_1^\perp\) and
\(\mathcal N_Q=C_1\oplus C_2\), where the two components are the
\(X\)- and \(Z\)-Pauli labels.  Hence
\[
  \mathcal N_Q/\mathcal S_Q
  \cong
  (C_1/C_2^\perp)\oplus(C_2/C_1^\perp).
\]
The code encodes \(k_1+k_2-sn\) \(q\)-dimensional logical qudits and
has quantum rate \(R:=(k_1+k_2-sn)/(sn)\).
When \(k_1=k_2\), we call the CSS code \emph{balanced}; then
\(R_1:=k_1/(sn)=k_2/(sn)=(1+R)/2\).

Writing \(a=(a_1,\ldots,a_n)\) and \(b=(b_1,\ldots,b_n)\), with
\(a_i,b_i\in\F_q^s\), define the block weight of \(F=E_{a,b}\) by
\(\wt(F):=|\{i\in[n]:(a_i,b_i)\ne(0,0)\}|\).
Quantum list recovery allows a short list of possible Pauli labels at
each register.  For \(i\in[n]\), let
\(\mathcal E_i\subseteq\F_q^s\oplus\F_q^s\) satisfy
\(|\mathcal E_i|\le\ell\).
For a paired label
\(c=(a,b)\in(\F_q^s\oplus\F_q^s)^n\), define
\(\dis(c,\mathcal E):=
|\{i\in[n]:(a_i,b_i)\notin\mathcal E_i\}|\).
For \(F=E_{a,b}\), we also write
\(\dis(F,\mathcal E):=\dis((a,b),\mathcal E)\).

\paragraph{List Decoding, List Recovery.} Before stating the quantum versions of list decoding and list recovery, we state the classical versions. Let $\Hamm(x,y)$ denote the Hamming weight between two strings $x, y \in \Sigma^n$.
\begin{definition}[Classical List Decoding]
    A code $C \in \Sigma^n$ is $(\rho, L)$-list decodable if for every $z \in \Sigma^n$, the number of codewords in $c \in C$ satisfying $\Hamm(c, z) \le \rho n$ is at most $L$.
\end{definition}

\begin{definition}[Classical List recovery]
    A code $C \in \Sigma^n$ is $(\rho, \ell, L)$-list recoverable if for every input lists $S_i \subset \Sigma$, $|S_i| \le \ell, \forall i \in [n]$, the number of codewords in $c \in C$ satisfying 
    \[
        |\{ i \in [n] | c_i \not\in S_i\}| \le \rho n
    \]
    is at most $L$.
\end{definition}

The quantum versions of list decoding and list recovery are stated in terms of syndromes, rather than codewords.
\begin{definition}[Quantum list decoding;
{\cite[Definition~3.2]{BGG24}}]
\label{def:qld-intro}
A stabilizer code \(Q\) of block length \(n\) is
\((\rho,L)\)-\emph{quantum list decodable} (QLD) if, for every
possible syndrome \(\sigma\), there are at most \(L\) pairwise
stabilizer-distinct Pauli errors of weight at most \(\rho n\)
consistent with \(\sigma\).
\end{definition}

\begin{definition}[Quantum list recovery;
cf.~{\cite[Definition~B.2 and Claim~B.2]{BGG24}}]
\label{def:qlr-intro}
A stabilizer code \(Q\) of block length \(n\) is
\((\rho,\ell,L)\)-\emph{quantum list recoverable} (QLR) if, for every
syndrome \(\sigma\) and every collection of local Pauli lists
\(\mathcal E_1,\ldots,\mathcal E_n\) with
\(|\mathcal E_i|\le\ell\), there are at most \(L\) pairwise
stabilizer-distinct Pauli errors \(F\) with syndrome \(\sigma\)
satisfying \(\dis(F,\mathcal E)\le\rho n\).
\end{definition}

\paragraph{Reduction to the CSS quotient.}
The definitions above are stated in terms of Pauli errors.  In the CSS
setting, the relevant information is carried entirely by their labels.

Two Pauli errors \(E_{a,b}\) and \(E_{a',b'}\) have the same syndrome
if and only if
\((a-a',b-b')\in C_1\oplus C_2\), while they are
stabilizer-equivalent if and only if
\((a-a',b-b')\in C_2^\perp\oplus C_1^\perp\).
Thus, after translating a common-syndrome family by any one reference
error, its logically distinct candidates correspond to distinct cosets
in
\[
  \mathcal N_Q/\mathcal S_Q
  =
  (C_1\oplus C_2)/(C_2^\perp\oplus C_1^\perp).
\]

More explicitly, let
\(F^{(j)}=E_{a^{(j)},b^{(j)}}\), \(j\in[L+1]\), be pairwise
stabilizer-distinct errors with a common syndrome, and take
\(F^{(1)}\) as a reference.  Define
\(c_j:=(a^{(j)}-a^{(1)},b^{(j)}-b^{(1)})\).
Then \(c_j\in\mathcal N_Q\), and the cosets
\(c_j+\mathcal S_Q\) are pairwise distinct.

The local constraints are preserved by the same translation.
Given local Pauli lists \(\mathcal E_i\), set
\(T_i:=\mathcal E_i-(a_i^{(1)},b_i^{(1)})\).
Then \(|T_i|=|\mathcal E_i|\) and
\(\dis(c_j,T)=\dis(F^{(j)},\mathcal E)\) for every \(j\).
Hence both individual and aggregate disagreement are preserved under
the translation.
Conversely, representatives of distinct cosets in
\(\mathcal N_Q/\mathcal S_Q\) yield pairwise stabilizer-distinct
syndrome-zero Pauli errors.

Accordingly, QLR can be viewed as a classical list-recovery problem
over distinct cosets of the logical quotient.  QLD is the special case
\(\ell=1\), in which disagreement is simply block-Hamming distance
from the prescribed center.

\begin{definition}[Quantum list decoding, list recovery for CSS codes]
    For a CSS code $Q=(C_1,C_2)$, $Q$ is $(\rho, \ell, L)$-quantum list recoverable if for every collection of local Pauli lists
\(\mathcal E_1,\ldots,\mathcal E_n\) with
\(|\mathcal E_i|\le\ell\), if $c_1,\ldots,c_j \in N_Q$ have the same syndrome, satisfy \(\dis(c_i,\mathcal E)\le\rho n\) $\forall i \in [j]$, and $c_i + S_Q$ are all pairwise distinct cosets, then $j \le L$.

A CSS code $Q=(C_1,C_2)$ is $(\rho, L)$-quantum list decodable if it is $(\rho, 1, L)$-list recoverable.
\end{definition}

\subsection*{Quantum LCL}
We give a short introduction to the quantum Local Coordinate Wise-Linear (qLCL) framework.
A one-sided CSS sector is a nested pair
\[
      S\subseteq C\subseteq \F_q^n,
\]
where $S$ is the stabilizer space and $C/S$ is the logical space.  A witness matrix $A$ whose columns lie in $C$ has an ordinary row space $U$, but some nonzero linear combinations of its columns may land in $S$ and hence disappear in the quotient.  Thus a witness carries a flag $(U, T)$
\[
      T\le U,
\]
where $T$ is the logical row space and $U/T$ records stabilizer-valued directions.  If $\dim S=m_0$ and $\dim C=m_1$, the expected number of witnesses of a fixed local profile is controlled, up to constants depending only on the witness width, by
\begin{equation}\label{intro:eq:quotient-potential}
   \Phi_{m_0,m_1}(\cV,U,T)
   =\sum_{i=1}^n\dim(V_i\cap U)
     -(n-m_0)(\dim U-\dim T)-(n-m_1)\dim T .
\end{equation}
The two codimensions have different meanings.  Stabilizer-valued directions pay the codimension of $S$; logical directions pay the codimension of $C$.  When $S=0$ and $T=U$, this is exactly the LCL potential of Levi--Mosheiff--Shagrithaya~\cite{LMS25}.

As in the classical LCL theory, the full witness is not enough.  A lower-rank minor can be the first obstruction to appear.  In the quantum setting the correct minor of a flag $(U,T)$ along $W<U$ is $(W,T\cap W)$, so we use the quotient-minor gap
\[
   \Gamma_{m_0,m_1}(\cV,U,T)
   =\min_{W<U}\bigl(
      \Phi_{m_0,m_1}(\cV,U,T)-
      \Phi_{m_0,m_1}(\cV,W,T\cap W)
   \bigr).
\]
The intersection $T\cap W$ is not a convention: it is the logical row space that remains after restricting the witness to the minor.

\begin{theorem}[Informal qLCL threshold]\label{thm:informal-qlcl}
For fixed witness width and a fixed flagged local profile, a random nested pair $S\subseteq C\subseteq\F_q^n$ has the following threshold behavior.  If the quotient-minor gap is negative by $\epsilon$, then the flagged witness is absent with probability at least $1-q^{-\epsilon+O(1)}$.  If the quotient-minor gap is positive by $\epsilon$, then the flagged witness is present with probability at least $1-q^{-\epsilon+O(1)}$.  After a union bound over profiles and flags, the same criterion gives threshold theorems for LCL properties in both the parity-check and uniform nested-subspace ensembles.
\end{theorem}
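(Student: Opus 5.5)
The plan is to run the first- and second-moment argument of Levi--Mosheiff--Shagrithaya~\cite{LMS25}, with every count taken relative to the nested pair $S\subseteq C$ rather than a single code.

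\textbf{Counting witnesses of a fixed flag.} Fix a witness width $b$, a local profile $\cV=(V_1,\dots,V_n)$ with $V_i\le\F_q^b$, and a flag $T\le U\le\F_q^b$. Let $X$ count matrices $A\in\F_q^{n\times b}$ such that:
\begin{itemize}[leftmargin=*]
\item the columns of $A$ lie in $C$;
\item the $i$-th row of $A$ lies in $V_i$;
\item the row space of $A$ is exactly $U$ (read as the set of linear functionals killing the column relations);
\item the combinations that land in $S$ are exactly those indexed by $U^{\perp_b}$ plus the directions in $U$ complementary to $T$, which is encoded so that the logical row space is $T$.
\end{itemize}
The number of matrices $A$ with rows in $V_i\cap U$ is about $q^{\sum_i\dim(V_i\cap U)}$.

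For a matrix whose column space splits as a $(\dim U-\dim T)$-dimensional part required to lie in $S$ and a $\dim T$-dimensional part required to lie in $C$, independent modulo $S$, the probability is about $q^{-(n-m_0)(\dim U-\dim T)-(n-m_1)\dim T}$. In both ensembles (uniform nested subspaces, and nested parity-check matrices) this holds up to factors $q^{O(b^2)}$: compute the probability that a fixed set of linearly independent vectors lies in $C$, with a prescribed subspace inside $S$. Hence $\Exp X=q^{\Phi_{m_0,m_1}(\cV,U,T)\pm O(1)}$. Matrices of lower rank, or those where more combinations fall into $S$, are exactly the degenerate configurations $(W,T\cap W)$ with $W<U$. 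Here I must check that restricting a witness to the sub-row-space $W$ yields logical row space $T\cap W$: a functional in $W$ gives a logical combination iff it is logical in $U$.

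\textbf{Absence when the gap is negative.} If $\Gamma\le-\epsilon$, there is some $W<U$ with $\Phi(\cV,W,T\cap W)\ge\Phi(\cV,U,T)+\epsilon$. That alone does not force $X=0$. The intended reading, as in LMS, is that the property contains the flagged witness iff some minor appears; absence of the witness is guaranteed by Markov applied to the full flag. Concretely, take the minimizing $W$ and the minor map: every witness of $(U,T)$ restricts to a witness of $(W,T\cap W)$, and each witness of the minor extends in at most $q^{\sum_i[\dim(V_i\cap U)-\dim(V_i\cap W)]}$ ways. Conditioning on the minor, each extension survives with probability $q^{-(\text{codim cost of }U/W)}$. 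Thus
\[
\Exp[X\mid \text{minor}]\le q^{\Phi(U,T)-\Phi(W,T\cap W)}\le q^{-\epsilon},
\]
and summing over minor witnesses gives $\Pr[X>0]\le q^{-\epsilon+O(1)}$ by Markov on this conditional count. I will fix the precise conditional form when writing the proof; a cruder version applies Markov directly when $W=0$.

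\textbf{Presence when the gap is positive.} Use Chebyshev (the second moment). Write
\[
\Exp X^2=\sum_{(A,A')}\Pr[A,A'\text{ both witnesses}].
\]
Group pairs by the joint row space of the concatenated $n\times 2b$ matrix, with flag $(U_2,T_2)$ where $U_2\le U\oplus U$. The pair is again a witness of the doubled profile, so it contributes about $q^{\Phi(\cV\oplus\cV,U_2,T_2)}$. Decompose $U_2$ via the intersection of the two copies: $U_2$ contains the graph-type subspace, and the overlap corresponds to some $W\le U$ with logical part $T\cap W$. A modular (submodularity) calculation gives
\[
\Phi(\cV\oplus\cV,U_2,T_2)\le 2\Phi(U,T)-\Phi(W,T\cap W).
\]
The key point is that $\dim(V_i\cap\cdot)$ is supermodular-compatible, and the codimension terms are additive in $(\dim U-\dim T,\dim T)$ because $T_2\cap(\text{overlap})$ behaves correctly. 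So $\operatorname{Var}X/(\Exp X)^2\le\sum_{W<U}q^{-\Gamma}\cdot q^{O(1)}\le q^{-\epsilon+O(1)}$, with the $W=U$ term giving the $1/\Exp X$ contribution, also bounded since $\Phi(U,T)\ge\Gamma$ (take $W=0$).

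\textbf{Main obstacle.} I expect the hardest step to be this second-moment inequality for flags, in particular verifying that the logical rank of the joint configuration decomposes as
\[
\dim T_2=2\dim T-\dim(T\cap W)
\]
up to the stabilizer directions, and that the nested ensemble has the required near-independence. Finally, union-bound over the $q^{O(b^2)}$ profiles and flags to get threshold theorems for LCL properties.
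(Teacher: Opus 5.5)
Your overall architecture (first moment below threshold, Chebyshev above, union bound over the $q^{O(b^2)}$ flags and the profiles) matches the paper's. However, the subcritical step has a genuine gap: you treat the obstruction as a restriction of the witness to $W$, when it is the quotient of the witness by $W$. Bounding the expected number of extensions of a fixed witness of $(W,T\cap W)$ by $q^{\Phi(\cV,U,T)-\Phi(\cV,W,T\cap W)}\le q^{-\epsilon}$ is fine. But summing over minor witnesses multiplies by $\Exp[\#\text{minors}]\approx q^{\Phi(\cV,W,T\cap W)}$ and just reproduces $\Exp X\le q^{\Phi(\cV,U,T)}$. That is useless exactly when $\Phi(\cV,U,T)>0>\Gamma$, so the problem is not the ``precise conditional form'' you defer.

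The missing idea is that linear combinations of a witness's columns correspond to \emph{quotients} of its row space. Take $\rho:U\to U/W$, realize it as right multiplication by some $B\in\F_q^{b\times d}$, and set $D=AB$. Then $D$ has:
\begin{itemize}
\item columns in $C$;
\item rows in $\overline V_i=\rho(V_i\cap U)$;
\item row space $U/W$;
\item logical row space $\rho(T)$, since $K_S(D)=B^{-1}K_S(A)$, of dimension $\dim T-\dim(T\cap W)$.
\end{itemize}
Because $\dim\overline V_i=\dim(V_i\cap U)-\dim(V_i\cap W)$, this quotient flag has potential exactly $\Phi(\cV,U,T)-\Phi(\cV,W,T\cap W)\le-\epsilon$. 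A single unconditional Markov bound on the number of such $D$ then gives probability at most $q^{-\epsilon}$.

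The same sub-versus-quotient confusion breaks your second-moment display. If $\Phi(\cV,W,T\cap W)$ is the sub-flag potential, then $2\Phi(\cV,U,T)-\Phi(\cV,W,T\cap W)$ gives a ratio $q^{-\Phi(\cV,W,T\cap W)}$ against $(\Exp X)^2$, which $\Gamma$ does not control. The correct bound uses the following objects: $\Omega$ is the joint row space, $W=\ker(\pi_1|_\Omega)$ placed in the second copy, $\Theta=\Lambda_S([A\;A'])$, and $T'=\ker(\pi_1|_\Theta)$. Coordinatewise rank--nullity together with $\dim\Omega=\dim U+\dim W$ and $\dim\Theta=\dim T+\dim T'$ gives
\[
\Phi^{(2)}(\cV,\Omega,\Theta)\le\Phi(\cV,U,T)+\Phi(\cV,W,T').
\]
You do not need an exact formula for the joint logical rank, which in general is only bounded above. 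It suffices that $T'\le T\cap W$, which follows from $\pi_2(\Theta)=(K_S([A\;A'])\cap(0\oplus\F_q^b))^\perp=K_S(A')^\perp=T$. Add monotonicity of $\Phi$ in the logical space (its coefficient on $\dim T$ is $m_1-m_0\ge0$), and the gap then yields $\Phi^{(2)}\le 2\Phi(\cV,U,T)-\Gamma$.

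Your near-independence worry disappears if you work in the parity-check ensemble. There, pairs with joint row space $U\oplus U$ have independent column spaces and hence exactly zero covariance. The uniform ensemble then follows by conditioning on the full-rank event, which has probability at least $\prod_{j\ge1}(1-q^{-j})^2$.

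Finally, your estimate $\Exp X=q^{\Phi\pm O(1)}$ needs two extra hypotheses. It needs the gap, so that most matrices in $\cM_\cV(U)$ have row space exactly $U$. It also needs $\dim T\le m_1-m_0$, so that the injectivity factor is nonzero.
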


The formal statements are Theorems~\ref{qlcl:thm:fixed-type}, \ref{qlcl:thm:qlcl-threshold}, and~\ref{qlcl:thm:uniform-threshold}. Section~\ref{sec:folded-prelims} extends the framework to folded CSS codes.  Logical distance is injective-logical ($T=U$), while a pure stabilizer event has $T=0$.  Thus the flag is not a technical entity; it is the mechanism that lets one distinguish nonzero logical witnesses from stabilizer degeneracies using a single potential.

The above statements deal with single sector witnesses only. However, for most applications of CSS codes, we need to observe the joint sector behavior. Subsection~\ref{sub:joint-sec} is concerned with extending the above statments to the joint sector setting.

\subsection*{Relative CSS subspace designs}

We give our definition for quantum subspace designs.

For a folded vector alphabet $W \cong \F_q^s$, we have our folded code $S\subseteq N\subseteq W^n$. A folded coordinate of a witness is now a linear map $B \rightarrow W$.
\begin{definition}[Relative subspace design]\label{rsd:def:relative}
Let \(S\subseteq N\subseteq W^n\), let \(r\in\N\), and let \(\alpha\in[0,1]\).  The pair \((N,S)\) is an \emph{\(\alpha\)-relative subspace design up to dimension \(r\)} if for every subspace \(A\le N\) satisfying
\[
  1\le \dim A\le r,
  \qquad A\cap S=\{0\},
\]
one has
\begin{equation}\label{rsd:eq:relative-design}
   \frac1n\sum_{i=1}^n \dim(A\cap K_i)
      \le \alpha\dim A .
\end{equation}
Equivalently,
\[
   \frac1n\sum_{i=1}^n \dim\pi_i(A)
      \ge (1-\alpha)\dim A .
\]
\end{definition}

The condition \(A\cap S=0\) says that \(A\) injects into the logical quotient \(N/S\).  Thus the test ignores pure stabilizer directions but measures zeros in the actual physical representatives.

\begin{definition}[CSS subspace-design code]\label{rsd:def:css-sd}
Let \(Q=Q(C_1,C_2)\) be the CSS code defined by \(C_2\subseteq C_1\subseteq W^n\).  We say that \(Q\) is an \emph{\((\alpha_X,\alpha_Z)\)-CSS subspace design up to dimension \(r\)} if
\[
   (C_1,C_2)\quad\text{is an }\alpha_X\text{-relative subspace design up to dimension }r,
\]
and
\[
   (C_2^\perp,C_1^\perp)\quad\text{is an }\alpha_Z\text{-relative subspace design up to dimension }r.
\]
\end{definition}
Thus, a folded CSS code is a CSS subspace design only if it satisfies the relative subspace design property for both sectors. We also define an LCL profile, known as kernel profile, and show that this is equivalent to relative subspace design codes, in the sense that the containment avoidance of the kernel profiles implies that the code is a relative subspace design code.

\subsection*{Construction: derandomizing LCL via AEL}

The construction follows the Alon--Edmonds--Luby local-to-global paradigm~\cite{AEL95}.  Classical AEL-based constructions use a constant-size inner code, an expander, and an outer code to lift local distance or list-recovery guarantees to long explicit codes~\cite{GI02,KMRS17,KRZSW23}.  Jeronimo, Mittal, Srivastava, and Tulsiani showed that AEL yields explicit constant-alphabet codes approaching the generalized Singleton bound for list decoding~\cite{JMST25}.  Most directly for the present paper, the LCL derandomization framework of Jeronimo--Shagrithaya~\cite{JS26} converts LCL guarantees of random linear codes into explicit constructions, and the subsequent work of Goyal, Guruswami, and Hsieh~\cite{GGH26} uses the paradigm to build subspace design codes with constant alphabet size.

On the quantum side, AEL transforms for CSS codes were developed in~\cite{BGG24,BJMST24}; our construction follows this CSS-AEL template, and the new point is the preservation of  qLCL guarantees.  Our CSS version first derandomizes all qLCL properties.  The relative subspace-design construction is then obtained by specializing the profile class to kernel profiles.  The outer CSS code constrains the sequence of inner logical symbols, and expansion converts local LCL robustness plus outer logical distance into a global absence theorem for bad LCL witnesses. Because we use qLDPC codes for the outer code, the final explicit construction also inherits this property.

\begin{theorem}[Informal robust LCL inner gadgets]\label{thm:informal-inner}
For every fixed witness dimension $r$, field $\F_q$, bounded-entropy folded profile class, and target inner CSS rate, there are constant-size folded CSS pairs whose $X$- and $Z$-normalizer sectors are AEL-robust folded qLCL gadgets at the rate-plus-slack thresholds supplied by the random qLCL calculation.  Since the size is constant, the gadgets can be found by exhaustive search.  Specializing to kernel profiles gives AEL-robust relative subspace-design gadgets.
\end{theorem}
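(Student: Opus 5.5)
The plan is the probabilistic method over constant-size folded CSS pairs, followed by exhaustive search. Fix the witness dimension $r$, the field $\F_q$, the bounded-entropy folded profile class $\mathcal P$, and the target inner rate. Choose a constant inner block length $d$ and folding parameter $s$. Sample a balanced random nested pair $C_2\subseteq C_1\subseteq(\F_q^{s})^{d}$ whose dimensions $k_1=k_2$ match the target rate. Equivalently, sample $C_1$ from the parity-check ensemble and then a uniformly random $C_2\subseteq C_1$; the uniform nested-subspace ensemble of Theorem~\ref{qlcl:thm:uniform-threshold} is an alternative. The $X$-sector is the pair $(S,C)=(C_2^\perp,C_1)$ and the $Z$-sector is $(C_1^\perp,C_2)$. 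By symmetry of the balanced ensemble under duality, each sector is marginally a random nested pair with $\dim S=sd-k$ and $\dim C=k$. It therefore suffices to show that each sector fails to be an AEL-robust gadget with probability less than $1/2$; a union bound over the two sectors then gives positive success probability. Since $d$ and $s$ are constants, a good pair exists and can be found by brute force over all nested pairs.

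For a single sector, I would first reformulate AEL-robustness in qLCL terms. Robustness should say that for every flagged profile $(\mathcal V,U,T)\in\mathcal P$ and every local disagreement pattern of density $\eta$ allowed by the AEL slack, no witness lies in $C$ with logical row space $T$. Here the witness is measured against the rate-plus-slack threshold, so the quotient-minor gap $\Gamma_{m_0,m_1}$ is at most $-\epsilon d$ for some $\epsilon>0$ depending on the slack. The key step is to absorb the $\eta$-fraction of corrupted coordinates into the profile by replacing the affected local spaces $V_i$ with the full space $\F_q^{r}\otimes\F_q^s$. This raises $\Phi$ by at most $\eta d\, rs$, so choosing $\eta$ small relative to the slack keeps the gap negative, at $-\epsilon d/2$. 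Theorem~\ref{qlcl:thm:fixed-type} then bounds the probability that a fixed such flagged profile is present by $q^{-\epsilon d/2+O(1)}$.

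The main obstacle is the union bound. The number of flagged profiles together with corruption patterns must be $q^{o(d)}$, or at least $q^{\epsilon d/4}$. For the profiles this is exactly what bounded entropy of $\mathcal P$ should provide: the count is at most $q^{h d}$ with $h$ tunable below $\epsilon/4$ by taking $s$ large relative to $r$. The flags $(U,T)$ are at most $q^{O(r^2 s)}$, which is constant. The corruption patterns number at most $\binom{d}{\eta d}\le 2^{H(\eta)d}$, which is fine once $\eta$ is small. I expect the delicate point to be the interaction between the quotient-minor condition and the union bound over minors $W<U$. I would handle it as Theorem~\ref{qlcl:thm:qlcl-threshold} does: choose the minimizing minor $(W,T\cap W)$, which is again a flagged profile in a closure of $\mathcal P$ with the same entropy bound, and union-bound over it. Finally, specializing $\mathcal P$ to kernel profiles turns witness absence into the relative subspace-design inequality \eqref{rsd:eq:relative-design}, and robustness to the required slack, for both $(C_1,C_2)$ and $(C_2^\perp,C_1^\perp)$.
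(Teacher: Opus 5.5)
Your skeleton matches the paper's proof (Theorems~\ref{rsd:thm:random-robust-normalizer} and~\ref{rsd:thm:random-inner-css}). Both use a first-moment union bound for one sector, marginal uniformity of each sector's normalizer, a union bound over the two sectors, and exhaustive search over a constant-size universe. One notational fix: with your sector convention the nesting must be $C_2^\perp\subseteq C_1$, so you sample $C_1$ and then $C_2^\perp\subseteq C_1$ uniformly. As written, $C_2\subseteq C_1$ with $k_1=k_2$ forces $C_1=C_2$.

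Where you differ is the single-sector step. You go through the flagged threshold theorem for the random nested pair, with minimizing quotient minors and the randomness of $S$. The paper bypasses all of this.
\begin{itemize}
\item AEL-robustness (Definition~\ref{rsd:def:ael-robust}) only concerns profiles whose potential is negative at \emph{every} nonzero $U\le B^*$.
\item So the number of profile-obeying maps with row space $U$ is at most $q^{a_{\mathfrak V}(U)}<q^{sd(1-\alpha)\dim U}$.
\item Each such image lies in a uniform $N$ of dimension $\rho sd$ with probability $q^{-(1-\rho)sd\dim U+O(1)}$ (Lemmas~\ref{rsd:lem:map-count} and~\ref{rsd:lem:containment}).
\item The product is $q^{-\eps sd\dim U}$, with $S$ discarded entirely.
\end{itemize}
In your language, $\Phi^W(\mathfrak V,U,T)\le a_{\mathfrak V}(U)-(sd-m_1)\dim U<-\eps sd\dim U$ for every flag. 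So the trivial minor $W=0$ already suffices, the union over minors is superfluous, and the stabilizer cost only helps.

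What the paper's route buys is robustness for every $S\subseteq N$ simultaneously, which is why only the marginal law of each normalizer matters. Your route gives robustness only for the sampled $S$. That is still enough for existence. It also makes visible that flags with $0\neq T<U$ are covered; these are the stabilizer-kernel maps seen at a left vertex. For this you should say explicitly that your union runs over all flags with $U\neq0$, not only over flags prescribed by the class.

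Two points in your write-up need correcting, though neither is fatal.
\begin{itemize}
\item \emph{Track the fold factor.} The gap is $-\eps sd\dim U$, linear in $s$. The profile count $q^{hd}$ and the flag count do not grow with $s$; the flag count is $q^{2r^2}$, not $q^{O(r^2 s)}$, since $T\le U\le B^*$ with $\dim B\le r$. The requirement is therefore $s\ge C(h+r^2+1)/\eps$. The entropy exponent $h$ is fixed by the class ($h\le r^2$ for kernel profiles, Proposition~\ref{rsd:prop:kernel-specialization}) and is not tunable. With your normalization $-\epsilon d$, the union bound would fail once $h\ge\epsilon/4$.
\item \emph{Drop the corruption patterns.} They are not part of AEL-robustness. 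The slack between the global threshold $\alpha+\epsilon$ and the local threshold $\alpha$ is absorbed by expander mixing in Theorem~\ref{rsd:thm:ael-LCL-lift}, not by the inner gadget. Your strengthened property implies the required one, so this layer is harmless but unnecessary.
\end{itemize}
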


The formal LCL existence theorem is Theorem~\ref{rsd:thm:random-robust-normalizer}; its kernel-profile corollary is Corollary~\ref{rsd:cor:random-robust-relative-normalizer}, and the two-sided CSS gadget statement is Theorem~\ref{rsd:thm:random-inner-css}.  The robust condition is stronger than the final design condition: it allows a local map from a global bad witness to have stabilizer kernel, exactly as happens when one inspects a single left vertex of the AEL construction.

\begin{theorem}[Informal main construction]\label{thm:main-informal}
Fix a target quantum rate $R\in(0,1)$, a witness dimension $r$, an accuracy $\eps>0$, and any pair of bounded-entropy folded profile classes for the two sectors.  There is an explicit folded CSS code family with folded block length $n$, constant fold size depending only on $(R,r,\eps)$ and the class entropy exponent, and quantum rate at least $R-O(\eps)$, containing no injective-logical LCL witness for the given classes up to dimension $r$ at thresholds
\[
   \alpha_X,\alpha_Z\le \frac{1+R}{2}+O(\eps).
\]
Specializing the classes to kernel profiles, the family is an $(\alpha_X,\alpha_Z)$-CSS subspace design up to dimension $r$.

\end{theorem}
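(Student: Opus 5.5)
The proof is a CSS version of the AEL construction, following the template of \cite{BGG24,BJMST24} and the LCL derandomization of \cite{JS26}. There are three ingredients.

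\begin{itemize}
\item \emph{Outer code.} An explicit qLDPC CSS code $Q_{\mathrm{out}}=(C_1^{\mathrm{out}},C_2^{\mathrm{out}})$ over a large alphabet $\F_q^{k}$, of rate $1-\eps$ and constant logical distance $\delta_{\mathrm{out}}=\delta_{\mathrm{out}}(\eps)>0$ in both sectors. Good qLDPC codes \cite{PK22}, taken over an extension field or blocked, suffice.
\item \emph{Inner gadget.} A constant-size folded CSS pair $(C_1^{\mathrm{in}},C_2^{\mathrm{in}})$ supplied by Theorem~\ref{thm:informal-inner} (formally Theorem~\ref{rsd:thm:random-inner-css}). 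It has inner quantum rate about $R$, so each sector has rate $(1+R)/2$, and both normalizer sectors are AEL-robust folded qLCL gadgets at threshold $(1+R)/2+\eps$ for the given profile classes.
\item \emph{Expander.} A $d$-regular bipartite spectral expander $G=(L\cup R',E)$ with $\lambda/d\le\eps'$, where $d$ is the inner block length.
\end{itemize}

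The code is built in the usual way. Each left vertex carries an outer symbol, which is encoded by the inner code on its $d$ edges. The symbols are then redistributed along $G$, and each right vertex is a folded coordinate of alphabet $(\F_q^s)^d$. We use the standard CSS-AEL fact that the $X$ and $Z$ sectors remain orthogonal, i.e.\ $C_2^\perp\subseteq C_1$ after the transform. The quantum rate multiplies to at least $(1-\eps)(R-\eps)\ge R-O(\eps)$. LDPC-ness comes from the outer parity checks composed with the constant-size inner checks. The fold size $sd$ depends only on $(R,r,\eps)$ and the entropy exponent, because $d$ is dictated by the inner gadget size and the expander degree.

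The main step is a global absence argument, done separately for each sector. Suppose $A\le N$ is a global injective-logical witness of dimension at most $r$ from the profile class. Then $A\cap S=0$, and for kernel profiles
\[
\tfrac1n\sum_{v\in R'}\dim(A\cap K_v)>\alpha\dim A .
\]
Restrict $A$ to each left vertex $u$, giving a local map $\pi_u$ into the inner normalizer. Its kernel may contain stabilizer directions, which is exactly the situation the robust gadget condition is designed to handle. Next pass to the outer quotient. Injectivity of $A$ into $N/S$ and outer logical distance $\delta_{\mathrm{out}}$ together imply that, for every nonzero logical direction, at least a $\delta_{\mathrm{out}}$ fraction of left vertices see it as logically nonzero. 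A filtration/averaging argument over the subspaces of $A$, as in \cite{JS26,GGH26}, should then show that most left vertices receive a local witness of nontrivial logical rank. By robustness, each such vertex has local zero-profile at most $(\alpha+\eps)$ times its logical rank. Finally, the expander mixing lemma transfers these left-side local counts to right-side counts with error $O(\lambda/d)$, which contradicts the assumed global excess once $\eps'\ll\eps\delta_{\mathrm{out}}$. For general bounded-entropy classes, the count of $\dim(V_i\cap U)$ plays the same role, after a union over the class, which has bounded entropy so the gadget search absorbs it.

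I expect the hard part to be the averaging step. A global $A$ whose restriction at many left vertices collapses partly into stabilizers must still lose enough logical rank only on a small set of vertices. The first approach I would try is induction on $\dim A$. If the set of left vertices where $\pi_u(A)$ has deficient logical rank exceeds roughly a $1-\delta_{\mathrm{out}}$ fraction, one can extract a nonzero sub-direction that is locally logically trivial on too many vertices, contradicting outer distance. Otherwise the robust bound applies on a $1-O(\eps)$ fraction of vertices, with slack absorbed into the $O(\eps)$ threshold.

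Specializing to kernel profiles gives the CSS subspace-design conclusion via Definition~\ref{rsd:def:css-sd}. Explicitness follows from exhaustive search over constant-size gadgets together with the explicit outer code and expander.
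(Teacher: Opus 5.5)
Your ingredients match the paper: an inner gadget from Theorem~\ref{rsd:thm:random-inner-css}, an expander, and an outer CSS family of rate $1-\eps$ over the inner logical alphabet. (The paper tensors Leverrier--Z\'emor codes with $L_X^{\rm in}$; blocking would also do.) The product rate and explicitness by exhaustive search are also as in the paper. The global absence argument, however, has a genuine gap: it runs in the wrong direction.

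\textbf{Why your chain fails.} From injectivity modulo $S$ and outer distance you only get that each nonzero logical direction is nontrivial at a $\delta_{\rm out}$ fraction of left vertices, and $\delta_{\rm out}$ is small. Nothing upgrades this to ``most left vertices receive a local witness of nontrivial logical rank''. That inference is false even for ordinary codewords. Take $A$ spanned by a global word whose outer logical word has weight near the outer distance and whose inner representatives are $0$ at all other left vertices. Then the local map vanishes identically on roughly a $1-\delta_{\rm out}$ fraction of stars.
\begin{itemize}
\item Your fallback dichotomy does not bridge this. The alternative to ``more than $1-\delta_{\rm out}$ deficient'' is only ``at least $\delta_{\rm out}$ non-deficient'', not $1-O(\eps)$.
\item When $\dim A\ge 2$, different vertices may kill different directions. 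Pigeonholing then yields one direction killed on only about a $(1-\delta_{\rm out})q^{-r}$ fraction, which does not contradict outer distance.
\item The per-vertex bound you attribute to robustness (zero-profile at most $(\alpha+\eps)$ times logical rank) is not what the gadget theorem supplies. It is also not true in general, since directions in the local logical kernel can vanish on every edge of the star.
\end{itemize}

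\textbf{The missing idea: use robustness contrapositively.} The paper never needs local nontriviality.
\begin{enumerate}
\item Pass to a minimal bad witness. For kernel profiles this is Lemma~\ref{rsd:lem:minimal-bad}; for general classes it is built into Definition~\ref{rsd:def:global}. This gives an injective $M:B\to C_{1,\rm AEL}$ with $M(B)\cap C_{2,\rm AEL}=0$ whose global potential is negative at threshold $\alpha+\epsilon$ for \emph{every} nonzero $U\le B^*$.
\item For each fixed $U$, Lemma~\ref{rsd:lem:expander-avg} plus Markov on the squared deviation shows that at most a $\lambda^2/\epsilon^2$ fraction of left stars have local potential $\ge 0$ at threshold $\alpha$.
\item Take a union bound over the at most $q^{r^2}$ subspaces $U$, using $q^{r^2}\lambda^2/\epsilon^2<\delta_{\rm out}$. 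This leaves more than $(1-\delta_{\rm out})n$ left vertices whose star profile is negative on all nonzero subspaces simultaneously.
\item Robustness in the sense of Definition~\ref{rsd:def:ael-robust} then forces $q_X(T_j(B))=0$ at each such vertex, so the \emph{entire} local logical image dies there.
\item Consider the induced outer map $b\mapsto(q_X(T_j(b)))_{j\in V_L}$. Its image lies in $D_1$, meets $D_2$ trivially by injectivity of $M$ modulo $C_{2,\rm AEL}$, and vanishes on more than $(1-\delta_{\rm out})n$ coordinates. This contradicts outer logical distance; see Theorem~\ref{rsd:thm:ael-LCL-lift}.
\end{enumerate}

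The $q^{r^2}$ factor in the expander condition is missing from your condition $\eps'\ll\eps\delta_{\rm out}$. It is exactly the price of making a vertex bad for all $U$ at once. That is what lets robustness conclude total logical triviality, and it removes the partial-rank bookkeeping you identified as the hard part.
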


The precise parameter statements are present in Theorem~\ref{rsd:thm:explicit}. The subspace-design consequence is recorded in Remark~\ref{rsd:rem:explicit-design}.
As a pure distance statement, the $r=1$ kernel-profile case of Theorem~\ref{thm:main-informal} is not new: folded CSS codes with Singleton-scale relative distance are already known via quantum AEL distance amplification~\cite{BGG24,BJMST24}. The new content is the qLCL guarantee for every bounded-entropy profile class, in particular the subspace-design property for every dimension up to $r$, of which the folded block-distance bound is the one-dimensional case.

This theorem allows one to obtain explicit constructions of quantum CSS codes by showing probabilistic existence of CSS codes having the desired qLCL property. Additionally, these constructions are qLDPC codes.

We also obtain explicit quantum list decodable and list recoverable codes qLDPC codes that have optimal list sizes:
\begin{theorem}[Explicit Quantum List Decodable Codes, Informal]\label{thm:inf-list-dec}
    There exists an infinite family of qLDPC codes with rate $R_Q$ that are $(\rho, L)$-quantum list decodable, with $\rho$ satisfying:
    \[
        \rho < \frac{L}{L+1}\cdot \frac{(1-R_Q-\eps)}{2}.
    \]
    The alphabet size is a constant, and depends only on $R, \eps, L$.
\end{theorem}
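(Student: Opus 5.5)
We derive the statement from Theorem~\ref{thm:main-informal} (formally Theorem~\ref{rsd:thm:explicit}), applied to the list-decoding profile class in each CSS sector, combined with the reduction to the CSS quotient given above.

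\textbf{Step 1: Turn a bad list into an injective-logical witness.} Suppose a CSS code $Q=(C_1,C_2)$ fails to be $(\rho,L)$-quantum list decodable. By the reduction to the CSS quotient (translating by a reference error), there is a center $z$ and $L+1$ labels $c_1,\ldots,c_{L+1}\in\mathcal N_Q$ with the following properties: they lie in pairwise distinct cosets of $\mathcal S_Q$, and each is within block distance $\rho n$ of the translated center. The standard classical trick subtracts one list element, giving $L$ differences $c_j-c_1$. These are nonzero in the quotient, but they need not be linearly independent there. So we pass to a maximal subset that is linearly independent modulo $\mathcal S_Q$. This yields a witness of logical rank between $1$ and $L$ whose columns agree pairwise on many coordinates. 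Such agreements are exactly the coordinate-wise linear constraints of an LCL profile.

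The bounded-entropy profile class is the list-decoding class of \cite{LMS25,JS26}. Its profiles record, for each coordinate, which of the $L+1$ columns agree with the center there. There are at most $2^{(L+1)n}$ such patterns, so the class entropy depends only on $L$. We treat the $X$- and $Z$-labels sector-wise. A weight bound on $(a,b)$ implies the same bound on each of $a$ and $b$. Since $c_j+\mathcal S_Q$ are distinct, the $X$-components or the $Z$-components must be distinct modulo $C_2^\perp$ or $C_1^\perp$. Distinctness therefore appears in at least one sector, and we apply the single-sector argument there, after a pigeonhole over the two sectors that loses only constants in $L$.

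\textbf{Step 2: The threshold computation.} This is the main obstacle. We must check that every flagged profile with $T=U$ produced in Step~1 has rate threshold at least $1-\frac{L+1}{L}\rho$ per sector. Equivalently, $\rho<\frac{L}{L+1}(1-R_1)$ with $R_1=(1+R_Q)/2$, which gives $\rho<\frac{L}{L+1}\cdot\frac{1-R_Q}{2}$.

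Classically, the list-decoding capacity with optimal list size $L$ over large alphabets is the generalized Singleton bound $\rho<\frac{L}{L+1}(1-R)$. It is known from \cite{LMS25} that the LCL threshold of this class equals this bound. Our injective-logical case has $T=U$ and $S$ treated through the quotient.

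We would therefore verify directly that, in the quotient potential \eqref{intro:eq:quotient-potential} with $T=U$, the logical term is $(n-m_1)\dim U$. This matches the classical potential with rate $m_1/n=R_1$. Hence the per-sector threshold equals the classical one, in line with the abstract's claim that the per-sector rate threshold equals the classical rate threshold. The remaining point is to confirm that minors $(W,T\cap W)$ with $T\cap W\neq W$ cannot lower the gap. For $T=U$ we have $T\cap W=W$, so all minors are themselves injective-logical, and the classical minor analysis of the generalized Singleton bound carries over unchanged.

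\textbf{Step 3: Instantiate the construction.} Apply Theorem~\ref{thm:main-informal} with the witness dimension $r=L$ (or $L+1$), accuracy $\eps$, and thresholds $\alpha_X=\alpha_Z=(1+R_Q)/2+O(\eps)$. The resulting explicit folded CSS family has rate at least $R_Q-O(\eps)$, constant fold size depending only on $(R_Q,L,\eps)$, and no bad witnesses. By Step~1 it is therefore $(\rho,L)$-quantum list decodable for every $\rho$ below the stated bound, after rescaling $\eps$. The qLDPC property is inherited from the outer qLDPC code used in the CSS-AEL template.
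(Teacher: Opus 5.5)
Your Steps 2 and 3 match the paper's endgame: take $r=L+1$, use that for $T=U$ the per-sector threshold is the classical one, and feed the slack into Theorem~\ref{rsd:thm:explicit}. \textbf{Step~1 has a genuine gap}, however, and it is exactly the point the paper's joint-sector theory is built to handle.

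Pairwise distinctness of the cosets $c_j+\mathcal S_Q$ does not give $L+1$ pairwise distinct cosets in any single sector. For each pair $j\neq k$ distinctness holds in the $X$- or the $Z$-sector, but which sector depends on the pair. With $L+1=4$, take labels $(x,z)$ with $x\in\{x_1,x_2\}$ and $z\in\{z_1,z_2\}$, all four combinations occurring: they are jointly distinct, yet each sector sees only two classes.

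In general you only get $k_Xk_Z\ge L+1$, hence about $\sqrt{L+1}$ classes in some sector. So a per-sector list-size-$L$ guarantee yields only joint list size about $L^2$, which is precisely the \cite{BGG24} loss the theorem claims to avoid. This is not a constant loss. To get joint list size $L$ this way you would need per-sector list size $L'\approx\sqrt{L+1}-1$, and the factor $\frac{L'}{L'+1}$ falls strictly below $\frac{L}{L+1}$.

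Separately, passing to a maximal subset of differences that is independent modulo $\mathcal S_Q$, in order to force $T=U$, discards information. The omitted columns are combinations of the kept ones only up to a stabilizer. So, unlike the classical case $S=0$, their agreement constraints cannot be transported to the kept columns. What remains can be a witness for a shorter list, which the $(\rho,L)$ threshold does not forbid.

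The paper avoids both problems by never selecting a sector or an independent subset. A bad list gives a common-profile joint witness $(A_X,A_Z)$ (Definition~\ref{def:join-realize}). Both sector matrices obey the same profile $\cV$, since the agreement pattern with the center is shared by the two components. Each carries its own flag $(U_s,T_s)$, where $T_s\le U_s$ may be strict, and only $T_X+T_Z\in\LDist_b$ is required.

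Supermodularity of the profile mass, together with the flag bound of Proposition~\ref{prop:sub-one-sector-comparison} ($R^{\rm flag}_s(\cV;U,T)\ge R_{\rm cl}(\cV;U)$), gives Theorem~\ref{thm:sub-asymmetric-sector-reduction}:
\[
R_{\rm cl}(\cV;U_X+U_Z)\le\max\{R^{\rm flag}_X(\cV;U_X,T_X),R^{\rm flag}_Z(\cV;U_Z,T_Z)\}.
\]
So some sector's flagged witness is subcritical, even though that witness need not be a distinct list nor injective-logical. This yields Theorem~\ref{thm:sub-joint-sector-preservation}, whose list-decoding case is Corollary~\ref{cor:list-dec-joint}: $R^{\rm joint}(\cF_{\rm LD})=1-\frac{L+1}{L}\rho$.

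The paper's proof then adds slack to this joint threshold at per-sector rate $(1+R_Q)/2$ to obtain the robust inner gadgets, and invokes Theorem~\ref{rsd:thm:explicit} with $r=L+1$. To repair your argument, replace the pigeonhole over sectors and the subset selection with this joint-sector threshold preservation.
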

We compare this result to previous constructions. \cite{BGG24} obtained explicit list decodable codes of comparable radius, but their list size was polynomial in the block length, while ours is a constant.
The explicit codes in \cite{BJMST24} were list decodable for radius upto the Johnson bound $J((1-R)/2)$, which is strictly smaller than the bound we have on the radius. We mention that all three constructions yield qLDPC codes over constant alphabet sizes.

\begin{theorem}[Explicit Quantum List Recoverable Codes, Informal]\label{thm:inf-list-rec}
    There exists an infinite family of qLDPC codes that are $(\rho, \ell, L)$-quantum list recoverable such that
    \[
        \rho \le (1-R_Q-\eps)/2 \quad L \le (\ell/\eps)^{O(R/\eps)}.
    \]
    The alphabet size is a constant that only depends on $R, \eps, \ell, L$.
\end{theorem}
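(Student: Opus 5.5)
The plan is to derive this from the main construction, Theorem~\ref{thm:main-informal}, applied to a suitable folded profile class. The first step is to encode quantum list recovery as an LCL property on each sector. By the reduction to the CSS quotient, a violation of $(\rho,\ell,L)$-QLR is a tuple $c_1,\ldots,c_{L+1}\in\mathcal N_Q$ (after translating by a reference error) whose cosets modulo $\mathcal S_Q$ are pairwise distinct and with $\dis(c_j,T)\le\rho n$ for shifted lists $|T_i|\le\ell$. I will work sector by sector. On a fixed sector the disagreement of a pair $(a_i,b_i)$ is at least that of its $X$-part, so it suffices to bound $X$-label agreement with projected lists of size at most $\ell$. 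The standard classical trick then applies: for each coordinate, record which codewords agree with which list element, and choose a maximal subcollection that is linearly independent in the logical quotient. Following the LCL treatment of list recovery in~\cite{LMS25,JS26}, this yields a witness matrix of width at most $r=O(L)$ whose local profile consists of coordinate-wise equalities $c_j(i)=c_{j'}(i)$ among codewords falling in a common list element. Because of the flag $(U,T)$, the independence is measured in $C/S$, and so the witness is injective-logical after passing to differences. The number of such profiles is $\ell^{O(rn)}$ up to $\binom{n}{\rho n}$ factors, which is a bounded-entropy folded class.

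The second step is the random threshold computation, which supplies the rate-plus-slack parameters that Theorem~\ref{thm:main-informal} requires. Using the potential \eqref{intro:eq:quotient-potential} with $T=U$ and the threshold theorem (Theorem~\ref{thm:informal-qlcl}) on a balanced sector of rate $R_1=(1+R_Q)/2$, I need the minor gap to be negative for every flagged profile. This is exactly the classical random-linear-code list recovery computation at rate $R_1$. In the folded setting it gives list recovery radius $1-R_1-\eps=(1-R_Q)/2-\eps$, with list size $(\ell/\eps)^{O(R/\eps)}$, as in the known random linear code bounds for list recovery over large folded alphabets. The per-sector rate equality with the classical threshold, stated in the abstract, is what makes the substitution $1-R_1=(1-R_Q)/2$ legitimate.

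The final step is to invoke the explicit construction with these two identical classes for the $X$ and $Z$ sectors and to read off the parameters. The result has rate at least $R_Q-O(\eps)$, constant alphabet depending on $R,\eps,\ell,L$, and the qLDPC property inherited from the outer code. Rescaling $\eps$ then gives the claimed statement.

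I expect the main obstacle to be the first step: verifying that a logically distinct list maps to an \emph{injective-logical} witness of bounded width whose profile lies in the class. The difficulty is that distinct cosets of $\mathcal N_Q/\mathcal S_Q$ need not be distinct in each sector separately. The approach I will try first is to pick a sector in which at least $\sqrt{L}$ (or $L/\ell$-many, after pigeonholing) of the candidates remain distinct, and then run the classical pruning argument in that sector's quotient $C/S$. The cost of this pigeonholing is absorbed into the $(\ell/\eps)^{O(R/\eps)}$ list size bound.
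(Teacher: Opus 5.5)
Your skeleton matches the paper's proof of Theorem~\ref{thm:exp-list-rec}: a list-recovery profile class of width $r=L+1$, then the random threshold at sector rate $(1+R_Q)/2$ to get robust inner gadgets, then Theorem~\ref{rsd:thm:explicit}. You diverge at the joint-sector step you call the main obstacle. The paper does not pigeonhole. It uses joint realizations $(A_X,A_Z)$ of a common profile with $T_X+T_Z\in\LDist_b$ (Definition~\ref{def:join-realize}), so the candidates need not be distinct in either sector. Supermodularity of $a_\cV$ then gives $R_{\rm cl}(\cV;U_X+U_Z)\le\max\{R^{\rm flag}_X,R^{\rm flag}_Z\}$ (Theorem~\ref{thm:sub-asymmetric-sector-reduction}), hence $R^{\rm joint}=R_{\rm cl}$ (Theorem~\ref{thm:sub-joint-sector-preservation}, Corollary~\ref{cor:list-rec-joint}). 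That is the actual role of the ``per-sector threshold equals classical threshold'' result; the identity $1-R_1=(1-R_Q)/2$ you attribute to it is just arithmetic.

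Your pigeonhole is still valid. The map $j\mapsto(a^{(j)}+C_2^\perp,\,b^{(j)}+C_1^\perp)$ is injective, so if $m_X,m_Z$ count the distinct sector cosets then $m_Xm_Z\ge L+1$, and one sector has at least $\sqrt{L+1}$ of them. The ``$L/\ell$'' alternative has no justification. Pigeonholing squares the list size, which $(\ell/\eps)^{O(R/\eps)}$ absorbs, so it suffices for the informal statement. But it reintroduces exactly the $L$ versus $L^2$ loss of \cite{BGG24} that the paper's route is designed to avoid. It also cannot give the exact list size in Theorem~\ref{thm:exp-list-rec}, or the factor $\frac{L}{L+1}$ in Theorem~\ref{thm:inf-list-dec}; you would only get about $1-1/\sqrt{L+1}$.

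The step that is actually false is ``the witness is injective-logical after passing to differences.'' Pairwise distinct cosets give only $\Lambda_S(A)\in\LDist_b$ (Lemma~\ref{qlcl:lem:logical-distinctness}), not $\Lambda_S(A)=\row(A)$. The $L+1$ candidates may span a logical space of dimension only about $\log_q(L+1)$. Keeping a maximal logically independent subcollection instead discards the other candidates and the equality constraints they carry, so the list size is lost.

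Theorem~\ref{rsd:thm:explicit}, via Definition~\ref{rsd:def:global}, excludes only injective-logical witnesses whose right profile is negative on every nonzero subspace. So you need a real reduction, and in the quantum setting it is not the classical quotient by $\ker M$:
\begin{itemize}
\item Directions in $K_S(M)\setminus\ker M$ map to nonzero stabilizers, so they cannot simply be quotiented away.
\item Restricting $M$ to a complement of $K_S(M)$ replaces each profile space by its image under restriction. That image is generally not an equality pattern, and its masses can only increase (they are quotient-minor masses of the original profile), so negativity can fail.
\item The minimal bad minor $U/U_0$ selected as in Lemma~\ref{rsd:lem:minimal-bad} has logical row space $\rho(T)$ (Lemma~\ref{rsd:lem:folded-minor}). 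This is zero whenever $T\le U_0$, in which case the minor lies entirely in $S$ and nothing is excluded.
\end{itemize}
The argument of Theorem~\ref{rsd:thm:ael-LCL-lift} only needs $M(B)\not\subseteq S$. So what you must produce is a sub-witness that is negative on every nonzero subspace, has left-star profiles in the class, and is not contained in $S$. The paper's proof passes over this step as well, but your stated reason for it is incorrect.
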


The formal versions of these theorems are Theorems~\ref{thm:exp-list-dec}, \ref{thm:exp-list-rec} respectively.

\subsection*{Main-result map}

The introduction has stated informal versions of all main theorems.  For reference, the following table records where each statement is proved and what role it plays in the merged argument.
\begin{center}
\begin{tabular}{@{}lll@{}}
\toprule
Informal statement & Formal result & Role in the paper \\
\midrule
Theorem~\ref{thm:informal-qlcl} & Theorems~\ref{qlcl:thm:fixed-type}, \ref{qlcl:thm:qlcl-threshold}, \ref{qlcl:thm:uniform-threshold} & qLCL thresholds \\
Theorem~\ref{thm:informal-inner} & Theorem~\ref{rsd:thm:random-robust-normalizer} and Theorem~\ref{rsd:thm:random-inner-css} & Constant-size LCL gadgets \\
Theorem~\ref{thm:main-informal} & Theorem~\ref{rsd:thm:explicit}, Remark~\ref{rsd:rem:explicit-design} & Explicit LCL CSS families, \\ & & Explicit quantum \\ & & subspace design codes \\
Theorem~\ref{thm:inf-list-dec}, Theorem~\ref{thm:inf-list-rec} & Theorems~\ref{thm:exp-list-dec}, \ref{thm:exp-list-rec} & Explicit list-recoverable, \\& & list-decodable codes \\
\bottomrule
\end{tabular}
\end{center}

\subsection*{Contributions and organization}

The paper's technical contributions are as follows.
\begin{enumerate}[leftmargin=2em,label=\textup{(\arabic*)}]
\item We formulate quotient LCL properties for nested pairs $S\subseteq C$, including logical flags, quotient-minor gaps, and a sharp fixed-type threshold theorem for random nested pairs.
\item We introduce relative CSS subspace designs, prove their deterministic profile characterization, and isolate minimal bad profiles suitable for expander amplification.
\item We derandomize folded qLCL profile classes via AEL-robust constant-size inner CSS gadgets and prove the corresponding  CSS-AEL lifting theorem.
\item We obtain explicit quantum list decodable and list recoverable codes with optimal list sizes, and also obtain explicit quantum subspace designable codes.
\end{enumerate}

\section{Preliminaries}\label{sec:qlcl-prelims}

We first detail the finite-dimensional objects used in the qLCL threshold argument. It follows the LCL viewpoint of Levi--Mosheiff--Shagrithaya, but adds exactly one new piece of data: a logical row space measuring the image of a witness in a quotient.  Folded vector alphabets are introduced only after the threshold theorem has been proved, so that the probabilistic argument remains visibly the same as in the scalar setting.

Throughout, $q$ is a prime power and all vector spaces are over $\F_q$.  For a vector space $X$, $\cL(X)$ denotes its lattice of $\F_q$-linear subspaces.  We write $[n]=\{1,\ldots,n\}$.  For a matrix $A\in\F_q^{n\times b}$, $A_{i,*}$ denotes its $i$th row and $A_{*,j}$ its $j$th column. The notation $U \le V$ is used to denote that $U$ is a subspace of $V$. We use the standard nondegenerate bilinear form on $\F_q^b$ and write $W^\perp$ for the orthogonal complement of a subspace $W\leq \F_q^b$. For two linear spaces $B, W$, we use $\Hom(B,W)$ to denote the set of all linear maps from $B$ to $W$.

\subsection{Local profiles and row spaces}

\begin{definition}[Local profile]
Fix $b\in\N$.  A \emph{$b$-local profile} of length $n$ is a sequence
\[
       \cV=(V_1,\ldots,V_n),\qquad V_i\leq \F_q^b .
\]
For $U\leq \F_q^b$, define its profile mass
\begin{equation}\label{qlcl:eq:profile-mass}
       a_\cV(U) \defeq \sum_{i=1}^n \dim(V_i\cap U).
\end{equation}
Let
\[
       \cM_\cV(U) \defeq \{A\in\F_q^{n\times b}: A_{i,*}\in V_i\cap U\text{ for every }i\}.
\]
Thus $|\cM_\cV(U)|=q^{a_\cV(U)}$.  Let
\[
       \cM_\cV^{=}(U) \defeq \{A\in\cM_\cV(U):\row(A)=U\}.
\]
\end{definition}

Equivalently, if
\[
       \lambda_A:\F_q^b\to\F_q^n,
       \qquad x\mapsto Ax,
\]
then $\ker\lambda_A=\row(A)^\perp$ and $\rank(A)=\dim\row(A)$.

\begin{lemma}[Exact row-span count]\label{qlcl:lem:exact-row-count}
Let $\cV$ be a $b$-local profile and let $U\leq \F_q^b$.  Then
\[
       |\cM_\cV^{=}(U)|\leq q^{a_\cV(U)}.
\]
Moreover, if
\[
       a_\cV(U)-a_\cV(W)\geq \epsilon
       \qquad\text{for every proper }W<U,
\]
then
\[
       |\cM_\cV^{=}(U)|
       \geq q^{a_\cV(U)}\left(1-q^{-\epsilon+b^2}\right).
\]
\end{lemma}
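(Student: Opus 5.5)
The upper bound is immediate, since $\cM_\cV^{=}(U)\subseteq\cM_\cV(U)$ and $|\cM_\cV(U)|=q^{a_\cV(U)}$ by the definition of the profile mass in \eqref{qlcl:eq:profile-mass}.

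For the lower bound we pass to complements. A matrix $A\in\cM_\cV(U)$ has every row in $U$, so $\row(A)\le U$. Hence if $A\notin\cM_\cV^{=}(U)$, then $\row(A)$ is contained in some proper subspace of $U$, and therefore in some hyperplane $W<U$ (a subspace of codimension one in $U$; if $U=0$ there is nothing to prove). Conversely, if $\row(A)\le W$ and every row lies in $V_i\cap U$, then every row lies in $V_i\cap W$, so $A\in\cM_\cV(W)$. This gives the covering
\[
  \cM_\cV(U)\setminus\cM_\cV^{=}(U)\subseteq\bigcup_{W<U}\cM_\cV(W),
\]
with $W$ ranging over the proper subspaces of $U$ (or only the hyperplanes, which is enough). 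A union bound then yields
\[
  |\cM_\cV(U)\setminus\cM_\cV^{=}(U)|\le\sum_{W<U}q^{a_\cV(W)}\le N_U\,q^{a_\cV(U)-\epsilon},
\]
where $N_U$ is the number of proper subspaces of $U$ and the last step uses the gap hypothesis $a_\cV(U)-a_\cV(W)\ge\epsilon$ for each such $W$.

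It remains to bound $N_U$ by $q^{b^2}$; this is the only quantitative step. Every subspace of $U\le\F_q^b$ is the row space of some $b\times b$ matrix over $\F_q$, so the number of subspaces is at most $q^{b^2}$. If only hyperplanes are used, the count is $(q^{\dim U}-1)/(q-1)\le q^b$, which is even better. Substituting gives
\[
  |\cM_\cV^{=}(U)|\ge q^{a_\cV(U)}-q^{b^2}q^{a_\cV(U)-\epsilon}=q^{a_\cV(U)}\bigl(1-q^{-\epsilon+b^2}\bigr),
\]
as claimed. No real obstacle is expected; the only care needed is in the covering step, namely that a deficient row space forces membership in $\cM_\cV(W)$ for some proper $W$. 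That follows because the row constraint is intersected with $W$ coordinatewise, which is exactly why the profile mass uses $V_i\cap U$.
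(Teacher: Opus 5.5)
Your proof is correct and follows the paper's argument: the upper bound comes from $\cM_\cV^{=}(U)\subseteq\cM_\cV(U)$, and the lower bound from covering the rank-deficient matrices by $\bigcup_{W<U}\cM_\cV(W)$, applying a union bound with the gap hypothesis, and using that $\F_q^b$ has at most $q^{b^2}$ subspaces. Your side remark that hyperplanes suffice, giving a factor $q^{b}$ in place of $q^{b^2}$, is a valid minor sharpening that the paper does not use.
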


\begin{proof}
The upper bound is immediate.  A matrix in $\cM_\cV(U)$ whose row span is not $U$ lies in $\cM_\cV(W)$ for some proper subspace $W<U$.  The number of subspaces of $\F_q^b$ is at most $q^{b^2}$.  Hence
\[
   |\cM_\cV(U)\setminus \cM_\cV^{=}(U)|
   \leq \sum_{W<U}q^{a_\cV(W)}
   \leq q^{b^2}q^{a_\cV(U)-\epsilon}.
\]
Subtracting from $|\cM_\cV(U)|=q^{a_\cV(U)}$ proves the claim.
\end{proof}

\subsection{Nested pairs and logical row spaces}

\begin{definition}[Nested pair]
A \emph{nested pair} is an inclusion of linear codes
\[
       S\subseteq C\subseteq \F_q^n .
\]
The quotient $C/S$ is called the logical quotient.
\end{definition}

Let $A\in\F_q^{n\times b}$ have all columns in $C$.  The map $\lambda_A:\F_q^b\to C$ defined above induces a map
\[
       \overline\lambda_A:\F_q^b\to C/S.
\]
The kernel of $\overline\lambda_A$ is the space of linear combinations of the $b$ columns that are stabilizers.

\begin{definition}[Logical kernel and logical row space]\label{qlcl:def:logical-row-space}
For a nested pair $S\subseteq C$ and a matrix $A\in\F_q^{n\times b}$ whose columns lie in $C$, define
\[
       K_S(A) \defeq \lambda_A^{-1}(S) = \{x\in\F_q^b:Ax\in S\}.
\]
The \emph{logical row space} of $A$ modulo $S$ is
\[
       \Lambda_S(A)=K_S(A)^\perp\leq \F_q^b .
\]
\end{definition}

Since $\ker\lambda_A=\row(A)^\perp\subseteq K_S(A)$, one always has
\begin{equation}\label{qlcl:eq:lambda-contained-row}
       \Lambda_S(A)\leq \row(A).
\end{equation}
Also $\dim\Lambda_S(A)=\dim\im(\overline\lambda_A)$, so $\Lambda_S(A)$ records the number of independent logical directions represented by the columns of $A$.

\begin{definition}[Logical distinctness space]
Let
\[
       \LDist_b
       =\{T\leq\F_q^b:\forall j\neq k\text{ there exists }t\in T\text{ with }t_j\neq t_k\}.
\]
Equivalently, $T\in\LDist_b$ iff $e_j-e_k\notin T^\perp$ for all $j\neq k$.
\end{definition}

\begin{lemma}[Distinctness modulo stabilizers]\label{qlcl:lem:logical-distinctness}
Let $A\in\F_q^{n\times b}$ have columns in $C$.  The $b$ columns of $A$ are pairwise distinct as elements of $C/S$ iff
\[
       \Lambda_S(A)\in\LDist_b .
\]
\end{lemma}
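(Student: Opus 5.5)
The plan is to reduce the statement to a single linear-algebra fact: for a subspace $K\le\F_q^b$ and a vector $v\in\F_q^b$,
\[
v\in K \iff v\in (K^\perp)^\perp .
\]
This holds because the standard bilinear form on $\F_q^b$ is nondegenerate. Over a finite field $K^\perp$ may meet $K$ nontrivially, but $\dim K^\perp=b-\dim K$ still holds, and hence so does the double-orthogonal identity $(K^\perp)^\perp=K$. This is the only point needing a little care, and it is the step I would check first; it is standard.

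Now fix indices $j\neq k$. The columns $A_{*,j}$ and $A_{*,k}$ coincide in $C/S$ exactly when $A_{*,j}-A_{*,k}\in S$. Since
\[
A_{*,j}-A_{*,k}=A(e_j-e_k)=\lambda_A(e_j-e_k),
\]
this says $e_j-e_k\in\lambda_A^{-1}(S)=K_S(A)$. By Definition~\ref{qlcl:def:logical-row-space}, $\Lambda_S(A)=K_S(A)^\perp$, so the double-orthogonal identity turns this into
\[
K_S(A)=\Lambda_S(A)^\perp .
\]
Thus the two columns coincide in $C/S$ iff $e_j-e_k\in\Lambda_S(A)^\perp$.

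Quantifying over all pairs $j\neq k$, the columns are pairwise distinct in $C/S$ iff $e_j-e_k\notin\Lambda_S(A)^\perp$ for every $j\neq k$. This is precisely the stated equivalent form of $\Lambda_S(A)\in\LDist_b$.

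For completeness I would also verify that equivalent form directly. Since $\ip{t}{e_j-e_k}=t_j-t_k$, the condition $e_j-e_k\notin T^\perp$ holds iff some $t\in T$ has $t_j\neq t_k$. No step here is a real obstacle beyond the nondegeneracy remark above.
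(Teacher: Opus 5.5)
Your proposal is correct and follows the same route as the paper: columns $j$ and $k$ coincide in $C/S$ iff $A(e_j-e_k)\in S$, iff $e_j-e_k\in K_S(A)=\Lambda_S(A)^\perp$. The only difference is that you explicitly justify the double-orthogonal identity $(K^\perp)^\perp=K$ and the equivalent form of $\LDist_b$, both of which the paper uses without comment.
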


\begin{proof}
Columns $j$ and $k$ represent the same coset modulo $S$ iff $A(e_j-e_k)\in S$, i.e. iff $e_j-e_k\in K_S(A)=\Lambda_S(A)^\perp$.  This is exactly the negation of the defining condition for $\LDist_b$.
\end{proof}

\section{Quotient LCL properties}\label{sec:quotient-lcl-properties}

We now define the quotient analogue of a local coordinate-wise linear property.  A profile still supplies coordinate-wise row constraints, and a realization is still a bounded-width matrix whose columns lie in the ambient code.  The only new data is the logical flag: the same physical matrix is measured by its ordinary row space and by its image in the quotient by stabilizers.  This is the minimal change that makes the classical LCL formalism compatible with CSS codes.

\subsection{Flags and realizations}

\begin{definition}[Flag and feasibility]
For fixed locality $b$, define
\[
       \Flag_b=\{(U,T):T\leq U\leq\F_q^b\}.
\]
For dimensions $m_0\leq m_1$, a flag $(U,T)$ is called \emph{$(m_0,m_1)$-feasible} if
\[
       \dim(U/T)\leq m_0,
       \qquad
       \dim T\leq m_1-m_0 .
\]
It is called \emph{parity-feasible} for $(m_0,m_1)$ if the second inequality holds; the first is not required.
\end{definition}

The first feasibility inequality says that the stabilizer-valued part of a witness must fit inside $S$; the second says that its nonzero logical image must fit inside $C/S$.  In the unconditioned parity-check ensemble of Section~\ref{qlcl:sec:random-nested}, the dimension of $S$ is random, so only parity-feasibility is needed for the exact second-moment argument.  In the uniform ensemble both inequalities are unavoidable.

\begin{definition}[Flagged realization]\label{qlcl:def:flagged-realization}
Let $S\subseteq C$ be a nested pair, let $\cV$ be a $b$-local profile, and let $(U,T)\in\Flag_b$.  We say that $S\subseteq C$ \emph{realizes} $(\cV,U,T)$ if there exists a matrix $A\in\F_q^{n\times b}$ such that
\begin{enumerate}[label=(\roman*)]
    \item $A_{i,*}\in V_i$ for every $i\in[n]$;
    \item $\row(A)=U$;
    \item every column of $A$ lies in $C$;
    \item $\Lambda_S(A)=T$.
\end{enumerate}
Such an $A$ is called a realization of $(\cV,U,T)$.
\end{definition}

\begin{lemma}[Automatic feasibility in a fixed nested pair]\label{qlcl:lem:auto-feasibility}
Let $S\subseteq C\subseteq\F_q^n$ have $\dim S=m_0$ and $\dim C=m_1$.  If $S\subseteq C$ realizes $(\cV,U,T)$, then $(U,T)$ is $(m_0,m_1)$-feasible.
\end{lemma}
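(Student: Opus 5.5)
The plan is to read both inequalities off a rank--nullity computation for the two linear maps attached to a realization. Let $A$ be a realization of $(\cV,U,T)$, so $\row(A)=U$, every column of $A$ lies in $C$, and $\Lambda_S(A)=T$. Consider $\lambda_A:\F_q^b\to C$ and its composition with the quotient map, $\overline\lambda_A:\F_q^b\to C/S$. The relevant kernels are
\[
\ker\lambda_A=U^\perp,
\qquad
\ker\overline\lambda_A=K_S(A)=T^\perp .
\]
Both identities hold because the form on $\F_q^b$ is nondegenerate, so $(W^\perp)^\perp=W$ for every subspace $W$.

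\emph{Logical inequality.} By rank--nullity,
\[
\dim\im\overline\lambda_A=b-\dim T^\perp=\dim T.
\]
This image is a subspace of $C/S$, which has dimension $m_1-m_0$. Hence $\dim T\le m_1-m_0$.

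\emph{Stabilizer inequality.} Restrict $\lambda_A$ to $K_S(A)=T^\perp$. Its image lies in $S$ by the definition of $K_S(A)$. Its kernel is
\[
T^\perp\cap U^\perp=U^\perp,
\]
where the equality uses $U^\perp\subseteq T^\perp$, which follows from $T\le U$ (this is \eqref{qlcl:eq:lambda-contained-row}). Rank--nullity then gives
\[
\dim\lambda_A(T^\perp)=\dim T^\perp-\dim U^\perp=(b-\dim T)-(b-\dim U)=\dim U-\dim T.
\]
Since this image sits inside $S$, we get $\dim(U/T)\le m_0$.

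Nothing in this argument is a genuine obstacle. The one point needing care is the dimension count $\dim T^\perp-\dim U^\perp=\dim U-\dim T$. It relies on nondegeneracy of the standard form, which holds even though $T\cap T^\perp$ may be nonzero over finite fields.
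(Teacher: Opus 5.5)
Your proof is correct and matches the paper's argument: the paper also applies rank--nullity to $\lambda_A$ restricted to $K_S(A)=T^\perp$ (kernel $U^\perp$, image $\im\lambda_A\cap S\subseteq S$) to get $\dim U-\dim T\le m_0$, and uses that the image of $\im\lambda_A$ in $C/S$ has dimension $\dim T$ to get $\dim T\le m_1-m_0$. The differences are cosmetic. The paper identifies that image exactly as $\im\lambda_A\cap S$, whereas you only need containment in $S$. Also, $\ker\lambda_A=U^\perp$ follows directly from the definition of the row space rather than from nondegeneracy.
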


\begin{proof}
Let $A$ be a realization and write $B=\im\lambda_A\leq C$.  Then $\dim B=\dim U$.  The image of $K_S(A)$ under $\lambda_A$ is exactly $B\cap S$, and its kernel is $\ker\lambda_A=U^\perp$.  Since $K_S(A)=T^\perp$,
\[
       \dim(B\cap S)
       =\dim T^\perp-\dim U^\perp
       =\dim U-\dim T.
\]
Thus $\dim(U/T)\leq\dim S=m_0$.  The induced image of $B$ in $C/S$ has dimension $\dim T$, so $\dim T\leq\dim(C/S)=m_1-m_0$.
\end{proof}

\begin{definition}[Quotient LCL property]\label{qlcl:def:qlcl-property}
A \emph{$b$-quotient LCL property} of nested pairs of length $n$ is specified by a finite family $\cF$ of $b$-local profiles and a set $\cD\subseteq\cL(\F_q^b)$ of allowed logical row spaces.  A nested pair $S\subseteq C$ satisfies $\cP(\cF,\cD)$ if there exist
\[
       \cV\in\cF,
       \qquad
       (U,T)\in\Flag_b,
       \qquad
       U\neq 0,
       \qquad
       T\in\cD,
\]
such that $S\subseteq C$ realizes $(\cV,U,T)$.  Thus the all-zero matrix is not a witness.
\end{definition}

\subsection{Quotient minors}

The next lemma is the reason the threshold parameter uses $T\cap W$.

\begin{lemma}[Quotienting a flagged witness]\label{qlcl:lem:quotient-witness}
Let $S\subseteq C$ realize $(\cV,U,T)$ via $A \in \F_q^{n \times b}$.  Let $W<U$ be a proper subspace, and let
\[
       \rho:U\to \overline U
\]
be a surjective linear map with kernel $W$.  Choose a matrix $B\in\F_q^{b\times d}$, where $d=\dim\overline U$, such that right multiplication by $B$ restricts to $\rho$ on $U$.  Put $D=AB$ and define the quotient profile $\overline \cV = (\overline V_1,\ldots, \overline V_n)$:
\[
       \overline V_i=\rho(V_i\cap U)\leq \overline U\leq\F_q^d.
\]
Then $D$ is a realization, for the same nested pair, of the quotient flag
\[
       (\overline\cV,\overline U,\rho(T)).
\]
Moreover
\[
       \dim\rho(T)=\dim T-\dim(T\cap W).
\]
\end{lemma}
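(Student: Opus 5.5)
We verify the four conditions of Definition~\ref{qlcl:def:flagged-realization} for $D=AB$ with respect to $(\overline\cV,\overline U,\rho(T))$. We then compute $\dim\rho(T)$. Throughout we use only the linear-algebraic identities attached to $A$: $\row(A)=U$, $\ker\lambda_A=U^\perp$, and $K_S(A)=T^\perp$. The identification of row vectors is that the $i$th row of $D$ is $A_{i,*}B$, i.e.\ $\rho(A_{i,*})$, because $A_{i,*}\in U$ and right multiplication by $B$ restricts to $\rho$ on $U$.

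\textbf{Conditions (i)--(iii).} Since $A_{i,*}\in V_i\cap U$, we get $D_{i,*}=\rho(A_{i,*})\in\rho(V_i\cap U)=\overline V_i$, which is (i). For (ii), the rows of $A$ span $U$, so $\row(D)=\rho(\row(A))=\rho(U)=\overline U$ by surjectivity of $\rho$. For (iii), each column of $D$ is $A(Be_j)$, a linear combination of columns of $A$, and hence lies in $C$.

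\textbf{Condition (iv) and the dimension count.} This is the only step needing care. We must show $K_S(D)^\perp=\rho(T)$ inside $\F_q^d$. We have $K_S(D)=\{y\in\F_q^d: A(By)\in S\}=\{y: By\in K_S(A)=T^\perp\}$, which is the preimage $(B)^{-1}(T^\perp)$. Its orthogonal complement is $B^{\!\top}$-image of $T$ read appropriately.

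To make this precise, use the dual pairing directly. For a row vector $t\in\F_q^b$ and $y\in\F_q^d$, we have $\langle tB,y\rangle=\langle t,By\rangle$. Hence $y\perp \{tB:t\in T\}$ iff $By\in T^\perp$, i.e.\ iff $y\in K_S(D)$. So $K_S(D)=(TB)^\perp$, and therefore $\Lambda_S(D)=TB$ by nondegeneracy. Since $T\le U$, we have $TB=\rho(T)$, which gives (iv).

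Finally, $\rho|_T$ has kernel $T\cap\ker\rho=T\cap W$. Rank–nullity then gives $\dim\rho(T)=\dim T-\dim(T\cap W)$.

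The expected obstacle is only the bookkeeping of row versus column conventions in (iv). The key point is that $B$ acts on rows through right multiplication, while on columns it acts as $x\mapsto Bx$. The adjoint identity $\langle tB,y\rangle=\langle t,By\rangle$ reconciles these two actions.
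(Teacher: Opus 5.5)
Your proof is correct and follows the paper's argument essentially verbatim: the rows of $D$ are $\rho$ applied to the rows of $A$, the columns of $D$ are linear combinations of the columns of $A$, and $K_S(D)$ is the preimage of $K_S(A)=T^\perp$ under $y\mapsto By$. The only difference is that you justify $\bigl(B^{-1}(T^\perp)\bigr)^\perp = TB$ explicitly, via the adjoint identity $\langle tB,y\rangle=\langle t,By\rangle$ and nondegeneracy, whereas the paper asserts this step without comment.
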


\begin{proof}
The rows of $D$ are obtained by applying $\rho$ to the rows of $A$, hence $D_{i,*}\in\overline V_i$ and $\row(D)=\rho(\row(A))=\overline U$.  The columns of $D$ are linear combinations of the columns of $A$, so they lie in $C$.

It remains to identify the logical row space.  Let $K=K_S(A)=T^\perp$.  For $y\in\F_q^d$,
\[
       Dy\in S
       \quad\Longleftrightarrow\quad
       A(By)\in S
       \quad\Longleftrightarrow\quad
       By\in K.
\]
Thus $K_S(D)=B^{-1}K$.  Taking orthogonal complements gives
\[
       \Lambda_S(D)=(B^{-1}K)^\perp=T B=\rho(T),
\]
where the last equality uses that right multiplication by $B$ is $\rho$ on $U$ and $T\leq U$.  The kernel of $\rho|_T$ is $T\cap W$, proving the dimension formula.
\end{proof}

\section{Random nested linear pairs}\label{qlcl:sec:random-nested}

The random model used for the proof is a parity-check model.  It is slightly more flexible than the uniform nested-subspace ensemble because the parity checks are independent before conditioning on full rank; this independence is what makes the fixed-matrix and second-moment estimates exact.

Fix integers
\[
       0\leq m_0\leq m_1\leq n,
       \qquad
       a=n-m_1,
       \qquad
       c=m_1-m_0 .
\]
Thus $a$ is the codimension of $C$ and $c$ is the dimension of the logical quotient $C/S$.

\subsection{The parity-check ensemble}

\begin{definition}[Parity-check nested ensemble]\label{qlcl:def:pcss}
The ensemble $\PCSS(n,q;m_0,m_1)$ is sampled by choosing independent uniformly random matrices
\[
       H_C\in\F_q^{a\times n},
       \qquad
       H_S\in\F_q^{c\times n},
\]
and setting
\[
       C=\ker H_C,
       \qquad
       S=\ker H_C\cap\ker H_S.
\]
Then $S\subseteq C$.
\end{definition}

If $H_C$ and the stacked matrix $\begin{bmatrix}H_C\\H_S\end{bmatrix}$ have full row rank, then $\dim C=m_1$ and $\dim S=m_0$.  The unconditioned parity-check ensemble has the advantage that restrictions of the checks to fixed independent subspaces are exactly independent.  Conditioning on full rank recovers the uniform nested-subspace model in Section~\ref{qlcl:sec:uniform}.

\subsection{Potential and quotient-minor gap}

\begin{definition}[Flag cost and potential]\label{qlcl:def:potential}
For a flag $(U,T)$, write $u=\dim U$ and $t=\dim T$.  Define its random-nested-pair cost by
\begin{equation}\label{qlcl:eq:flag-cost}
       \kappa_{m_0,m_1}(U,T)
       \defeq a u+c(u-t)
       =(n-m_0)(u-t)+(n-m_1)t .
\end{equation}
For a profile $\cV$, define the qLCL potential
\begin{equation}\label{qlcl:eq:potential}
       \Phi_{m_0,m_1}(\cV,U,T)
       \defeq a_\cV(U)-\kappa_{m_0,m_1}(U,T).
\end{equation}
\end{definition}

The first expression in \eqref{qlcl:eq:flag-cost} comes from parity checks: every ordinary direction must satisfy the $a$ checks defining $C$, and each of the $u-t$ stabilizer directions must also satisfy the $c$ additional checks defining $S$ inside $C$.  The second expression says that stabilizer directions pay $\codim S=n-m_0$, whereas logical directions pay $\codim C=n-m_1$.

\begin{definition}[Quotient-minor gap]\label{qlcl:def:gamma}
For $U\neq 0$ and $T\leq U$, define
\begin{equation}\label{qlcl:eq:gamma}
       \Gamma_{m_0,m_1}(\cV,U,T)
        \defeq \min_{W<U}
       \left(
          \Phi_{m_0,m_1}(\cV,U,T)
          -\Phi_{m_0,m_1}(\cV,W,T\cap W)
       \right).
\end{equation}
The minimum is over proper subspaces $W$ of $U$.
\end{definition}

Because $0<U$ is allowed as a quotient minor, $\Gamma(\cV,U,T)\leq\Phi(\cV,U,T)$.  Hence a positive quotient-minor gap implies a positive raw potential.  The minor operation is the quotient analogue of passing from a witness to a lower-rank dependency among its columns: the ordinary row space drops from $U$ to $W$, and exactly the logical directions lying in $W$ survive, namely $T\cap W$.

\begin{lemma}[Mass separation from positive gap]\label{qlcl:lem:gamma-mass}
If $\Gamma_{m_0,m_1}(\cV,U,T)\geq \epsilon$, then for every proper $W<U$,
\[
       a_\cV(U)-a_\cV(W)\geq \epsilon .
\]
Consequently
\[
       |\cM_\cV^{=}(U)|
       \geq q^{a_\cV(U)}(1-q^{-\epsilon+b^2}).
\]
\end{lemma}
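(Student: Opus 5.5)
The plan is to unwind the definition of $\Phi$ and show that the flag-cost term can only shrink when we pass from $(U,T)$ to the quotient minor $(W,T\cap W)$. Once that is established, the mass gap dominates the potential gap, and the count of full-row-span matrices follows from Lemma~\ref{qlcl:lem:exact-row-count}.

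Fix a proper $W<U$. Write $u=\dim U$, $t=\dim T$, $w=\dim W$, and $t'=\dim(T\cap W)$. By definition,
\[
\Phi(\cV,U,T)-\Phi(\cV,W,T\cap W)=\bigl(a_\cV(U)-a_\cV(W)\bigr)-\bigl(\kappa(U,T)-\kappa(W,T\cap W)\bigr).
\]
The hypothesis says the left side is at least $\epsilon$. It therefore suffices to show $\kappa(U,T)\ge\kappa(W,T\cap W)$, because then $a_\cV(U)-a_\cV(W)\ge\epsilon$. Using the first expression in \eqref{qlcl:eq:flag-cost},
\[
\kappa(U,T)-\kappa(W,T\cap W)=a(u-w)+c\bigl((u-t)-(w-t')\bigr).
\]
We have $a=n-m_1\ge0$, $c=m_1-m_0\ge0$, and $u>w$. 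The only thing left to check is $(u-t)\ge(w-t')$, that is, $\dim(U/T)\ge\dim(W/(T\cap W))$.

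For this, note that the natural map $W\to U/T$ has kernel $W\cap T$. So $W/(T\cap W)$ injects into $U/T$, which gives the inequality. Hence the cost difference is nonnegative, and $a_\cV(U)-a_\cV(W)\ge\epsilon$ holds for every proper $W<U$.

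For the consequence, apply the second part of Lemma~\ref{qlcl:lem:exact-row-count} with this $\epsilon$. It gives $|\cM_\cV^{=}(U)|\ge q^{a_\cV(U)}(1-q^{-\epsilon+b^2})$.

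There is no real obstacle here. The only point needing care is monotonicity of the stabilizer-direction count $u-t$ under passing to minors. This is exactly where the choice $T\cap W$, rather than some other logical subspace of $W$, is used, and the injection $W/(T\cap W)\hookrightarrow U/T$ settles it.
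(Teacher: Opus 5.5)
Your proposal is correct and follows the paper's proof: both show that the flag-cost difference $\kappa(U,T)-\kappa(W,T\cap W)$ is nonnegative, deduce the mass separation, and then invoke Lemma~\ref{qlcl:lem:exact-row-count}. The only difference is cosmetic: the paper writes the cost as $(n-m_0)(u-t)+(n-m_1)t$ and simply asserts nonnegativity, whereas you use the form $au+c(u-t)$ and spell out the injection $W/(T\cap W)\hookrightarrow U/T$ that justifies it.
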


\begin{proof}
Let $T_W=T\cap W$.  The cost difference
\[
       \kappa(U,T)-\kappa(W,T_W)
       =(n-m_0)\bigl((u-t)-(\dim W-\dim T_W)\bigr)
        +(n-m_1)(t-\dim T_W)
\]
is nonnegative because $T_W=T\cap W$.  Therefore
\[
       a_\cV(U)-a_\cV(W)
       =\bigl(\Phi(U,T)-\Phi(W,T_W)\bigr)
        +\bigl(\kappa(U,T)-\kappa(W,T_W)\bigr)
       \geq \epsilon .
\]
Lemma~\ref{qlcl:lem:exact-row-count} gives the count.
\end{proof}

\subsection{A fixed matrix}

\begin{lemma}[Probability of a fixed flagged matrix]\label{qlcl:lem:fixed-matrix-prob}
Let $A\in\F_q^{n\times b}$ have row space $U$ of dimension $u$, and let $T\leq U$ have dimension $t$.  In the ensemble $\PCSS(n,q;m_0,m_1)$,
\[
 \Prb[ A_{*,1},\ldots,A_{*,b}\in C\text{ and }\Lambda_S(A)=T]
   = q^{-\kappa_{m_0,m_1}(U,T)}\,\iota(c,t),
\]
where
\[
       \iota(c,t)=
       \begin{cases}
       \displaystyle\prod_{j=0}^{t-1}(1-q^{j-c}), & t\leq c,\\[1ex]
       0, & t>c,
       \end{cases}
\]
with the convention $\iota(c,0)=1$.
\end{lemma}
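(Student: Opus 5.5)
The plan is to reduce everything to the images of the two random check matrices on the column space of $A$. Let $B=\im\lambda_A\le\F_q^n$, so $\dim B=u$, and let $\lambda_A$ identify $\F_q^b/U^\perp\cong B$. The event that all columns of $A$ lie in $C$ says exactly that $H_C$ vanishes on $B$, i.e.\ $H_CA=0$. This condition involves only $H_C$, which is uniform on $\F_q^{a\times n}$. Its restriction to the $u$-dimensional space $B$ is a uniform element of $\Hom(B,\F_q^a)$, so the probability is $q^{-au}$. This gives the first factor of $q^{-\kappa}$, and since it depends only on $H_C$, it is independent of $H_S$.

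Conditioned on $B\subseteq C$, we have $S\cap B=B\cap\ker H_S$. Hence $K_S(A)=\{x: H_SAx=0\}=\ker(H_SA)$, and $\Lambda_S(A)=\ker(H_SA)^\perp=\row(H_SA)$. The matrix $M:=H_SA\in\F_q^{c\times b}$ is obtained by applying the uniformly random map $H_S|_B\in\Hom(B,\F_q^c)$. Therefore the rows of $M$ are independent, each uniform in $U=\row(A)$. To see this, note that each row of $H_S$ is uniform in $\F_q^n$, and $h\mapsto hA$ maps $\F_q^n$ onto $U$ with fibres of equal size. So the event $\Lambda_S(A)=T$ is exactly the event that $c$ independent uniform vectors in $U$ span $T$.

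It remains to count. All $c$ rows must lie in $T$, which happens with probability $q^{-c(u-t)}$, giving the second factor of $\kappa$. Conditioned on this, the rows are uniform in $T\cong\F_q^t$. The probability that $c$ uniform vectors span $\F_q^t$ equals the probability that a uniform $t\times c$ matrix has full row rank $t$. This is $\prod_{j=0}^{t-1}(1-q^{j-c})$ when $t\le c$, and $0$ when $t>c$; this is precisely $\iota(c,t)$, with the empty product giving $1$ when $t=0$. Multiplying, the total probability is $q^{-au}\,q^{-c(u-t)}\,\iota(c,t)=q^{-\kappa_{m_0,m_1}(U,T)}\iota(c,t)$.

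The only step needing care is the identification $\Lambda_S(A)=\row(H_SA)$. This uses that $x\in K_S(A)$ iff $Ax\in\ker H_C\cap\ker H_S$, where the first membership holds automatically on the event. It also uses the duality $\ker(M)^\perp=\row(M)$ for the standard bilinear form. The remaining points, uniformity of $hA$ in $U$ and independence of $H_C$ and $H_S$, are routine.
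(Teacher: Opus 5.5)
Your proposal is correct and follows the paper's proof: the factor $q^{-au}$ comes from $H_CA=0$, then $q^{-c(u-t)}$, then $\iota(c,t)$ from $H_SA$. The only difference is that you phrase the $H_S$ step dually. You require the $c$ rows of $H_SA$, each uniform in $U$, to lie in and span $T$, whereas the paper asks the random map $\F_q^b/U^\perp\to\F_q^c$ to vanish on $T^\perp/U^\perp$ and be injective on the $t$-dimensional quotient; these conditions are equivalent.
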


\begin{proof}
The event that all columns of $A$ lie in $C$ is $H_C A=0$.  Since $\rank(A)=u$, this gives $a u$ independent linear equations on $H_C$, and has probability $q^{-au}$.

Condition on $H_C A=0$.  The matrix $H_SA$ is a uniformly random linear map
\[
       \F_q^b/\ker\lambda_A\cong \F_q^b/U^\perp\longrightarrow \F_q^c .
\]
The condition $\Lambda_S(A)=T$ is equivalent to
\[
       \ker(H_SA)=T^\perp.
\]
Modulo $U^\perp$, this says that a random map from the $u$-dimensional domain vanishes exactly on the fixed $(u-t)$-dimensional subspace $T^\perp/U^\perp$.  Vanishing on that subspace has probability $q^{-c(u-t)}$.  After quotienting by it, the induced map from a $t$-dimensional space to $\F_q^c$ must be injective, which has probability $\prod_{j=0}^{t-1}(1-q^{j-c})$ if $t\leq c$ and probability $0$ otherwise.  Multiplying by $q^{-au}$ gives the formula.
\end{proof}

\begin{corollary}[Expectation for one flag]\label{qlcl:cor:one-flag-expectation}
Let
\[
       X_{\cV,U,T}=\#\{A\in\cM_\cV^{=}(U): A_{*,j}\in C\text{ for all }j,\ \Lambda_S(A)=T\}.
\]
Then
\[
       \Exp [X_{\cV,U,T}]
       \leq q^{\Phi_{m_0,m_1}(\cV,U,T)}.
\]
If $\Gamma_{m_0,m_1}(\cV,U,T)\geq\epsilon$ and $t\leq c$, then
\[
       \Exp [X_{\cV,U,T}]
       \geq q^{\Phi_{m_0,m_1}(\cV,U,T)}
          (1-q^{-\epsilon+b^2})\,\iota(c,t).
\]
In particular, if $\epsilon\geq 2b^2+1$ and $t\leq c$, then
\[
       \Exp [X_{\cV,U,T}]\geq q^{\Phi_{m_0,m_1}(\cV,U,T)-2b^2}.
\]
All expectations are over the random ensemble $\PCSS(n,q;m_0,m_1)$.
\end{corollary}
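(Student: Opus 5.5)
The plan is to use linearity of expectation over the matrices in $\cM_\cV^{=}(U)$, with Lemma~\ref{qlcl:lem:fixed-matrix-prob} supplying the probability for each matrix. Write
\[
  \Exp[X_{\cV,U,T}]=\sum_{A\in\cM_\cV^{=}(U)}\Prb\bigl[A_{*,j}\in C\ \forall j,\ \Lambda_S(A)=T\bigr].
\]
Every $A\in\cM_\cV^{=}(U)$ has row space exactly $U$, and $T\le U$. So Lemma~\ref{qlcl:lem:fixed-matrix-prob} applies to each summand and gives the same value $q^{-\kappa_{m_0,m_1}(U,T)}\iota(c,t)$ for all of them. Hence
\[
  \Exp[X_{\cV,U,T}]=|\cM_\cV^{=}(U)|\cdot q^{-\kappa_{m_0,m_1}(U,T)}\iota(c,t).
\]

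For the upper bound, use $\iota(c,t)\le 1$ (it is either a product of factors in $[0,1]$ or equal to $0$). Combine this with $|\cM_\cV^{=}(U)|\le q^{a_\cV(U)}$ from Lemma~\ref{qlcl:lem:exact-row-count}. By \eqref{qlcl:eq:potential}, $a_\cV(U)-\kappa_{m_0,m_1}(U,T)=\Phi_{m_0,m_1}(\cV,U,T)$, which gives the bound.

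For the lower bound, assume $\Gamma_{m_0,m_1}(\cV,U,T)\ge\epsilon$. Lemma~\ref{qlcl:lem:gamma-mass} then gives $|\cM_\cV^{=}(U)|\ge q^{a_\cV(U)}(1-q^{-\epsilon+b^2})$. Substituting this into the exact formula gives the second displayed inequality, and $t\le c$ guarantees that $\iota(c,t)$ is the product and not $0$.

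The only step needing any estimate is the final clause, and it is elementary. With $\epsilon\ge 2b^2+1$ we get $q^{-\epsilon+b^2}\le q^{-b^2-1}\le 1/2$, so $1-q^{-\epsilon+b^2}\ge 1/2$. For $\iota(c,t)$ with $t\le c$, use
\[
  \prod_{j=0}^{t-1}(1-q^{j-c})\ \ge\ \prod_{i\ge1}(1-q^{-i})\ \ge\ 1-\sum_{i\ge1}q^{-i}\ \ge\ 1-\frac{1}{q-1}.
\]
This is $\ge 1/2$ when $q\ge3$. For $q=2$ the infinite product is about $0.2887\ge 1/4$.

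So the product of the two correction factors is at least $1/8$. We then need $1/8\ge q^{-2b^2}$, which holds since $b\ge1$ and $q\ge2$ give $q^{2b^2}\ge 4$. This falls short by a factor of two in the worst case $q=2$, $b=1$.

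To close that gap, use the sharper bound available there. For $q=2$, $\epsilon\ge 2b^2+1$ gives $1-q^{-\epsilon+b^2}\ge 1-2^{-b^2-1}\ge 3/4$. The required inequality becomes $(3/4)(0.2887)\ge 2^{-2b^2}$, which holds for $b\ge 2$. The case $b=1$ forces $t\le1$, so $\iota\ge 1/2$ and $(3/4)(1/2)\ge 1/4$. Collecting these cases gives $\Exp[X_{\cV,U,T}]\ge q^{\Phi_{m_0,m_1}(\cV,U,T)-2b^2}$. This small-$q$, small-$b$ bookkeeping is the only place where care is needed; otherwise the corollary follows by direct substitution.
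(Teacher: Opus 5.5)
Your proof is correct and follows the paper's route: linearity of expectation with Lemma~\ref{qlcl:lem:fixed-matrix-prob} gives the exact value $|\cM_\cV^{=}(U)|\,q^{-\kappa_{m_0,m_1}(U,T)}\iota(c,t)$, Lemma~\ref{qlcl:lem:exact-row-count} gives the upper bound, and Lemma~\ref{qlcl:lem:gamma-mass} gives the lower bound. The only difference is the final clause, where the paper avoids your case split by using $q$-dependent bounds instead of absolute constants: each factor satisfies $1-q^{j-c}\ge 1-q^{-1}\ge q^{-1}$, so $\iota(c,t)\ge q^{-t}\ge q^{-b}$, and $1-q^{-\epsilon+b^2}\ge 1-q^{-2}\ge q^{-1}$, which yields $q^{-b-1}\ge q^{-2b^2}$ uniformly for all $q\ge 2$ and $b\ge 1$.
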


\begin{proof}
The upper bound follows from $|\cM_\cV^{=}(U)|\le q^{a_\cV(U)}$ and Lemma~\ref{qlcl:lem:fixed-matrix-prob}.  The lower bound follows from Lemma~\ref{qlcl:lem:gamma-mass} and Lemma~\ref{qlcl:lem:fixed-matrix-prob}.  For the final displayed bound, note that $\iota(c,t)\geq q^{-t}\geq q^{-b}$ when $t\leq c$, and $1-q^{-\epsilon+b^2}\geq q^{-1}$ under the stated numerical assumption.  Weakening constants gives the stated $2b^2$ loss.
\end{proof}

\section{Fixed-type threshold theorem}\label{qlcl:sec:fixed-threshold}

We next prove the threshold for one fixed profile and one fixed flag.  The subcritical direction is a first-moment argument applied not necessarily to the original witness, but to the bad quotient minor that certifies the negative gap in potential.  The supercritical direction is a second-moment argument; where we need to keep track of the joint logical row space of two dependent witnesses.

\subsection{Subcritical direction}

\begin{lemma}[Potential of a quotient minor]\label{qlcl:lem:minor-potential}
Let $W<U$, set $T_W=T\cap W$, and let $\rho:U\to\overline U$ be a quotient map with kernel $W$.  Let $\overline\cV$ be the quotient profile $\overline V_i=\rho(V_i\cap U)$ and let $\overline T=\rho(T)$.  Then
\[
       \Phi_{m_0,m_1}(\overline\cV,\overline U,\overline T)
       =\Phi_{m_0,m_1}(\cV,U,T)-\Phi_{m_0,m_1}(\cV,W,T_W).
\]
\end{lemma}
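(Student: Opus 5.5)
The identity splits into a profile-mass part and a cost part, and I will verify each separately. Writing $\Phi=a_\cV-\kappa$, the statement reduces to two claims:
\[
 a_{\overline\cV}(\overline U)=a_\cV(U)-a_\cV(W),
 \qquad
 \kappa_{m_0,m_1}(\overline U,\overline T)=\kappa_{m_0,m_1}(U,T)-\kappa_{m_0,m_1}(W,T_W).
\]
Subtracting the second from the first gives the lemma.

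\emph{Mass part.} Since $\overline V_i=\rho(V_i\cap U)\le\overline U$, we have $\overline V_i\cap\overline U=\overline V_i$. The restriction $\rho|_{V_i\cap U}$ has kernel $V_i\cap U\cap W=V_i\cap W$, because $W\le U$. Rank--nullity then gives
\[
\dim\overline V_i=\dim(V_i\cap U)-\dim(V_i\cap W).
\]
Summing over $i\in[n]$ yields $a_{\overline\cV}(\overline U)=a_\cV(U)-a_\cV(W)$, using the definition \eqref{qlcl:eq:profile-mass}.

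\emph{Cost part.} Here $\dim\overline U=u-\dim W$. By the dimension formula in Lemma~\ref{qlcl:lem:quotient-witness}, or directly since $\ker\rho|_T=T\cap W$, we have $\dim\overline T=t-\dim T_W$. The cost $\kappa(U,T)=au+c(u-t)$ is linear in the pair $(\dim U,\dim T)$, so
\[
\kappa(\overline U,\overline T)=a(u-\dim W)+c\bigl((u-\dim W)-(t-\dim T_W)\bigr),
\]
which is exactly $\kappa(U,T)-\kappa(W,T_W)$.

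\emph{Where care is needed.} The only real subtlety is that the minor's logical space must be $T\cap W$, not some other subspace, for the dimension bookkeeping to match. This is precisely why $\ker(\rho|_T)=T\cap W$ enters. A second point is the ambient convention: $\overline V_i$ sits inside $\overline U\cong\F_q^d$, and the mass of $\overline\cV$ is evaluated at the full space $\overline U$. Once these are fixed, both parts are direct rank--nullity computations.
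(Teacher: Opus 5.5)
Your proof is correct and follows the same route as the paper: the coordinatewise rank--nullity identity $\dim\overline V_i=\dim(V_i\cap U)-\dim(V_i\cap W)$ gives the mass part, and the dimension identities $\dim\overline U=\dim U-\dim W$, $\dim\overline T=\dim T-\dim(T\cap W)$ make the flag cost subtract in the same way. You simply spell out the kernel computation $\ker(\rho|_{V_i\cap U})=V_i\cap W$ and the linearity of $\kappa$, which the paper leaves implicit.
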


\begin{proof}
For every coordinate $i$,
\[
       \dim\overline V_i
       =\dim(V_i\cap U)-\dim(V_i\cap W).
\]
Therefore $a_{\overline\cV}(\overline U)=a_\cV(U)-a_\cV(W)$.  Also
\[
       \dim\overline U=\dim U-\dim W,
       \qquad
       \dim\overline T=\dim T-\dim(T\cap W),
\]
so the flag cost subtracts in the same way.  Combining these identities proves the claim.
\end{proof}

\begin{proposition}[Subcritical fixed flag]\label{qlcl:prop:fixed-subcritical}
If $U\neq0$, $T\leq U$, and
\[
       \Gamma_{m_0,m_1}(\cV,U,T)\leq -\epsilon,
\]
then, for $(S,C)\sim\PCSS(n,q;m_0,m_1)$,
\[
       \Prb[(S,C)\text{ realizes }(\cV,U,T)]\leq q^{-\epsilon}.
\]
\end{proposition}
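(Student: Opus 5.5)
The plan is a first-moment bound, applied not to the original witness but to the quotient minor that certifies the negative gap. Since $\Gamma_{m_0,m_1}(\cV,U,T)\leq-\epsilon$ is a minimum over the finitely many proper subspaces $W<U$, I fix a minimizer $W<U$ with
\[
   \Phi_{m_0,m_1}(\cV,U,T)-\Phi_{m_0,m_1}(\cV,W,T\cap W)\leq-\epsilon .
\]
I then choose a quotient map $\rho:U\to\overline U\cong\F_q^d$ with kernel $W$, where $d=\dim U-\dim W\geq1$. Next I fix a matrix $B\in\F_q^{b\times d}$ whose right multiplication restricts to $\rho$ on $U$. This is possible by extending $\rho$ arbitrarily to $\F_q^b$.

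The first step is a reduction. Suppose $(S,C)$ realizes $(\cV,U,T)$ via some $A$. By Lemma~\ref{qlcl:lem:quotient-witness}, $D=AB$ is then a realization of $(\overline\cV,\overline U,\rho(T))$ for the same nested pair. In particular $D\in\cM_{\overline\cV}^{=}(\overline U)$, all columns of $D$ lie in $C$, and $\Lambda_S(D)=\rho(T)$. Hence
\[
   \Prb[(S,C)\text{ realizes }(\cV,U,T)]\leq\Prb[X_{\overline\cV,\overline U,\rho(T)}\geq1]\leq\Exp[X_{\overline\cV,\overline U,\rho(T)}],
\]
where the last inequality is Markov. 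The construction of $\overline\cV$, $\rho$ and $B$ depends only on $(\cV,U,T,W)$ and not on the random pair, so this is a legitimate first-moment quantity.

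The second step applies the upper bound in Corollary~\ref{qlcl:cor:one-flag-expectation}. This bound holds for any profile and flag, and in ambient dimension $d$ in place of $b$. It gives
\[
   \Exp[X_{\overline\cV,\overline U,\rho(T)}]\leq q^{\Phi_{m_0,m_1}(\overline\cV,\overline U,\rho(T))}.
\]
By Lemma~\ref{qlcl:lem:minor-potential}, this exponent equals $\Phi(\cV,U,T)-\Phi(\cV,W,T\cap W)\leq-\epsilon$, which yields the bound $q^{-\epsilon}$.

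The only point needing care is whether the earlier results transfer verbatim to locality $d$ and to the quotient profile. They do: $\overline V_i\leq\F_q^d$, and the fixed-matrix probability in Lemma~\ref{qlcl:lem:fixed-matrix-prob} is stated for an arbitrary matrix and flag. I also need the quotient flag to have nonzero row space, which holds because $W$ is proper in $U$. No other obstacle is expected; the argument is a direct chaining of the preceding lemmas.
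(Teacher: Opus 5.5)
Your proposal is correct and follows essentially the same route as the paper's proof. You pick a proper $W<U$ certifying the negative gap, and use Lemma~\ref{qlcl:lem:quotient-witness} to push any realization to a realization of the quotient flag $(\overline\cV,\overline U,\rho(T))$. You then bound the probability of that event by the first-moment estimate of Corollary~\ref{qlcl:cor:one-flag-expectation}, with the exponent identified via Lemma~\ref{qlcl:lem:minor-potential}. Your version differs only in spelling out the Markov step and the transfer to locality $d$, which the paper leaves implicit.
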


\begin{proof}
Choose $W<U$ such that
\[
       \Phi(\cV,U,T)-\Phi(\cV,W,T\cap W)\leq -\epsilon.
\]
By Lemma~\ref{qlcl:lem:quotient-witness}, any realization of $(\cV,U,T)$ gives a realization of the quotient profile and quotient flag associated with $W$.  By the first-moment upper bound in Corollary~\ref{qlcl:cor:one-flag-expectation} and Lemma~\ref{qlcl:lem:minor-potential}, the probability that this quotient realization exists is at most
\[
       q^{\Phi(\cV,U,T)-\Phi(\cV,W,T\cap W)}\leq q^{-\epsilon}.
\]
\end{proof}

\subsection{Joint flags and second moment}

For a fixed profile $\cV$ and flag $(U,T)$, let $X=X_{\cV,U,T}$ be the random variable from Corollary~\ref{qlcl:cor:one-flag-expectation}.  We bound the contribution to the second moment from dependent pairs. For subspaces $U, V \le \F_q^b$, let $U \oplus V$ be the subspace of $\F_q^{2b}$ defined as:
\[
    U \oplus V \defeq \{(u, v) \in \F_q^{2b} \mid u \in U, v \in V\}.
\]

For $\Omega\leq U\oplus U$, define
\[
       a_\cV^{(2)}(\Omega)
       =\sum_{i=1}^n\dim\bigl((V_i\oplus V_i)\cap\Omega\bigr).
\]
For a joint flag $\Theta\leq\Omega$, define
\[
       \Phi^{(2)}(\cV,\Omega,\Theta)
       =a_\cV^{(2)}(\Omega)-a\dim\Omega-c(\dim\Omega-\dim\Theta).
\]
This is the same potential as \eqref{qlcl:eq:potential}, but for the doubled witness width.

Let $\Omega\leq U\oplus U$ and suppose the projection onto the first copy $\pi_1:\Omega\to U$ is surjective.  Let $\ker \pi_1$ be isomorphic to some subspace $W \le U$ according to the linear isomorphism $\phi\colon \ker \pi_1 \rightarrow W$, where we identify $W$ as a subspace of the second copy of $U$. This is possible because $\pi_1$ is surjective, and therefore $\dim \ker \pi_1 \le \dim U$. Let $\Theta\leq\Omega$ satisfy $\pi_1(\Theta)=T$. Let 
\[
       T'=\phi(\ker(\pi_1|_\Theta))\leq W.
\]

\begin{lemma}[Pair submodularity]\label{qlcl:lem:pair-submodularity}
Let $U, \Omega, W, \Theta, T, T'$ be as defined above.
Then
\begin{equation}\label{qlcl:eq:pair-submod}
       \Phi^{(2)}(\cV,\Omega,\Theta)
       \leq
       \Phi_{m_0,m_1}(\cV,U,T)
       +\Phi_{m_0,m_1}(\cV,W,T').
\end{equation}
\end{lemma}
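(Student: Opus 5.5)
The plan is to split the inequality \eqref{qlcl:eq:pair-submod} into a cost identity and a coordinatewise mass inequality. Write $u=\dim U$, $t=\dim T$, $w=\dim W$ and $t'=\dim T'$. Then the claim follows by summing over $i$ once we show
\[
  \dim\Omega=u+w,\qquad \dim\Theta=t+t',\qquad
  \dim\bigl((V_i\oplus V_i)\cap\Omega\bigr)\le \dim(V_i\cap U)+\dim(V_i\cap W)\ \ \forall i .
\]
Given these, the right-hand side of \eqref{qlcl:eq:pair-submod} is $a_\cV(U)+a_\cV(W)-a(u+w)-c\bigl((u+w)-(t+t')\bigr)$. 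The cost part of this equals $a\dim\Omega+c(\dim\Omega-\dim\Theta)$ exactly, and the mass part dominates $a^{(2)}_\cV(\Omega)$.

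\textbf{Cost identity.} Rank--nullity for $\pi_1:\Omega\to U$, which is surjective, gives $\dim\Omega=u+\dim\ker\pi_1=u+w$, since $\phi$ is an isomorphism onto $W$. Rank--nullity for $\pi_1|_\Theta$ gives $\dim\Theta=\dim\pi_1(\Theta)+\dim\ker(\pi_1|_\Theta)=t+t'$, using $\pi_1(\Theta)=T$ and the fact that $\phi$ is injective, so $\dim T'=\dim\ker(\pi_1|_\Theta)$. Substituting into the definition of $\Phi^{(2)}$ turns the cost terms into precisely $\kappa_{m_0,m_1}(U,T)+\kappa_{m_0,m_1}(W,T')$.

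\textbf{Mass inequality.} Fix $i$ and set $\Omega_i=(V_i\oplus V_i)\cap\Omega$. Apply rank--nullity to $\pi_1|_{\Omega_i}$. Its image consists of first coordinates of vectors in $\Omega$ whose first coordinate lies in $V_i$, so it lies in $V_i\cap\pi_1(\Omega)=V_i\cap U$. Its kernel is $\Omega_i\cap\ker\pi_1$, which consists of vectors $(0,v)\in\Omega$ with $v\in V_i$.

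Here one must use that $\phi$ is the natural identification $\phi(0,v)=v$. This is the intended meaning of ``identify $W$ as a subspace of the second copy of $U$'': $\ker\pi_1=\{0\}\oplus W$ with $W=\{v:(0,v)\in\Omega\}$. Under this identification the kernel of $\pi_1|_{\Omega_i}$ is isomorphic to $V_i\cap W$. Hence $\dim\Omega_i\le\dim(V_i\cap U)+\dim(V_i\cap W)$, and summing over $i\in[n]$ gives $a^{(2)}_\cV(\Omega)\le a_\cV(U)+a_\cV(W)$.

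\textbf{Main obstacle.} The only delicate point is the interpretation of $\phi$. If $\phi$ were an arbitrary abstract isomorphism, the kernel of $\pi_1|_{\Omega_i}$ would have no reason to map into $V_i$, and the mass bound could fail. I would therefore state explicitly that $\phi$ is the canonical map $(0,v)\mapsto v$, which also makes $T'=\{v:(0,v)\in\Theta\}$. With that convention every step is a rank--nullity computation. The inequality is then the genuine submodularity content (the profile masses are subadditive along the projection), while the flag costs are additive with equality.
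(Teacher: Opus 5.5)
Your proposal is correct and follows essentially the same route as the paper. Both arguments apply rank--nullity to $\pi_1$ on $(V_i\oplus V_i)\cap\Omega$, bounding the image by $V_i\cap U$ and the kernel by $V_i\cap W$. Both then sum over $i$ and substitute the dimension identities $\dim\Omega=\dim U+\dim W$ and $\dim\Theta=\dim T+\dim T'$. Your insistence that $\phi$ be the canonical map $(0,v)\mapsto v$ is the convention the paper uses implicitly when it writes $V'\cap\ker\pi_1\cong 0\oplus(V_i\cap W)$, so making it explicit is a worthwhile clarification rather than a deviation.
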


\begin{proof}
Denote $V' = V_i \oplus V_i$. By rank-nullity
\[
    \dim (\Omega \cap V') = \dim \pi_1(V') + \dim (V' \cap \ker \pi_1).
\]
Because $\pi_1$ is a projection, $\pi_1(V')$ is contained in $V_i \cap U$, hence $\dim \pi_1(V') \le \dim (V_i \cap U)$.
Moreover, because $\pi_1$ is a projection to the first copy, we have:
\[
    V' \cap \ker \pi_1 \cong 0 \oplus (V_i \cap W).
\]
Hence, 
\[
       \dim((V_i\oplus V_i)\cap\Omega)
       \leq \dim(V_i\cap U)+\dim(V_i\cap W).
\]

Summing over $i$ gives $a_\cV^{(2)}(\Omega)\leq a_\cV(U)+a_\cV(W)$.  Since $\pi_1$ is onto,
\[
       \dim\Omega=\dim U+\dim W,
       \qquad
       \dim\Theta=\dim T+\dim T'.
\]
Substitution into the doubled potential gives \eqref{qlcl:eq:pair-submod}.
\end{proof}

\begin{lemma}[Dependent-pair bound]\label{qlcl:lem:dependent-pairs}
Assume $U\neq0$, $T\leq U$, $t=\dim T\leq c$, and
\[
       \Gamma_{m_0,m_1}(\cV,U,T)\geq\epsilon .
\]
Let $Y$ be the contribution to $\Exp [X^2]$ from ordered pairs $(A,B)$ for which the joint row span of the concatenated matrix $[A\;B]$ is not $U\oplus U$.  Then
\[
       Y\leq q^{2\Phi_{m_0,m_1}(\cV,U,T)-\epsilon+8b^2}.
\]
\end{lemma}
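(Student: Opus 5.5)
We will bound $Y$ by grouping the dependent ordered pairs $(A,B)$ according to their joint data: the joint row span $\Omega=\row([A\;B])\leq U\oplus U$ and the joint logical row space $\Theta=\Lambda_S([A\;B])\leq\Omega$. Since $A,B$ both lie in $\cM_\cV^{=}(U)$ with $\Lambda_S(A)=\Lambda_S(B)=T$, the projections $\pi_1,\pi_2:\Omega\to U$ are surjective, and $\pi_1(\Theta)=T$ (the logical row space of the first $b$ columns is the projection of the joint one, by the same argument as in Lemma~\ref{qlcl:lem:quotient-witness} applied to the inclusion of the first block). The concatenated matrix $[A\;B]$ is a width-$2b$ matrix whose rows lie in $V_i\oplus V_i$ and whose columns lie in $C$. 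So by Lemma~\ref{qlcl:lem:fixed-matrix-prob} (applied in width $2b$, with $\iota\le 1$) and the count $|\{M:\row(M)\le\Omega,\ M_{i,*}\in V_i\oplus V_i\}|=q^{a^{(2)}_\cV(\Omega)}$, the contribution of a fixed $(\Omega,\Theta)$ is at most $q^{\Phi^{(2)}(\cV,\Omega,\Theta)}$. The number of pairs $(\Omega,\Theta)$ is at most $q^{(2b)^2}\cdot q^{(2b)^2}=q^{8b^2}$, which accounts for the $8b^2$ slack. It therefore suffices to show that every dependent $(\Omega,\Theta)$ satisfies $\Phi^{(2)}(\cV,\Omega,\Theta)\le 2\Phi(\cV,U,T)-\epsilon$.

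For this we invoke Lemma~\ref{qlcl:lem:pair-submodularity}. With $W\cong\ker\pi_1$ and $T'=\phi(\ker(\pi_1|_\Theta))$, it gives $\Phi^{(2)}\le\Phi(\cV,U,T)+\Phi(\cV,W,T')$. Dependence means $\Omega\ne U\oplus U$, so $\dim W=\dim\Omega-\dim U<\dim U$. Here $W$ is a subspace of the second copy of $U$, and it is proper. Since $\ker(\pi_1|_\Theta)=\Theta\cap\ker\pi_1$ and $\pi_2$ maps $\Theta$ into $T$ (again because $\Lambda_S(B)=T$), we get $T'\le T\cap W$. It remains to compare $\Phi(\cV,W,T')$ with $\Phi(\cV,W,T\cap W)$. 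The cost $\kappa(W,\cdot)=a\dim W+c(\dim W-\dim T')$ is decreasing in the logical dimension, so shrinking $T'$ from $T\cap W$ raises the cost. Hence $\Phi(\cV,W,T')\le\Phi(\cV,W,T\cap W)$. The definition of $\Gamma$ then gives $\Phi(\cV,W,T\cap W)\le\Phi(\cV,U,T)-\epsilon$. Combining these yields the desired $\Phi^{(2)}\le 2\Phi(\cV,U,T)-\epsilon$, and summing over the at most $q^{8b^2}$ joint types proves $Y\le q^{2\Phi-\epsilon+8b^2}$.

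The main obstacle is the bookkeeping in the last step: justifying that $\ker\pi_1$ embeds as a \emph{proper} subspace $W$ of $U$ via $\pi_2$, and that the surviving logical part $T'$ sits inside $T\cap W$ rather than merely inside $W$. I would check this by noting that $\ker\pi_1=\{(0,x)\in\Omega\}$, with $\phi(0,x)=x\in U$. If $(0,x)\in\Theta$, then $x\in\pi_2(\Theta)=\Lambda_S(B)=T$.

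A second point is the monotonicity direction of $\Phi$ in the logical argument. Dropping logical directions only increases the cost, since $n-m_0\ge n-m_1$, so the inequality goes the right way.

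Degenerate cases are harmless. The case $W=0$ is allowed in the minimum defining $\Gamma$, and the case $\Theta$ with $\dim\Theta>c$ contributes zero.
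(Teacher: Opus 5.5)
Your proposal is correct and follows essentially the same route as the paper's proof: you group the dependent pairs by joint type $(\Omega,\Theta)$, bound each type by $q^{\Phi^{(2)}(\cV,\Omega,\Theta)}$ via the width-$2b$ fixed-matrix probability, and then apply pair submodularity with $T'\le T\cap W$, monotonicity of $\Phi$ in the logical argument, and the gap assumption, paying $q^{8b^2}$ for the number of joint types. Two of your steps are slightly more explicit than the paper's: you fix the identification $\phi(0,x)=x$, and you derive $\pi_1(\Theta)=T$ from the quotient-witness computation, where the paper uses the annihilator identity.
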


\begin{proof}
Fix a joint row space $\Omega\leq U\oplus U$ whose two coordinate projections are $U$ and with $\Omega\neq U\oplus U$.  The number of ordered pairs $(A,B)$ satisfying the profile and having exact joint row span $\Omega$ is at most $q^{a_\cV^{(2)}(\Omega)}$.

If both $A$ and $B$ are counted by $X$, then for $J=[A\;B]$ the joint logical row space $\Theta=\Lambda_S(J)$ satisfies $\Theta\leq\Omega$ and its two coordinate projections are both $T$.  For a subspace $L\leq \F_q^b \oplus \F_q^b $ we have the elementary identity
\[
       \pi_1(L^\perp)=\bigl(L\cap (\F_q^b \oplus 0)\bigr)^\perp,
\]
where $(L\cap (\F_q^b \oplus 0)\bigr)$ is identified in the natural manner with a subspace of $\F_q^b $, upon which the orthogonal complement is taken inside $\F_q^b $.  Applying it with $L=K_S(J)=\Lambda_S(J)^\perp$ gives
\[
       \pi_1(\Theta)=\bigl(K_S(J)\cap(\F_q^b\oplus0)\bigr)^\perp=K_S(A)^\perp=T,
\]
and similarly $\pi_2(\Theta)=T$.

For fixed $\Omega$ and $\Theta$, Lemma~\ref{qlcl:lem:fixed-matrix-prob} applied to $J$ gives the upper bound
\[
       q^{-a\dim\Omega-c(\dim\Omega-\dim\Theta)}
\]
for the probability of the corresponding joint event; we drop the injectivity factor.  There are at most $q^{4b^2}$ choices of $\Omega$ and at most $q^{4b^2}$ choices of $\Theta$.

Since $\Omega\neq U\oplus U$ and the first projection is onto, $W=\phi(\ker(\pi_1|_\Omega))$ is a proper subspace of $U$.  Moreover, if $T'=\phi(\ker(\pi_1|_\Theta))$, then $T'\leq T\cap W$, because the second projection of $\Theta$ is $T$.  Since the potential is monotone nondecreasing in the logical row space for fixed ordinary row space,
\[
       \Phi(\cV,W,T')\leq \Phi(\cV,W,T\cap W).
\]
The gap assumption gives
\[
       \Phi(\cV,W,T')\leq \Phi(\cV,U,T)-\epsilon.
\]
Applying Lemma~\ref{qlcl:lem:pair-submodularity}, we get that every joint type contributes at most
\[
       q^{\Phi^{(2)}(\cV,\Omega,\Theta)}
       \leq q^{2\Phi(\cV,U,T)-\epsilon}.
\]
Multiplying by the number of joint types proves the claim.
\end{proof}

\begin{proposition}[Supercritical fixed flag]\label{qlcl:prop:fixed-supercritical}
Assume $U\neq0$, $T\leq U$, $t=\dim T\leq c$, and
\[
       \Gamma_{m_0,m_1}(\cV,U,T)\geq\epsilon.
\]
If $\epsilon\geq 2b^2+1$, then
\[
       \Prb[(S,C)\text{ realizes }(\cV,U,T)]
       \geq 1-q^{-\epsilon+12b^2}.
\]
\end{proposition}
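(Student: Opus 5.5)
We apply the second-moment method (Paley--Zygmund / Chebyshev) to the count $X=X_{\cV,U,T}$ from Corollary~\ref{qlcl:cor:one-flag-expectation}. Since $X>0$ implies that $(S,C)$ realizes $(\cV,U,T)$, it suffices to prove
\[
   \Prb[X=0]\le \frac{\Exp[X^2]-(\Exp X)^2}{(\Exp X)^2}\le q^{-\epsilon+12b^2}.
\]
The numerator splits as $\Exp[X^2]=Y+Z$. Here $Y$ collects ordered pairs $(A,B)$ whose joint row span is a proper subspace of $U\oplus U$; it is bounded by Lemma~\ref{qlcl:lem:dependent-pairs}. The term $Z$ collects pairs whose joint row span is exactly $U\oplus U$.

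\textbf{Step 1 (lower bound on $(\Exp X)^2$).} Because $\epsilon\ge 2b^2+1$ and $t\le c$, Corollary~\ref{qlcl:cor:one-flag-expectation} gives $\Exp X\ge q^{\Phi-2b^2}$, so $(\Exp X)^2\ge q^{2\Phi-4b^2}$. Then Lemma~\ref{qlcl:lem:dependent-pairs} yields
\[
Y/(\Exp X)^2\le q^{-\epsilon+12b^2}.
\]

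\textbf{Step 2 (independent pairs).} If $[A\;B]$ has row span $U\oplus U$, then the columns of $A$ and $B$ together span a $2u$-dimensional space, and $\lambda_A(\F_q^b)\cap\lambda_B(\F_q^b)=0$. We compute the joint event with the argument of Lemma~\ref{qlcl:lem:fixed-matrix-prob}. The event $H_CA=H_CB=0$ has probability $q^{-2au}$. Conditioned on it, $H_S$ restricted to the $2u$-dimensional joint image is a uniformly random map to $\F_q^c$.

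The event $\Lambda_S(A)=T$ and $\Lambda_S(B)=T$ requires $H_S$ to vanish on the two stabilizer subspaces, which has probability $q^{-2c(u-t)}$. It further requires the induced maps on the two $t$-dimensional quotients to be separately injective. This is a superset of the event that the combined map from the $2t$-dimensional quotient is injective, and its probability is at most $\iota(c,t)^2$ up to a correction. To avoid a delicate comparison, we bound the joint probability by $q^{-2\kappa}$, dropping the injectivity factors, so that $Z\le q^{2a_\cV(U)-2\kappa}=q^{2\Phi}$.

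The ratio $Z/(\Exp X)^2$ then exceeds $1$ by the factor $\bigl(\iota(c,t)(1-q^{-\epsilon+b^2})\bigr)^{-2}$, which is not $1+o(1)$ when $c$ is small. So we need a sharper comparison. We will show
\[
\Prb[\text{joint event}]\le \Prb[A\text{ event}]\cdot\Prb[B\text{ event}]\cdot(1+O(q^{-(c-2t)})),
\]
or, more simply, handle it as follows. Write the joint probability exactly as
\[
q^{-2\kappa}\cdot\Prb[\text{two independent-looking injectivity events}].
\]
Because the two $t$-dimensional quotients live in a direct sum and $H_S$ is uniform on that direct sum, the restrictions of $H_S$ to the two summands are independent uniform maps. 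Hence the two injectivity events are exactly independent, and the joint probability equals the product $q^{-2\kappa}\iota(c,t)^2$. Therefore $Z\le |\cM^{=}_\cV(U)|^2q^{-2\kappa}\iota^2$.

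\textbf{Step 3 (comparing $Z$ with $(\Exp X)^2$).} We have $\Exp X=|\cM^{=}_\cV(U)|\,q^{-\kappa}\iota$ exactly, so $Z\le(\Exp X)^2$ and $Z-(\Exp X)^2\le 0$. Consequently
\[
\mathrm{Var}(X)\le Y\le q^{2\Phi-\epsilon+8b^2},
\]
and Chebyshev gives $\Prb[X=0]\le \mathrm{Var}(X)/(\Exp X)^2\le q^{-\epsilon+12b^2}$.

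The main obstacle is Step 2: we must verify that the independence across the direct sum really holds after conditioning on $H_C A=H_C B=0$. Unconditionally, $H_C$ and $H_S$ are independent uniform matrices, and any linearly independent set of $2u$ columns maps to independent uniform vectors. This is precisely why the unconditioned parity-check ensemble was chosen, and we will check carefully that the stabilizer-vanishing and injectivity conditions for $A$ and $B$ factor over the two summands of $\lambda_A(\F_q^b)\oplus\lambda_B(\F_q^b)$.
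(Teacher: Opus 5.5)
Your proof is correct and follows the paper's argument. It applies Chebyshev's inequality with $\Exp X\ge q^{\Phi-2b^2}$ from Corollary~\ref{qlcl:cor:one-flag-expectation}, uses exact factorization of the joint probability for pairs with joint row span $U\oplus U$ (the parity-check restrictions to independent column spaces are independent), and applies Lemma~\ref{qlcl:lem:dependent-pairs} to the remaining pairs. Your bookkeeping $Z\le(\Exp X)^2$, hence $\operatorname{Var}(X)\le Y$, is equivalent to the paper's remark that independent pairs have zero covariance and dependent pairs have covariance at most their joint probability; the crude bound you first tried in Step 2 is unnecessary.
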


\begin{proof}
Let $X=X_{\cV,U,T}$.  Because $0<U$ is one of the quotient minors, the gap assumption implies
\[
       \Phi(\cV,U,T)\geq \epsilon.
\]
Corollary~\ref{qlcl:cor:one-flag-expectation} gives
\[
       \Exp X\geq q^{\Phi(\cV,U,T)-2b^2}.
\]

For ordered pairs $(A,B)$ with joint row span $U\oplus U$, the column spaces generated by $A$ and $B$ are independent subspaces of $\F_q^n$.  The restrictions of the random parity checks to these two subspaces are independent, so the covariance of the two indicator variables is zero.  For all remaining ordered pairs, the covariance is at most the joint probability, and Lemma~\ref{qlcl:lem:dependent-pairs} bounds the sum of these joint probabilities.  Therefore
\[
       \operatorname{Var}(X)
       \leq q^{2\Phi(\cV,U,T)-\epsilon+8b^2}.
\]
Chebyshev's inequality yields
\[
       \Prb[X=0]
       \leq \frac{\operatorname{Var}(X)}{(\Exp X)^2}
       \leq q^{-\epsilon+12b^2}.
\]
\end{proof}

\subsection{Fixed-type theorem}

\begin{theorem}[Fixed qLCL type threshold]\label{qlcl:thm:fixed-type}
Let $(S,C)\sim\PCSS(n,q;m_0,m_1)$.  Fix a $b$-local profile $\cV$ and a flag $(U,T)$ with $U\neq0$.
\begin{enumerate}[label=(\alph*)]
\item If $\Gamma_{m_0,m_1}(\cV,U,T)\leq-\epsilon$, then
\[
       \Prb[(S,C)\text{ realizes }(\cV,U,T)]\leq q^{-\epsilon}.
\]
\item If $\dim T\leq m_1-m_0$, $\Gamma_{m_0,m_1}(\cV,U,T)\geq\epsilon$, and $\epsilon\geq 2b^2+1$, then
\[
       \Prb[(S,C)\text{ realizes }(\cV,U,T)]
       \geq 1-q^{-\epsilon+12b^2}.
\]
\end{enumerate}
\end{theorem}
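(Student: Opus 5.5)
The theorem packages the two fixed-flag propositions already established, so the plan is to derive each part directly from them and check that their hypotheses match.

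For part (a), I invoke Proposition~\ref{qlcl:prop:fixed-subcritical}. Its hypotheses are $U\neq0$, $T\leq U$ and $\Gamma_{m_0,m_1}(\cV,U,T)\leq-\epsilon$, which are exactly those of (a).

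The mechanism behind it is worth recording. Choose a proper $W<U$ witnessing the minimum in \eqref{qlcl:eq:gamma}. Lemma~\ref{qlcl:lem:quotient-witness} then turns any realization of $(\cV,U,T)$ into a realization of the quotient flag $(\overline\cV,\overline U,\rho(T))$. Lemma~\ref{qlcl:lem:minor-potential} identifies the potential of this quotient flag with the gap. The first-moment bound of Corollary~\ref{qlcl:cor:one-flag-expectation}, combined with Markov's inequality, gives probability at most $q^{-\epsilon}$.

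The only point needing care is that the expectation bound must cover realizations with exact row span $\overline U$. This holds because $\rho$ is surjective on $U=\row(A)$, so $\row(D)=\overline U$.

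For part (b), I invoke Proposition~\ref{qlcl:prop:fixed-supercritical}. The condition $\dim T\leq m_1-m_0$ is precisely $t\leq c$. This ensures $\iota(c,t)>0$ in Lemma~\ref{qlcl:lem:fixed-matrix-prob}, and hence that Corollary~\ref{qlcl:cor:one-flag-expectation} gives $\Exp X\geq q^{\Phi-2b^2}$ once $\epsilon\geq 2b^2+1$.

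The second-moment step then splits ordered pairs $(A,B)$ by their joint row span:
\begin{itemize}
\item Pairs with joint span $U\oplus U$ are uncorrelated, by independence of the parity checks on independent column spaces.
\item The remaining pairs are bounded by Lemma~\ref{qlcl:lem:dependent-pairs} by $q^{2\Phi-\epsilon+8b^2}$.
\end{itemize}
Chebyshev's inequality then gives $\Prb[X=0]\leq q^{-\epsilon+12b^2}$. Since $X>0$ means $(S,C)$ realizes $(\cV,U,T)$, this is the claimed lower bound.

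Since both propositions are already proved, the only real work is bookkeeping. I expect the main obstacle, if any, to lie in the independence claim for full-span pairs. It requires that $H_C$ and $H_S$ restricted to the $2u$-dimensional column space of $[A\;B]$ are uniform, which they are because the matrices are unconditioned. It also requires that the logical event for $A$ depends only on $H_S$ restricted to $\im\lambda_A$. I would verify both points explicitly and then conclude.
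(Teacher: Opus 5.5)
Your proposal is correct and matches the paper's proof. Part (a) is Proposition~\ref{qlcl:prop:fixed-subcritical} and part (b) is Proposition~\ref{qlcl:prop:fixed-supercritical}, with $\dim T\le m_1-m_0$ being exactly the hypothesis $t\le c$, and your recaps of the quotient-minor first-moment bound and the second-moment split by joint row span follow the proofs of those propositions.
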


\begin{proof}
Part (a) is Proposition~\ref{qlcl:prop:fixed-subcritical}.  Part (b) is Proposition~\ref{qlcl:prop:fixed-supercritical}.
\end{proof}

\section{Threshold theorem for quotient LCL properties}\label{qlcl:sec:qlcl-threshold}

A quotient LCL property is a finite collection of fixed types for the given block length.  In asymptotic applications the size of this finite collection may grow with $n$, but the threshold bounds keep the explicit factor $|\cF|$.  The sharp parameter for the property is therefore the largest quotient-minor gap among all admissible profiles and logical row spaces.

For a quotient LCL property $\cP(\cF,\cD)$ define the feasible maximum gap
\begin{equation}\label{qlcl:eq:Gmax}
       \Gamma^{\max}_{m_0,m_1}(\cF,\cD)
       =\max_{\cV\in\cF}
        \max_{\substack{T\in\cD,\, U\supseteq T,\, U\neq0\\ \dim T\leq m_1-m_0}}
        \Gamma_{m_0,m_1}(\cV,U,T).
\end{equation}
If the indexing set is empty, set $\Gamma^{\max}=-\infty$.

\begin{theorem}[Quotient LCL threshold, parity-check ensemble]\label{qlcl:thm:qlcl-threshold}
Let $(S,C)\sim\PCSS(n,q;m_0,m_1)$ and let $\cP(\cF,\cD)$ be a $b$-quotient LCL property with finite profile family $\cF$.
\begin{enumerate}[label=(\alph*)]
\item \textbf{Subcritical region.}  If
\[
       \Gamma^{\max}_{m_0,m_1}(\cF,\cD)
       \leq -\epsilon,
\]
then
\[
       \Prb[(S,C)\text{ satisfies }\cP(\cF,\cD)]
       \leq |\cF|\,q^{-\epsilon+2b^2}.
\]
\item \textbf{Supercritical region.}  If
\[
       \Gamma^{\max}_{m_0,m_1}(\cF,\cD)
       \geq \epsilon
       \qquad\text{and}\qquad
       \epsilon\geq 2b^2+1,
\]
then
\[
       \Prb[(S,C)\text{ satisfies }\cP(\cF,\cD)]
       \geq 1-q^{-\epsilon+12b^2}.
\]
\end{enumerate}
\end{theorem}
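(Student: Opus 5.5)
Part (b) follows from Theorem~\ref{qlcl:thm:fixed-type}(b). By definition of $\Gamma^{\max}$, since the maximum is at least $\epsilon$ and the indexing set is finite, there are a profile $\cV\in\cF$ and a flag $(U,T)$ with $T\in\cD$, $U\supseteq T$, $U\neq0$ and $\dim T\le m_1-m_0$ such that $\Gamma_{m_0,m_1}(\cV,U,T)\ge\epsilon$. The hypotheses of Theorem~\ref{qlcl:thm:fixed-type}(b) then hold, so $(S,C)$ realizes $(\cV,U,T)$ with probability at least $1-q^{-\epsilon+12b^2}$. Any such realization is a witness for $\cP(\cF,\cD)$, since $U\neq 0$ and $T\in\cD$.

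For part (a) I will take a union bound over all witnesses that could certify the property. The property holds iff some $\cV\in\cF$ and flag $(U,T)$ with $U\ne0$ and $T\in\cD$ are realized. First I restrict to flags with $\dim T\le m_1-m_0$. A realization $A$ has $\Lambda_S(A)=T$, and $\dim T=\dim\im\overline\lambda_A\le\dim(C/S)$. In $\PCSS$ the quotient $C/S$ embeds into $\F_q^c$ via $H_S$, so $\dim(C/S)\le c=m_1-m_0$ always, even without full rank. Equivalently, Lemma~\ref{qlcl:lem:fixed-matrix-prob} gives probability $0$ when $t>c$, because $\iota(c,t)=0$. So the realized flags are exactly those in the index set of $\Gamma^{\max}$.

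For each such flag the hypothesis gives $\Gamma_{m_0,m_1}(\cV,U,T)\le\Gamma^{\max}\le-\epsilon$. Proposition~\ref{qlcl:prop:fixed-subcritical} then bounds the realization probability by $q^{-\epsilon}$. The number of flags $(U,T)$ with $T\le U\le\F_q^b$ is at most (number of subspaces)$^2$. The number of subspaces of $\F_q^b$ is $\sum_k\binom{b}{k}_q$, which is at most $q^{b^2}$ with room to spare: each $\binom bk_q\le 4q^{k(b-k)}$, and $k(b-k)\le b^2/4$. So the squared count is at most $q^{2b^2}$. Summing over $\cV\in\cF$ and flags gives $|\cF|\,q^{-\epsilon+2b^2}$.

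No step is a real obstacle. The only points needing care are that the $\dim T\le c$ restriction in $\Gamma^{\max}$ loses no witnesses in the unconditioned ensemble, and the crude count of flags.
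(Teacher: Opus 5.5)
Your proposal is correct and is essentially the paper's proof: for (b) you pick a feasible maximizing flag and invoke Theorem~\ref{qlcl:thm:fixed-type}(b); for (a) you note that only parity-feasible flags can be realized, and union bound Proposition~\ref{qlcl:prop:fixed-subcritical} over at most $|\cF|\,q^{2b^2}$ types (your deterministic observation that $H_S$ embeds $C/S$ into $\F_q^c$ is a clean equivalent of the paper's $\iota(c,t)=0$ argument). One small slip: your ``room to spare'' estimate $\sum_k 4q^{k(b-k)}$ does not give $q^{b^2}$ for small $b$ (for $b=1$, $q=2$ it gives $8>2$), but the bound itself, used without proof in Lemma~\ref{qlcl:lem:exact-row-count}, still holds because every subspace of $\F_q^b$ is the row space of some $b\times b$ matrix.
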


\begin{proof}
For (a), if the property holds then some flag $(U,T)$ and some profile $\cV\in\cF$ are realized, and any realized flag is parity-feasible: when $\dim T>m_1-m_0$, the realization probability vanishes by Lemma~\ref{qlcl:lem:fixed-matrix-prob}, since $\iota(c,t)=0$ for $t>c$.  The number of flags in $\F_q^b$ is at most $q^{2b^2}$.  Apply Theorem~\ref{qlcl:thm:fixed-type}(a) to each parity-feasible flag and union bound.

For (b), choose a profile and feasible flag attaining gap at least $\epsilon$.  Theorem~\ref{qlcl:thm:fixed-type}(b) implies that this one flagged profile is realized with probability at least $1-q^{-\epsilon+12b^2}$, and that realization witnesses $\cP(\cF,\cD)$.
\end{proof}

In applications $b$ is fixed and $|\cF|\leq q^{o(n)}$, so the displayed bounds give exponentially sharp high-probability statements whenever the gap is $\Omega(n)$.  For list-recovery and distance profiles, $|\cF|$ is typically exponential in $n$ but independent of $q$; the large-alphabet regime makes the union-bound factor negligible.

\section{Uniform nested subspaces}\label{qlcl:sec:uniform}

The parity-check ensemble was chosen above for simplicity of exposition, but the natural random object is a uniformly random nested pair of prescribed dimensions.  Conditioning on full rank transfers the threshold theorem with only a constant loss in probability.

\begin{definition}[Uniform nested ensemble]
Let $\Nest(n,q;m_0,m_1)$ be the uniform distribution over all nested pairs
\[
       S\subseteq C\subseteq\F_q^n,
       \qquad
       \dim S=m_0,
       \qquad
       \dim C=m_1 .
\]
\end{definition}

\begin{lemma}[Conditioning on full rank]\label{qlcl:lem:conditioning}
Let
\[
       H=\begin{bmatrix}H_C\\ H_S\end{bmatrix}.
\]
Conditioned on the event that $H_C$ and $H$ have full row rank, $\PCSS(n,q;m_0,m_1)$ is distributed as $\Nest(n,q;m_0,m_1)$.  Moreover this full-rank event has probability at least
\[
       \gamma_q=\prod_{j=1}^{\infty}(1-q^{-j})^2>0.
\]
\end{lemma}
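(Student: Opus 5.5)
The proof has two parts: a symmetry argument showing that, conditioned on full rank, the output is uniform among nested pairs of the prescribed dimensions, and a direct product computation for the probability of the full-rank event.

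\textbf{Uniformity.} Let $\mathcal G$ be the full-rank event. On $\mathcal G$, $C=\ker H_C$ has dimension $n-a=m_1$. Also $S=\ker H$ with $H$ of full row rank $a+c$, so $\dim S=n-a-c=m_0$. Thus $\mathcal G$ only produces nested pairs of the right dimensions.

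Fix a target pair $S_0\subseteq C_0$ with $\dim S_0=m_0$ and $\dim C_0=m_1$. We count the matrices $(H_C,H_S)$ in $\mathcal G$ that map to it.

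\begin{itemize}
\item The condition $\ker H_C=C_0$ with $H_C$ of full row rank means the rows of $H_C$ form an ordered basis of $C_0^\perp$, which has dimension $a$. There are $|GL_a(\F_q)|$ such $H_C$.
\item Given such $H_C$, the condition $\ker H=S_0$ with $H$ of full row rank means the rows of $H_S$, together with those of $H_C$, form a basis of $S_0^\perp$, which has dimension $a+c$. Equivalently, the rows of $H_S$ are $c$ vectors of $S_0^\perp$ whose images in $S_0^\perp/C_0^\perp\cong\F_q^c$ are linearly independent. Their number is $|GL_c(\F_q)|\,q^{ac}$.
\end{itemize}

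The total count depends only on $(n,m_0,m_1)$, not on the pair $(S_0,C_0)$. Since the unconditioned distribution is uniform on matrices, the conditional distribution is uniform over all nested pairs of the given dimensions, which is $\Nest(n,q;m_0,m_1)$.

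\textbf{Probability bound.} Adding rows one at a time:

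\begin{itemize}
\item $H_C$ has full rank with probability $\prod_{j=0}^{a-1}(1-q^{j-n})$.
\item Conditioned on that, each row of $H_S$ must avoid the span of the previous $a+i$ rows, for $i=0,\ldots,c-1$. This gives the further factor $\prod_{i=0}^{c-1}(1-q^{a+i-n})$.
\end{itemize}

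Since $a+c\le n$, both products are products of distinct factors of the form $1-q^{-k}$ with $k\ge1$. Hence each product is at least $\prod_{j\ge1}(1-q^{-j})$, and their product is at least $\gamma_q$. Positivity of $\gamma_q$ follows from $\sum_j q^{-j}<\infty$.

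The only point that needs care is the $H_S$ count, which is where the earlier observation that $S_0^\perp/C_0^\perp$ has dimension $c$ is used. The rest is routine.
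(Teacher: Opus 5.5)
Your proof is correct, but for the distributional claim it takes a different route from the paper.

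The paper argues by sequential conditioning. A uniformly random full-rank \(H_C\) has a uniformly random \(m_1\)-dimensional kernel \(C\). Independently, \(H_S\) restricts to a uniformly random linear map \(C\to\F_q^c\), and conditioned on that map having rank \(c\), its kernel \(S\) is uniform among \(m_0\)-dimensional subspaces of \(C\). The probability bound then comes from applying the full-row-rank formula once to \(H_C\) and once to the induced \(c\times m_1\) map on \(C\).

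You instead count the fibre of \((H_C,H_S)\mapsto(\ker H,\ker H_C)\) over each target pair \(S_0\subseteq C_0\). It has constant size \(|\mathrm{GL}_a(\F_q)|\,|\mathrm{GL}_c(\F_q)|\,q^{ac}\), and you compute the full-rank probability row by row on \(H\).

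What your fibre count buys is self-containedness. It proves in one step the facts the paper only asserts: that the kernel of a uniform full-rank matrix is uniform, and that conditioning \(H_S|_C\) on full rank yields a uniform kernel inside \(C\). Your probability computation coincides with the paper's, since given \(H_C\) of full rank, \(H\) has full rank iff \(H_S|_C\) has rank \(c\). Both give \(\prod_{k=m_0+1}^{n}(1-q^{-k})\).

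What the paper's route buys is structure. It exhibits the conditioned ensemble explicitly as ``uniform \(C\), then uniform \(S\subseteq C\)''. That is exactly the two-stage sampling reused later for the inner gadget in Theorem~\ref{rsd:thm:random-inner-css}.

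As a small aside, the factors \(1-q^{-k}\) in your two products are distinct across both products, not just within each one: \(k\) ranges over \(\{m_1+1,\ldots,n\}\) and \(\{m_0+1,\ldots,m_1\}\). So you actually get the stronger bound \(\prod_{j\ge1}(1-q^{-j})=\sqrt{\gamma_q}\) for the full-rank event.
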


\begin{proof}
A uniformly random full-rank parity-check matrix $H_C\in\F_q^{a\times n}$ has a uniformly random kernel of dimension $m_1$.  Conditional on $C=\ker H_C$, the additional checks $H_S$ induce a uniformly random linear map $C\to\F_q^c$.  Conditional on this induced map having rank $c$, its kernel is a uniformly random $m_0$-dimensional subspace of $C$.  This is exactly the uniform nested ensemble.

The probability that a random $r\times N$ matrix over $\F_q$ has full row rank is $\prod_{j=N-r+1}^{N}(1-q^{-j})$, which is at least $\prod_{j=1}^{\infty}(1-q^{-j})$.  Apply this once to $H_C$ and once to the induced map on $C$.
\end{proof}

For the uniform model we maximize only over flags that are feasible for the fixed dimensions.  Define
\begin{equation}\label{qlcl:eq:Gmax-unif}
\begin{aligned}
       \Gamma^{\max,\mathrm{unif}}_{m_0,m_1}(\cF,\cD)
       =\max_{\cV\in\cF}
        \max\Bigl\{
        \Gamma_{m_0,m_1}(\cV,U,T):\;&T\in\cD,\ U\supseteq T,\ U\neq0,\\
        &\dim(U/T)\le m_0,\ \dim T\le m_1-m_0
        \Bigr\}.
\end{aligned}
\end{equation}
with value $-\infty$ if the indexing set is empty.  Lemma~\ref{qlcl:lem:auto-feasibility} shows that no other flags can occur in a deterministic nested pair of dimensions $(m_0,m_1)$.

\begin{theorem}[Quotient LCL threshold, uniform nested ensemble]\label{qlcl:thm:uniform-threshold}
Let $(S,C)\sim\Nest(n,q;m_0,m_1)$ and let $\cP(\cF,\cD)$ be a $b$-quotient LCL property with $\cF$ finite.
\begin{enumerate}[label=(\alph*)]
\item \textbf{Subcritical region.}  If
\[
       \Gamma^{\max,\mathrm{unif}}_{m_0,m_1}(\cF,\cD)\leq -\epsilon,
\]
then
\[
       \Prb[(S,C)\text{ satisfies }\cP(\cF,\cD)]
       \leq \gamma_q^{-1}|\cF|\,q^{-\epsilon+2b^2}.
\]
\item \textbf{Supercritical region.}  If
\[
       \Gamma^{\max,\mathrm{unif}}_{m_0,m_1}(\cF,\cD)\geq \epsilon
       \qquad\text{and}\qquad
       \epsilon\geq 2b^2+1,
\]
then
\[
       \Prb[(S,C)\text{ satisfies }\cP(\cF,\cD)]
       \geq 1-\gamma_q^{-1}q^{-\epsilon+12b^2}.
\]
\end{enumerate}
\end{theorem}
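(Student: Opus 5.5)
The plan is to transfer Theorem~\ref{qlcl:thm:qlcl-threshold} from $\PCSS(n,q;m_0,m_1)$ to $\Nest(n,q;m_0,m_1)$ by conditioning, using Lemma~\ref{qlcl:lem:conditioning}. Let $G$ be the full-rank event on which $H_C$ and $H$ have full row rank, so that $\Prb[G]\ge\gamma_q$. By Lemma~\ref{qlcl:lem:conditioning}, for any event $E$ depending only on the pair $(S,C)$,
\[
\Prb_{\Nest}[E]=\Prb_{\PCSS}[E\mid G]\le \gamma_q^{-1}\Prb_{\PCSS}[E].
\]
Applying this to $E$ and to its complement handles both directions, provided the fixed-type estimates are available for exactly the flags appearing in $\Gamma^{\max,\mathrm{unif}}$.

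\textbf{Subcritical part (a).} If $\Nest$ satisfies $\cP(\cF,\cD)$, there is some realized triple $(\cV,U,T)$ with $\cV\in\cF$, $T\in\cD$, $U\neq0$. By Lemma~\ref{qlcl:lem:auto-feasibility}, since a uniform pair has $\dim S=m_0$ and $\dim C=m_1$ exactly, $(U,T)$ is $(m_0,m_1)$-feasible. It is therefore one of the flags indexed in \eqref{qlcl:eq:Gmax-unif}, and so $\Gamma_{m_0,m_1}(\cV,U,T)\le-\epsilon$.

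For each such fixed triple, Theorem~\ref{qlcl:thm:fixed-type}(a) bounds the realization probability under $\PCSS$ by $q^{-\epsilon}$, hence under $\Nest$ by $\gamma_q^{-1}q^{-\epsilon}$. A union bound over the at most $q^{2b^2}$ flags and the $|\cF|$ profiles gives $\gamma_q^{-1}|\cF|q^{-\epsilon+2b^2}$.

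One subtlety needs care. In $\PCSS$ a flag that is infeasible for $(m_0,m_1)$ could be realized off $G$. This does not matter: we only union bound over events ``$(S,C)$ realizes $(\cV,U,T)$'' for feasible flags, and the property event for $\Nest$ is contained in their union.

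\textbf{Supercritical part (b).} Choose $\cV\in\cF$ and a flag $(U,T)$ that is feasible in the uniform sense and attains gap $\ge\epsilon$. In particular $\dim T\le m_1-m_0$, so Theorem~\ref{qlcl:thm:fixed-type}(b) applies:
\[
\Prb_{\PCSS}[\text{not realized}]\le q^{-\epsilon+12b^2}.
\]
Conditioning on $G$ gives
\[
\Prb_{\Nest}[\text{not realized}]\le \gamma_q^{-1}q^{-\epsilon+12b^2},
\]
and a realization witnesses $\cP(\cF,\cD)$.

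\textbf{Main obstacle.} The only step needing attention is confirming that the conditioning identity holds for events defined in terms of $(S,C)$ alone. This is true because realization depends only on the pair $(S,C)$, not on the particular matrices $H_C$ and $H_S$. Everything else is bookkeeping already carried out in Theorem~\ref{qlcl:thm:qlcl-threshold}.
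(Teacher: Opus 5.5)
Your proposal is correct and follows the paper's proof. Condition $\PCSS$ on the full-rank event from Lemma~\ref{qlcl:lem:conditioning}; in (a), use Lemma~\ref{qlcl:lem:auto-feasibility} to restrict the union bound to uniformly feasible flags, and in (b), transfer the fixed-type failure bound; each step loses a factor $\gamma_q^{-1}$. The only difference is cosmetic: you condition flag by flag before the union bound, while the paper union-bounds in $\PCSS$ first and conditions once.
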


\begin{proof}
Let $\mathcal R$ be the full-rank event from Lemma~\ref{qlcl:lem:conditioning}.  Conditional on $\mathcal R$, the parity-check ensemble is the uniform nested ensemble.

For (a), if the property holds in a deterministic nested pair of dimensions $(m_0,m_1)$, Lemma~\ref{qlcl:lem:auto-feasibility} implies that some feasible flag in the indexing set of \eqref{qlcl:eq:Gmax-unif} is realized.  The number of flags in $\F_q^b$ is at most $q^{2b^2}$.  Theorem~\ref{qlcl:thm:fixed-type}(a), applied in the parity-check ensemble and union bounded over feasible flags and profiles, gives unconditional probability at most $|\cF|q^{-\epsilon+2b^2}$.  Conditioning on $\mathcal R$ multiplies this by at most $\gamma_q^{-1}$.

For (b), choose a feasible profile and flag attaining gap at least $\epsilon$.  Theorem~\ref{qlcl:thm:fixed-type}(b) gives probability at most $q^{-\epsilon+12b^2}$ that this flag is not realized in the parity-check ensemble.  Conditioning on $\mathcal R$ again multiplies the failure probability by at most $\gamma_q^{-1}$.
\end{proof}

\section{CSS codes}\label{qlcl:sec:css}

We now translate the one-sector nested-pair language into CSS terminology. This involves applying the above framework to both X and Z sectors separately at first, and then designing a joint sector formulation that enables one to analyze qLCL properties in the joint sector by looking at behavior in the individual sectors.

We use the standard $q$-ary CSS construction~\cite{CSS1,CSS2}.  Write $q=p^e$ with $p$ prime, let $\omega_p=e^{2\pi i/p}$, and let $\Tr=\Tr_{\F_q/\F_p}$ denote the field trace.  On one qudit $\C^q$ with computational basis $\{\ket a:a\in\F_q\}$, the generalized Pauli operators are
\[
       X(x)\ket a=\ket{a+x},
       \qquad
       Z(z)\ket a=\omega_p^{\Tr(za)}\ket a,
\]
extended multiplicatively to $n$ qudits for $x,z\in\F_q^n$.  Their commutation is governed by the trace form of the $\F_q$-bilinear pairing:
\[
       Z(z)X(x)=\omega_p^{\Tr(\ip xz)}\,X(x)Z(z),
\]
so an individual pair $X(x),Z(z)$ commutes iff $\Tr(\ip xz)=0$.  For prime $q$ the trace is the identity map and this is the condition $\ip xz=0$.  For non-prime $q$ the two conditions differ for individual pairs: over $\F_4$ one has $\Tr_{\F_4/\F_2}(1)=1+1^2=0$, so $X(1)$ and $Z(1)$ commute even though $\ip 11=1\neq0$.  What the CSS construction needs, however, is commutation against entire $\F_q$-linear spaces of labels, and there the two conditions coincide.

\begin{lemma}[Subspace commutation via the $\F_q$-pairing]\label{qlcl:lem:trace-pairing}
Let $D\leq\F_q^n$ be an $\F_q$-linear subspace and let $x\in\F_q^n$.  Then $X(x)$ commutes with $Z(z)$ for every $z\in D$ iff $x\in D^\perp$, where $D^\perp$ is the orthogonal complement with respect to the $\F_q$-bilinear form $\ip\cdot\cdot$.
\end{lemma}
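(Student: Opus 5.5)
The plan is to reduce commutation against all of $D$ to a statement about an $\F_p$-linear functional on an $\F_q$-space, and then exploit nondegeneracy of the trace form. By the commutation relation $Z(z)X(x)=\omega_p^{\Tr(\ip xz)}X(x)Z(z)$, the operator $X(x)$ commutes with $Z(z)$ for every $z\in D$ exactly when $\Tr(\ip xz)=0$ for all $z\in D$. So the statement is equivalent to the claim
\[
   \Tr(\ip xz)=0\ \ \forall z\in D
   \quad\Longleftrightarrow\quad
   \ip xz=0\ \ \forall z\in D .
\]

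The direction ``$\Leftarrow$'' is immediate, since $\Tr(0)=0$. For ``$\Rightarrow$'', I will argue by contradiction. Suppose some $z\in D$ has $\beta\defeq\ip xz\neq0$. Because $D$ is $\F_q$-linear, $\lambda z\in D$ for every $\lambda\in\F_q$. Moreover, $\ip x{\lambda z}=\lambda\beta$ by $\F_q$-bilinearity. As $\lambda$ ranges over $\F_q$, the product $\lambda\beta$ ranges over all of $\F_q$. The hypothesis would then force $\Tr(\gamma)=0$ for every $\gamma\in\F_q$. This contradicts the standard fact that the trace $\Tr_{\F_q/\F_p}$ is surjective, or at least not identically zero. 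That fact follows because $\Tr(\gamma)=\sum_{j=0}^{e-1}\gamma^{p^j}$ is a nonzero polynomial of degree $p^{e-1}<q$, so it cannot vanish on all $q$ elements of $\F_q$.

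The only point needing care is exactly this $\F_q$-scaling step: it is where $\F_q$-linearity of $D$, rather than mere $\F_p$-linearity, is used. The $\F_4$ example preceding the lemma shows that the conclusion fails for a single label $z$ without it. There is no real obstacle beyond recording the non-vanishing of the trace, which I will cite as standard or justify with the degree argument above.
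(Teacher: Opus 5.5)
Your proposal is correct and follows essentially the same route as the paper: reduce to the trace condition, use $\F_q$-closure of $D$ to scale $z$, and conclude from the trace being nontrivial. The only difference is that you justify non-vanishing of $\Tr$ by the degree bound $p^{e-1}<q$, whereas the paper cites nondegeneracy of the trace form from separability of $\F_q/\F_p$.
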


\begin{proof}
If $x\in D^\perp$, then $\ip xz=0$, hence $\Tr(\ip xz)=0$, for every $z\in D$.  Conversely, suppose $\Tr(\ip xz)=0$ for every $z\in D$, and fix $z\in D$.  Since $D$ is $\F_q$-linear, $cz\in D$ for every $c\in\F_q$, so $\Tr(c\ip xz)=0$ for all $c\in\F_q$.  The trace form $(a,b)\mapsto\Tr(ab)$ is nondegenerate on $\F_q$ because the extension $\F_q/\F_p$ is separable, so $\ip xz=0$.  As $z\in D$ was arbitrary, $x\in D^\perp$.
\end{proof}

Thus, although the commutation of an individual pair $X(x),Z(z)$ is a trace condition rather than an $\F_q$-bilinear one, every commutation requirement appearing in the CSS construction is of the subspace form of Lemma~\ref{qlcl:lem:trace-pairing}, since stabilizer groups are generated by $\F_q$-linear label spaces.  The $\F_q$-bilinear pairing is therefore the correct bookkeeping device, for prime and non-prime $q$ alike.

\begin{definition}[CSS code from a nested pair]\label{qlcl:def:css-from-pair}
Given $S\subseteq C\subseteq\F_q^n$, define the CSS code $Q(S,C)$ whose $X$-type stabilizer labels are $S$ and whose $Z$-type stabilizer labels are $C^\perp$:
\[
       \mathsf S_X=S,
       \qquad
       \mathsf S_Z=C^\perp.
\]
By Lemma~\ref{qlcl:lem:trace-pairing}, every $X$-type stabilizer $X(x)$, $x\in S$, commutes with every $Z$-type stabilizer $Z(z)$, $z\in C^\perp$, iff $S\subseteq(C^\perp)^\perp=C$, which holds by hypothesis.  The number of encoded qudits is
\[
       k=n-\dim S-\dim C^\perp=\dim C-\dim S.
\]
\end{definition}

For this CSS code, the $X$-type normalizer labels are $C_X = C$ and the $X$-type stabilizer labels are $S_X = S$, so the $X$-logical quotient is
\[
       C_X/S_X = C/S.
\]
Similarly, the $Z$-logical quotient, for $C_Z = S^\perp$ and $S_Z = C^\perp$ is
\[
       C_Z/S_Z = S^\perp/C^\perp.
\]

Thus, the joint logical quotient of the CSS code is
\[
    (C_X + C_Z)/(S_X + S_Z) = (C + S^\perp)/(S + C^\perp) \cong (C/S) + (S^\perp/C^\perp).
\]

\subsection{Joint Sector Formulation}\label{sub:joint-sec}
We require a few definitions before we can detail the interaction of the joint sector of a CSS code with local profiles.

We retain the notation of the one-sector theory.  For a sector $s \in \{X, Z\}$, and for subspaces $T \le U \le \F_q^b$, define
\[
 \kappa_s(U,T) \defeq (n-m_1^s)\dim U + (m_1^s-m_0^s)(\dim U - \dim T).
\]

\begin{lemma}[Positivity of $\kappa_s(U,T)-\kappa_s(W,T\cap W)$]\label{obs:pos-of-d}
    For a subspace $W<U$,
    \[
        \kappa_s(U,T)-\kappa_s(W,T\cap W) \ge (n-m_1)(\dim U - \dim W) > 0.
    \]
\end{lemma}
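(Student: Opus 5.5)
The argument is a direct computation with the two-term form of the sector cost. I will rewrite
\[
\kappa_s(U,T)=(n-m_0^s)(\dim U-\dim T)+(n-m_1^s)\dim T,
\]
exactly as in \eqref{qlcl:eq:flag-cost}. Here I read the $m_1$ in the statement as the sector dimension $m_1^s$. The plan is to subtract the same expression for the minor $(W,T\cap W)$ and lower-bound each resulting term separately.

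Put $u=\dim U$, $w=\dim W$, $t=\dim T$ and $t_W=\dim(T\cap W)$. The difference is
\[
\kappa_s(U,T)-\kappa_s(W,T\cap W)=(n-m_1^s)(u-w)+(m_1^s-m_0^s)\bigl((u-t)-(w-t_W)\bigr).
\]
This follows immediately from the first form of the definition, since the $(n-m_1^s)\dim U$ term subtracts to $(n-m_1^s)(u-w)$.

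The only step needing an argument is that the second coefficient $(u-t)-(w-t_W)$ is nonnegative, i.e.\ $\dim(W/(T\cap W))\le\dim(U/T)$. This holds because the natural map $W/(T\cap W)\to U/T$ induced by the inclusion $W\le U$ is injective: its kernel is $(W\cap T)/(T\cap W)=0$. It is the same fact used implicitly in Lemma~\ref{qlcl:lem:gamma-mass}.

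With $m_1^s\ge m_0^s$, the second term is therefore nonnegative, and the difference is at least $(n-m_1^s)(u-w)$. Strict positivity then needs both $u-w\ge1$, which holds since $W$ is a proper subspace, and $n-m_1^s>0$. The latter requires the standing assumption that $C$ is a proper subcode of $\F_q^n$, i.e.\ $m_1^s<n$. I would state this explicitly; in the degenerate case $m_1^s=n$ only the weak inequality $\ge 0$ survives.
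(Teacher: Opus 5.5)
Your proof is correct and follows the paper's route: expand the difference as $(n-m_1^s)d+(m_1^s-m_0^s)(d-\ell)$ with $d=\dim U-\dim W$ and $\ell=\dim T-\dim(T\cap W)$, then drop the nonnegative second term. Your injectivity argument for $W/(T\cap W)\to U/T$ gives exactly the needed $d\ge\ell$; the paper asserts $d>\ell$, which fails for $T=U$ but is only used weakly. Your remark that strict positivity requires $m_1^s<n$ makes explicit an assumption the paper leaves implicit.
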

\begin{proof}
    Write $d = \dim U - \dim W$, and $\ell = \dim W - \dim (T \cup W)$.
    \begin{align*}
        \kappa_s(U,T)-\kappa_s(W,T\cap W) &= (n-m_1^s)d + (m_1-m_0)(d-\ell) \\
        &\ge (n-m_1^s)d\\
        &>0.
    \end{align*}
    The first inequality follows because $d> \ell$, and the second is true because $d>0$, as $W$ is a strict subspace of $U$.
\end{proof}

\subsubsection{The normalized slope}

\begin{definition}[Flag slope]\label{def:flagged-sop}
For $U\ne0$, $T\le U$, and $W<U$, define
\begin{equation}\label{eq:flagged-slope}
 \lambda_{s}(\cV;U,T)
 :=\min_{W<U}
 \frac{a_{\cV}(U)-a_{\cV}(W)}
      {\kappa_s(U,T)-\kappa_s(W,T\cap W)}.
\end{equation}
Set $\lambda_{s}(\cV;0,0):=+\infty$.
\end{definition}

The critical value is $1$. In the one-sector qLCL formulation, $\lambda_{s}(\cV;U,T) < 1$ means that some proper quotient minor gains profile mass more
slowly than it pays random containment cost, and so the probability of the flagged type $(\cV, U ,T)$ appearing is
very low. This is formalized in the following proposition.

\begin{proposition}[Realization probability in terms of $\lambda$]\label{prop:fixed-type}
Suppose a random nested pair has dimensions $(m_0^s,m_1^s)$. If a nonzero
flagged type $(\cV, U ,T)$ satisfies
\begin{equation}\label{eq:lambda-low}
        \lambda_{s}(\cV;U,T)\le1-\eta,
\end{equation}
then for $(S, C) \sim \PCSS(n, q; m_0^s, m_1^s)$,
\begin{equation}\label{eq:fixed-type-input}
 \Pr[(S,C)\text{ realizes }(\cV,U,T)]
 \le q^{-\eta(n-m_1^s)}.
\end{equation}
For the uniform exact-dimensional model, the same estimate holds up to the
standard constant full-rank conditioning factor.
\end{proposition}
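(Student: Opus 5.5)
The plan is to reduce to the subcritical fixed-type bound, Proposition~\ref{qlcl:prop:fixed-subcritical}, by converting the slope hypothesis \eqref{eq:lambda-low} into a negative quotient-minor gap of size at least $\eta(n-m_1^s)$. Since the minimum in \eqref{eq:flagged-slope} is over finitely many proper subspaces $W<U$, we fix a minimizing $W$. For it, write
\[
  \Delta_a \defeq a_\cV(U)-a_\cV(W),
  \qquad
  \Delta_\kappa \defeq \kappa_s(U,T)-\kappa_s(W,T\cap W).
\]
By Lemma~\ref{obs:pos-of-d}, $\Delta_\kappa\ge (n-m_1^s)(\dim U-\dim W)\ge n-m_1^s>0$, so the ratio in \eqref{eq:flagged-slope} is well defined. (Implicitly this uses $m_1^s<n$; if $m_1^s=n$, the right side of \eqref{eq:fixed-type-input} equals $1$ and there is nothing to prove.)

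The hypothesis $\Delta_a/\Delta_\kappa\le 1-\eta$ gives
\[
  \Phi(\cV,U,T)-\Phi(\cV,W,T\cap W)=\Delta_a-\Delta_\kappa\le -\eta\,\Delta_\kappa\le -\eta(n-m_1^s).
\]
Here the first equality uses that $\kappa_s$ coincides with $\kappa_{m_0^s,m_1^s}$ of \eqref{qlcl:eq:flag-cost}, because $a=n-m_1^s$ and $c=m_1^s-m_0^s$. It follows that $\Gamma_{m_0^s,m_1^s}(\cV,U,T)\le -\eta(n-m_1^s)$. Applying Proposition~\ref{qlcl:prop:fixed-subcritical} with $\epsilon=\eta(n-m_1^s)$ gives the bound \eqref{eq:fixed-type-input} in the parity-check ensemble. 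If $\eta\le0$, the bound is trivial, so we may assume $\eta>0$.

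For the uniform model, Lemma~\ref{qlcl:lem:conditioning} shows that $\Nest$ is $\PCSS$ conditioned on an event of probability at least $\gamma_q$. Hence the realization probability increases by at most a factor $\gamma_q^{-1}$, exactly as in the proof of Theorem~\ref{qlcl:thm:uniform-threshold}(a).

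There is no real obstacle. The one point needing care is that the denominator $\Delta_\kappa$ is positive and at least $n-m_1^s$, which is what upgrades the multiplicative slope deficit $\eta$ into an additive gap linear in the codimension. This is exactly Lemma~\ref{obs:pos-of-d}, which relies on $\dim T-\dim(T\cap W)\le \dim U-\dim W$; the inequality holds because $T/(T\cap W)$ injects into $U/W$.
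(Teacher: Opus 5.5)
Your proof is correct and follows the paper's argument essentially verbatim. You take a minimizing $W<U$ and use Lemma~\ref{obs:pos-of-d} to get $\kappa_s(U,T)-\kappa_s(W,T\cap W)\ge n-m_1^s$. You then turn the slope bound into $\Gamma_{m_0^s,m_1^s}(\cV,U,T)\le-\eta(n-m_1^s)$ and invoke Proposition~\ref{qlcl:prop:fixed-subcritical}. Your extra remarks are all correct additions that the paper leaves implicit: the degenerate cases $m_1^s=n$ and $\eta\le 0$, the injection $T/(T\cap W)\hookrightarrow U/W$ that justifies the lemma, and the $\gamma_q^{-1}$ conditioning factor for the uniform model.
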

\begin{proof}
    Observe that \cref{eq:lambda-low} implies the existence of a subspace $W < U$ such that
    \begin{align*}
        a_{\cV}(U)-a_{\cV}(W) -(\kappa_s(U,T)-\kappa_s(W,T\cap W)) &\le -\eta \cdot (\kappa_s(U,T)-\kappa_s(W,T\cap W)) \\
        &\le -\eta \cdot (n-m^s_1).
    \end{align*}
    The second inequality follows from Lemma~\ref{obs:pos-of-d} and the fact that $\dim U -\dim W \ge 1$.
    Further, we see that $a_{\cV}(U)-a_{\cV}(W) -(\kappa_s(U,T)-\kappa_s(W,T\cap W)) \le -\eta \cdot (n-m_1^s)$ implies
    \[
        \Gamma_{m_0^s,m_1^s}(\cV, U ,T) \le -\eta \cdot (n-m_1^s).
    \]
    Applying Proposition~\ref{qlcl:prop:fixed-subcritical} with the above bound, the proposition follows.
\end{proof}

Thus, $\lambda$ being bounded away from $1$ implies that a random CSS code avoids realizing $(\cV, U, T)$ with high probability.

We now define the single-sector rate threshold for a fixed flag:
\begin{definition}[Single Sector Rate Threshold]
Upon denoting $r_s=m_1^s/n$, define:
\begin{equation}
    \label{eq:flagged-rate-score}
    R^{\mathrm{flag}}_{s}(\cV;U,T)
    :=
    1-(1-r_s)\lambda_{s}(\cV;U,T).
\end{equation}
\end{definition}

We provide some intuition for this definition. Rearranging \cref{eq:flagged-rate-score} gives
\[
    R^{\mathrm{flag}}_{s}(\cV;U,T)-r_s
    =
    (1-r_s)\bigl(1-\lambda_{s}(\cV;U,T)\bigr).
\]
Hence
\begin{equation}\label{eq:bi-implies-rate-lambda}
    \lambda_{s}(\cV;U,T)<1
    \qquad\Longleftrightarrow\qquad
    r_s<R^{\mathrm{flag}}_{s}(\cV;U,T).
\end{equation}
Or, more precisely, for any $\delta > 0$,
\begin{equation}\label{eq:precise-bi-implies}
    R^{\mathrm{flag}}_{s}(\cV;U,T)
    \ge r_s+\delta
    \iff
    \lambda_{s}(\cV;U,T)
    \le
    1-\frac{\delta}{1-r_s}.
\end{equation}
Thus, $R^{\mathrm{flag}}_{s}(\cV;U,T)$ serves as a one-sided threshold: if the sector rate is below this quantity, then by Proposition~\ref{prop:fixed-type}, the probability of realizing $(\cV, U ,T)$ is low.

We also require the definition of the rate threshold from the LCL framework of \cite{LMS25}:
\begin{definition}[Rate Threshold, Classical Setting]
    \[
 R_{\rm cl}(\cV;U)
 :=1-\min_{W<U}
 \frac{a_{\cV}(U)-a_{\cV}(W)}{n(\dim U-\dim W)}.
\]
\end{definition}

\begin{proposition}[One-sector fixed-type comparison]
\label{prop:sub-one-sector-comparison}
Every nonzero flag $T\le U$ satisfies
$R_s^{\rm flag}(\cV;U,T)\ge R_{\rm cl}(\cV;U)$, with equality for $T=U$.
Moreover, for every $W<U$,
\begin{equation}
\label{eq:sub-minor-rate-bound}
 a_{\cV}(U)-a_{\cV}(W)
 \ge n\bigl(1-R_s^{\rm flag}(\cV;U,T)\bigr)(\dim U-\dim W).
\end{equation}
If $A$ realizes $(U,T)$, then $T=U$ iff
$\operatorname{im}(A)\cap S_s=\{0\}$.
\end{proposition}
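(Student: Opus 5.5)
The plan is to compare the two minimisations one subspace $W<U$ at a time, using the lower bound on the flag-cost increment from Lemma~\ref{obs:pos-of-d}. For each proper $W<U$ write
\[
d_W=\dim U-\dim W\ge1,\qquad \Delta a_W=a_\cV(U)-a_\cV(W),\qquad \Delta\kappa_W=\kappa_s(U,T)-\kappa_s(W,T\cap W).
\]
Two preliminary facts are needed.
\begin{itemize}
\item $\Delta a_W\ge0$, because $V_i\cap W\le V_i\cap U$ for every $i$.
\item Expanding the definition of $\kappa_s$ gives
\[
\Delta\kappa_W=(n-m_1^s)\,d_W+(m_1^s-m_0^s)\bigl(d_W-(\dim T-\dim(T\cap W))\bigr).
\]
The second bracket is nonnegative, since $T/(T\cap W)$ injects into $U/W$. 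Hence $\Delta\kappa_W\ge (n-m_1^s)d_W=n(1-r_s)d_W>0$, which is the content of Lemma~\ref{obs:pos-of-d}; I will recompute it in this form rather than rely on the typeset proof.
\end{itemize}

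\textbf{Step 1: the inequality $R_s^{\rm flag}(\cV;U,T)\ge R_{\rm cl}(\cV;U)$.} This is equivalent to
\[
(1-r_s)\,\lambda_s(\cV;U,T)\le \min_{W<U}\frac{\Delta a_W}{n\,d_W}.
\]
Fix any $W<U$. Since $\lambda_s$ is a minimum, $\lambda_s\le \Delta a_W/\Delta\kappa_W$. Because $\Delta a_W\ge0$ and $\Delta\kappa_W\ge n(1-r_s)d_W$, this gives
\[
(1-r_s)\lambda_s\le \frac{\Delta a_W}{n\,d_W}.
\]
The degenerate case $r_s=1$ is trivial. Taking the minimum over $W$ proves the claim.

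\textbf{Step 2: equality when $T=U$.} In this case $T\cap W=W$, so $\dim T-\dim(T\cap W)=d_W$ and the second term of $\Delta\kappa_W$ vanishes. Thus $\Delta\kappa_W=n(1-r_s)d_W$ exactly, for every $W$. Then $(1-r_s)\lambda_s=\min_W \Delta a_W/(n\,d_W)$, and the two rates coincide.

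\textbf{Step 3: the bound \eqref{eq:sub-minor-rate-bound}.} Unwinding the definition of $R_s^{\rm flag}$ gives $n\bigl(1-R_s^{\rm flag}(\cV;U,T)\bigr)=n(1-r_s)\lambda_s$. Fix $W<U$. The definition of $\lambda_s$ gives $\Delta a_W\ge \lambda_s\,\Delta\kappa_W$. Since $\lambda_s\ge0$, the cost bound gives $\lambda_s\,\Delta\kappa_W\ge \lambda_s\, n(1-r_s)\,d_W$, which is exactly \eqref{eq:sub-minor-rate-bound}.

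\textbf{Step 4: the characterisation of $T=U$.} Reuse the computation in the proof of Lemma~\ref{qlcl:lem:auto-feasibility}: with $B=\im\lambda_A$,
\[
\dim(B\cap S_s)=\dim U-\dim T.
\]
Since $T\le U$ always holds, $T=U$ iff $\dim(B\cap S_s)=0$, i.e. iff $\im(A)\cap S_s=\{0\}$. Equivalently, $K_S(A)=U^\perp$ exactly when no column combination outside $\ker\lambda_A$ lands in $S_s$.

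\textbf{Main obstacle.} None of these steps is hard. The only point that needs care is getting the cost-increment inequality in the exact form $\Delta\kappa_W\ge n(1-r_s)d_W$, including the injectivity of $T/(T\cap W)\to U/W$, because the existing proof of Lemma~\ref{obs:pos-of-d} contains typos. One should also note that the sign $\Delta a_W\ge0$ is what allows the inequality on denominators to be transferred to the ratios.
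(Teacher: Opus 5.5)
Your proposal is correct and follows the paper's proof. Both rest on the per-minor cost bound $\kappa_s(U,T)-\kappa_s(W,T\cap W)=(n-m_1^s)d+(m_1^s-m_0^s)(d-\ell)\ge n(1-r_s)d$, with equality when $T=U$ because then $\ell=d$, and both identify $T=U$ with equality of the logical and ordinary kernels. The only difference is the order of the steps: the paper derives \eqref{eq:sub-minor-rate-bound} first and reads off $R_s^{\rm flag}\ge R_{\rm cl}$ from it, whereas you prove the comparison directly and then the per-minor bound.
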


\begin{proof}
For $d:=\dim U-\dim W$ and
$\ell:=\dim T-\dim(T\cap W)$,
\[
 \kappa_s(U,T)-\kappa_s(W,T\cap W)
 =(n-m_1^s)d+(m_1^s-m_0^s)(d-\ell)
 \ge n(1-r_s)d.
\]
Combining this with \eqref{eq:flagged-rate-score} gives
\eqref{eq:sub-minor-rate-bound} and hence the threshold comparison.  For
$T=U$, one has $\ell=d$ for every minor, giving equality.  The last statement
is equivalent to equality of the logical and ordinary kernels.
\end{proof}

We also use the profile-mass supermodularity
\begin{equation}
\label{eq:sub-mass-supermodular}
 a_{\cV}(U_1)+a_{\cV}(U_2)
 \le a_{\cV}(U_1\cap U_2)+a_{\cV}(U_1+U_2),
\end{equation}
which follows coordinatewise from the dimension formula.

The following theorem relates the classical rate threshold to the newly defined threshold. Let $T_X \le U_X$ and $T_Z \le U_Z$ be subspaces of $\F_q^b$.
\begin{theorem}
\label{thm:sub-asymmetric-sector-reduction}
For every common-profile joint witness,
\[
 \boxed{
 R_{\rm cl}(\cV;U_X+U_Z)
 \le
 \max\{R_X^{\rm flag}(\cV;U_X,T_X),
       R_Z^{\rm flag}(\cV;U_Z,T_Z)\}.}
\]
\end{theorem}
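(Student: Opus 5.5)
Set $U=U_X+U_Z$ and let $R^*:=\max\{R_X^{\rm flag}(\cV;U_X,T_X),R_Z^{\rm flag}(\cV;U_Z,T_Z)\}$. By the definition of $R_{\rm cl}$, the claim is equivalent to the statement that every proper $W<U$ satisfies
\[
 a_\cV(U)-a_\cV(W)\ \ge\ n(1-R^*)(\dim U-\dim W).
\]
So I will fix an arbitrary $W<U$ and prove this single inequality. The tools are the one-sector minor bound \eqref{eq:sub-minor-rate-bound} from Proposition~\ref{prop:sub-one-sector-comparison}, applied once in each sector, together with the supermodularity \eqref{eq:sub-mass-supermodular}. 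Both one-sector bounds hold with $1-R^*$ in place of the sector-specific constant, because $1-R_s^{\rm flag}\ge 1-R^*$. (If $R^*>1$ the right-hand side is negative and the inequality is immediate, since $a_\cV$ is monotone.)

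The decomposition goes in two steps, passing from $W$ up to $W+U_X$ and then to $U$.

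\emph{Step 1 (the $X$ sector).} Put $W_X:=W\cap U_X\le U_X$. Supermodularity with $U_1=W$, $U_2=U_X$ gives
\[
a_\cV(W+U_X)-a_\cV(W)\ge a_\cV(U_X)-a_\cV(W_X).
\]
If $W_X<U_X$, then \eqref{eq:sub-minor-rate-bound} in the $X$ sector bounds the right side below by $n(1-R^*)(\dim U_X-\dim W_X)$. If $W_X=U_X$, both sides are $0$. In either case
\[
\dim U_X-\dim W_X=\dim(W+U_X)-\dim W.
\]

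\emph{Step 2 (the $Z$ sector).} Put $W':=W+U_X$, so that $W'+U_Z=U$, and let $W_Z:=W'\cap U_Z$. The same argument gives
\[
a_\cV(U)-a_\cV(W')\ge n(1-R^*)(\dim U-\dim W').
\]

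Adding the two steps telescopes to the desired inequality. Taking the minimum over $W$ then gives $1-R_{\rm cl}(\cV;U)\ge 1-R^*$, which is the theorem.

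The main subtlety I expect is bookkeeping rather than an analytic obstacle. First, \eqref{eq:sub-minor-rate-bound} is stated only for proper minors $W<U$, so the degenerate case $W\cap U_X=U_X$ must be handled separately; as noted in Step 1, it contributes zero on both sides. Second, the "common-profile" hypothesis is what makes both sectors use the same $a_\cV$, which is essential for the telescoping. Third, nonzero flags are needed so that each $R_s^{\rm flag}$ is defined via $\lambda_s$. If $U_X=0$ or $U_Z=0$, the corresponding step is vacuous, and the convention $\lambda_s(\cV;0,0)=+\infty$, which makes $R_s^{\rm flag}=-\infty$, simply removes that sector from the maximum.

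I will also check that \eqref{eq:sub-minor-rate-bound} really follows from the definition of $\lambda_s$ as a minimum over minors. It does: $\kappa_s$ differences are at least $n(1-r_s)d$, and $\lambda_s\ge 0$ because profile mass is monotone. This is exactly what Proposition~\ref{prop:sub-one-sector-comparison} records, so I can cite it directly.
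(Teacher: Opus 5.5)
Your proposal is correct and matches the paper's proof: two applications of supermodularity (first with $U_X,W$, then with $U_Z,W+U_X$), then the one-sector minor bound \eqref{eq:sub-minor-rate-bound} in each sector with the common constant $1-R^*$, and finally telescoping the dimension bracket to $\dim(U_X+U_Z)-\dim W$. Your separate treatment of the degenerate cases $W\cap U_X=U_X$ and $U_s=0$ is a small bookkeeping improvement; the paper applies the proper-minor bound there without comment.
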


\begin{proof}
Let $R_*$ be the larger marginal fixed-type threshold and fix
$W<U_X+U_Z$. Two applications of
\eqref{eq:sub-mass-supermodular} (first with $U_X, W$, and the second with $U_Z, W+U_X$) gives
\begin{align*}
 a_{\cV}(U_X+U_Z)-a_{\cV}(W)
 \ge{}&a_{\cV}(U_X)-a_{\cV}(U_X\cap W)\\
 &+a_{\cV}(U_Z)-a_{\cV}(U_Z\cap(W+U_X)).
\end{align*}
By Proposition~\ref{prop:sub-one-sector-comparison}, the right side is at
least
\[
 n(1-R_*)\bigl[\dim U_X-\dim(U_X\cap W)
 +\dim U_Z-\dim(U_Z\cap(W+U_X))\bigr].
\]
The bracket equals $\dim(U_X+U_Z)-\dim W$.  The definition of
$R_{\rm cl}(\cV;U_X+U_Z)$ now gives the claim.
\end{proof}

For a finite profile family $\cF$, let
\[
 R_{\rm cl}(\cF):=\min_{\cV\in\cF,\,U\in\LDist_b}R_{\rm cl}(\cV;U).
\]
This is the classical rate threshold for the LCL property $\cF$.

\subsubsection{Joint Realization}
We now define joint type realization for a CSS code. Until this point, we have only discussed realizations for a particular sector (X or Z). That is, when we were describing the behavior of random CSS codes realizing $(\cV, U, T)$, we were either talking only about $C_X/S_X$ or $C_Z/S_Z$, the logical space of operators corresponding to either $X$ or $Z$. However, for most applications of CSS codes, we need to look at the joint logical space:
\[
    (C_X + C_Z)/(S_X + S_Z) \cong (C_X/S_X) \oplus (C_Z/S_Z).
\]
Fortunately, the structure of CSS codes allows to analyze the behavior in the joint logical space by analyzing behavior in the individual sectors.
More formally, let $A \in \F_Q^{n \times b}$ be a matrix whose columns consisting of coset representatives (belonging to $C_X + C_Z$) that satisfy a $b$-local profile $\cV=(V_1,\ldots,V_n)$ (that is, the $i$th row of the matrix lies in $V_i$, for all $i \in [n]$). Then, by the definition of a CSS code, we know that there exist matrices $A_X, A_Z \in \F_Q^{n \times b}$ such that the columns of $A_X$ (respectively, $A_Z$) belong to $C_X$ (respectively, $C_Z$). These columns are representatives of the cosets in $C_X/S_X$ (respectively, $C_Z/S_Z$).
Thus, we can seek to avoid containing all columns of matrix $A$ simultaneously in a CSS code by avoiding one of $A_X$, $A_Z$ within the individual sectors.

\begin{definition}[Joint Realization of a Local Profile]\label{def:join-realize}
    Let $C_X, C_Z$ and $S_X, S_Z$ be the normalizer and stabilizer labels for the two sectors of CSS code $C = Q(S_X, C_X)$.
    For subspaces $U_s, T_s$, and $\cV$ as defined above, we say that $C$ \emph{jointly realizes} $(\cV, U_X+U_Z, T_X+T_Z)$ if there exist matrices  $A_s \in \F_q^{n \times b}$ for $s \in \{X,Z\}$ such that:
    \begin{enumerate}
        \item  all $i \in [n]$, the $i$th row of $A_s$ lies in $V_i$;
        \item $\row(A_s)=U_s$;
        \item every column of $A_s$ lies in $C_s$;
        \item $\Lambda_{S_s}(A_s) = T_s$.
    \end{enumerate}
    Such matrices $(A_X, A_Z)$ are called realizations of $(\cV, U, T)$.
\end{definition}

As in the classical case, we are interested in the case where the coset representatives are pairwise distinct. Therefore, we want to avoid containing joint realizations $(\cV, U_X+U_Z, T_X+T_Z)$ where $T_X+T_Z \in \LDist_b$.
To this end, for profile family $\cF$, we define
\[
 R^{\rm joint}(\cF)
 :=\min_{\substack{\cV\in\cF,\ T_X\le U_X,\ T_Z\le U_Z\\
                   T_X+T_Z\in\LDist_b}}
 \max\{R_X^{\rm flag}(\cV;U_X,T_X),
       R_Z^{\rm flag}(\cV;U_Z,T_Z)\}.
\]
Note that if both $r_X, r_Z$ are strictly less than $R^{\rm joint}(\cF)$, then at least one
\[
    r_X < R_X^{\rm flag}(\cV;U_X,T_X) \quad r_Z < R_Z^{\rm flag}(\cV;U_Z,T_Z)
\]
is satisfied, and hence by combining \eqref{eq:bi-implies-rate-lambda} and Proposition~\ref{prop:fixed-type}, the probability of realizing at least one of $(\cV;U_X,T_X), (\cV;U_Z,T_Z)$ is very low. By the above discussion, this implies that the probability of a joint realization of $(\cV;U_X+U_Z,T_X+T_Z)$ is low too.

We now state and prove the central theorem of this section: that the ``joint'' rate threshold defined above is actually equal to the classical rate threshold.
\begin{theorem}[Threshold Preservation]
\label{thm:sub-joint-sector-preservation}
For arbitrary sector parameters,
\[
   \boxed{R^{\rm joint}(\cF)=R_{\rm cl}(\cF).}
\]
\end{theorem}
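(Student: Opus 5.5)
We prove the two inequalities $R^{\rm joint}(\cF)\ge R_{\rm cl}(\cF)$ and $R^{\rm joint}(\cF)\le R_{\rm cl}(\cF)$ separately.

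\textbf{Lower bound.} Take any admissible tuple $(\cV,U_X,T_X,U_Z,T_Z)$ in the minimization defining $R^{\rm joint}(\cF)$. Put $U:=U_X+U_Z$. Since $T_X\le U_X$ and $T_Z\le U_Z$, we have $T_X+T_Z\le U$.

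The set $\LDist_b$ is upward closed: if some $t\in T$ separates coordinates $j\neq k$, then the same $t$ lies in any $U\supseteq T$. Hence $T_X+T_Z\in\LDist_b$ gives $U\in\LDist_b$. We must also check $U\neq 0$, which holds as soon as $b\ge 2$, because $0\notin\LDist_b$ then. (For $b=1$ the distinctness condition is vacuous and both sides range over all of $\cL(\F_q)$; we handle this degenerate case directly, or exclude $U=0$ by the convention $\lambda(\cV;0,0)=+\infty$.)

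Theorem~\ref{thm:sub-asymmetric-sector-reduction} now gives
\[
R_{\rm cl}(\cF)\le R_{\rm cl}(\cV;U)\le \max\{R_X^{\rm flag}(\cV;U_X,T_X),\,R_Z^{\rm flag}(\cV;U_Z,T_Z)\}.
\]
Minimizing over admissible tuples yields $R_{\rm cl}(\cF)\le R^{\rm joint}(\cF)$.

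\textbf{Upper bound.} Let $(\cV,U)$ attain $R_{\rm cl}(\cF)$, with $U\in\LDist_b$. We choose a joint witness concentrated in one sector: $U_X=T_X=U$ and $U_Z=T_Z=0$. Then $T_X+T_Z=U\in\LDist_b$, so this tuple is admissible.

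By the equality case of Proposition~\ref{prop:sub-one-sector-comparison}, $R_X^{\rm flag}(\cV;U,U)=R_{\rm cl}(\cV;U)$. The $Z$-component is the zero flag, and $\lambda_Z(\cV;0,0)=+\infty$ gives $R_Z^{\rm flag}(\cV;0,0)=-\infty$, so it does not affect the maximum. Therefore
\[
R^{\rm joint}(\cF)\le \max\{R_{\rm cl}(\cV;U),\,-\infty\}=R_{\rm cl}(\cF).
\]
Combining the two bounds proves the equality. Since the $X$-flag $(U,U)$ has $T=U$, the equality case of Proposition~\ref{prop:sub-one-sector-comparison} does not depend on the sector dimensions $m_0^s,m_1^s$. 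This is why the statement holds for arbitrary sector parameters.

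\textbf{Main obstacle.} The delicate step is the treatment of the zero flag in the upper bound. The argument requires the minimization defining $R^{\rm joint}$ to allow $U_Z=0$, and the convention $\lambda(\cV;0,0)=+\infty$ to make $R_Z^{\rm flag}(\cV;0,0)=-\infty$. If zero flags were excluded, we would instead take $U_Z=U_X=T_X=T_Z=U$, giving the maximum of $R_X^{\rm flag}(\cV;U,U)$ and $R_Z^{\rm flag}(\cV;U,U)$. Both equal $R_{\rm cl}(\cV;U)$ by the same equality case, so the bound still goes through. We would state this alternative witness explicitly, so the proof does not depend on how the empty flag is handled.
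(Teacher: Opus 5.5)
Your proposal is correct and follows the paper's proof essentially verbatim. The lower bound uses the upward closure of $\LDist_b$ together with Theorem~\ref{thm:sub-asymmetric-sector-reduction}, and the upper bound uses the one-sector witness $(U_X,T_X)=(U,U)$, $(U_Z,T_Z)=(0,0)$ together with the equality case of Proposition~\ref{prop:sub-one-sector-comparison}. Your extra remarks are valid refinements that the paper leaves implicit: the check that $U\neq 0$ when $b\ge 2$, and the alternative witness $U_Z=T_Z=U$, which gives $R_{\rm cl}(\cV;U)$ in both sectors and so avoids relying on the $(0,0)$ convention.
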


\begin{proof}
Every subspace $U_X+U_Z$ is in $\LDist_b$ because it
contains $T_X+T_Z\in\LDist_b$.  Theorem~\ref{thm:sub-asymmetric-sector-reduction}
therefore gives $R_{\rm cl}(\cF) \le R^{\rm joint}(\cF)$.
Conversely, take the $(\cV, U)$ that minimizes $R_{\rm cl}(\cF)$ and choose $(U_X,T_X)=(U,U)$ and $(U_Z,T_Z)=(0,0)$.
Thus, $T_X+T_Z=U \in \LDist_b$, and we have that $\max\{R_X^{\rm flag}(\cV;U_X,T_X), R_Z^{\rm flag}(\cV;U_Z,T_Z)\} = R_X^{\rm flag}(\cV;U,U)$, which is equal to $R_{\rm cl}(\cV, U)$ by Proposition~\ref{prop:sub-one-sector-comparison}.
The theorem follows.
\end{proof}

Thus if $\max\{r_X \defeq m_1^X/n,r_Z \defeq m_1^Z/n \} + \delta  \le R_{\rm cl}(\cF)$ for some $\delta>0$, then by \eqref{eq:precise-bi-implies}
\[
    \lambda_X(\cV; U_X, T_X) \le 1-\frac{\delta}{1-r_X} ;\quad \lambda_Z(\cV; U_Z, T_Z) \le 1-\frac{\delta}{1-r_Z}
\]
for every $\cV \in \cF$, and $U_s, T_s$ such that $T_X+T_Z \in \LDist_b$.

For a CSS code $C$ as defined in Definition~\ref{def:join-realize}, let $E(C, \cF)$ denote the event that $C$ realizes $(\cV, U_X+U_Z, T_X+T_Z)$ for some $\cV \in \cF$, $T_X \le U_X, T_Z \le U_Z$ satisfying $T_X+T_Z \in \LDist_b$.
\begin{corollary}[Probability of Joint Sector Satisfaction]\label{cor:joint-form}
For the event $E$ defined above, we have:
\begin{equation}\label{eq:prob-join-sat}
    Pr_C[E(C, \cF)] \le |\cF|\cdot q^{4b^2} \cdot q^{-\delta n}.
\end{equation}
\end{corollary}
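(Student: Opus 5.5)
The plan is a union bound over all joint types, reducing each to a single-sector event and then invoking Proposition~\ref{prop:fixed-type}. If $C$ jointly realizes $(\cV, U_X+U_Z, T_X+T_Z)$ via $(A_X,A_Z)$, then by Definition~\ref{def:join-realize} the $X$-sector nested pair $(S_X,C_X)$ realizes $(\cV,U_X,T_X)$ and the $Z$-sector pair $(S_Z,C_Z)$ realizes $(\cV,U_Z,T_Z)$. Thus $E(C,\cF)$ is contained in the union, over $\cV\in\cF$ and over quadruples $(U_X,T_X,U_Z,T_Z)$ with $T_X+T_Z\in\LDist_b$, of the intersection of these two single-sector events. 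Bounding each intersection by whichever of the two events is rarer costs nothing.

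\textbf{Counting types.} There are at most $q^{b^2}$ subspaces of $\F_q^b$. Hence there are at most $q^{2b^2}$ flags $(U_s,T_s)$ per sector, and at most $q^{4b^2}$ quadruples. Together with the factor $|\cF|$ from the profiles, this accounts for the prefactor $|\cF|\,q^{4b^2}$ in \eqref{eq:prob-join-sat}. It therefore suffices to show that every quadruple with at least one nonzero $U_s$ has per-type probability at most $q^{-\delta n}$.

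\textbf{Bounding a single type.} Fix $\cV$ and a quadruple with $T_X+T_Z\in\LDist_b$. The nonzero sector is handled as follows.

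\begin{itemize}
\item Since $\LDist_b$ is nonempty only for $b\ge 2$ and excludes $0$, at least one $U_s$ is nonzero; pick such an $s$.
\item By the hypothesis $\max\{r_X,r_Z\}+\delta\le R_{\rm cl}(\cF)$ and Theorem~\ref{thm:sub-joint-sector-preservation}, the discussion following that theorem gives $\lambda_s(\cV;U_s,T_s)\le 1-\delta/(1-r_s)$.
\item Apply Proposition~\ref{prop:fixed-type} with $\eta=\delta/(1-r_s)$. The probability that sector $s$ realizes $(\cV,U_s,T_s)$ is at most $q^{-\eta(n-m_1^s)}=q^{-\delta n/(1-r_s)\cdot(1-r_s)}=q^{-\delta n}$, because $n-m_1^s=n(1-r_s)$.
\end{itemize}

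The marginal distribution of each sector is assumed to be the parity-check (or uniform) nested ensemble. If it is the uniform ensemble, a $\gamma_q^{-1}$ factor enters. This constant is absorbed by slack, for example by using the $q^{4b^2}$ count loosely, since $q^{b^2}$ overcounts subspaces; otherwise it should be stated explicitly.

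\textbf{Main obstacle.} The delicate point is justifying that the per-flag bound from the discussion after Theorem~\ref{thm:sub-joint-sector-preservation} holds for the chosen sector $s$ individually, and not merely for the maximum of the two sector thresholds. Theorem~\ref{thm:sub-asymmetric-sector-reduction} only yields
\[
\max\{R_X^{\rm flag},R_Z^{\rm flag}\}\ge R_{\rm cl}(\cF)\ge r_s+\delta,
\]
so I would choose $s$ to be the sector attaining that maximum, rather than an arbitrary nonzero one. That sector necessarily has $U_s\neq 0$: for $U_s=0$ the convention $\lambda_s(\cV;0,0)=+\infty$ makes $R_s^{\rm flag}=-\infty$, so it cannot attain the maximum. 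With this choice, \eqref{eq:precise-bi-implies} applies to that sector, and the per-type bound $q^{-\delta n}$ follows. Summing over the at most $|\cF|\,q^{4b^2}$ types completes the argument.
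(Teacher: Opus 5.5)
Your proof is correct and follows the paper's argument. You take a union bound over $\cV\in\cF$ and the at most $q^{4b^2}$ quadruples $(U_X,T_X,U_Z,T_Z)$, then apply Proposition~\ref{prop:fixed-type} with $\eta=\delta/(1-r_s)$ in a single sector to get $q^{-\delta n}$ per type. Choosing the sector that attains $\max\{R_X^{\rm flag},R_Z^{\rm flag}\}$ is the right move: the paper's preceding remark that the $\lambda$-bound holds in both sectors is overstated, and it fails e.g.\ when $U_Z=0$. Your aside that $\LDist_b$ is empty for $b=1$ is inaccurate, since the defining condition is vacuous there and $\LDist_1$ contains $0$. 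It is harmless, because the maximizing sector is automatically nonzero.
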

\begin{proof}
    The corollary follows by union bounding over all $\cV \in \cF$, $U_X,U_Z,T_X,T_Z$ satisfying the appropriate conditions, and then applying Proposition~\ref{prop:fixed-type}.
\end{proof}

\subsection{The Balanced Case}

For a CSS code $C_Z^\perp\subseteq C_X$, let
$k_X:=\dim C_X$, $k_Z:=\dim C_Z$, and $k_Q:=k_X+k_Z-n$.  The two sector
pairs have parameters $(n-k_Z,k_X)$ and $(n-k_X,k_Z)$, so
$r_X=k_X/n$, $r_Z=k_Z/n$, and $r_X+r_Z=1+R_Q$ for $R_Q:=k_Q/n$.  Their
costs are
\[
 \kappa_X(U,T)=k_Z\dim U-k_Q\dim T,\qquad
 \kappa_Z(U,T)=k_X\dim U-k_Q\dim T.
\]

If $k_X=k_Z$, then $r_X=r_Z=:r=(1+R_Q)/2$.
The general condition $\max\{r_X,r_Z\} +\delta \le R_{\rm cl}(\cF)$ therefore
reduces to $(1+R_Q)/2 +\delta \le R_{\rm cl}(\cF)$.

\subsection{Consequences of \cref{thm:sub-joint-sector-preservation}}
The most important consequence of the threshold preservation theorem is that the threshold rates for random linear codes apply in the quantum setting. This allows one to prove upper/lower bounds on threshold rates in the classical setting, and port them to the quantum side directly. Indeed, several such bounds have already been established for various LCL properties. We discuss the bounds for list decoding and list recovery.

In \cite{AGL24}, the authors established a lower bound on $R_{\rm cl}$ for the case of list-decoding, and \cite{LMS25}  proved a tight upper bound for the same quantity. In the case of list recovery, the best known lower bound was established by \cite{BCDZ26b} (followed by a small improvement in \cite{GG26}), while the lower bound is from \cite{LS25}.

\begin{corollary}\label{cor:list-dec-joint}
    Let $\cF_{\rm LD}$ denote the family of local profiles associated with (the complement of) $(\rho, L)$-list decodability. Then,
    \begin{equation}\label{eq:list-dec-joint}
        R^{\rm joint}(\cF_{\rm LD}) = R_{\rm cl}(\cF_{\rm LD}) = 1-\frac{L+1}{L} \rho.
    \end{equation}
\end{corollary}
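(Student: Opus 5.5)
The first equality in \eqref{eq:list-dec-joint} needs no new argument. \cref{thm:sub-joint-sector-preservation} holds for an arbitrary finite profile family, so I will apply it with $\cF=\cF_{\rm LD}$ to get $R^{\rm joint}(\cF_{\rm LD})=R_{\rm cl}(\cF_{\rm LD})$. All the remaining work is the classical identity $R_{\rm cl}(\cF_{\rm LD})=1-\frac{L+1}{L}\rho$.

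For that identity I first fix the encoding of $\cF_{\rm LD}$. Take witness width $b=L+1$. Since the definitions require $U\neq0$, I encode the center $z$ by translating to the origin: the witness consists of $L+1$ codewords lying within radius $\rho n$ of a common center. Up to the standard reduction this becomes a family of profiles indexed by agreement patterns. For each $i$ there is a set $J_i\subseteq[L+1]$ of columns agreeing with the center at coordinate $i$. The profile is $V_i=\{x\in\F_q^{b}: x_j=x_k \ \forall j,k\in J_i\}$, or the version pinned to the center value. Each column disagrees in at most $\rho n$ coordinates. The distinctness requirement is $U\in\LDist_b$, which is exactly how the paper defines $R_{\rm cl}(\cF)$.

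I would not reprove the identity from scratch. The paper states just before the corollary that the matching bounds are known: the lower bound on $R_{\rm cl}$ for list decoding is due to \cite{AGL24}, and the tight upper bound to \cite{LMS25}. So the plan is to check that $R_{\rm cl}(\cF_{\rm LD})$ as defined here, namely $\min_{\cV,U\in\LDist_b}\bigl(1-\min_{W<U}\frac{a_\cV(U)-a_\cV(W)}{n(\dim U-\dim W)}\bigr)$, is the same threshold quantity as in \cite{LMS25}. Once that is confirmed, the value $1-\frac{L+1}{L}\rho$ follows directly from those results.

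As a sanity check, and as the skeleton of a direct proof, I would verify both bounds by hand.

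\textbf{Upper bound.} Take a balanced agreement pattern in which each coordinate has exactly one disagreeing column, spread evenly across the $L+1$ columns. Then every coordinate contributes $\dim V_i\cap U=L$ out of $L+1$. Choosing $W=0$ and $U=\F_q^{L+1}$ gives a ratio that forces $R_{\rm cl}\le 1-\frac{L+1}{L}\rho$ once the radius constraint is accounted for.

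\textbf{Lower bound.} I need the inequality $a_\cV(U)-a_\cV(W)\ge n(1-\frac{L+1}{L}\rho)(\dim U-\dim W)$ for every $U\in\LDist_b$ and every $W<U$. This is the main obstacle, and for it I would defer to the argument of \cite{AGL24}, as reformulated in \cite{LMS25}.
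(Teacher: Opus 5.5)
Your argument is the paper's own: Theorem~\ref{thm:sub-joint-sector-preservation} with $\cF=\cF_{\rm LD}$ gives the first equality, and the value $1-\frac{L+1}{L}\rho$ is imported from the matching bounds of \cite{AGL24,LMS25}. Your optional hand checks, however, are wrong as written and should be dropped or fixed: a coordinate with exactly one disagreeing column has $\dim V_i=2$, because the $L$ agreeing columns are forced equal; fixing a single $W$ only bounds $R_{\rm cl}(\cV;U)$ from below, so the upper bound needs one witness for which \emph{every} $W<U$ satisfies $a_\cV(U)-a_\cV(W)\ge\frac{L+1}{L}\rho\,n(\dim U-\dim W)$ (e.g.\ all columns agree on $(1-\frac{L+1}{L}\rho)n$ coordinates, and on the $j$th of $L+1$ equal groups of the remaining coordinates only column $j$ matches the center, the binding minors being the $W\ni\mathbf 1$); and the lower bound needs, for every $(\cV,U)$ with $U\in\LDist_b$, \emph{some} $W<U$ with $a_\cV(U)-a_\cV(W)\le\frac{L+1}{L}\rho\,n(\dim U-\dim W)$, whereas your displayed target (with $\ge$, every $W$, and constant $1-\frac{L+1}{L}\rho$) is false already for the all-agree profile $V_i=\Span(\mathbf 1)$ with $U=\F_q^{L+1}$ and $W=\Span(\mathbf 1)$.
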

\begin{proof}
    This follows by combining \cref{thm:sub-joint-sector-preservation} with the list-decoding upper and lower bounds of \cite{AGL24, LMS25}.
\end{proof}

For list-recovery, the known bounds are not tight. We only state the lower bound here.
\begin{corollary}\label{cor:list-rec-joint}
    Let $\cF_{\rm LR}$ denote the family of local profiles associated with (the complement of) $(\rho, \ell, L)$-list recoverability. Then, if for $\eps>0$, $\rho, \ell, L$ satisfy
    \[
        L \le \left( \frac{\ell}{1-\rho}\right)^{(1-\rho)/\eps}
    \]
    then
    \begin{equation}\label{eq:list-rec-joint}
        R^{\rm joint}(\cF_{\rm LR}) = R_{\rm cl}(\cF_{\rm LR}) \ge  1-\rho-\eps.
    \end{equation}
\end{corollary}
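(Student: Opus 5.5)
The plan is to mirror the proof of Corollary~\ref{cor:list-dec-joint}: separate the equality from the inequality. The equality $R^{\rm joint}(\cF_{\rm LR}) = R_{\rm cl}(\cF_{\rm LR})$ costs nothing. Theorem~\ref{thm:sub-joint-sector-preservation} holds for an arbitrary finite profile family and arbitrary sector parameters, so we apply it with $\cF=\cF_{\rm LR}$. All the content is in the classical lower bound $R_{\rm cl}(\cF_{\rm LR})\ge 1-\rho-\eps$, which we import from the random-linear-code list-recovery literature \cite{BCDZ26b} (with the refinement of \cite{GG26} if needed).

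First, we make $\cF_{\rm LR}$ explicit in the LCL language of \cite{LMS25}, with width $b=L+1$. A profile is determined by lists $T_i$ with $|T_i|\le\ell$ and by, for each coordinate $i$, an assignment of the $L+1$ columns either to elements of $T_i$ or to ``disagreement.'' Columns assigned to the same list element have equal $i$th entries. So $V_i$ is the subspace of $\F_q^{L+1}$ cut out by the equalities $x_j=x_k$ for each pair of columns that land on a common list element. There are at most $\rho n$ disagreements per column.

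Strictly speaking, list elements give affine rather than linear constraints. After the translation described in the ``Reduction to the CSS quotient'' paragraph, however, we may treat these profiles exactly as \cite{LMS25} do. The distinctness requirement on $L+1$ pairwise distinct cosets is encoded by restricting to $U\in\LDist_{L+1}$, which matches the minimum defining $R_{\rm cl}(\cF)$. Once this identification is in place, $R_{\rm cl}(\cF_{\rm LR})$ is literally the LCL threshold studied classically.

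Second, we invoke the classical theorem. It states that random linear codes of rate $1-\rho-\eps$ are $(\rho,\ell,L)$-list recoverable whenever $L\le(\ell/(1-\rho))^{(1-\rho)/\eps}$ (up to the precise form in \cite{BCDZ26b}). By the LCL threshold theorem of \cite{LMS25}, this is equivalent to $\min_{W<U}(a_\cV(U)-a_\cV(W))/(n(\dim U-\dim W))\le\rho+\eps$ for every $\cV\in\cF_{\rm LR}$ and every $U\in\LDist_{L+1}$, that is, to $R_{\rm cl}(\cF_{\rm LR})\ge 1-\rho-\eps$. Combining this with the first step completes the argument.

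The main obstacle is the translation step. We must check that the parameter form quoted from \cite{BCDZ26b,GG26} is indeed a statement about the minimum over $U\in\LDist_b$ of the classical threshold, rather than a high-probability statement with different normalization. If the cited result only gives rate $1-\rho-\eps$ with high probability, we will apply the converse direction of the \cite{LMS25} threshold theorem. If $R_{\rm cl}$ were smaller than $1-\rho-\eps$, then some profile would be supercritical at a rate strictly between the two values, and random codes at that rate would fail list recovery with high probability. This would contradict the cited bound once we absorb an arbitrarily small slack into $\eps$.
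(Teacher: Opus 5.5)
Your decomposition matches the paper's. The equality is Theorem~\ref{thm:sub-joint-sector-preservation} applied to $\cF_{\rm LR}$. The inequality is imported from the random-linear-code bound; the paper cites \cite{GG26}, which is the source of this exact parameter form.

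The real problem is your second step. You assert that random linear codes of rate $1-\rho-\eps$ are $(\rho,\ell,L)$-list recoverable \emph{whenever} $L\le(\ell/(1-\rho))^{(1-\rho)/\eps}$. List recoverability gets \emph{harder} as $L$ decreases, so no such theorem exists. What \cite{GG26} gives is $(1-R-\eps,\ell,(\ell/(R+\eps))^{(R+\eps)/\eps})$-recoverability at rate $R$. With $R=1-\rho-\eps$ this is list size $L_0:=(\ell/(1-\rho))^{(1-\rho)/\eps}$, which yields $(\rho,\ell,L)$-recoverability only for $L\ge L_0$.

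Since $L_0\ge 1$, the hypothesis as printed admits $L=1$, and then the conclusion is false:
\begin{itemize}
\item For $\ell\ge 2$, put the two columns on distinct list elements. In your own description this gives $V_i=\F_q^2$ for all $i$. With $U=\F_q^2\in\LDist_2$, every minor has $a_\cV(U)-a_\cV(W)=n(\dim U-\dim W)$, so $R_{\rm cl}(\cF_{\rm LR})\le 0$. Concretely, the lists $S_i=\{c_i,c_i'\}$ defeat any two distinct codewords.
\item For $\ell=1$, Corollary~\ref{cor:list-dec-joint} with $L=1$ gives $R_{\rm cl}=1-2\rho$, which is below $1-\rho-\eps$ whenever $\rho>\eps$.
\end{itemize}
So the statement has the inequality on $L$ reversed. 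Your proof inherits this by misquoting the classical result instead of catching it.

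If you read the hypothesis as $L\ge(\ell/(1-\rho))^{(1-\rho)/\eps}$, your argument works and coincides with the paper's one-line proof. You need to add one monotonicity step that the paper leaves implicit. Any $L+1$ pairwise distinct codewords that are bad for $(\rho,\ell,L)$ contain $L_0+1$ codewords that are bad for $(\rho,\ell,L_0)$ with the same lists. Hence $(\rho,\ell,L_0)$-recoverability implies $(\rho,\ell,L)$-recoverability.

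Then run the contrapositive from your last paragraph with this code-level statement. If $R_{\rm cl}(\cF_{\rm LR})<1-\rho-\eps$, some profile is supercritical at an intermediate rate. Random linear codes at that rate then contain a bad $(\rho,\ell,L)$ witness with high probability, contradicting \cite{GG26}. Recoverability passes to random subcodes, which handles the rate slack. The equality then comes for free.

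A minor point: the constraints $x_j=x_k$ defining $V_i$ are already linear. Your remark about affine constraints and translation is unnecessary here.
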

\begin{proof}
    This follows by combining \cref{thm:sub-joint-sector-preservation} with the list-recovery bound in \cite{GG26}, where they proved that random linear codes of rate $R$ achieve $(1-R-\eps, \ell, (\ell/(R+\eps))^{(R+\eps)/\eps})$-list recoverability.
\end{proof}

In the balanced case, we see that random CSS codes exist when:
\[
    \frac{1+R_Q}{2} < 1-\frac{L+1}{L} \rho; \quad  \frac{1+R_Q}{2} < 1-\rho-\eps
\]
for list decoding and list recovery respectively.

We note that the list sizes are the same in the classical and quantum settings. This is in contrast to the result achieved in \cite{BGG24}, where the list sizes are $L, L^2$ in the classical and quantum settings, respectively.
Moreover, in Section~\ref{sec:css-ael}, we show how to derandomize these results and achieve explicit quantum codes whose list sizes match that of classical ones.

\section{qLCL for Folded Coordinates}\label{sec:folded-prelims}
This section is concerned with the qLCL framework for folded coordinates. The definitions and theorems are similar to their counterparts appearing in the sections above. Indeed, this section is a generalization, and the purpose of presenting the theorems and proofs in the earlier sections was for the sake of exposition only, as those proofs are clearer and require fewer technical details. We note that this formulation is inspired by the LCL formulation in \cite{GGH26}.

For a finite-dimensional space \(V\), \(\calL(V)\) denotes its lattice of linear subspaces.  If \(U\le V\), then \(U^\circ\le V^*\) is the annihilator.  We use
\[
  \dim U+\dim U^\circ=\dim V,
  \qquad (U^\circ)^\circ=U.
\]

\subsection{Folded coordinates}

Let \(W\cong\F_q^s\) be one edge alphabet.  A folded length-\(n\) ambient space is
\[
  W^n\defeq W^n=\bigoplus_{i=1}^n W_i .
\]
For \(A\le W^n\), define the \(i\)-th coordinate-zero subspace and coordinate rank by
\[
  K_i\defeq \{x\in W^n:x_i=0\},\qquad
  A_i\defeq A\cap K_i,
\]
\[
  \rho_i(A)\defeq \dim A-\dim A_i=\dim \pi_i(A),
\]
where \(\pi_i:W^n\to W_i\) is projection.  Thus
\[
   \frac1n\sum_i\dim A_i\le \alpha\dim A
   \quad\Longleftrightarrow\quad
   \frac1n\sum_i\rho_i(A)\ge (1-\alpha)\dim A .
\]
The block Hamming weight of \(x\in W^n\) is
\[
  \wt(x)=|\{i\in[n]:x_i\ne 0\}|.
\]

\subsection{Bilinear forms and CSS notation}

Fix a nondegenerate \(\F_q\)-bilinear pairing \(\ip{\cdot}{\cdot}_W\) on \(W\), and extend it coordinatewise to \(W^n\).  Orthogonal complements are with respect to this pairing.  For CSS codes this is the standard \(\F_q\)-linear abstraction: the \(X\)- and \(Z\)-label spaces are dual through a nondegenerate coordinatewise pairing.  Its relation to physical Pauli commutation, which over non-prime fields is mediated by the trace pairing rather than by the \(\F_q\)-valued pairing itself, is recorded in Remark~\ref{rsd:rem:trace-pairing-folded}.

\begin{remark}[Physical commutation and the trace pairing]\label{rsd:rem:trace-pairing-folded}
As in the scalar case (Lemma~\ref{qlcl:lem:trace-pairing}), the commutation phase of a pair of generalized Pauli operators over \(\F_q\), \(q=p^e\), is the additive character \(\omega_p^{\Tr_{\F_q/\F_p}(\cdot)}\) applied to the \(\F_q\)-valued pairing.  Concretely, choose an \(\F_q\)-basis of \(W\) identifying the fixed nondegenerate form \(\ip{\cdot}{\cdot}_W\) with the standard form on \(\F_q^s\); each folded coordinate is then a block of \(s\) qudits, and \(X(x)\) commutes with \(Z(z)\) iff \(\Tr_{\F_q/\F_p}(\ip xz)=0\), where \(\ip{\cdot}{\cdot}\) is the coordinatewise extension of \(\ip{\cdot}{\cdot}_W\).  For an individual pair of vectors this trace condition is strictly weaker than \(\ip xz=0\) when \(q\) is not prime.  For \(\F_q\)-linear subspaces the two conditions coincide: every code space in this paper is \(\F_q\)-linear, hence closed under scalar multiplication, so \(X(x)\) commutes with all of \(\{Z(z):z\in D\}\) iff \(x\in D^\perp\), by the argument of Lemma~\ref{qlcl:lem:trace-pairing} applied verbatim to the pairing on \(W^n\).  All CSS commutation requirements below are of this subspace form, so the \(\F_q\)-bilinear model used throughout is faithful to physical Pauli commutation; the \(\F_q\)-valued pairing should not be read as an individual-pair commutation criterion.
\end{remark}

A folded CSS code is specified by a nested pair
\[
   C_2\subseteq C_1\subseteq W^n.
\]
In the usual two-code notation, one may take \(C_Z=C_1\) and \(C_X=C_2^\perp\), so \(C_X^\perp=C_2\subseteq C_Z\).  The two logical sectors are
\[
   L_X=C_1/C_2,
   \qquad
   L_Z=C_2^\perp/C_1^\perp.
\]
They have the same \(\F_q\)-dimension
\[
   k_Q=\dim C_1-\dim C_2,
\]
which is the quantum dimension.  The block logical distances are
\[
   d_X=\min_{x\in C_1\setminus C_2}\wt(x),
   \qquad
   d_Z=\min_{z\in C_2^\perp\setminus C_1^\perp}\wt(z).
\]
The pairing on \(W^n\) descends to a nondegenerate pairing
\[
  (C_1/C_2)\times (C_2^\perp/C_1^\perp)\to \F_q.
\]
Indeed, the left radical of the restriction to \(C_1\times C_2^\perp\) is \(C_1\cap (C_2^\perp)^\perp=C_2\), and similarly on the right.

\begin{remark}[Folded metric versus scalar metric]\label{rsd:rem:folded-metric}
All subspace-design and distance statements in the folded part of the paper use the block metric on the coordinates of \(W^n\).  If \(W=\F_q^s\) is unfolded into scalar coordinates, the scalar length becomes \(ns\).  A block-weight lower bound \(\wt_{\rm block}(x)\ge \delta n\) implies scalar weight at least \(\delta n\), and hence scalar relative distance at least \(\delta/s\), but not necessarily \(\delta\).  Thus the construction obtained later is binary in the folded sense: it is a CSS code over a constant-dimensional binary vector alphabet, with Singleton-scale guarantees in the folded metric.
\end{remark}

\begin{remark}[Self-dual-containing specialization]
A symmetric CSS pair is obtained from \(S=N^\perp\subseteq N\).  Then \(C_2=S\) and \(C_1=N\), the two logical sectors are naturally dual copies of \(N/S\), and the quantum rate is \((2\dim N-ns)/(ns)\).  The general nested-pair formulation below includes this case but is more convenient for the probabilistic inner-code construction.
\end{remark}

\subsection{Folded qLCL witnesses}

Let \(B\) be a finite-dimensional coefficient space.  For a linear map
\[
      M:B\to W^n,
\]
write \(M_i=\pi_i\circ M\in\Hom(B,W)\) (recall that $\Hom(B,W)$ is the set of all linear maps from $B$ to $W$).  The ordinary row space of \(M\) is
\[
      \row(M)=(\ker M)^\circ\le B^*.
\]
If \(S\subseteq N\subseteq W^n\) and \(M(B)\subseteq N\), define the logical kernel and logical row space by
\[
      K_S(M)=M^{-1}(S),
      \qquad
      \Lambda_S(M)=K_S(M)^\circ\le B^*.
\]
Then \(\Lambda_S(M)\le \row(M)\), with equality if and only if \(M(B)\cap S=0\). 

For \(U\le B^*\), let
\[
      \Hom_U(B,W)=\{f\in\Hom(B,W):U^\circ\subseteq\ker f\}.
\]
Equivalently, \(\Hom_U(B,W)\cong\Hom(U^*,W)\cong W\otimes U\).  A folded local profile is a tuple
\[
      \mathfrak V=(\mathfrak V_1,\ldots,\mathfrak V_n),
      \qquad \mathfrak V_i\le\Hom(B,W),
\]
and its mass on \(U\) is
\begin{equation}\label{eq:folded-mass}
      a_{\mathfrak V}(U)
      =\sum_{i=1}^n\dim\bigl(\mathfrak V_i\cap\Hom_U(B,W)\bigr).
\end{equation}
A folded realization of \((\mathfrak V,U,T)\) in \(S\subseteq N\) is a map \(M:B\to N\) such that \(M_i\in\mathfrak V_i\) for all \(i\), \(\row(M)=U\), and \(\Lambda_S(M)=T\).  If \(\dim_{\F_q}W=s\) and the ambient dimension is \(ns\), the corresponding random-pair potential for \(\dim S=m_0\), \(\dim N=m_1\), is
\begin{equation}\label{eq:folded-qlcl-potential}
      \Phi^{W}_{m_0,m_1}(\mathfrak V,U,T)
      =a_{\mathfrak V}(U)
       -(ns-m_0)(\dim U-\dim T)
       -(ns-m_1)\dim T .
\end{equation}
The previously stated threshold theorem (\cref{qlcl:thm:fixed-type}) from  extends to this folded formulation.  A folded constraint \(\mathfrak V_i\le\Hom(B,W)\) can couple the \(s\) scalar rows inside one folded coordinate, so we do not simply cite the scalar theorem.  Instead we restate each ingredient of the scalar proof in folded form and verify it.  The statements and constants are identical to the scalar ones because every counting step happens in the fixed coefficient space \(B\); only the per-coordinate counting changes, and it changes in the same way in the first and in the second moment.  Throughout, \(b=\dim B\), and for a flag \(T\le U\le B^*\) we write \(u=\dim U\) and \(t=\dim T\).

Define the folded quotient-minor gap by
\begin{equation}\label{eq:folded-gamma}
   \Gamma^W_{m_0,m_1}(\mathfrak V,U,T)
      \defeq \min_{U'<U}\Bigl(
         \Phi^W_{m_0,m_1}(\mathfrak V,U,T)
        -\Phi^W_{m_0,m_1}(\mathfrak V,U',T\cap U')
       \Bigr),
\end{equation}
the minimum ranging over proper subspaces \(U'<U\).

\begin{definition}[Folded parity-check and uniform ensembles]\label{rsd:def:folded-ensembles}
Fix a basis of \(W\), identifying \(W^n\) with \(\F_q^{ns}\), and put
\[
   a=ns-m_1,
   \qquad
   c=m_1-m_0 .
\]
The folded parity-check ensemble \(\PCSS^W(n,q;m_0,m_1)\) samples independent uniform matrices \(H_C\in\F_q^{a\times ns}\) and \(H_S\in\F_q^{c\times ns}\) and sets \(N=\ker H_C\) and \(S=\ker H_C\cap\ker H_S\); this is exactly \(\PCSS(ns,q;m_0,m_1)\) of Definition~\ref{qlcl:def:pcss} on the scalar unfolding.  The folded uniform ensemble \(\Nest^W(n,q;m_0,m_1)\) is likewise \(\Nest(ns,q;m_0,m_1)\).  The folded flag cost is
\[
   \kappa^W_{m_0,m_1}(U,T)
      \defeq au+c(u-t)
      =(ns-m_0)(u-t)+(ns-m_1)t,
\]
so that \(\Phi^W_{m_0,m_1}(\mathfrak V,U,T)=a_{\mathfrak V}(U)-\kappa^W_{m_0,m_1}(U,T)\) in~\eqref{eq:folded-qlcl-potential}.
\end{definition}

For \(U\le B^*\), let
\[
   \cM_{\mathfrak V}(U)
      =\{M:B\to W^n:\ M_i\in\mathfrak V_i\cap\Hom_U(B,W)\text{ for every }i\},
\]
and let \(\cM^{=}_{\mathfrak V}(U)=\{M\in\cM_{\mathfrak V}(U):\row(M)=U\}\).  Since \(\row(M)\le U\) iff \(U^\circ\subseteq\ker M=\bigcap_i\ker M_i\), iff \(M_i\in\Hom_U(B,W)\) for every \(i\), the set \(\cM_{\mathfrak V}(U)\) consists exactly of the profile-obeying maps with row space contained in \(U\), and the coordinates of such a map range independently over the vector spaces \(\mathfrak V_i\cap\Hom_U(B,W)\).  Hence
\begin{equation}\label{eq:folded-mass-count}
   |\cM_{\mathfrak V}(U)|=q^{a_{\mathfrak V}(U)} .
\end{equation}

\begin{lemma}[Folded exact-row-space count]\label{rsd:lem:folded-exact-count}
Let \(\mathfrak V\) be a folded local profile and \(U\le B^*\).  Then \(|\cM^{=}_{\mathfrak V}(U)|\le q^{a_{\mathfrak V}(U)}\).  Moreover, if
\[
   a_{\mathfrak V}(U)-a_{\mathfrak V}(U')\ge\epsilon
   \qquad\text{for every proper }U'<U,
\]
then
\[
   |\cM^{=}_{\mathfrak V}(U)|
      \ge q^{a_{\mathfrak V}(U)}\bigl(1-q^{-\epsilon+b^2}\bigr).
\]
\end{lemma}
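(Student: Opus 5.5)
The plan is to mirror the proof of Lemma~\ref{qlcl:lem:exact-row-count}, replacing scalar row constraints by the folded count~\eqref{eq:folded-mass-count}. The upper bound is immediate: \(\cM^{=}_{\mathfrak V}(U)\subseteq\cM_{\mathfrak V}(U)\), and the latter has exactly \(q^{a_{\mathfrak V}(U)}\) elements by~\eqref{eq:folded-mass-count}.

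For the lower bound, I will take \(M\in\cM_{\mathfrak V}(U)\setminus\cM^{=}_{\mathfrak V}(U)\). Then \(\row(M)\le U\) but \(\row(M)\ne U\), so \(U'\defeq\row(M)\) is a proper subspace of \(U\). Since \(\row(M)\le U'\), the characterization noted before~\eqref{eq:folded-mass-count} gives \(M_i\in\Hom_{U'}(B,W)\) for every \(i\). Because \(M_i\in\mathfrak V_i\) as well, we get \(M\in\cM_{\mathfrak V}(U')\). Hence
\[
  \cM_{\mathfrak V}(U)\setminus\cM^{=}_{\mathfrak V}(U)\subseteq\bigcup_{U'<U}\cM_{\mathfrak V}(U').
\]
Applying~\eqref{eq:folded-mass-count} to each \(U'\) together with the hypothesis \(a_{\mathfrak V}(U')\le a_{\mathfrak V}(U)-\epsilon\), each term has size at most \(q^{a_{\mathfrak V}(U)-\epsilon}\).

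It remains to count proper subspaces of \(U\le B^*\cong\F_q^b\). Every subspace of \(\F_q^b\) is the row space of some \(b\times b\) matrix over \(\F_q\), so there are at most \(q^{b^2}\) of them. The union bound then gives
\[
  \bigl|\cM_{\mathfrak V}(U)\setminus\cM^{=}_{\mathfrak V}(U)\bigr|\le q^{b^2}\,q^{a_{\mathfrak V}(U)-\epsilon}.
\]
Subtracting this from \(|\cM_{\mathfrak V}(U)|=q^{a_{\mathfrak V}(U)}\) yields the claimed bound.

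The only step that needs care is the folded one: checking that row-space containment \(\row(M)\le U'\) is exactly the coordinatewise condition \(M_i\in\Hom_{U'}(B,W)\). This holds because \(\ker M=\bigcap_i\ker M_i\), which is the observation already recorded before~\eqref{eq:folded-mass-count}. I expect no real obstacle beyond this, since all the counting takes place in the fixed coefficient space \(B\), exactly as in the scalar case.
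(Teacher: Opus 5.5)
Your proposal is correct and follows essentially the same route as the paper: the upper bound comes from \(\cM^{=}_{\mathfrak V}(U)\subseteq\cM_{\mathfrak V}(U)\) and~\eqref{eq:folded-mass-count}. The lower bound places every non-exact map in some \(\cM_{\mathfrak V}(U')\) with \(U'<U\) and union-bounds over the at most \(q^{b^2}\) subspaces of \(B^*\cong\F_q^b\).
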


\begin{proof}
The upper bound is~\eqref{eq:folded-mass-count}.  A map in \(\cM_{\mathfrak V}(U)\) whose row space is not \(U\) lies in \(\cM_{\mathfrak V}(U')\) for some proper \(U'<U\).  The number of subspaces of \(B^*\cong\F_q^b\) is at most \(q^{b^2}\), so
\[
   |\cM_{\mathfrak V}(U)\setminus\cM^{=}_{\mathfrak V}(U)|
      \le\sum_{U'<U}q^{a_{\mathfrak V}(U')}
      \le q^{b^2}\,q^{a_{\mathfrak V}(U)-\epsilon}.
\]
Subtracting from~\eqref{eq:folded-mass-count} proves the claim.
\end{proof}

\begin{lemma}[Folded fixed-map probability]\label{rsd:lem:folded-fixed-map}
Let \(M:B\to W^n\) have \(\row(M)=U\), \(\dim U=u\), and let \(T\le U\), \(\dim T=t\).  In the ensemble \(\PCSS^W(n,q;m_0,m_1)\),
\[
   \Prb\bigl[M(B)\subseteq N\text{ and }\Lambda_S(M)=T\bigr]
      =q^{-\kappa^W_{m_0,m_1}(U,T)}\,\iota(c,t),
\]
with \(\iota(c,t)\) as in Lemma~\ref{qlcl:lem:fixed-matrix-prob}.
\end{lemma}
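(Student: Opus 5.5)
The plan is to reduce to the scalar computation of Lemma~\ref{qlcl:lem:fixed-matrix-prob}, working in the fixed coefficient space \(B\) rather than with matrices. After fixing a basis of \(W\), the map \(M\) becomes a linear map \(B\to\F_q^{ns}\). The ensemble \(\PCSS^W(n,q;m_0,m_1)\) is literally \(\PCSS(ns,q;m_0,m_1)\) on this unfolding. So the only things to check are basis-free versions of the two facts used there:
\begin{itemize}
\item the constraint \(M(B)\subseteq N\) imposes exactly \(au\) independent conditions on \(H_C\);
\item conditioned on this, \(H_SM\) is a uniform map out of a \(u\)-dimensional space.
\end{itemize}

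\textbf{Step 1.} Since \(\row(M)=(\ker M)^\circ=U\), we have \(\dim M(B)=\dim B-\dim\ker M=u\). The event \(M(B)\subseteq N\) is \(H_C M=0\), i.e.\ \(H_C\) vanishes on the fixed \(u\)-dimensional subspace \(M(B)\le\F_q^{ns}\). For a uniform \(H_C\in\F_q^{a\times ns}\), its restriction to \(M(B)\) is a uniform element of \(\Hom(M(B),\F_q^a)\). Hence this event has probability \(q^{-au}\), and it is independent of \(H_S\).

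\textbf{Step 2.} Conditioned on Step 1, \(S\cap M(B)=\ker(H_S|_{M(B)})\), and \(H_S|_{M(B)}\) is a uniform map \(M(B)\to\F_q^c\). Pulling back along the isomorphism \(B/\ker M\cong M(B)\), we get
\[
K_S(M)=M^{-1}(S)=M^{-1}\bigl(\ker H_S|_{M(B)}\bigr).
\]
So the condition \(\Lambda_S(M)=T\), i.e.\ \(K_S(M)=T^\circ\), is equivalent to \(\ker(H_S|_{M(B)})=M(T^\circ)\).

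\textbf{Step 3.} Compute \(\dim M(T^\circ)\). Because \(T\le U\), we have \(U^\circ=\ker M\subseteq T^\circ\), so
\[
\dim M(T^\circ)=\dim T^\circ-\dim U^\circ=u-t.
\]
A uniform map from the \(u\)-dimensional space \(M(B)\) to \(\F_q^c\) vanishes on a fixed \((u-t)\)-dimensional subspace with probability \(q^{-c(u-t)}\). Given that, the induced map on the \(t\)-dimensional quotient is uniform, and it is injective with probability \(\iota(c,t)\); this is \(0\) when \(t>c\). Multiplying gives
\[
q^{-au-c(u-t)}\,\iota(c,t)=q^{-\kappa^W_{m_0,m_1}(U,T)}\,\iota(c,t).
\]

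The only point needing care is the annihilator bookkeeping in Steps 2–3: confirming that \(\ker M=U^\circ\) and \(\Lambda_S(M)=T\) translate to a kernel of the prescribed dimension \(u-t\). Note that the folded profile structure plays no role in this lemma, since \(M\) is fixed.
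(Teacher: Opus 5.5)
Your proposal is correct and follows essentially the same argument as the paper. You get probability \(q^{-au}\) for \(H_C\) to vanish on the \(u\)-dimensional image; conditioned on that, \(H_S\) gives a uniform map to \(\F_q^c\) that must vanish exactly on a fixed \((u-t)\)-dimensional subspace and be injective on the \(t\)-dimensional quotient. The only cosmetic difference is that you restrict \(H_S\) to \(M(B)\) and pull back along \(M\) to \(M(T^\circ)\), while the paper uses the induced map of \(H_SM\) on \(B/\ker M\) and the subspace \(T^\circ/U^\circ\); these coincide under \(B/\ker M\cong M(B)\).
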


\begin{proof}
Under the identification \(W^n\cong\F_q^{ns}\), the map \(M\) is a linear map \(B\to\F_q^{ns}\) with \(\ker M=U^\circ\), so \(\dim M(B)=u\).  The event \(M(B)\subseteq N\) is \(H_CM=0\), which constrains each of the \(a\) rows of \(H_C\) to lie in the codimension-\(u\) space \(M(B)^\perp\); it has probability \(q^{-au}\) and depends only on \(H_C\).

Condition on \(H_CM=0\).  Then \(K_S(M)=M^{-1}(S)=\ker(H_SM)\), and \(\Lambda_S(M)=T\) iff \(\ker(H_SM)=T^\circ\).  Since \(H_S\) is uniform and independent of \(H_C\), the map induced by \(H_SM\) on \(B/\ker M\) is a uniformly random linear map from a \(u\)-dimensional space to \(\F_q^c\).  As \(U^\circ=\ker M\subseteq T^\circ\), the condition \(\ker(H_SM)=T^\circ\) says that this random map vanishes exactly on the fixed \((u-t)\)-dimensional subspace \(T^\circ/U^\circ\): vanishing on it has probability \(q^{-c(u-t)}\), and after quotienting by it, injectivity of the induced map on the remaining \(t\)-dimensional space has probability \(\iota(c,t)\).  Multiplying the factors gives the claim, since \(au+c(u-t)=\kappa^W_{m_0,m_1}(U,T)\).
\end{proof}

Exactly as in the scalar case, a positive folded gap separates the mass of \(U\) from the mass of its proper subspaces: if \(\Gamma^W_{m_0,m_1}(\mathfrak V,U,T)\ge\epsilon\), then for every proper \(U'<U\), writing \(T'=T\cap U'\),
\begin{equation}\label{eq:folded-gamma-mass}
   a_{\mathfrak V}(U)-a_{\mathfrak V}(U')
   =\bigl(\Phi^W(U,T)-\Phi^W(U',T')\bigr)
    +\bigl(\kappa^W(U,T)-\kappa^W(U',T')\bigr)
   \ge\epsilon,
\end{equation}
because
\(\kappa^W(U,T)-\kappa^W(U',T')
   =(ns-m_0)\bigl((u-t)-(\dim U'-\dim T')\bigr)+(ns-m_1)(t-\dim T')\ge0\)
when \(T'=T\cap U'\), exactly as in Lemma~\ref{qlcl:lem:gamma-mass}.

\begin{corollary}[Folded one-flag expectation]\label{rsd:cor:folded-one-flag}
Let
\[
   X^W_{\mathfrak V,U,T}
      =\#\{M\in\cM^{=}_{\mathfrak V}(U):\ M(B)\subseteq N,\ \Lambda_S(M)=T\}.
\]
Then
\[
   \Exp X^W_{\mathfrak V,U,T}\le q^{\Phi^W_{m_0,m_1}(\mathfrak V,U,T)}.
\]
If \(\Gamma^W_{m_0,m_1}(\mathfrak V,U,T)\ge\epsilon\), \(t\le c\), and \(\epsilon\ge2b^2+1\), then
\[
   \Exp X^W_{\mathfrak V,U,T}\ge q^{\Phi^W_{m_0,m_1}(\mathfrak V,U,T)-2b^2}.
\]
\end{corollary}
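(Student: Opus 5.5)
The plan mirrors the scalar Corollary~\ref{qlcl:cor:one-flag-expectation}, with its ingredients replaced by their folded counterparts. By linearity of expectation,
\[
   \Exp X^W_{\mathfrak V,U,T}
   =\sum_{M\in\cM^{=}_{\mathfrak V}(U)}\Prb\bigl[M(B)\subseteq N,\ \Lambda_S(M)=T\bigr].
\]
Every $M$ in this sum has $\row(M)=U$. Hence Lemma~\ref{rsd:lem:folded-fixed-map} applies to each summand and gives the common value $q^{-\kappa^W_{m_0,m_1}(U,T)}\iota(c,t)$. Therefore
\[
   \Exp X^W_{\mathfrak V,U,T}
   =|\cM^{=}_{\mathfrak V}(U)|\,q^{-\kappa^W_{m_0,m_1}(U,T)}\,\iota(c,t).
\]
Both bounds will be read off this identity.

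\textbf{Upper bound.} Use $\iota(c,t)\le1$, together with $|\cM^{=}_{\mathfrak V}(U)|\le q^{a_{\mathfrak V}(U)}$ from Lemma~\ref{rsd:lem:folded-exact-count} (equivalently \eqref{eq:folded-mass-count}). This gives $q^{a_{\mathfrak V}(U)-\kappa^W}=q^{\Phi^W}$. No hypothesis is needed; in particular, when $t>c$ we have $\iota=0$ and the bound holds trivially.

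\textbf{Lower bound.} The gap assumption $\Gamma^W\ge\epsilon$ gives, via the mass separation \eqref{eq:folded-gamma-mass}, that $a_{\mathfrak V}(U)-a_{\mathfrak V}(U')\ge\epsilon$ for every proper $U'<U$. This rests on the nonnegativity of the cost difference $\kappa^W(U,T)-\kappa^W(U',T\cap U')$, which holds because $T\cap U'$ is the intersection. The second half of Lemma~\ref{rsd:lem:folded-exact-count} then gives
\[
   |\cM^{=}_{\mathfrak V}(U)|\ge q^{a_{\mathfrak V}(U)}\bigl(1-q^{-\epsilon+b^2}\bigr).
\]
Next we handle the two loss factors.
\begin{itemize}
   \item Since $\epsilon\ge 2b^2+1$, we have $q^{-\epsilon+b^2}\le q^{-b^2-1}\le q^{-1}\le 1/2$. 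Hence $1-q^{-\epsilon+b^2}\ge 1/2\ge q^{-1}$.
   \item Since $t\le c$, each factor $1-q^{j-c}$ with $j\le t-1\le c-1$ is at least $1-q^{-1}\ge q^{-1}$. Thus $\iota(c,t)\ge q^{-t}\ge q^{-b}$, using $t\le u\le b$.
\end{itemize}
Multiplying the pieces gives
\[
   \Exp X^W_{\mathfrak V,U,T}\ge q^{\Phi^W-1-b}.
\]
When $b\ge1$ we have $1+b\le 2b^2$, which yields the stated $q^{\Phi^W-2b^2}$. If $b=0$, then $U=0$ and the statement is vacuous.

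There is no real obstacle here. The only point needing care is that the folded count genuinely factors coordinatewise, so that \eqref{eq:folded-mass-count} and Lemma~\ref{rsd:lem:folded-exact-count} hold. This was already justified by the observation that $\row(M)\le U$ iff each $M_i\in\Hom_U(B,W)$. The elementary constant bookkeeping in the lower bound is the only remaining step to check.
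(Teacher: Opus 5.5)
Your proof is correct and is the paper's argument. The paper also combines the exact fixed-map probability of Lemma~\ref{rsd:lem:folded-fixed-map} with the two-sided count of Lemma~\ref{rsd:lem:folded-exact-count}, feeding the lower half with the mass separation \eqref{eq:folded-gamma-mass}, and absorbs the losses via $\iota(c,t)\ge q^{-t}\ge q^{-b}$ and $1-q^{-\epsilon+b^2}\ge q^{-1}$; your check that $q^{-1-b}\ge q^{-2b^2}$ for $b\ge1$ just spells out the paper's ``weakening constants'' step.
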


\begin{proof}
Combine Lemmas~\ref{rsd:lem:folded-exact-count} and~\ref{rsd:lem:folded-fixed-map} with~\eqref{eq:folded-gamma-mass}, exactly as in Corollary~\ref{qlcl:cor:one-flag-expectation}: the upper bound uses \(|\cM^{=}_{\mathfrak V}(U)|\le q^{a_{\mathfrak V}(U)}\); the lower bound uses \(|\cM^{=}_{\mathfrak V}(U)|\ge q^{a_{\mathfrak V}(U)}(1-q^{-\epsilon+b^2})\) together with \(\iota(c,t)\ge q^{-t}\ge q^{-b}\) for \(t\le c\) and \(1-q^{-\epsilon+b^2}\ge q^{-1}\) under the numerical assumption.
\end{proof}

\begin{lemma}[Folded quotient-minor identity]\label{rsd:lem:folded-minor}
Let \(U_0<U\le B^*\), let \(\rho:U\to\overline U=U/U_0\) be the quotient map, and let \(\overline B\) be a coefficient space of dimension \(\dim\overline U\) with a fixed identification \(\overline B^*\cong\overline U\).  There is a linear map \(\beta:\overline B\to B\) with \(\beta^*|_U=\rho\).  Define the quotient folded profile
\[
   \overline{\mathfrak V}_i
      =\{f\circ\beta:\ f\in\mathfrak V_i\cap\Hom_U(B,W)\}
      \le\Hom(\overline B,W).
\]
Then:
\begin{enumerate}[label=\textup{(\roman*)}]
\item If \(M\) is a folded realization of \((\mathfrak V,U,T)\) in \(S\subseteq N\), then \(\overline M=M\circ\beta\) is a folded realization of \((\overline{\mathfrak V},\overline U,\rho(T))\) in the same nested pair, and \(\dim\rho(T)=\dim T-\dim(T\cap U_0)\).
\item \(a_{\overline{\mathfrak V}}(\overline U)=a_{\mathfrak V}(U)-a_{\mathfrak V}(U_0)\).
\item \(\Phi^W_{m_0,m_1}(\overline{\mathfrak V},\overline U,\rho(T))
        =\Phi^W_{m_0,m_1}(\mathfrak V,U,T)
        -\Phi^W_{m_0,m_1}(\mathfrak V,U_0,T\cap U_0)\).
\end{enumerate}
\end{lemma}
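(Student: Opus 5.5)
The plan is to mirror the scalar proofs of Lemma~\ref{qlcl:lem:quotient-witness} and Lemma~\ref{qlcl:lem:minor-potential}, with one new ingredient: a per-coordinate dimension count in $\Hom(B,W)$.

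First I construct $\beta$. The dual of any $\beta:\overline B\to B$ is $\beta^*:B^*\to\overline B^*\cong\overline U$. To get $\beta^*|_U=\rho$, choose a complement $U=U_0\oplus U_1$ and extend $\rho$ to a map $B^*\to\overline U$ that is $\rho$ on $U$ and arbitrary (say zero) on a complement of $U$ in $B^*$. Dualizing this map gives $\beta$. The same choice gives $\ker\beta^*\cap U=U_0$, and $\beta^*(U)=\overline U$, which is the fact used throughout.

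For (i), the key tool is how precomposition interacts with row spaces. For $f\in\Hom(B,W)$, the row space of $f\circ\beta$ is $\beta^*(\row f)$, where $\row f=(\ker f)^\circ$. This follows from the identity $\ker(f\circ\beta)=\beta^{-1}(\ker f)$ by taking annihilators. Consequently:
\begin{itemize}
\item $\overline M_i=M_i\circ\beta\in\overline{\mathfrak V}_i$, because $M_i\in\mathfrak V_i\cap\Hom_U(B,W)$.
\item $\row(\overline M)=\beta^*(U)=\overline U$.
\item $\overline M(\overline B)\subseteq M(B)\subseteq N$.
\item The logical kernel is $K_S(\overline M)=\beta^{-1}(K_S(M))=\beta^{-1}(T^\circ)$. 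Taking annihilators, $\Lambda_S(\overline M)=\beta^*(T)=\rho(T)$.
\item Since $\rho|_T$ has kernel $T\cap U_0$, we get $\dim\rho(T)=\dim T-\dim(T\cap U_0)$.
\end{itemize}

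Part (ii) is the main obstacle, since folded constraints couple the scalar rows and we cannot count coordinatewise. The approach is to analyse, for each $i$, the linear map
\[
\phi_i:\mathfrak V_i\cap\Hom_U(B,W)\to\Hom(\overline B,W),\qquad f\mapsto f\circ\beta .
\]
Its image is $\overline{\mathfrak V}_i$ by definition, so rank--nullity gives $\dim\overline{\mathfrak V}_i=\dim(\mathfrak V_i\cap\Hom_U)-\dim\ker\phi_i$. I claim that $\ker\phi_i=\mathfrak V_i\cap\Hom_{U_0}(B,W)$, and I prove the two inclusions as follows.
\begin{itemize}
\item If $f\in\Hom_U$, then $f\circ\beta=0$ iff $\beta^*(\row f)=0$.
\item Since $\row f\le U$ and $\ker\beta^*\cap U=U_0$, this holds iff $\row f\le U_0$, that is, iff $f\in\Hom_{U_0}$.
\item Also $\Hom_{U_0}\subseteq\Hom_U$, because $U^\circ\subseteq U_0^\circ$.
\end{itemize}
One must also check that $\overline{\mathfrak V}_i\cap\Hom_{\overline U}(\overline B,W)=\overline{\mathfrak V}_i$, so that the mass on $\overline U$ is exactly $\sum_i\dim\overline{\mathfrak V}_i$. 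This holds because $\overline B^*=\overline U$, so $\Hom_{\overline U}(\overline B,W)$ is the whole of $\Hom(\overline B,W)$. Summing the rank--nullity identity over $i$ then gives (ii).

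For (iii), I subtract the costs. By Definition~\ref{rsd:def:folded-ensembles}, $\kappa^W$ is linear in $(u,t)$: it equals $(a+c)u-ct$. The dimensions also subtract: $\dim\overline U=\dim U-\dim U_0$ by definition, and $\dim\rho(T)=\dim T-\dim(T\cap U_0)$ by (i). Hence
\[
\kappa^W(\overline U,\rho(T))=\kappa^W(U,T)-\kappa^W(U_0,T\cap U_0).
\]
Combining this with (ii) yields (iii), exactly as in Lemma~\ref{qlcl:lem:minor-potential}.
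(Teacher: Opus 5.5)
Your proposal is correct and follows the paper's proof essentially step for step. It uses the same construction of $\beta$ (extend $\rho$ to $B^*$ and dualize), the same annihilator identity $(\beta^{-1}K)^\circ=\beta^*(K^\circ)$ for the row space and logical row space in (i), rank--nullity on $f\mapsto f\circ\beta$ with the observation $\Hom_{\overline U}(\overline B,W)=\Hom(\overline B,W)$ in (ii), and linearity of $\kappa^W$ in (iii). The only cosmetic difference is in (ii): you identify the kernel via $\row(f\circ\beta)=\beta^*(\row f)$ and $\ker\beta^*\cap U=U_0$, whereas the paper computes $(U^\circ+\im\beta)^\circ=U_0$ directly, and these are the same fact.
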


\begin{proof}
Existence of \(\beta\): extend \(\rho\), viewed through the identification \(\overline U\cong\overline B^*\) as a map \(U\to\overline B^*\), to a linear map \(B^*\to\overline B^*\), and let \(\beta\) be its dual.

Throughout we use the standard annihilator identity: for a linear map \(\beta:\overline B\to B\) and a subspace \(K\le B\),
\[
   (\beta^{-1}K)^\circ=\beta^*(K^\circ)\le\overline B^* .
\]
(The inclusion \(\supseteq\) is direct; equality follows from the dimension count
\(\dim(\beta^{-1}K)^\circ
   =\dim\im\beta-\dim(K\cap\im\beta)
   =\dim\beta^*(K^\circ)\).)

(i)  Since \(\row(M)=U\), we have \(\ker M=U^\circ\subseteq\ker M_i\), so \(M_i\in\mathfrak V_i\cap\Hom_U(B,W)\) and hence \(\overline M_i=M_i\circ\beta\in\overline{\mathfrak V}_i\) for every \(i\).  The image of \(\overline M\) is contained in \(M(B)\subseteq N\).  For the row space, \(\ker\overline M=\beta^{-1}(\ker M)=\beta^{-1}(U^\circ)\), so by the annihilator identity \(\row(\overline M)=\beta^*(U)=\rho(U)=\overline U\).  For the logical row space, \(K_S(\overline M)=\beta^{-1}(K_S(M))=\beta^{-1}(T^\circ)\), hence \(\Lambda_S(\overline M)=\beta^*(T)=\rho(T)\), using \(T\le U\) and \(\beta^*|_U=\rho\).  The kernel of \(\rho|_T\) is \(T\cap U_0\), which gives the dimension formula.

(ii)  Fix \(i\) and consider the surjective linear map \(f\mapsto f\circ\beta\) from \(\mathfrak V_i\cap\Hom_U(B,W)\) onto \(\overline{\mathfrak V}_i\).  Its kernel consists of those \(f\) that vanish on \(U^\circ+\im\beta\).  Now
\[
   (U^\circ+\im\beta)^\circ
      =U\cap(\im\beta)^\circ
      =U\cap\ker\beta^*
      =\ker(\beta^*|_U)
      =\ker\rho
      =U_0,
\]
so \(U^\circ+\im\beta=U_0^\circ\), and the kernel is exactly \(\mathfrak V_i\cap\Hom_{U_0}(B,W)\).  Rank--nullity gives
\[
   \dim\overline{\mathfrak V}_i
      =\dim\bigl(\mathfrak V_i\cap\Hom_U(B,W)\bigr)
      -\dim\bigl(\mathfrak V_i\cap\Hom_{U_0}(B,W)\bigr).
\]
Since \(\overline U=\overline B^*\) is the full dual space, \(\Hom_{\overline U}(\overline B,W)=\Hom(\overline B,W)\), so \(a_{\overline{\mathfrak V}}(\overline U)=\sum_i\dim\overline{\mathfrak V}_i\), and summing the display over \(i\) proves (ii).

(iii)  By (i), \(\dim\overline U=u-\dim U_0\) and \(\dim\rho(T)=t-\dim(T\cap U_0)\), so the second expression for \(\kappa^W\) in Definition~\ref{rsd:def:folded-ensembles} gives
\[
   \kappa^W_{m_0,m_1}(\overline U,\rho(T))
      =\kappa^W_{m_0,m_1}(U,T)-\kappa^W_{m_0,m_1}(U_0,T\cap U_0).
\]
Combining with (ii) proves (iii).
\end{proof}

\begin{proposition}[Folded subcritical direction]\label{rsd:prop:folded-subcritical}
If \(U\neq0\), \(T\le U\), and \(\Gamma^W_{m_0,m_1}(\mathfrak V,U,T)\le-\epsilon\), then, for \((S,N)\sim\PCSS^W(n,q;m_0,m_1)\),
\[
   \Prb\bigl[(S,N)\text{ admits a folded realization of }(\mathfrak V,U,T)\bigr]
      \le q^{-\epsilon}.
\]
\end{proposition}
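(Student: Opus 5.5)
The plan mirrors the scalar proof of Proposition~\ref{qlcl:prop:fixed-subcritical}: apply a first-moment bound, but to a quotient minor rather than to the original witness.

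\textbf{Choosing the minor.} By the definition~\eqref{eq:folded-gamma} of \(\Gamma^W\), we fix a proper subspace \(U_0<U\) attaining
\[
\Phi^W_{m_0,m_1}(\mathfrak V,U,T)-\Phi^W_{m_0,m_1}(\mathfrak V,U_0,T\cap U_0)\le-\epsilon .
\]
We then form the quotient \(\rho:U\to\overline U=U/U_0\), the coefficient space \(\overline B\), the map \(\beta\), and the quotient folded profile \(\overline{\mathfrak V}\), all as in Lemma~\ref{rsd:lem:folded-minor}.

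\textbf{Reduction to a single flag.} By part~(i) of Lemma~\ref{rsd:lem:folded-minor}, any folded realization \(M\) of \((\mathfrak V,U,T)\) in \((S,N)\) gives \(\overline M=M\circ\beta\). This map is a folded realization of \((\overline{\mathfrak V},\overline U,\rho(T))\) in the same nested pair. Hence the event in the statement is contained in the event that \(X^W_{\overline{\mathfrak V},\overline U,\rho(T)}\ge1\).

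\textbf{First moment.} By Markov's inequality and the upper bound in Corollary~\ref{rsd:cor:folded-one-flag},
\[
\Prb\bigl[X^W_{\overline{\mathfrak V},\overline U,\rho(T)}\ge1\bigr]\le q^{\Phi^W_{m_0,m_1}(\overline{\mathfrak V},\overline U,\rho(T))}.
\]
That corollary is applied with coefficient space \(\overline B\) in place of \(B\); its upper bound needs no hypothesis on the gap or on \(\dim\rho(T)\). Part~(iii) of Lemma~\ref{rsd:lem:folded-minor} identifies the exponent with the difference of potentials chosen above. That difference is at most \(-\epsilon\), so the probability is at most \(q^{-\epsilon}\).

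\textbf{Main point to check.} The only real work is confirming that the corollary's upper bound legitimately applies to the quotient data. Concretely, \(\overline{\mathfrak V}_i\le\Hom(\overline B,W)\) must be a genuine folded profile, and \(\overline U\) must be the full dual \(\overline B^*\). Both hold by construction in Lemma~\ref{rsd:lem:folded-minor}. The underlying ensemble is also unchanged, since \(\PCSS^W\) depends only on \(N,S\) and not on the coefficient space. There is one degenerate case: when \(\dim\rho(T)>c\), the factor \(\iota(c,\dim\rho(T))\) vanishes, so the bound holds trivially.
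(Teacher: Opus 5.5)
Your proposal is correct and matches the paper's proof step for step. You choose the minimizing proper subspace $U_0<U$ and push any realization to the quotient flag $(\overline{\mathfrak V},\overline U,\rho(T))$ via Lemma~\ref{rsd:lem:folded-minor}(i); then Markov plus the first-moment upper bound of Corollary~\ref{rsd:cor:folded-one-flag}, with the exponent identified by Lemma~\ref{rsd:lem:folded-minor}(iii), give the bound, and your extra checks (genuine folded profile on $\overline B$ with $\overline U=\overline B^*$, no gap or dimension hypothesis needed for the upper bound) are exactly what the paper leaves implicit.
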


\begin{proof}
Choose \(U_0<U\) with \(\Phi^W(\mathfrak V,U,T)-\Phi^W(\mathfrak V,U_0,T\cap U_0)\le-\epsilon\).  By Lemma~\ref{rsd:lem:folded-minor}(i), any realization of \((\mathfrak V,U,T)\) yields a realization of \((\overline{\mathfrak V},\overline U,\rho(T))\) in the same pair.  By Markov's inequality, the first-moment upper bound of Corollary~\ref{rsd:cor:folded-one-flag} applied to the quotient data, and Lemma~\ref{rsd:lem:folded-minor}(iii), this quotient realization exists with probability at most
\[
   q^{\Phi^W(\overline{\mathfrak V},\overline U,\rho(T))}
      =q^{\Phi^W(\mathfrak V,U,T)-\Phi^W(\mathfrak V,U_0,T\cap U_0)}
      \le q^{-\epsilon}. \qedhere
\]
\end{proof}

For the second moment, identify \(\Hom(B,W)\oplus\Hom(B,W)\) with \(\Hom(B\oplus B,W)\) via \((f,g)(x,y)=f(x)+g(y)\), and identify \((B\oplus B)^*\) with \(B^*\oplus B^*\).  For \(\Omega\le B^*\oplus B^*\) and a joint flag \(\Theta\le\Omega\), define the folded joint mass and joint potential
\[
   a^{(2)}_{\mathfrak V}(\Omega)
      =\sum_{i=1}^n\dim\bigl((\mathfrak V_i\oplus\mathfrak V_i)\cap\Hom_\Omega(B\oplus B,W)\bigr),
\]
\[
   \Phi^{(2),W}(\mathfrak V,\Omega,\Theta)
      =a^{(2)}_{\mathfrak V}(\Omega)-a\dim\Omega-c(\dim\Omega-\dim\Theta).
\]
This is the potential~\eqref{eq:folded-qlcl-potential} for the doubled coefficient space \(B\oplus B\).

\begin{lemma}[Folded pair submodularity]\label{rsd:lem:folded-pair-submodularity}
Let \(\Omega\le U\oplus U\) and suppose the first projection \(\pi_1:\Omega\to U\) is onto.  Identify \(U_0=\ker(\pi_1|_\Omega)\) with a subspace of the second copy of \(U\).  Let \(\Theta\le\Omega\) satisfy \(\pi_1(\Theta)=T\), and put \(T'=\ker(\pi_1|_\Theta)\), so that \(T'\le U_0\) because \(\Theta\le\Omega\).  Then
\[
   \Phi^{(2),W}(\mathfrak V,\Omega,\Theta)
      \le\Phi^W_{m_0,m_1}(\mathfrak V,U,T)
      +\Phi^W_{m_0,m_1}(\mathfrak V,U_0,T').
\]
\end{lemma}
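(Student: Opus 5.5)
The plan is to mirror the scalar proof of Lemma~\ref{qlcl:lem:pair-submodularity}. First I would establish a per-coordinate mass inequality, then sum it over $i$, and finally handle the cost terms by a dimension count.

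\textbf{Per-coordinate inequality.} Fix $i$ and write
\[
   E_i=(\mathfrak V_i\oplus\mathfrak V_i)\cap\Hom_\Omega(B\oplus B,W).
\]
Let $\iota_1:B\to B\oplus B$ be the inclusion $x\mapsto(x,0)$. Consider the restriction map
\[
   r:E_i\to\Hom(B,W),\qquad (f,g)\mapsto (f,g)\circ\iota_1=f .
\]

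\emph{The image.} The image of $r$ lies in $\mathfrak V_i$. It also lies in $\Hom_U(B,W)$. To see this, note that $(f,g)$ vanishes on $\Omega^\circ$. Since $\iota_1^*=\pi_1$, we have
\[
   \iota_1^{-1}(\Omega^\circ)=(\pi_1(\Omega))^\circ=U^\circ
\]
by the annihilator identity $(\beta^{-1}K)^\circ=\beta^*(K^\circ)$ from Lemma~\ref{rsd:lem:folded-minor}, applied with $\beta=\iota_1$. Hence $f$ kills $U^\circ$, and $\dim r(E_i)\le\dim(\mathfrak V_i\cap\Hom_U(B,W))$.

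\emph{The kernel.} The kernel of $r$ consists of pairs $(0,g)$ with $g\in\mathfrak V_i$ such that $(0,g)$ vanishes on $\Omega^\circ$. Equivalently, $g$ vanishes on $\iota_2^{-1}(\Omega^\circ)=(\iota_2^*\Omega)^\circ$, where $\iota_2^*=\pi_2$. This is not yet what we want. What we want is
\[
   \ker r\cong \mathfrak V_i\cap\Hom_{U_0}(B,W),
\]
where $U_0$ is identified with $\Omega\cap(0\oplus B^*)$. I would show this directly: $(0,g)$ vanishes on $\Omega^\circ$ iff $(0,g)\in\Hom_\Omega$, and the row space of $(0,g)$ has the form $0\oplus\row(g)$. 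Therefore the condition is $0\oplus\row(g)\le\Omega$, i.e.\ $\row(g)\le U_0$, i.e.\ $g\in\Hom_{U_0}(B,W)$. Here I use that $h\in\Hom_\Omega$ iff $(\ker h)^\circ\le\Omega$, which is just the definition of $\Hom_\Omega$ restated.

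Rank--nullity then gives
\[
   \dim E_i\le\dim(\mathfrak V_i\cap\Hom_U)+\dim(\mathfrak V_i\cap\Hom_{U_0}).
\]
Summing over $i$ yields $a^{(2)}_{\mathfrak V}(\Omega)\le a_{\mathfrak V}(U)+a_{\mathfrak V}(U_0)$.

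\textbf{Cost terms.} Since $\pi_1|_\Omega$ is onto $U$ with kernel $U_0$, we get $\dim\Omega=u+\dim U_0$. Similarly, since $\pi_1|_\Theta$ is onto $T$ with kernel $T'$, we get $\dim\Theta=t+\dim T'$. Substituting into
\[
   a\dim\Omega+c(\dim\Omega-\dim\Theta)
\]
shows that it equals $\kappa^W(U,T)+\kappa^W(U_0,T')$ exactly, and subtracting this from the mass inequality gives the claim.

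The main obstacle is the kernel identification, specifically making rigorous the step that $(0,g)\in\Hom_\Omega(B\oplus B,W)$ iff $\row(g)\le U_0$. This requires the identification $(B\oplus B)^*=B^*\oplus B^*$ and the fact that $\row(0,g)=0\oplus\row(g)$, which follows from $\ker(0,g)=B\oplus\ker g$.
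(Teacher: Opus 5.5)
Your proposal is correct and follows the paper's proof essentially step for step: restriction to the first component, identification of the kernel with $\mathfrak V_i\cap\Hom_{U_0}(B,W)$, rank--nullity, and exact matching of the cost terms, with your check $\row(0,g)=0\oplus\row(g)\le\Omega$ serving as an equivalent substitute for the paper's duality $\pi_2(\Omega^\circ)=U_0^\circ$. The aside calling the kernel condition equivalent to $g$ vanishing on $\iota_2^{-1}(\Omega^\circ)$ is false, since the correct condition is that $g$ vanishes on the generally larger space $\pi_2(\Omega^\circ)$, but you immediately replace it with the direct row-space argument, so nothing depends on it.
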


\begin{proof}
We use the standard duality for a subspace \(L\le B\oplus B\): the annihilator of \(\pi_2(L)\) inside \(B^*\) equals \(\{\psi\in B^*:(0,\psi)\in L^\circ\}\), and dually \(\pi_1(L^\circ)=\bigl(L\cap(B\oplus0)\bigr)^\circ\), the annihilator taken inside the first factor.

Fix a coordinate \(i\) and let \(h=(f,g)\in(\mathfrak V_i\oplus\mathfrak V_i)\cap\Hom_\Omega(B\oplus B,W)\), so that \(\Omega^\circ\subseteq\ker h\) inside \(B\oplus B\).

First, \(f\in\mathfrak V_i\cap\Hom_U(B,W)\): for \(x\in U^\circ\) and every \((\phi,\psi)\in\Omega\le U\oplus U\) one has \(\phi(x)+\psi(0)=0\), so \((x,0)\in\Omega^\circ\) and hence \(f(x)=h(x,0)=0\).  Thus \(h\mapsto f\) is a linear map into \(\mathfrak V_i\cap\Hom_U(B,W)\).

Second, the kernel of this map consists of pairs \((0,g)\) with \(g\) vanishing on \(\pi_2(\Omega^\circ)\).  By the duality above applied to \(L=\Omega^\circ\), the annihilator of \(\pi_2(\Omega^\circ)\) is \(\{\psi:(0,\psi)\in\Omega\}=\ker(\pi_1|_\Omega)=U_0\), so \(\pi_2(\Omega^\circ)=U_0^\circ\) and \(g\in\mathfrak V_i\cap\Hom_{U_0}(B,W)\).  Rank--nullity gives
\[
   \dim\bigl((\mathfrak V_i\oplus\mathfrak V_i)\cap\Hom_\Omega(B\oplus B,W)\bigr)
      \le\dim\bigl(\mathfrak V_i\cap\Hom_U(B,W)\bigr)
      +\dim\bigl(\mathfrak V_i\cap\Hom_{U_0}(B,W)\bigr).
\]
Summing over \(i\) yields \(a^{(2)}_{\mathfrak V}(\Omega)\le a_{\mathfrak V}(U)+a_{\mathfrak V}(U_0)\).  Since \(\pi_1|_\Omega\) is onto with kernel \(U_0\) and \(\pi_1(\Theta)=T\) with kernel \(T'\),
\[
   \dim\Omega=\dim U+\dim U_0,
   \qquad
   \dim\Theta=\dim T+\dim T',
\]
and substituting into the joint potential proves the claim.
\end{proof}

\begin{lemma}[Folded dependent-pair bound]\label{rsd:lem:folded-dependent-pairs}
Assume \(U\neq0\), \(T\le U\), \(t\le c\), and \(\Gamma^W_{m_0,m_1}(\mathfrak V,U,T)\ge\epsilon\).  Let \(Y\) be the contribution to \(\Exp\bigl[(X^W_{\mathfrak V,U,T})^2\bigr]\) from ordered pairs \((M,M')\) for which the joint map \(J:B\oplus B\to W^n\), \(J(x,y)=M(x)+M'(y)\), has \(row(J)\neq U\oplus U\).  Then
\[
   Y\le q^{2\Phi^W_{m_0,m_1}(\mathfrak V,U,T)-\epsilon+8b^2}.
\]
\end{lemma}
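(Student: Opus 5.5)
The plan is to follow the scalar proof of Lemma~\ref{qlcl:lem:dependent-pairs}, replacing each ingredient by its folded counterpart. First I would stratify the dependent pairs by joint type. For an ordered pair $(M,M')$ counted in $(X^W_{\mathfrak V,U,T})^2$, the joint map $J$ has row space $\Omega=\row(J)\le U\oplus U$, and both coordinate projections of $\Omega$ equal $U$ because $\row(M)=\row(M')=U$. By hypothesis $\Omega\neq U\oplus U$. Its logical row space $\Theta=\Lambda_S(J)\le\Omega$ has both projections equal to $T$. To see this, apply the duality $\pi_1(L^\circ)=(L\cap(B\oplus0))^\circ$ recorded in the proof of Lemma~\ref{rsd:lem:folded-pair-submodularity} with $L=K_S(J)$. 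Since $K_S(J)\cap(B\oplus0)=K_S(M)\oplus0$, we get $\pi_1(\Theta)=K_S(M)^\circ=T$, and symmetrically $\pi_2(\Theta)=T$.

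Next I would count and bound probabilities for a fixed type. Fix $(\Omega,\Theta)$; there are at most $q^{4b^2}$ choices of each, since these are subspaces of a $2b$-dimensional space. The pairs $(M,M')$ obeying the profile with $\row(J)\le\Omega$ are exactly the maps $J$ with $J_i\in(\mathfrak V_i\oplus\mathfrak V_i)\cap\Hom_\Omega(B\oplus B,W)$. By the argument giving \eqref{eq:folded-mass-count} for the doubled coefficient space, there are at most $q^{a^{(2)}_{\mathfrak V}(\Omega)}$ of them. For each such $J$ with $\row(J)=\Omega$, Lemma~\ref{rsd:lem:folded-fixed-map} applied to the coefficient space $B\oplus B$ bounds the probability of $J(B\oplus B)\subseteq N$ and $\Lambda_S(J)=\Theta$ by $q^{-a\dim\Omega-c(\dim\Omega-\dim\Theta)}$, after dropping $\iota\le1$. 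This is exactly the event that both $M$ and $M'$ are counted. So the type contributes at most $q^{\Phi^{(2),W}(\mathfrak V,\Omega,\Theta)}$.

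Then I would bound the joint potential. Lemma~\ref{rsd:lem:folded-pair-submodularity} gives $\Phi^{(2),W}(\mathfrak V,\Omega,\Theta)\le\Phi^W(\mathfrak V,U,T)+\Phi^W(\mathfrak V,U_0,T')$. Here $U_0=\ker(\pi_1|_\Omega)$ is viewed inside the second copy of $U$. It is a proper subspace of $U$, since $\dim\Omega<2\dim U$ and $\pi_1$ is onto. Also $T'=\ker(\pi_1|_\Theta)\le U_0$, and $T'\le\pi_2(\Theta)=T$, so $T'\le T\cap U_0$. For fixed row space, $\Phi^W$ is nondecreasing in the logical space, because the cost $au+c(u-t)$ decreases in $t$. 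Hence $\Phi^W(\mathfrak V,U_0,T')\le\Phi^W(\mathfrak V,U_0,T\cap U_0)\le\Phi^W(\mathfrak V,U,T)-\epsilon$ by the gap hypothesis \eqref{eq:folded-gamma}. Each type therefore contributes at most $q^{2\Phi^W-\epsilon}$, and summing over the $q^{8b^2}$ types gives the claim.

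The main obstacle is the bookkeeping in the first two steps: making sure the folded identification $\Hom(B,W)^{\oplus2}\cong\Hom(B\oplus B,W)$ makes the doubled mass count and the doubled fixed-map probability apply verbatim, and that the annihilator duality gives the projection identities for $\Theta$. Once that is checked, the rest is formal.
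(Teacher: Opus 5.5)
Your proposal is correct and follows the paper's proof essentially step for step: you stratify dependent pairs by the joint type $(\Omega,\Theta)$, bound the count by $q^{a^{(2)}_{\mathfrak V}(\Omega)}$ and the probability via Lemma~\ref{rsd:lem:folded-fixed-map} applied to $J$, then combine Lemma~\ref{rsd:lem:folded-pair-submodularity}, the inclusion $T'\le T\cap U_0$, monotonicity in the logical space and the gap hypothesis, and finally multiply by the $q^{8b^2}$ types. One wording point to tighten: the event ``both $M$ and $M'$ are counted'' is not a single event $\{J(B\oplus B)\subseteq N,\ \Lambda_S(J)=\Theta\}$ but the disjoint union of these over all admissible $\Theta$, and your sum over types already accounts for this.
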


\begin{proof}
If \(\row(M)=\row(M')=U\), then \(\ker J\supseteq\ker M\oplus\ker M'\) gives \(\Omega\defeq row(J)\le U\oplus U\), and
\(\pi_1(\Omega)^\circ=\{x:(x,0)\in\ker J\}=\ker M=U^\circ\),
so \(\pi_1(\Omega)=U\); similarly \(\pi_2(\Omega)=U\).  Fix such a joint row space \(\Omega\neq U\oplus U\).  The coordinates of \(J\) are \(J_i=(M_i,M'_i)\in(\mathfrak V_i\oplus\mathfrak V_i)\cap\Hom_\Omega(B\oplus B,W)\), so the number of ordered pairs with exact joint row space \(\Omega\) is at most \(q^{a^{(2)}_{\mathfrak V}(\Omega)}\).

If both \(M\) and \(M'\) are counted by \(X^W_{\mathfrak V,U,T}\), then the joint logical row space \(\Theta=\Lambda_S(J)\le\Omega\) has both projections equal to \(T\): by the duality of Lemma~\ref{rsd:lem:folded-pair-submodularity} applied to \(L=K_S(J)\), and since \(K_S(J)\cap(B\oplus0)=K_S(M)\oplus0\),
\[
   \pi_1(\Theta)
      =\pi_1\bigl(K_S(J)^\circ\bigr)
      =\bigl(K_S(J)\cap(B\oplus0)\bigr)^\circ
      =K_S(M)^\circ=T,
\]
and similarly \(\pi_2(\Theta)=T\).  For fixed \((\Omega,\Theta)\), Lemma~\ref{rsd:lem:folded-fixed-map} applied to \(J\) bounds the probability of the corresponding joint event by \(q^{-a\dim\Omega-c(\dim\Omega-\dim\Theta)}\), dropping the injectivity factor.  The number of choices of \(\Omega\) and of \(\Theta\) in \(B^*\oplus B^*\) is at most \(q^{4b^2}\) each.

Apply Lemma~\ref{rsd:lem:folded-pair-submodularity}.  Since \(\Omega\neq U\oplus U\) and \(\pi_1|_\Omega\) is onto, \(U_0=\ker(\pi_1|_\Omega)\) is a proper subspace of \(U\).  Moreover \(T'=\ker(\pi_1|_\Theta)\le T\cap U_0\), because the second projection of \(\Theta\) is \(T\).  The potential is monotone nondecreasing in the logical row space for fixed ordinary row space, since its coefficient on \(\dim T\) is \(c\ge0\); hence
\[
   \Phi^W(\mathfrak V,U_0,T')
      \le\Phi^W(\mathfrak V,U_0,T\cap U_0)
      \le\Phi^W(\mathfrak V,U,T)-\epsilon,
\]
using the gap assumption in the last step.  Therefore every joint type contributes at most
\[
   q^{\Phi^{(2),W}(\mathfrak V,\Omega,\Theta)}
      \le q^{2\Phi^W(\mathfrak V,U,T)-\epsilon},
\]
and multiplying by the at most \(q^{8b^2}\) joint types proves the claim.
\end{proof}

\begin{theorem}[Folded fixed-type threshold]\label{rsd:thm:folded-fixed-type}
Let \((S,N)\sim\PCSS^W(n,q;m_0,m_1)\).  Fix a folded local profile \(\mathfrak V\) and a flag \((U,T)\) with \(U\neq0\), \(T\le U\le B^*\).
\begin{enumerate}[label=(\alph*)]
\item If \(\Gamma^W_{m_0,m_1}(\mathfrak V,U,T)\le-\epsilon\), then
\[
   \Prb\bigl[(S,N)\text{ admits a folded realization of }(\mathfrak V,U,T)\bigr]\le q^{-\epsilon}.
\]
\item If \(\dim T\le m_1-m_0\), \(\Gamma^W_{m_0,m_1}(\mathfrak V,U,T)\ge\epsilon\), and \(\epsilon\ge2b^2+1\), then
\[
   \Prb\bigl[(S,N)\text{ admits a folded realization of }(\mathfrak V,U,T)\bigr]
      \ge1-q^{-\epsilon+12b^2}.
\]
\end{enumerate}
In particular, Theorem~\ref{qlcl:thm:fixed-type} holds verbatim in the folded setting, with \(\Phi\) and \(\Gamma\) replaced by \(\Phi^W\) and \(\Gamma^W\), scalar witness matrices replaced by folded realizations \(M:B\to N\), and the same constants.
\end{theorem}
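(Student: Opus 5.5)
Part (a) is precisely Proposition~\ref{rsd:prop:folded-subcritical}, so nothing new is needed there. All the work is in part (b), and the plan is to redo the scalar argument of Proposition~\ref{qlcl:prop:fixed-supercritical}, substituting the folded ingredients already established. Let \(X=X^W_{\mathfrak V,U,T}\). Since \(0<U\) is an admissible minor in \eqref{eq:folded-gamma}, the gap hypothesis gives \(\Phi^W(\mathfrak V,U,T)\ge\epsilon\). Because \(t\le c\) and \(\epsilon\ge 2b^2+1\), Corollary~\ref{rsd:cor:folded-one-flag} then gives \(\Exp X\ge q^{\Phi^W(\mathfrak V,U,T)-2b^2}\). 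Realization of \((\mathfrak V,U,T)\) is exactly the event \(X>0\), so it suffices to bound \(\Prb[X=0]\) via Chebyshev, i.e.\ to bound \(\operatorname{Var}(X)\).

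Expand \(\operatorname{Var}(X)\) as a sum over ordered pairs \((M,M')\in\cM^{=}_{\mathfrak V}(U)^2\) of covariances of the indicators \(\mathbf 1[M(B)\subseteq N,\ \Lambda_S(M)=T]\), and split the pairs according to whether \(\row(J)=U\oplus U\) for the joint map \(J(x,y)=M(x)+M'(y)\).

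\textbf{Independent pairs.} If \(\row(J)=U\oplus U\), then \(\dim J(B\oplus B)=2u\), so \(M(B)\) and \(M'(B)\) are independent subspaces of \(\F_q^{ns}\). Both indicator events are determined by the restrictions of \(H_C\) and \(H_S\) to the relevant image. The restrictions of uniform matrices to a direct sum of independent subspaces are independent uniform maps on each summand, so the two events are independent and the covariance vanishes. This step needs some care: one has to check that \(\Lambda_S(M)\) really depends only on \(H_S|_{M(B)}\) once \(H_CM=0\). That follows from \(K_S(M)=\ker(H_SM)\), as shown in the proof of Lemma~\ref{rsd:lem:folded-fixed-map}.

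\textbf{Dependent pairs.} For the remaining pairs, bound each covariance by the joint probability. By Lemma~\ref{rsd:lem:folded-dependent-pairs}, the total contribution of these pairs is at most \(q^{2\Phi^W-\epsilon+8b^2}\).

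Combining the two cases gives \(\operatorname{Var}(X)\le q^{2\Phi^W-\epsilon+8b^2}\), and therefore
\[
   \Prb[X=0]\le\frac{\operatorname{Var}(X)}{(\Exp X)^2}\le q^{-\epsilon+12b^2}.
\]
The closing sentence of the theorem then follows, since every step used only the folded analogues of the scalar ingredients, with identical constants.

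I expect the only genuine obstacle to be the independence step in the folded setting. There the profile constraints couple the \(s\) scalar rows within a block, but this does not interfere, because the randomness lives entirely in the scalar unfolding \(\PCSS(ns,q;m_0,m_1)\). The profile affects only the counts, and those are already handled by Lemma~\ref{rsd:lem:folded-dependent-pairs}.
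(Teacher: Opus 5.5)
Your proposal is correct and follows the paper's proof essentially step for step. Part (a) is cited from Proposition~\ref{rsd:prop:folded-subcritical}. Part (b) combines the lower bound on $\Exp X$ from Corollary~\ref{rsd:cor:folded-one-flag}, zero covariance for pairs with $\row(J)=U\oplus U$ (whose images meet trivially in $\F_q^{ns}$), Lemma~\ref{rsd:lem:folded-dependent-pairs} for the remaining pairs, and Chebyshev's inequality. Your added check that each indicator event depends only on $(H_CM,H_SM)$ is exactly the justification the paper leaves implicit in its independence step.
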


\begin{proof}
Part (a) is Proposition~\ref{rsd:prop:folded-subcritical}.

For (b), let \(X=X^W_{\mathfrak V,U,T}\).  Since \(0<U\) is one of the quotient minors, the gap assumption gives \(\Phi^W(\mathfrak V,U,T)\ge\epsilon\), and Corollary~\ref{rsd:cor:folded-one-flag} gives \(\Exp X\ge q^{\Phi^W(\mathfrak V,U,T)-2b^2}\).

For an ordered pair \((M,M')\) with \(row(J)=U\oplus U\), we have \(\ker J=U^\circ\oplus U^\circ\), so \(\dim J(B\oplus B)=2u\) and hence \(M(B)\cap M'(B)=0\): the two images are independent subspaces of \(\F_q^{ns}\).  The restrictions of the independent uniform rows of \(H_C\) and \(H_S\) to independent subspaces are independent, so the two indicator variables are independent and their covariance is zero.  For all remaining ordered pairs the covariance is at most the joint probability, and Lemma~\ref{rsd:lem:folded-dependent-pairs} bounds the sum of these joint probabilities.  Therefore
\[
   \operatorname{Var}(X)\le q^{2\Phi^W(\mathfrak V,U,T)-\epsilon+8b^2},
\]
and Chebyshev's inequality yields
\[
   \Prb[X=0]\le\frac{\operatorname{Var}(X)}{(\Exp X)^2}\le q^{-\epsilon+12b^2}. \qedhere
\]
\end{proof}

\begin{corollary}[Folded threshold theorems and the uniform ensemble]\label{rsd:cor:folded-uniform}
Define folded qLCL properties by finite families of folded profiles and sets of allowed logical row spaces in \(\cL(B^*)\), exactly as in Definition~\ref{qlcl:def:qlcl-property}.  Then Theorem~\ref{qlcl:thm:qlcl-threshold} holds verbatim for \(\PCSS^W(n,q;m_0,m_1)\), and Theorem~\ref{qlcl:thm:uniform-threshold} holds verbatim for \(\Nest^W(n,q;m_0,m_1)\), with \(\Phi,\Gamma\) replaced by \(\Phi^W,\Gamma^W\), with the same constants, and with the same \(\gamma_q^{-1}\) conditioning loss.
\end{corollary}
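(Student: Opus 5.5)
The plan is to repeat the proofs of Theorems~\ref{qlcl:thm:qlcl-threshold} and~\ref{qlcl:thm:uniform-threshold} word for word. Each place where they cite Theorem~\ref{qlcl:thm:fixed-type} will instead cite the folded fixed-type theorem, Theorem~\ref{rsd:thm:folded-fixed-type}. The scalar proofs use only three further ingredients: a count of flags, the vanishing of realization probability for parity-infeasible flags, and the full-rank conditioning lemma. Each of these transfers directly.

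\textbf{Parity-check ensemble.} First define the folded maximum gap $\Gamma^{\max,W}_{m_0,m_1}(\cF,\cD)$ exactly as in \eqref{qlcl:eq:Gmax}, with $\Gamma$ replaced by $\Gamma^W$. Flags now range over $T\le U\le B^*$, and $B^*\cong\F_q^b$, so there are at most $q^{2b^2}$ flags, as before.

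For the subcritical part (a), suppose the property holds. Then some profile $\mathfrak V\in\cF$ and some flag $(U,T)$ with $U\ne0$ and $T\in\cD$ have a folded realization. If $\dim T>c$, Lemma~\ref{rsd:lem:folded-fixed-map} gives probability zero, since $\iota(c,t)=0$. So only parity-feasible flags contribute. Theorem~\ref{rsd:thm:folded-fixed-type}(a) bounds each such flag by $q^{-\epsilon}$, and a union bound over at most $|\cF|\,q^{2b^2}$ pairs gives $|\cF|\,q^{-\epsilon+2b^2}$.

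For the supercritical part (b), pick a single maximizing pair $(\mathfrak V,U,T)$. Theorem~\ref{rsd:thm:folded-fixed-type}(b) shows it is realized with probability at least $1-q^{-\epsilon+12b^2}$, and that realization witnesses the property.

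\textbf{Uniform ensemble.} By Definition~\ref{rsd:def:folded-ensembles}, $\PCSS^W$ and $\Nest^W$ are literally $\PCSS(ns,q;m_0,m_1)$ and $\Nest(ns,q;m_0,m_1)$ on the scalar unfolding. Lemma~\ref{qlcl:lem:conditioning} therefore applies verbatim: conditioning on the full-rank event $\mathcal R$ yields $\Nest^W$, and $\Pr[\mathcal R]\ge\gamma_q$.

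The one point to check is the folded analogue of Lemma~\ref{qlcl:lem:auto-feasibility}. Let $M$ be a folded realization and set $D=M(B)\le N$. Then $\dim D=\dim U$. Also $M$ maps $K_S(M)=T^\circ$ onto $D\cap S$ with kernel $U^\circ$, so $\dim(D\cap S)=\dim U-\dim T\le m_0$. Finally, the image of $D$ in $N/S$ has dimension $\dim T\le m_1-m_0$.

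With this, the subcritical union bound runs over feasible flags only. Each flag is bounded in the unconditioned ensemble, and conditioning multiplies probabilities by at most $\gamma_q^{-1}$. The supercritical failure probability is likewise multiplied by at most $\gamma_q^{-1}$.

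I do not expect a real obstacle, since all the probabilistic work is already in Theorem~\ref{rsd:thm:folded-fixed-type}. The only step needing attention is confirming that the feasibility lemma and the bound on the number of flags are stated in terms of $B^*$ rather than $\F_q^n$. Both depend only on $\dim B=b$, so the constants are unchanged.
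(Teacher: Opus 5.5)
Your proposal is correct and follows essentially the same route as the paper. Both obtain the parity-check statements from Theorem~\ref{rsd:thm:folded-fixed-type} by the unchanged union bound over profiles and the at most $q^{2b^2}$ flags in $B^*$, and both obtain the uniform statements from Lemma~\ref{qlcl:lem:conditioning} on the scalar unfolding $\F_q^{ns}$ together with the folded analogue of Lemma~\ref{qlcl:lem:auto-feasibility}, which the paper proves by exactly your computation that $M$ maps $K_S(M)=T^\circ$ onto $M(B)\cap S$ with kernel $U^\circ$ (your remark that parity-infeasible flags contribute zero via $\iota(c,t)=0$ in Lemma~\ref{rsd:lem:folded-fixed-map} just spells out a step the paper leaves implicit in ``exactly as in the scalar case'').
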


\begin{proof}
The union bounds over profiles and over the at most \(q^{2b^2}\) flags in \(B^*\) are unchanged, so the folded analogue of Theorem~\ref{qlcl:thm:qlcl-threshold} follows from Theorem~\ref{rsd:thm:folded-fixed-type} exactly as in the scalar case.  For the uniform ensemble, Lemma~\ref{qlcl:lem:conditioning} concerns only the parity checks on the scalar unfolding \(\F_q^{ns}\) and applies verbatim: conditioned on full rank, \(\PCSS^W(n,q;m_0,m_1)\) is \(\Nest^W(n,q;m_0,m_1)\), and the full-rank event has probability at least \(\gamma_q\).  Finally, the feasibility restriction in the analogue of~\eqref{qlcl:eq:Gmax-unif} is justified by the folded analogue of Lemma~\ref{qlcl:lem:auto-feasibility}, whose proof applies verbatim to a folded realization \(M:B\to N\) in a fixed nested pair \(S\subseteq N\): the image of \(K_S(M)=T^\circ\) under \(M\) is \(M(B)\cap S\), with kernel \(\ker M=U^\circ\), so \(\dim\bigl(M(B)\cap S\bigr)=u-t\le\dim S\), and the induced image of \(M(B)\) in \(N/S\) has dimension \(t\le\dim(N/S)\).
\end{proof}

\begin{remark}[Consistency with the scalar theory]\label{rsd:rem:folded-consistency}
For \(s=1\) the fold is trivial: \(\Hom(B,\F_q)=B^*\), a folded profile is a scalar profile after the identification \(B\cong\F_q^b\), \(\Hom_U(B,\F_q)=U\), the folded mass~\eqref{eq:folded-mass} is the scalar mass~\eqref{qlcl:eq:profile-mass}, and Lemmas~\ref{rsd:lem:folded-exact-count}--\ref{rsd:lem:folded-dependent-pairs} together with Theorem~\ref{rsd:thm:folded-fixed-type} specialize to Lemmas~\ref{qlcl:lem:exact-row-count}, \ref{qlcl:lem:fixed-matrix-prob}, \ref{qlcl:lem:minor-potential}, \ref{qlcl:lem:pair-submodularity}, \ref{qlcl:lem:dependent-pairs}, and Theorem~\ref{qlcl:thm:fixed-type}.  The folded statements are therefore a strict generalization proved by the same argument, not a reduction to the scalar case.
\end{remark}

\subsection{Joint Sector Formulation}
In this short subsection, we state how folded (random) CSS codes interact with local profiles in the joint sector case. Instead of stating and proving the variants of the statements from Subsection~\ref{sub:joint-sec}, we will instead describe the changes required to make the proofs go through.
In place of working over subspaces of $\F_q^b$, we will be working over subspaces of $B^*$, where $B$ is a finite dimensional coefficient space, as described above. Naturally, local profiles are replaced with folded local profiles.
Thus, we can define generalizations of $\lambda_s(\cV, U ,T)$ and $R^{\rm flag}_{\rm s}(\cV; U, T)$ in the natural manner.
Following \cite{GGH26}'s formulation of the classical LCL framework, the classical rate threshold ($R_{\rm cl}(\cV)$) also can be defined in terms of subspaces of $B^*$ and folded local profiles.

Next, the definition of joint realization (Definition~\ref{def:join-realize}) is defined using the generalizations described above, and the same holds for $R^{\rm joint}(\cF)$. We only state the version of Corollary~\ref{cor:joint-form}.

Let $\cF$ denote a family of folded local profiles. For a random folded CCS code $C$, let $E(C, \cF)$ denote the event that $C$ realizes $(\mathfrak V, U_X+U_Z, T_X+T_Z)$ for some $\mathfrak V \in \cF$, $T_X \le U_X, T_Z \le U_Z$ satisfying $T_X+T_Z \in \LDist_b$.
\begin{corollary}[Probability of Joint Sector Satisfaction]\label{cor:joint-form-folded}
For the event $E$ defined above, we have:
\begin{equation}\label{eq:prob-join-sat-folded}
    Pr_C[E(C, \cF)] \le |\cF|\cdot q^{4b^2} \cdot q^{-\delta sn}.
\end{equation}
\end{corollary}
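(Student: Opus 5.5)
The proof mirrors Corollary~\ref{cor:joint-form}, with each scalar ingredient replaced by its folded counterpart. The plan has three steps: reduce a joint realization to a realization in one sector, convert the rate slack into a bound on the folded slope, and union bound.

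First, the reduction. Suppose $C$ jointly realizes $(\mathfrak V, U_X+U_Z, T_X+T_Z)$ with $T_X+T_Z\in\LDist_b$. Then, by the folded analogue of Definition~\ref{def:join-realize}, the $X$-sector nested pair $(S_X,C_X)$ realizes $(\mathfrak V,U_X,T_X)$ and the $Z$-sector pair $(S_Z,C_Z)$ realizes $(\mathfrak V,U_Z,T_Z)$. In particular, $E(C,\cF)$ is contained in the union, over all admissible $(\mathfrak V,U_X,T_X,U_Z,T_Z)$, of the events that both sector realizations occur. It therefore suffices to bound the probability that \emph{one} chosen sector realizes its flag. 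Note that at least one of $U_X,U_Z$ is nonzero, since $T_X+T_Z\in\LDist_b$ forces $T_X+T_Z\neq 0$. We pick a sector $s$ with $U_s\neq0$ and apply a folded version of Proposition~\ref{prop:fixed-type} to it.

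Second, the folded slope bound. We use the standing hypothesis $\max\{r_X,r_Z\}+\delta\le R_{\rm cl}(\cF)$, with $r_s=m_1^s/(sn)$ now normalized by the scalar length $sn$. The folded analogues of Theorems~\ref{thm:sub-asymmetric-sector-reduction} and~\ref{thm:sub-joint-sector-preservation} are proved by the same supermodularity argument: the mass $a_{\mathfrak V}$ is supermodular on $\cL(B^*)$ coordinatewise, because $\Hom_{U_1\cap U_2}=\Hom_{U_1}\cap\Hom_{U_2}$ and $\Hom_{U_1+U_2}\supseteq\Hom_{U_1}+\Hom_{U_2}$. These give $\max_s R_s^{\rm flag}\ge R_{\rm cl}$. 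Hence some sector $s$ with $U_s\neq0$ satisfies $R_s^{\rm flag}(\mathfrak V;U_s,T_s)\ge r_s+\delta$, and \eqref{eq:precise-bi-implies} then gives $\lambda_s\le 1-\delta/(1-r_s)$. The folded version of Lemma~\ref{obs:pos-of-d} states that the folded cost difference $\kappa^W_s(U,T)-\kappa^W_s(W,T\cap W)$ is at least $(sn-m_1^s)(\dim U-\dim W)\ge sn(1-r_s)$. As in the proof of Proposition~\ref{prop:fixed-type}, this turns the slope bound into
\[
\Gamma^W_{m_0^s,m_1^s}(\mathfrak V,U_s,T_s)\le -\frac{\delta}{1-r_s}\,sn(1-r_s)=-\delta sn .
\]
Theorem~\ref{rsd:thm:folded-fixed-type}(a) then bounds the realization probability by $q^{-\delta sn}$. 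Under the uniform ensemble this holds up to the $\gamma_q^{-1}$ factor of Corollary~\ref{rsd:cor:folded-uniform}, which is absorbed into constants.

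Third, the union bound. There are $|\cF|$ choices of profile. The flags $(U_X,T_X)$ and $(U_Z,T_Z)$ in $B^*$ number at most $q^{2b^2}$ each, so at most $q^{4b^2}$ jointly. Multiplying these counts by $q^{-\delta sn}$ gives \eqref{eq:prob-join-sat-folded}.

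The main obstacle is the second step: checking that the folded threshold comparison holds. The claim $R^{\rm flag}_s\ge R_{\rm cl}$ and the joint reduction require the classical folded rate threshold (à la \cite{GGH26}) to be normalized by $sn$, and require supermodularity of the folded mass. Once supermodularity is established coordinatewise, the chain argument of Theorem~\ref{thm:sub-asymmetric-sector-reduction} transfers verbatim.
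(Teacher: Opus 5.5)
Your proposal is correct and is the paper's own argument made explicit: union bound over $\mathfrak V\in\cF$ and the two sector flags (at most $q^{4b^2}$ of them), then apply the folded version of Proposition~\ref{prop:fixed-type} in one sector whose slope is at most $1-\delta/(1-r_s)$, which gives $\Gamma^W\le-\delta sn$. Choosing a single sector $s$ that attains the maximum in the folded analogue of Theorem~\ref{thm:sub-asymmetric-sector-reduction} is the right reading: the paper's preceding remark asserts the slope bound in both sectors, which fails for instance when $U_Z=0$ (there $\lambda_Z=+\infty$). The $\gamma_q^{-1}$ conditioning constant you flag is glossed over in the paper in the same way.
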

\begin{proof}
    The corollary follows by union bounding over all $\mathfrak V \in \cF$, $U_X,U_Z,T_X,T_Z$ satisfying the appropriate conditions, and then applying the folded version of Proposition~\ref{prop:fixed-type}.
\end{proof}

Regarding the balanced case, we will require the same condition as before: $\max\{r_X,r_Z\} +\delta \le R_{\rm cl}(\cF)$. This
reduces to
\[
    (1+R_Q)/2 +\delta \le R_{\rm cl}(\cF).
\]

\subsection{Kernel profiles}

We now design local profiles that when avoided, produce subspace design codes. A kernel profile prescribes kernels rather than arbitrary coordinate-map subspaces.  Let \(V\) be a finite-dimensional space and let \(V_i\le V\).  Put \(B=V^*\).  Define the associated kernel-profile LCL constraint by
\begin{equation}\label{eq:kernel-profile-lcl-constraint}
      \cZ_i(V_i)
      =\{f\in\Hom(V^*,W): V_i^\circ\subseteq\ker f\}
      \cong\Hom(V_i^*,W).
\end{equation}
Thus a coordinate map is allowed exactly when it vanishes on the coefficient directions annihilated by \(V_i\).

\begin{lemma}[Kernel-profile mass identity]\label{rsd:lem:kernel-mass}
For every \(U\le V\),
\[
      \dim\bigl(\cZ_i(V_i)\cap\Hom_U(V^*,W)\bigr)
      =s\dim(U\cap V_i),
\]
and hence
\[
      a_{\cZ(V)}(U)=s\sum_{i=1}^n\dim(U\cap V_i).
\]
\end{lemma}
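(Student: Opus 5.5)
The identity is pure linear algebra, so the plan is to identify both subspaces explicitly inside \(\Hom(V^*,W)\) and then count dimensions. Throughout, we regard \(U\le V=B^*\), so that \(\Hom_U(V^*,W)\) is the set of maps \(f:V^*\to W\) with \(U^\circ\subseteq\ker f\). Here \(U^\circ\le V^*\) is the annihilator of \(U\), taken under the canonical identification \(V^{**}=V\).

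The first step is to combine the two conditions. By \eqref{eq:kernel-profile-lcl-constraint}, a map \(f\) lies in \(\cZ_i(V_i)\cap\Hom_U(V^*,W)\) exactly when \(f\) vanishes on both \(V_i^\circ\) and \(U^\circ\). This is equivalent to \(f\) vanishing on the sum \(V_i^\circ+U^\circ\). The annihilator identity gives
\[
   V_i^\circ+U^\circ=(V_i\cap U)^\circ .
\]
To justify it, note that the inclusion \(\subseteq\) is immediate. The reverse inclusion follows by comparing dimensions, using \(\dim X+\dim X^\circ=\dim V\) together with \((X^\circ)^\circ=X\). Hence
\[
   \cZ_i(V_i)\cap\Hom_U(V^*,W)=\Hom_{V_i\cap U}(V^*,W).
\]

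The second step is to count. The space \(\Hom_{Y}(V^*,W)\), for \(Y\le V\), consists of the maps that factor through the quotient \(V^*/Y^\circ\cong Y^*\). It is therefore isomorphic to \(\Hom(Y^*,W)\), which has dimension \(\dim Y\cdot\dim W=s\dim Y\). Taking \(Y=V_i\cap U\) gives the first claim,
\[
   \dim\bigl(\cZ_i(V_i)\cap\Hom_U(V^*,W)\bigr)=s\dim(U\cap V_i).
\]
Summing over \(i\in[n]\) and applying the definition \eqref{eq:folded-mass} of the folded mass then gives
\[
   a_{\cZ(V)}(U)=s\sum_{i=1}^n\dim(U\cap V_i).
\]

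The only real subtlety is keeping the dualities straight: \(U\) lives in \(B^*=V\), while the annihilators \(U^\circ\) and \(V_i^\circ\) live in \(V^*=B\). I expect no genuine obstacle beyond this bookkeeping, since everything reduces to the single annihilator identity for the sum, \(V_i^\circ+U^\circ=(V_i\cap U)^\circ\).
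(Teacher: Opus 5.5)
Your proposal is correct and follows the paper's proof essentially verbatim. Both arguments reduce membership in $\cZ_i(V_i)\cap\Hom_U(V^*,W)$ to vanishing on $V_i^\circ+U^\circ=(V_i\cap U)^\circ$, then count maps out of the quotient of dimension $\dim(V_i\cap U)$ into $W$; your dimension-count justification of the annihilator identity only fills in a step the paper states without proof.
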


\begin{proof}
A map \(f:V^*\to W\) lies in \(\cZ_i(V_i)\cap\Hom_U(V^*,W)\) precisely when it vanishes on both \(V_i^\circ\) and \(U^\circ\), hence on \(V_i^\circ+U^\circ\).  Since
\[
      (V_i^\circ+U^\circ)^\circ=V_i\cap U,
\]
the quotient \(V^*/(V_i^\circ+U^\circ)\) has dimension \(\dim(V_i\cap U)\).  The space of maps from this quotient to \(W\) has dimension \(s\dim(V_i\cap U)\).
\end{proof}

For \(\alpha\in\R\), define the normalized subspace-profile potential
\begin{equation}\label{eq:subspace-profile-potential}
   \Phi_V(U,(V_i)_{i=1}^n,\alpha)
      \defeq
      \alpha\dim U-
      \frac1n\sum_{i=1}^n
        \bigl(\dim U-
              \dim(U\cap V_i)\bigr).
\end{equation}
Equivalently,
\[
   \Phi_V(U,(V_i),\alpha)
      =\frac1n\sum_i\dim(U\cap V_i)-(1-\alpha)\dim U.
\]
Let \(\rho_N\defeq m_1/(ns)\) be the actual rate of the sampled normalizer \(N\).  For kernel profiles and injective-logical flags \(T=U\), equations~\eqref{eq:folded-qlcl-potential} and~\eqref{eq:subspace-profile-potential} give the exact potential dictionary
\begin{equation}\label{eq:potential-dictionary}
      \frac1{sn}\Phi^{W}_{m_0,m_1}(\cZ(V),U,U)
      =\Phi_V(U,(V_i),\rho_N).
\end{equation}
For an arbitrary design threshold \(\alpha\), the deterministic design potential is related to the qLCL potential at the actual rate by
\begin{equation}\label{eq:potential-slack}
      \Phi_V(U,(V_i),\alpha)
      =\frac1{sn}\Phi^{W}_{m_0,m_1}(\cZ(V),U,U)
        +(\alpha-\rho_N)\dim U.
\end{equation}
Thus the subspace-design potential is exactly the normalized qLCL potential only at the actual normalizer rate \(\rho_N\).  If \(\alpha>\rho_N\), then a negative \(\alpha\)-design potential gives negative qLCL potential at rate \(\rho_N\) with slack \((\alpha-\rho_N)\dim U\).  Quotient minors \(U' < U\) are exactly the tests of the same profile on all nonzero subspaces \(U'\le V\).

\begin{remark}[Classical specializations]\label{rsd:rem:classical-special-cases}
The framework specializes cleanly to both classical theories.
\begin{enumerate}[leftmargin=2em,label=\textup{(\roman*)}]
\item \emph{Classical scalar LCL.}  Set $S=0$ and take an ordinary linear code $C\le\F_q^n$ of dimension $m$.  Then $K_S(A)=\ker(A)$ for every witness matrix $A$, so the logical row space is automatically the ordinary row space: $T=U$.  For scalar local profiles one obtains
\[
   \Phi_{0,m}(\cV,U,U)=a_\cV(U)-(n-m)\dim U,
\]
and for every proper $W<U$,
\[
   \Phi_{0,m}(\cV,U,U)-\Phi_{0,m}(\cV,W,W)
   =a_\cV(U)-a_\cV(W)-(n-m)(\dim U-\dim W).
\]
This is the LCL potential increment for random linear codes in the sense of Levi--Mosheiff--Shagrithaya~\cite[Section~4]{LMS25}.  Thus the quotient threshold theorem reduces, after the harmless relabeling $T=U$, to the classical fixed-width LCL threshold.

\item \emph{Ordinary folded subspace designs.}  Again set $S=0$, but allow a folded alphabet $W$ and restrict local constraints to kernel profiles $\cZ_i(V_i)$.  The injective-logical condition is automatic for every nonzero represented subspace, because no nonzero vector is a stabilizer.  If $\dim N=\alpha ns$, then
\[
   \frac{1}{sn}\Phi^W_{0,\alpha ns}(\cZ(V),U,U)
   =\frac1n\sum_i\dim(U\cap V_i)-(1-\alpha)\dim U
   =\Phi_V(U,(V_i),\alpha).
\]
The relative CSS definition below is obtained by keeping the same folded kernel-profile tests but allowing $S\ne0$ and explicitly requiring the tested representative subspace to intersect $S$ trivially.
\end{enumerate}
\end{remark}

\section{Relative subspace designs for CSS codes}\label{sec:relative-sd}

The dictionary identifies the qLCL witnesses that should be forbidden.  We now state the resulting property directly for nested folded spaces.  The definition has one quotient feature, namely $A\cap S=0$: pure stabilizer subspaces are discarded, but coordinate zeros are still measured on the physical representatives in $A$.  This distinction is the main conceptual difference between ordinary subspace designs and their CSS relative version.

\subsection{Definitions}

\begin{definition}[Relative subspace design]\label{rsd:def:relative}
Let \(S\subseteq N\subseteq W^n\), let \(r\in\N\), and let \(\alpha\in[0,1]\).  The pair \((N,S)\) is an \emph{\(\alpha\)-relative subspace design up to dimension \(r\)} if for every subspace \(A\le N\) satisfying
\[
  1\le \dim A\le r,
  \qquad A\cap S=\{0\},
\]
one has
\begin{equation}\label{rsd:eq:relative-design}
   \frac1n\sum_{i=1}^n \dim(A\cap K_i)
      \le \alpha\dim A .
\end{equation}
Equivalently,
\[
   \frac1n\sum_{i=1}^n \dim\pi_i(A)
      \ge (1-\alpha)\dim A .
\]
\end{definition}

The condition \(A\cap S=0\) says that \(A\) injects into the logical quotient \(N/S\).  Thus the test ignores pure stabilizer directions but measures zeros in the actual physical representatives.

\begin{definition}[CSS subspace-design code]\label{rsd:def:css-sd}
Let \(Q=Q(C_1,C_2)\) be the CSS code defined by \(C_2\subseteq C_1\subseteq W^n\).  We say that \(Q\) is an \emph{\((\alpha_X,\alpha_Z)\)-CSS subspace design up to dimension \(r\)} if
\[
   (C_1,C_2)\quad\text{is an }\alpha_X\text{-relative subspace design up to dimension }r,
\]
and
\[
   (C_2^\perp,C_1^\perp)\quad\text{is an }\alpha_Z\text{-relative subspace design up to dimension }r.
\]
\end{definition}

One may also use a dimension-dependent function \(\tau(t)\) by replacing \(\alpha\dim A\) with \(\tau(\dim A)\dim A\).  The AEL lifting theorem in this paper is stated for a uniform \(\alpha\), which is the parameter regime needed for near-optimal designs.

\begin{remark}[Why the relative condition is not an ordinary design on the quotient]\label{rsd:rem:not-quotient-design}
The quotient $N/S$ is the logical space, but its coordinates are not obtained by quotienting each folded coordinate independently.  A logical coset can have many physical representatives with different zero patterns.  Therefore the design condition must test an actual representative subspace $A\le N$ and only then require $A\cap S=0$.  This is the same order of operations used throughout quotient LCL: local constraints are imposed before quotienting, while independence is measured after quotienting.
\end{remark}

\subsection{Basic properties}

\begin{proposition}[Section formulation]\label{rsd:prop:section}
Let \(q_N:N\to N/S\) be the quotient map.  The pair \((N,S)\) is an \(\alpha\)-relative subspace design up to dimension \(r\) if and only if for every \(B\le N/S\) with \(1\le \dim B\le r\) and every linear section \(\sigma:B\to N\) of \(q_N\),
\[
   \frac1n\sum_i \dim(\sigma(B)\cap K_i)
      \le \alpha\dim B .
\]
\end{proposition}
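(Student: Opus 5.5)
The plan is to establish a bijection between the two families of test objects, so that the two conditions quantify over the same subspaces and have matching right-hand sides. The bijection runs between subspaces \(A\le N\) with \(1\le\dim A\le r\) and \(A\cap S=\{0\}\) on one side, and pairs \((B,\sigma)\) with \(B\le N/S\), \(1\le \dim B\le r\), and \(\sigma:B\to N\) a linear section of \(q_N\) over \(B\) on the other. The map is \((B,\sigma)\mapsto\sigma(B)\). Here I read ``section of \(q_N\)'' as \(q_N\circ\sigma=\mathrm{id}_B\), which is the only sensible meaning when \(B\) is a subspace of \(N/S\).

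First, take a pair \((B,\sigma)\) and put \(A=\sigma(B)\). Since \(q_N\circ\sigma=\mathrm{id}_B\), the map \(\sigma\) is injective, so \(\dim A=\dim B\) lies in \([1,r]\). If \(a=\sigma(x)\in A\cap S\), then \(x=q_N(a)=0\), hence \(a=0\). So \(A\cap S=\{0\}\) and \(A\) is an admissible test space in Definition~\ref{rsd:def:relative}. The quantity \(\frac1n\sum_i\dim(\sigma(B)\cap K_i)\) is exactly the left side of \eqref{rsd:eq:relative-design} for this \(A\). Its bound \(\alpha\dim A\) equals \(\alpha\dim B\). This shows the forward implication: the relative design property gives the section inequality.

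Conversely, take \(A\le N\) with \(1\le\dim A\le r\) and \(A\cap S=0\). Then \(q_N|_A\) is injective, because its kernel is \(A\cap S=0\). Put \(B=q_N(A)\), so \(\dim B=\dim A\), and let \(\sigma=(q_N|_A)^{-1}:B\to A\subseteq N\). This \(\sigma\) is a linear section with \(\sigma(B)=A\). Applying the section hypothesis to \((B,\sigma)\) yields \eqref{rsd:eq:relative-design} for \(A\). The equivalent form with \(\dim\pi_i(A)\) then follows from \(\dim A=\dim(A\cap K_i)+\dim\pi_i(A)\), since \(K_i\) is the kernel of \(\pi_i\) restricted to \(A\).

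There is no real obstacle here; the argument is essentially bookkeeping. The only point needing care is the interpretation of a section when \(B\) is a proper subspace of \(N/S\), together with the observation that injectivity of \(\sigma\) is what makes \(\dim\sigma(B)=\dim B\), so that the normalizations on the two sides agree.
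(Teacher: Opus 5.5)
Your proof is correct and follows the paper's argument: for an admissible \(A\) you take the section \(\sigma=(q_N|_A)^{-1}\) on \(B=q_N(A)\), and conversely you observe that the image of any section meets \(S\) trivially. You also spell out that \(\sigma\) is injective, so \(\dim\sigma(B)=\dim B\) and the two normalizations agree; the paper leaves this step implicit.
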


\begin{proof}
If \(A\le N\) and \(A\cap S=0\), then \(q_N|_A:A\to q_N(A)\) is an isomorphism, so \(A\) is the image of the section \((q_N|_A)^{-1}\).  Conversely, the image of any section intersects \(S\) trivially.
\end{proof}

\begin{proposition}[Inheritance from ordinary designs]\label{rsd:prop:inherit}
If \(N\le W^n\) is an ordinary \(\alpha\)-subspace-design code up to dimension \(r\), then \((N,S)\) is an \(\alpha\)-relative subspace design up to dimension \(r\) for every \(S\subseteq N\).
\end{proposition}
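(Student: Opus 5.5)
The plan is a direct comparison of the two definitions: the relative condition tests only a subfamily of the subspaces that the ordinary condition already controls. We take the ordinary notion of an \(\alpha\)-subspace-design code \(N\le W^n\) up to dimension \(r\) to mean the \(S=0\) case of Definition~\ref{rsd:def:relative}, as in Remark~\ref{rsd:rem:classical-special-cases}(ii). Concretely, this says that every subspace \(A\le N\) with \(1\le\dim A\le r\) satisfies
\[
   \frac1n\sum_{i=1}^n\dim(A\cap K_i)\le\alpha\dim A .
\]

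Now fix any \(S\subseteq N\). Let \(A\le N\) be a test subspace for the relative definition, so \(1\le\dim A\le r\) and \(A\cap S=\{0\}\). Dropping the hypothesis \(A\cap S=\{0\}\), the subspace \(A\) is in particular a subspace of \(N\) of dimension between \(1\) and \(r\). The ordinary design property therefore gives \eqref{rsd:eq:relative-design} for \(A\) directly. The quantity \(\frac1n\sum_i\dim(A\cap K_i)\) depends only on \(A\) and the coordinate-zero subspaces \(K_i\), not on \(S\), so the inequality is literally the same one in both definitions.

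The equivalent form \(\frac1n\sum_i\dim\pi_i(A)\ge(1-\alpha)\dim A\) follows from \(\dim\pi_i(A)=\dim A-\dim(A\cap K_i)\). Alternatively, the statement can be read off Proposition~\ref{rsd:prop:section}: images of sections are particular subspaces of \(N\).

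There is no real obstacle here. The only point needing care is to state explicitly that ``ordinary subspace-design code'' coincides with the \(S=0\) instance of the relative definition, so that the two inequalities being compared are identical.
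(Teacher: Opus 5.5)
Your proposal is correct and takes the same approach as the paper. The paper's one-line proof likewise notes that the ordinary condition is imposed on every \(A\le N\) with \(1\le\dim A\le r\), while the relative condition is imposed only on the subfamily with \(A\cap S=\{0\}\).
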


\begin{proof}
The ordinary condition is imposed on every low-dimensional \(A\le N\).  The relative condition is imposed only on those \(A\) satisfying \(A\cap S=0\).
\end{proof}

\begin{proposition}[Monotonicity]\label{rsd:prop:monotone}
Let \(S\subseteq N\).  If \((N,S)\) is an \(\alpha\)-relative design up to dimension \(r\), then \((N',S\cap N')\) is an \(\alpha\)-relative design up to dimension \(r\) for every \(N'\subseteq N\).  Also, if \(S\subseteq S'\subseteq N\), then \((N,S')\) is an \(\alpha\)-relative design up to dimension \(r\).
\end{proposition}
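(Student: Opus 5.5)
Both assertions follow by showing that every test subspace for the new pair is already a test subspace for the original pair $(N,S)$, so the design inequality \eqref{rsd:eq:relative-design} transfers verbatim. The key observation is that the inequality depends only on $A$ and the coordinate-zero subspaces $K_i$ of the ambient $W^n$. It does not depend on $N$ or $S$, which enter only through the admissibility conditions $A\le N$ and $A\cap S=\{0\}$.

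\emph{First claim.} Let $N'\subseteq N$. The pair $(N',S\cap N')$ is nested because $S\cap N'\subseteq N'$. Take $A\le N'$ with $1\le\dim A\le r$ and $A\cap(S\cap N')=\{0\}$. Since $A\le N'$, we have
\[
  A\cap S=A\cap N'\cap S=A\cap(S\cap N')=\{0\}.
\]
Also $A\le N'\le N$. So $A$ is an admissible test subspace for $(N,S)$, and the hypothesis gives $\frac1n\sum_i\dim(A\cap K_i)\le\alpha\dim A$. This is exactly the condition required for $(N',S\cap N')$.

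\emph{Second claim.} Let $S\subseteq S'\subseteq N$. Take $A\le N$ with $1\le\dim A\le r$ and $A\cap S'=\{0\}$. Since $S\subseteq S'$, we get $A\cap S\subseteq A\cap S'=\{0\}$. Hence $A$ is admissible for $(N,S)$, and the same inequality follows.

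There is no real obstacle here. The only point requiring care is in the first claim: the trivial-intersection condition must be rewritten using $A\le N'$, which converts the relative stabilizer $S\cap N'$ back into $S$. Both parts can also be read through Proposition~\ref{rsd:prop:section}, since shrinking $N$ or enlarging $S$ only shrinks the family of admissible representative subspaces $A$.
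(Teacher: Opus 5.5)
Your proof is correct and follows the same route as the paper, whose proof simply observes that shrinking $N$ or enlarging $S$ shrinks the collection of admissible test subspaces $A$. You have made explicit the one-line checks behind that observation: $A\cap S=A\cap(S\cap N')$ when $A\le N'$, and $A\cap S\subseteq A\cap S'$.
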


\begin{proof}
Both statements reduce the collection of subspaces \(A\) that must be tested.
\end{proof}

\begin{proposition}[Logical distance]\label{rsd:prop:distance}
If \((N,S)\) is an \(\alpha\)-relative subspace design up to dimension \(1\), then every \(x\in N\setminus S\) satisfies
\[
   \frac{\wt(x)}n\ge 1-\alpha .
\]
Consequently, if \(Q(C_1,C_2)\) is an \((\alpha_X,\alpha_Z)\)-CSS subspace design up to dimension \(1\), then
\[
   \frac{d_X}{n}\ge 1-\alpha_X,
   \qquad
   \frac{d_Z}{n}\ge 1-\alpha_Z.
\]
\end{proposition}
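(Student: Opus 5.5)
The plan is to apply Definition~\ref{rsd:def:relative} with $\dim A=1$ directly, and then read off the CSS consequence from Definition~\ref{rsd:def:css-sd}.

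\textbf{One-dimensional test.} Fix $x\in N\setminus S$ and let $A=\Span\{x\}$. Then $\dim A=1\le r$. Also $A\cap S=\{0\}$: any nonzero element of $A$ is $cx$ with $c\neq0$, and $cx\in S$ would force $x\in S$. So $A$ is an admissible test space, and the design inequality \eqref{rsd:eq:relative-design} gives
\[
   \frac1n\sum_{i=1}^n\dim(A\cap K_i)\le\alpha .
\]

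\textbf{Counting zero coordinates.} Since $A$ is a line, $\dim(A\cap K_i)\in\{0,1\}$. It equals $1$ exactly when $x\in K_i$, that is, when $x_i=0$. Hence the sum counts the zero blocks of $x$:
\[
   \sum_i\dim(A\cap K_i)=n-\wt(x).
\]
Substituting, $(n-\wt(x))/n\le\alpha$, which rearranges to $\wt(x)/n\ge1-\alpha$. Equivalently, one can use the projection form $\frac1n\sum_i\dim\pi_i(A)\ge(1-\alpha)$ directly, since $\dim\pi_i(A)=1$ exactly when $x_i\ne0$.

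\textbf{CSS consequence.} Apply the first part to the $X$-sector pair $(C_1,C_2)$ with $\alpha=\alpha_X$. Every $x\in C_1\setminus C_2$ then has $\wt(x)\ge(1-\alpha_X)n$, and taking the minimum over such $x$ gives $d_X/n\ge1-\alpha_X$. Apply it likewise to the $Z$-sector pair $(C_2^\perp,C_1^\perp)$ with $\alpha=\alpha_Z$, which gives $d_Z/n\ge1-\alpha_Z$.

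No step here is a genuine obstacle. The only point needing care is checking that $\Span\{x\}\cap S=0$ is equivalent to $x\notin S$, which holds because $S$ is closed under nonzero scalars.
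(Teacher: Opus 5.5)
Your proof is correct and follows the paper's argument essentially line for line. The paper also takes $A=\Span\{x\}$, notes $A\cap S=0$, uses that $\dim(A\cap K_i)=1$ exactly when $x_i=0$ so that the design inequality becomes $1-\wt(x)/n\le\alpha$, and then applies this to the pairs $(C_1,C_2)$ and $(C_2^\perp,C_1^\perp)$.
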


\begin{proof}
Let \(x\in N\setminus S\) and set \(A=\Span\{x\}\).  Then \(A\cap S=0\).  Moreover \(\dim(A\cap K_i)=1\) exactly when \(x_i=0\), and is zero otherwise.  Thus
\[
  1-\frac{\wt(x)}n
   =\frac1n\sum_i\dim(A\cap K_i)
   \le \alpha.
\]
Apply this to \((C_1,C_2)\) and \((C_2^\perp,C_1^\perp)\).
\end{proof}

\begin{remark}[Quantum Singleton scale]
For a balanced CSS code of quantum rate \(R\), the two normalizer rates are naturally around \((1+R)/2\).  Thus a near-optimal CSS subspace design has \(\alpha_X,\alpha_Z\approx(1+R)/2\), and Proposition~\ref{rsd:prop:distance} yields relative distance close to \((1-R)/2\), the quantum Singleton scale.  Folded CSS codes with Singleton-scale relative distance are already known via quantum AEL distance amplification~\cite{BGG24,BJMST24}; the design condition strengthens the distance bound from dimension one to all dimensions up to \(r\).
\end{remark}

\subsection{Contained profiles and deterministic equivalence}

\begin{definition}[Contained relative profile]\label{rsd:def:profile}
Let \(S\subseteq N\subseteq W^n\).  A tuple \((V_1,\ldots,V_n)\in\calL(V)^n\) is a \emph{relative \(V\)-profile contained in \((N,S)\)} if there are a subspace \(A\le N\) and an isomorphism \(\phi:V\to A^*\) such that
\[
      A\cap S=0
\]
and, for every coordinate \(i\),
\begin{equation}\label{rsd:eq:profile-containment}
      \phi(V_i)^\circ\subseteq A\cap K_i .
\end{equation}
\end{definition}

The containment condition says that the coordinate row space of \(A\to W\) is contained in \(\phi(V_i)\).  This is exactly a kernel-profile local constraint.

\begin{proposition}[Kernel-profile quotient LCL equals contained profiles]\label{rsd:prop:lcl-contained-profile}
Let \(S\subseteq N\subseteq W^n\), let \(V\) be finite-dimensional, and put \(B=V^*\).  A tuple \((V_i)_{i=1}^n\in\calL(V)^n\) is a relative \(V\)-profile contained in \((N,S)\) if and only if \((S,N)\) has a folded qLCL realization \(M:B\to N\) with local constraints \(\cZ_i(V_i)\), ordinary row space \(R(M)=V\), and logical row space \(\Lambda_S(M)=V\), after identifying \(B^*=V\).
\end{proposition}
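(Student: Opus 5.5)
The plan is to make the dictionary between a subspace \(A\le N\) and a map \(M:B\to N\) with image \(A\) completely explicit, and then check the three conditions one at a time. Given \(A\le N\) and an isomorphism \(\phi:V\to A^*\), take its dual \(\phi^*:A\to V^*=B\); this is an isomorphism after the canonical identification \(A^{**}=A\). Set
\[
   M=\iota_A\circ(\phi^*)^{-1}:B\to N,
\]
where \(\iota_A\) is the inclusion. Conversely, given \(M:B\to N\) with \(\row(M)=V=B^*\), we have \(\ker M=V^\circ=0\). So \(M\) is injective, and we set \(A=M(B)\) and \(\phi=((M|^A)^{-1})^*\), where \(M|^A:B\to A\) is \(M\) with codomain restricted to \(A\). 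These two constructions are mutually inverse, so it suffices to match conditions under this correspondence.

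\textbf{Step 1 (row space and logical row space).} The condition \(\row(M)=(\ker M)^\circ=V\) is equivalent to \(\ker M=0\), i.e.\ to \(M\) being an isomorphism onto \(A\). Next, the logical row space satisfies \(\Lambda_S(M)=V\) iff \(K_S(M)=M^{-1}(S)=0\). Since \(M\) is injective, this holds iff \(A\cap S=0\), which is exactly the remark preceding the definition of \(\Hom_U\) (equality \(\Lambda_S=\row\) iff \(M(B)\cap S=0\)).

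\textbf{Step 2 (local constraints).} For each coordinate \(i\), the requirement \(M_i=\pi_i\circ M\in\cZ_i(V_i)\) means \(V_i^\circ\subseteq\ker M_i\), where \(V_i^\circ\le V^*=B\). Transport this along \(M:B\xrightarrow{\sim}A\). Since \(\ker(\pi_i|_A)=A\cap K_i\), the condition becomes \(M(V_i^\circ)\subseteq A\cap K_i\). It remains to identify \(M(V_i^\circ)\) with \(\phi(V_i)^\circ\le A\). For this I will use the annihilator functoriality: for an isomorphism \(\psi:A\to B\) and \(X\le B^*\), one has \(\psi^{-1}(X^\circ)=(\psi^*(X))^\circ\). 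Applying it with \(\psi=\phi^*\), so that \(\psi^*=\phi\) under double-dual identifications, gives \(M(V_i^\circ)=\phi(V_i)^\circ\). Therefore the containment \eqref{rsd:eq:profile-containment} is equivalent to \(M_i\in\cZ_i(V_i)\) for every \(i\).

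\textbf{Main obstacle.} The only delicate point is the bookkeeping of duals and double duals, namely making sure \(\phi\) versus \(\phi^*\) and \(B^*=V\) are identified consistently, so that annihilators are taken in the correct spaces (\(V_i^\circ\le B\), \(\phi(V_i)^\circ\le A\)). I will fix the identifications \(B^*=V^{**}=V\) and \(A^{**}=A\) at the outset and verify the annihilator identity by a one-line pairing computation:
\[
   \langle \psi^{-1}x,\xi\rangle=\langle x,(\psi^{-1})^*\xi\rangle .
\]
Once this is in place, both directions of the proposition follow from Steps 1 and 2.
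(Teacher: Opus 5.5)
Your proposal is correct and follows essentially the same route as the paper. You set \(M=\iota_A\circ(\phi^*)^{-1}\), and conversely obtain \(\phi\) by dualizing \((M|^A)^{-1}\). You then read off \(\row(M)=V\) and \(\Lambda_S(M)=V\) from injectivity together with \(A\cap S=0\), and transport the local constraint through \(M\). Your explicit check that \(M(V_i^\circ)=\phi(V_i)^\circ\), via the annihilator identity, spells out the step the paper only describes as ``transporting through \(M\)''.
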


\begin{proof}
Suppose first that the profile is contained, witnessed by \(A\le N\) and \(\phi:V\to A^*\).  Dualizing \(\phi\) gives an isomorphism \(\phi^*:A\to V^*=B\), hence its inverse gives a map \(M:B\to A\subseteq N\).  Because \(M\) is injective, \(R(M)=V\).  Because \(A\cap S=0\), we have \(K_S(M)=\ker M=0\), hence \(\Lambda_S(M)=V\).  For coordinate \(i\), the condition \(\phi(V_i)^\circ\subseteq A\cap K_i\) is equivalent, after transporting through \(M\), to \(V_i^\circ\subseteq\ker(\pi_iM)\).  Therefore \(M_i\in\cZ_i(V_i)\).

Conversely, let \(M:B\to N\) be such a folded qLCL realization and set \(A=M(B)\).  Since \(R(M)=B^*=V\), the map \(M\) is injective.  Since \(\Lambda_S(M)=R(M)\), we have \(M^{-1}(S)=\ker M=0\), so \(A\cap S=0\).  The inverse map \(M^{-1}:A\to B=V^*\) dualizes to an isomorphism \(\phi:V\to A^*\).  The local constraint \(V_i^\circ\subseteq\ker(\pi_iM)\) says exactly that \(\phi(V_i)^\circ\subseteq A\cap K_i\).  Hence the profile is contained.
\end{proof}

\begin{theorem}[Profile characterization]\label{rsd:thm:profile-characterization}
Let \(S\subseteq N\subseteq W^n\).  The pair \((N,S)\) is an \(\alpha\)-relative subspace design up to dimension \(r\) if and only if for every vector space \(V\) with \(\dim V\le r\) and every relative \(V\)-profile \((V_1,\ldots,V_n)\) contained in \((N,S)\),
\[
      \Phi_V(V,(V_i),\alpha)\ge0.
\]
Equivalently, \((N,S)\) has no injective-logical kernel-profile witness whose associated design profile has negative normalized design potential \(\Phi_V(V,(V_i),\alpha)\).  When \(\alpha=\dim N/(ns)\), this is exactly absence of negative normalized folded qLCL potential; for general \(\alpha\) the two potentials are related by~\eqref{eq:potential-slack}.
\end{theorem}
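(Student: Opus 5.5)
The plan is to reduce both directions to a single canonical choice of profile attached to each test subspace. Given \(A\le N\) with \(1\le\dim A\le r\) and \(A\cap S=0\), put \(V=A^*\) and \(\phi=\mathrm{id}\). Define \(V_i\defeq(A\cap K_i)^\circ\le A^*\). Then \(\phi(V_i)^\circ=A\cap K_i\), so \eqref{rsd:eq:profile-containment} holds with equality. This makes \((V_i)\) a relative \(V\)-profile contained in \((N,S)\); call it the canonical profile of \(A\). Its potential is
\[
\Phi_V(V,(V_i),\alpha)=\frac1n\sum_i\dim V_i-(1-\alpha)\dim A
=\frac1n\sum_i\dim\pi_i(A)-(1-\alpha)\dim A,
\]
using \(\dim V_i=\dim A-\dim(A\cap K_i)=\rho_i(A)\). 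So \(\Phi_V\ge0\) for the canonical profile is exactly the equivalent form of \eqref{rsd:eq:relative-design}. This gives the direction ``profile condition \(\Rightarrow\) relative design.''

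For the converse, suppose \((N,S)\) is an \(\alpha\)-relative design. Take any contained profile, witnessed by \(A\) and \(\phi:V\to A^*\), with \(\dim V\le r\). If \(\dim V=0\), the potential is \(0\). Otherwise \(A\) is a legitimate test space. From \(\phi(V_i)^\circ\subseteq A\cap K_i\) and the fact that \(\phi\) is an isomorphism, we get \(\phi(V_i)\supseteq(A\cap K_i)^\circ\). Hence \(\dim V_i\ge\dim A-\dim(A\cap K_i)\). Since \(\Phi_V(V,\cdot,\alpha)\) is monotone increasing in each \(\dim(V\cap V_i)=\dim V_i\), the potential of the given profile is at least that of the canonical profile of \(A\), which is nonnegative by the design inequality.

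The remaining claims follow from earlier results. The ``equivalently'' reformulation in terms of injective-logical kernel-profile witnesses is Proposition~\ref{rsd:prop:lcl-contained-profile}: contained profiles correspond exactly to folded realizations with constraints \(\cZ_i(V_i)\) and \(T=U=V\). The final sentence then follows from the dictionary \eqref{eq:potential-dictionary} at \(\alpha=\rho_N=\dim N/(ns)\), and from \eqref{eq:potential-slack} in general. No step is a real obstacle. The only care needed is bookkeeping: keep track of annihilators inside \(A^*\) versus \(V\) under \(\phi\), and observe that the inclusion in \eqref{rsd:eq:profile-containment} goes in the direction that makes the given profile's potential dominate the canonical one's rather than the reverse.
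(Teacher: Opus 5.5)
Your proof is correct and takes essentially the same route as the paper. The canonical profile $V_i=(A\cap K_i)^\circ$ on $V=A^*$ with $\phi=\mathrm{id}$ gives one direction (the paper phrases it as a contrapositive), and the inequality $\dim(A\cap K_i)\ge\dim V-\dim V_i$ from the containment gives the other, which is your domination-by-the-canonical-profile argument; the reformulations likewise come from the contained-profile/qLCL correspondence together with the potential dictionary and the slack identity.
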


\begin{proof}
Assume first that \((N,S)\) is a relative design, and let the profile be witnessed by \(A\le N\) and \(\phi:V\to A^*\).  From \(\phi(V_i)^\circ\subseteq A\cap K_i\),
\[
   \dim(A\cap K_i)
      \ge \dim\phi(V_i)^\circ
      =\dim V-\dim V_i.
\]
Since \(A\cap S=0\) and \(\dim A=\dim V\le r\),
\[
  \frac1n\sum_i(\dim V-\dim V_i)
      \le \frac1n\sum_i\dim(A\cap K_i)
      \le \alpha\dim V.
\]
This is exactly \(\Phi_V(V,(V_i),\alpha)\ge0\).

Conversely, suppose the relative-design condition fails.  Then there is \(A\le N\), \(A\cap S=0\), \(1\le t=\dim A\le r\), such that
\[
   \frac1n\sum_i\dim(A\cap K_i)>\alpha t.
\]
Set \(V=A^*\), let \(\phi\) be the identity map \(V\to A^*\), and define
\[
   V_i\defeq(A\cap K_i)^\circ\le A^*=V.
\]
Then \(\phi(V_i)^\circ=A\cap K_i\), so the profile is contained, and
\[
   \Phi_V(V,(V_i),\alpha)
     =\alpha t-\frac1n\sum_i\dim(A\cap K_i)<0.
\]
The final equivalence follows from Proposition~\ref{rsd:prop:lcl-contained-profile} and the definition of the design potential.  If \(\alpha=\dim N/(ns)\), this is the exact normalized qLCL potential by~\eqref{eq:potential-dictionary}; otherwise the comparison is the slack identity~\eqref{eq:potential-slack}.
\end{proof}

\begin{lemma}[Minimal bad profile]\label{rsd:lem:minimal-bad}
If \((N,S)\) is not an \(\alpha\)-relative subspace design up to dimension \(r\), then there exist a vector space \(V\) with \(1\le\dim V\le r\) and a contained relative \(V\)-profile \((V_i)\) such that
\[
      \Phi_V(U,(V_i),\alpha)<0
      \qquad\text{for every nonzero }U\le V.
\]
\end{lemma}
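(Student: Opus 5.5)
The plan is a minimality argument. Since $(N,S)$ fails the design condition, Theorem~\ref{rsd:thm:profile-characterization} provides a bad subspace. Among all bad subspaces we choose one of smallest dimension, and then show that every nonzero test subspace of its profile is also bad.

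Concretely, let $\mathcal A$ be the set of subspaces $A\le N$ with $A\cap S=0$, $1\le\dim A\le r$, and $\frac1n\sum_i\dim(A\cap K_i)>\alpha\dim A$. By hypothesis $\mathcal A\neq\emptyset$. Choose $A\in\mathcal A$ of minimal dimension $t$. As in the converse part of the proof of Theorem~\ref{rsd:thm:profile-characterization}, set $V=A^*$, take $\phi=\mathrm{id}$, and put $V_i=(A\cap K_i)^\circ\le V$. This is a contained relative $V$-profile with $1\le\dim V=t\le r$.

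It remains to compute $\Phi_V(U,(V_i),\alpha)$ for an arbitrary nonzero $U\le V$. Let $A_U\defeq U^\circ\le A$ denote the annihilator inside $A$; it has codimension $\dim U$ in $A$. Then
\[
\dim(U\cap V_i)=\dim\bigl((U^\circ+(A\cap K_i))^\circ\bigr)=t-\dim\bigl(U^\circ+(A\cap K_i)\bigr),
\]
so
\[
\dim U-\dim(U\cap V_i)=\dim(A\cap K_i)-\dim(U^\circ\cap K_i).
\]
Summing over $i$ gives
\[
\Phi_V(U,(V_i),\alpha)=\alpha\dim U-\frac1n\sum_i\bigl(\dim(A\cap K_i)-\dim(A_U\cap K_i)\bigr).
\]
If $U=V$, then $A_U=0$ and the expression is negative because $A\in\mathcal A$.

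If $0<U<V$, then $A_U$ is a proper subspace of $A$. If $A_U\neq 0$, it satisfies $A_U\cap S=0$ and $\dim A_U<t\le r$, so minimality of $t$ forces $A_U\notin\mathcal A$, that is,
\[
\frac1n\sum_i\dim(A_U\cap K_i)\le\alpha\dim A_U.
\]
If $A_U=0$, the same inequality holds trivially. Subtracting this from the strict inequality for $A$ gives
\[
\frac1n\sum_i\bigl(\dim(A\cap K_i)-\dim(A_U\cap K_i)\bigr)>\alpha\bigl(t-\dim A_U\bigr)=\alpha\dim U,
\]
which is exactly $\Phi_V(U,(V_i),\alpha)<0$.

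The main obstacle is the linear-algebra bookkeeping in the middle step: correctly identifying the quotient minor at $U$ with the subspace $U^\circ\le A$, and verifying the dimension identity $\dim U-\dim(U\cap V_i)=\dim(A\cap K_i)-\dim(U^\circ\cap K_i)$. This identity uses $(X+Y)^\circ=X^\circ\cap Y^\circ$ inside $A$, together with the fact that $U^\circ\cap(A\cap K_i)=U^\circ\cap K_i$ because $U^\circ\le A$. The key design choice is to minimize the dimension of the bad subspace rather than to minimize the potential. This is what makes the inherited inequality apply to the minor $A_U$, which also automatically satisfies $A_U\cap S=0$.
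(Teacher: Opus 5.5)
Your proof is correct, but it takes a different route from the paper's.

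\textbf{The paper's route.} It starts from an arbitrary contained profile $(V_i)$ with $\Phi_V(V,(V_i),\alpha)<0$. It chooses $W\le V$ maximizing $\Phi_V(W,(V_i),\alpha)$, taking one of maximum dimension among the maximizers, and passes to the quotient profile $\bar V_i=(V_i+W)/W$ on $V/W$. It must then verify two things:
\begin{itemize}
\item the quotient profile is again contained, witnessed by $A'=\phi(W)^\circ\le A$;
\item $\Phi_{\bar V}(\bar U)=\Phi_V(U+W)-\Phi_V(W)$, which is a modular-law computation.
\end{itemize}
Strict negativity on every nonzero $\bar U$ then follows from the choice of $W$.

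\textbf{Your route.} You make the extremal choice globally, at the level of the code: a bad subspace $A\le N$ of minimal dimension, together with its canonical profile $V_i=(A\cap K_i)^\circ$. For this profile, $\Phi_V(U)$ is exactly the drop in design excess from $A$ to the minor $U^\circ\le A$. Minimality then gives negativity by a single subtraction, and both containment and $U^\circ\cap S=0$ come for free.

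\textbf{Comparison.} The two arguments are closely related. With the canonical profile, the paper's maximizer $W$ corresponds to a subspace $\phi(W)^\circ$ of maximal excess within $A$, minimal in dimension among such. That is a local version of your global minimal-dimension choice; both produce a subspace whose excess strictly exceeds that of every proper subspace.
\begin{itemize}
\item Your version is shorter and avoids the quotient-profile bookkeeping.
\item The paper's version works inside any given bad contained profile, not necessarily the canonical one attached to $A$. It produces the minimal profile as a quotient of that profile, matching the quotient-minor operation used throughout the qLCL framework.
\end{itemize}

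\textbf{Minor point.} For $0<U<V$ one always has $\dim U^\circ=t-\dim U\ge 1$, so your case $A_U=0$ never arises there. Handling it is harmless.
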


\begin{proof}
By Theorem~\ref{rsd:thm:profile-characterization}, there is a contained profile \((V_i)\) with \(\Phi_V(V,(V_i),\alpha)<0\).  Among all subspaces \(W\le V\) maximizing \(\Phi_V(W,(V_i),\alpha)\), choose one of maximum dimension.  Since \(\Phi_V(0,(V_i),\alpha)=0\) and \(\Phi_V(V,(V_i),\alpha)<0\), this maximizer is a proper subspace of \(V\).

Let \(\bar V=V/W\) and \(\bar V_i=(V_i+W)/W\).  We claim that \((\bar V_i)\) is contained.  Suppose \((V_i)\) is witnessed by \(A\le N\) and \(\phi:V\to A^*\).  Let
\[
      A'\defeq \phi(W)^\circ\le A.
\]
Then \(A'\cap S=0\), and restriction of functionals gives a natural isomorphism \(\bar\phi:\bar V\to(A')^*\).  Moreover
\[
      \bar\phi(\bar V_i)^\circ
      =A'\cap\phi(V_i)^\circ
      \subseteq A'\cap K_i,
\]
so the quotient profile is contained in \((N,S)\).

For any nonzero \(\bar U\le\bar V\), write \(\bar U=(U+W)/W\) with \(U\not\subseteq W\).  A direct dimension calculation gives
\[
      \Phi_{\bar V}(\bar U,(\bar V_i),\alpha)
      =\Phi_V(U+W,(V_i),\alpha)-\Phi_V(W,(V_i),\alpha).
\]
The maximality of \(W\), together with the choice of maximum dimension among maximizers, implies that the right-hand side is negative whenever \(U+W\supsetneq W\).  Hence every nonzero subspace of the quotient has negative potential.
\end{proof}

\section{Robust inner LCL}\label{sec:robust-inner}

The AEL construction examines a left vertex through a linear map from a global
bad witness into the inner normalizer.  The kernel of its logical quotient may
consist of local stabilizers, not literal zero codewords.  Thus the local inner
condition must be robust under maps that have stabilizer kernel.  We formulate
this robustness first for folded qLCL profiles.  Relative
subspace designs are then recovered by restricting the profile class to
kernel profiles.

Let \(S\subseteq N\subseteq W^d\), where \(W\cong\F_q^s\), and let
\(q_N:N\to N/S\) be the quotient map.  For \(\ell\in[d]\), write
\(\pi_\ell:W^d\to W\) for coordinate projection.  If \(B\) is a coefficient
space and \(\mathfrak V=(\mathfrak V_1,\ldots,\mathfrak V_d)\) is a folded
profile with \(\mathfrak V_\ell\le\Hom(B,W)\), set
\begin{equation}\label{rsd:eq:lcl-potential}
   \widehat\Phi^W_\alpha(U,\mathfrak V)
      \defeq
      \frac1{sd}\sum_{\ell=1}^d
        \dim\bigl(\mathfrak V_\ell\cap\Hom_U(B,W)\bigr)
      -(1-\alpha)\dim U,
      \qquad U\le B^* .
\end{equation}
This is the normalized injective-logical folded LCL potential at design
threshold \(\alpha\).  It is the same normalization as
\eqref{eq:potential-dictionary}; the only difference is that the coordinate
constraints are now arbitrary folded LCL subspaces rather than kernel-profile
subspaces.

\begin{definition}[Bounded-entropy folded profile class]\label{rsd:def:bounded-profile-class}
Fix \(r\).  A folded profile class
\(\mathscr F_{\le r}=\{\mathscr F_t\}_{1\le t\le r}\) assigns, for each
\(t\), a collection \(\mathscr F_t\) of length-\(d\) folded profiles on a fixed
\(t\)-dimensional coefficient space.  We say that it has entropy exponent
\(h\) if
\[
       |\mathscr F_t|\le q^{h d}
       \qquad\text{for every }1\le t\le r .
\]
Profiles on another \(t\)-dimensional coefficient space are identified with
members of \(\mathscr F_t\) after choosing a linear isomorphism.  This convention
only changes the estimates by the usual \(q^{O(r^2)}\) number of bases, which is
absorbed in the entropy term below.
\end{definition}

\begin{definition}[AEL-robust folded LCL]\label{rsd:def:ael-robust}
Let \(\mathscr F_{\le r}\) be a folded profile class.  The pair
\((N,S)\) is \emph{AEL-robust \((\alpha,\mathscr F_{\le r})\) folded
LCL up to dimension \(r\)} if the following holds.  For every coefficient
space \(B\) with \(1\le\dim B\le r\), every profile
\(\mathfrak V\in\mathscr F_{\dim B}\), and every linear map
\(T:B\to N\) such that
\[
       \pi_\ell T\in \mathfrak V_\ell
       \qquad(\ell=1,\ldots,d),
\]
either
\[
       q_N(T(B))=0,
\]
or there is a nonzero subspace \(U\le B^*\) such that
\[
       \widehat\Phi^W_\alpha(U,\mathfrak V)\ge0 .
\]
Equivalently, if all nonzero subprofiles have negative potential, then every
realization of that profile inside \(N\) is logically trivial modulo \(S\).
\end{definition}

\begin{definition}[AEL-robust relative subspace design]\label{rsd:def:robust}
The pair \((N,S)\) is an \emph{AEL-robust \(\alpha\)-relative subspace design
up to dimension \(r\)} if the following holds.  For every vector space \(V\)
with \(1\le t=\dim V\le r\), every vector space \(A\) with \(\dim A=t\), every
isomorphism \(\phi:V\to A^*\), every linear map \(T:A\to N\), and every tuple
\((V_1,\ldots,V_d)\in\calL(V)^d\) satisfying
\begin{equation}\label{rsd:eq:robust-containment}
   \phi(V_\ell)^\circ\subseteq \ker(\pi_\ell\circ T)
   \qquad(\ell=1,\ldots,d),
\end{equation}
either
\[
   q_N(T(A))=0,
\]
or there is a nonzero \(U\le V\) such that
\[
   \Phi_V(U,(V_\ell)_{\ell=1}^d,\alpha)\ge0.
\]
Here \(\Phi_V\) is the subspace-profile potential~\eqref{eq:subspace-profile-potential}, with the tuple length \(d\) in place of \(n\) in the normalization.
\end{definition}

\begin{proposition}[Kernel profiles are an LCL specialization]\label{rsd:prop:kernel-specialization}
Let \(\mathscr K_{\le r}\) be the profile class consisting, for each
\(t\le r\), of all kernel profiles
\[
      \bigl(\cZ_1(V_1),\ldots,\cZ_d(V_d)\bigr),
      \qquad V_\ell\le V,
      \qquad \dim V=t,
\]
with \(B=V^*\).  Then \(\mathscr K_{\le r}\) has entropy exponent at most
\(r^2\).  Moreover, for kernel profiles the potential
\eqref{rsd:eq:lcl-potential} agrees with the subspace-profile potential:
\[
      \widehat\Phi^W_\alpha(U,\cZ_1(V_1),\ldots,\cZ_d(V_d))
       = \Phi_V(U,(V_\ell)_{\ell=1}^d,\alpha).
\]
Consequently AEL-robust \((\alpha,\mathscr K_{\le r})\) folded LCL is
exactly AEL-robust \(\alpha\)-relative subspace design in the sense of
Definition~\ref{rsd:def:robust}.
\end{proposition}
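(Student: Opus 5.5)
The statement has three parts: the entropy bound, the potential identity, and the equivalence of the two robustness notions. I will prove them in that order, with the equivalence following from the first two.

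\emph{Entropy.} A kernel profile of length \(d\) on a \(t\)-dimensional \(V\) is determined by the tuple \((V_1,\ldots,V_d)\in\calL(V)^d\). The number of subspaces of \(\F_q^t\) is at most \(q^{t^2}\le q^{r^2}\). Hence \(|\mathscr K_t|\le q^{r^2 d}\), which is entropy exponent at most \(r^2\) in the sense of Definition~\ref{rsd:def:bounded-profile-class}.

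\emph{Potential identity.} With \(B=V^*\) we have \(B^*=V\), so a subspace \(U\le B^*\) is a subspace \(U\le V\). Lemma~\ref{rsd:lem:kernel-mass}, applied with \(d\) in place of \(n\), gives
\[
\dim\bigl(\cZ_\ell(V_\ell)\cap\Hom_U(V^*,W)\bigr)=s\dim(U\cap V_\ell).
\]
Substituting into \eqref{rsd:eq:lcl-potential}, the factor \(s\) cancels against the \(1/(sd)\) normalization. This leaves
\[
\frac1d\sum_\ell\dim(U\cap V_\ell)-(1-\alpha)\dim U,
\]
which is the second form of \(\Phi_V(U,(V_\ell),\alpha)\) in \eqref{eq:subspace-profile-potential} with \(d\) replacing \(n\).

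\emph{Equivalence.} I will match the quantifiers of Definitions~\ref{rsd:def:ael-robust} and~\ref{rsd:def:robust} directly. Given data \((V,A,\phi,T,(V_\ell))\) for the design definition, put \(B=A\). Then \(\phi\) identifies \(B^*=A^*\) with \(V\), and transporting along \(\phi\) identifies the kernel profile on \(V^*\) with one on \(B\); this is the convention allowed in Definition~\ref{rsd:def:bounded-profile-class}. The containment condition \eqref{rsd:eq:robust-containment}, namely \(\phi(V_\ell)^\circ\subseteq\ker(\pi_\ell T)\), says exactly that \(\pi_\ell T\) vanishes on the annihilator of the image of \(V_\ell\). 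By \eqref{eq:kernel-profile-lcl-constraint} this is the condition \(\pi_\ell T\in\cZ_\ell(V_\ell)\), and the verification mirrors the one in Proposition~\ref{rsd:prop:lcl-contained-profile}. The first alternative, \(q_N(T(B))=0\), is literally the same statement in both definitions. Nonzero \(U\le B^*\) correspond bijectively to nonzero \(U\le V\) under \(\phi\), and by the potential identity the two potentials agree. So the second alternatives coincide as well.

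For the converse direction, start from a kernel profile in \(\mathscr K_t\) and a map \(T:B\to N\) with \(B=V^*\). Take \(A=B\) and let \(\phi\) be the canonical isomorphism \(V\cong B^*\). The same translation then produces data for Definition~\ref{rsd:def:robust}.

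The main obstacle is bookkeeping. One has to check that the annihilator \(\phi(V_\ell)^\circ\le A\) corresponds to \(V_\ell^\circ\le V^*\) under the identification \(A\cong V^*\) induced by \(\phi^*\). Duality directions are easy to get backwards here, so I would verify this step explicitly. Everything else is a direct substitution.
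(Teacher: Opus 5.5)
Your proposal is correct and follows the paper's proof essentially step for step: the entropy bound comes from counting at most \(q^{t^2}\) subspaces per coordinate, the potential identity is Lemma~\ref{rsd:lem:kernel-mass} divided by \(sd\), and the equivalence comes from transporting the coefficient space through \(\phi:V\to A^*\). The duality check you flag does go through: under \(\phi^*:A\to V^*\) one has \((\phi^*)^{-1}(V_\ell^\circ)=\phi(V_\ell)^\circ\), so the containment condition is exactly membership of \(\pi_\ell T\) in the transported \(\cZ_\ell(V_\ell)\).
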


\begin{proof}
The number of subspaces of a \(t\)-dimensional vector space is at most
\(q^{t^2}\), so the number of length-\(d\) kernel profiles is at most
\(q^{dt^2}\le q^{dr^2}\).  The potential identity is Lemma~\ref{rsd:lem:kernel-mass},
divided by \(sd\).  Transporting the coefficient space through the isomorphism
\(\phi:V\to A^*\) identifies the containment condition
\eqref{rsd:eq:robust-containment} with the requirement
\(\pi_\ell T\in\cZ_\ell(V_\ell)\).  Under the same identification the nonnegative
subprofile condition is exactly the displayed potential condition.  This proves
the equivalence of the folded-LCL and kernel-profile formulations.
\end{proof}

\begin{proposition}[Robust implies relative]\label{rsd:prop:robust-implies-relative}
If \(S\subseteq N\subseteq W^d\) and \((N,S)\) is AEL-robust
\(\alpha\)-relative up to dimension \(r\), then it is an \(\alpha\)-relative
subspace design up to dimension \(r\).
\end{proposition}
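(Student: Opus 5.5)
We argue by contraposition, reducing a failure of the relative design property to a minimal bad contained profile and then feeding that profile into the robustness hypothesis. Suppose \((N,S)\) is not an \(\alpha\)-relative subspace design up to dimension \(r\).

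\textbf{Step 1 (minimal bad profile).} Apply Lemma~\ref{rsd:lem:minimal-bad}, with the block length \(d\) in place of \(n\). This yields a vector space \(V\) with \(1\le t=\dim V\le r\) and a contained relative \(V\)-profile \((V_\ell)_{\ell=1}^d\) such that
\[
\Phi_V(U,(V_\ell),\alpha)<0 \quad\text{for every nonzero } U\le V.
\]
By Definition~\ref{rsd:def:profile}, containment provides a witness: a subspace \(A\le N\) with \(A\cap S=0\) and an isomorphism \(\phi:V\to A^*\) satisfying \(\phi(V_\ell)^\circ\subseteq A\cap K_\ell\) for each \(\ell\). In particular \(\dim A=t\).

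\textbf{Step 2 (the robustness input).} Take \(T:A\to N\) to be the inclusion. Then \(\ker(\pi_\ell\circ T)=A\cap K_\ell\), so condition~\eqref{rsd:eq:robust-containment} becomes exactly the containment relation \(\phi(V_\ell)^\circ\subseteq A\cap K_\ell\) from Step 1. Hence the data \((V,A,\phi,T,(V_\ell))\) is admissible in Definition~\ref{rsd:def:robust}.

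\textbf{Step 3 (contradiction).} Robustness now gives one of two alternatives, and we rule out both.
\begin{enumerate}[label=(\roman*)]
\item \(q_N(T(A))=0\). This means \(A\subseteq S\). Combined with \(A\cap S=0\), it forces \(A=0\), contradicting \(\dim A=t\ge1\).
\item Some nonzero \(U\le V\) has \(\Phi_V(U,(V_\ell),\alpha)\ge0\). This contradicts the conclusion of Lemma~\ref{rsd:lem:minimal-bad} from Step 1.
\end{enumerate}
Since both alternatives fail, \((N,S)\) must be an \(\alpha\)-relative subspace design up to dimension \(r\).

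The only point needing care is Step 1, where Lemma~\ref{rsd:lem:minimal-bad} must be applied at length \(d\) with the normalization by \(d\) used in Definition~\ref{rsd:def:robust}. The lemma's proof is stated for a generic length, so this is only a relabeling. It is precisely the passage to a minimal profile, rather than the raw bad \(A\), that lets us negate the ``some nonzero \(U\)'' alternative in Step 3(ii).
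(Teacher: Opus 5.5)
Your proof is correct and follows the paper's argument essentially verbatim. Like the paper, you invoke Lemma~\ref{rsd:lem:minimal-bad} to obtain a contained profile witnessed by $A\le N$ with $A\cap S=0$ and negative potential on every nonzero subspace, then apply robustness to the inclusion $A\hookrightarrow N$: the condition $A\cap S=0$ excludes $q_N(T(A))=0$, and minimality excludes the other alternative.
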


\begin{proof}
If \((N,S)\) were not relative, Lemma~\ref{rsd:lem:minimal-bad} would give a
contained profile \((V_\ell)\) witnessed by \(A\le N\), \(A\cap S=0\), such that
all nonzero subspaces have negative potential.  Apply robustness to the
inclusion map \(T:A\hookrightarrow N\).  Since \(A\cap S=0\), we have
\(q_N(T(A))\ne0\).  Robustness then gives a nonzero subspace of nonnegative
potential, a contradiction.
\end{proof}

\subsection{Random normalizers are robust for LCL}

We now prove the inner-code existence theorem at the LCL level.  The proof
uses only the randomness of the normalizer \(N\); the stabilizer
\(S\subseteq N\) may be chosen adversarially after \(N\) is sampled.  This is
useful for nested CSS pairs, where the two normalizers are \(C_1\) and
\(C_2^\perp\).

We need two elementary counting estimates.  The first is the standard containment
probability for a random subspace.

\begin{lemma}[Random-subspace containment]\label{rsd:lem:containment}
Let \(E\) be an \(m\)-dimensional vector space over \(\F_q\), let \(N\le E\) be
uniformly random of dimension \(\rho m\), and let \(B\le E\) be fixed of
dimension \(b\), where all displayed dimensions are integral.  Then there is an
absolute constant \(\Gamma_*<\infty\) such that
\[
   \Pr[B\subseteq N]
      \le \Gamma_* q^{-(1-\rho)mb}.
\]
In particular, \(\Pr[B\subseteq N]\le q^{-(1-\rho)mb+O(1)}\).
\end{lemma}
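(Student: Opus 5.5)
We prove Lemma~\ref{rsd:lem:containment} by an exact count that is symmetric in the two subspaces. Write \(k=\rho m\). By transitivity of \(\mathrm{GL}(E)\), the probability that a fixed \(b\)-dimensional \(B\) lies in a uniform \(k\)-dimensional \(N\) equals the probability that a uniform \(b\)-dimensional subspace lies in a fixed \(k\)-dimensional \(N\). The latter is a ratio of Gaussian binomials:
\[
   \Pr[B\subseteq N]=\frac{\binom{k}{b}_q}{\binom{m}{b}_q}
   =\prod_{j=0}^{b-1}\frac{q^{k-j}-1}{q^{m-j}-1}.
\]
If \(b>k\) the probability is \(0\) and there is nothing to prove, so assume \(b\le k\).

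Next we bound this product termwise. For \(0\le j\le b-1\) we have \(q^{k-j}-1\le q^{k-j}\) and \(q^{m-j}-1\ge q^{m-j}(1-q^{-(m-j)})\). Hence each factor is at most \(q^{-(m-k)}\big/(1-q^{-(m-j)})\), and
\[
   \Pr[B\subseteq N]\le q^{-(m-k)b}\prod_{j=0}^{b-1}\bigl(1-q^{-(m-j)}\bigr)^{-1}.
\]
Because \(m-j\ge m-b+1\ge1\) and these exponents are distinct positive integers, the product is at most \(\prod_{i\ge1}(1-q^{-i})^{-1}\). This is in turn at most \(\prod_{i\ge1}(1-2^{-i})^{-1}=\gamma_2^{-1/2}\), where \(\gamma_q\) is as in Lemma~\ref{qlcl:lem:conditioning}; this quantity is finite since \(\sum_i 2^{-i}<\infty\). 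We therefore take \(\Gamma_*=\prod_{i\ge1}(1-2^{-i})^{-1}\approx 3.46\), which is absolute, and note that \((m-k)b=(1-\rho)mb\). The \(q^{O(1)}\) form then follows because \(\Gamma_*\le q^{2}\) for every \(q\ge2\).

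The only step needing care is making the constant uniform in \(q\), \(m\) and \(b\). Bounding the product by the infinite \(q=2\) product handles all three at once, so I do not expect a genuine obstacle.
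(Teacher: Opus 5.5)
Your proof is correct and is essentially the paper's argument: both write the probability exactly as a ratio of Gaussian binomials and bound it by $q^{-(1-\rho)mb}$ times the same constant $\Gamma_*=\prod_{j\ge1}(1-2^{-j})^{-1}$. The only difference is cosmetic. You use the equal ratio $\genfrac{[}{]}{0pt}{}{\rho m}{b}_q\big/\genfrac{[}{]}{0pt}{}{m}{b}_q$, obtained via $\mathrm{GL}(E)$-symmetry, and bound it factor by factor. The paper instead counts the $\rho m$-spaces containing $B$, giving $\genfrac{[}{]}{0pt}{}{m-b}{\rho m-b}_q\big/\genfrac{[}{]}{0pt}{}{m}{\rho m}_q$, and applies the two-sided estimate $q^{j(a-j)}\le\genfrac{[}{]}{0pt}{}{a}{j}_q\le\Gamma_*q^{j(a-j)}$.
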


\begin{proof}
If \(b>\rho m\), then the probability is zero.  Otherwise the probability is
\[
  \frac{\genfrac{[}{]}{0pt}{}{m-b}{\rho m-b}_q}
       {\genfrac{[}{]}{0pt}{}{m}{\rho m}_q},
\]
where \(\genfrac{[}{]}{0pt}{}{a}{b}_q\) is a Gaussian binomial coefficient.  The
standard estimate
\[
   q^{k(a-k)}\le \genfrac{[}{]}{0pt}{}{a}{k}_q
      \le \Gamma_* q^{k(a-k)},
   \qquad
   \Gamma_*\defeq\prod_{j=1}^{\infty}(1-2^{-j})^{-1},
\]
gives
\[
   \Pr[B\subseteq N]
     \le \Gamma_* q^{(\rho m-b)(m-\rho m)-(\rho m)(m-\rho m)}
     =\Gamma_* q^{-(1-\rho)mb}.
\]
\end{proof}

\begin{lemma}[Map count for bad profiles]\label{rsd:lem:map-count}
Let \(B\) be \(t\)-dimensional, let
\(\mathfrak V=(\mathfrak V_1,\ldots,\mathfrak V_d)\) be a folded profile on
\(B\), and suppose that
\begin{equation}\label{rsd:eq:all-bad-alpha}
   \widehat\Phi^W_\alpha(U,\mathfrak V)<0
   \qquad\text{for every nonzero }U\le B^* .
\end{equation}
For each \(b\in\{1,\ldots,t\}\), the number of linear maps \(T:B\to W^d\) of
rank \(b\) satisfying
\[
   \pi_\ell T\in\mathfrak V_\ell
   \qquad(\ell=1,\ldots,d)
\]
is at most
\[
   q^{t^2+sd(1-\alpha)b}.
\]
\end{lemma}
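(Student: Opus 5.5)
Group the maps by their row space. A linear map \(T:B\to W^d\) of rank \(b\) has \(\row(T)=(\ker T)^\circ=U\) for a unique \(U\le B^*\) with \(\dim U=b\). The number of such \(U\) is at most the number of subspaces of \(B^*\cong\F_q^t\), which is at most \(q^{t^2}\). It therefore suffices to show that, for each fixed \(U\) of dimension \(b\ge1\), the number of profile-obeying maps with \(\row(T)=U\) is at most \(q^{sd(1-\alpha)b}\). Summing over \(U\) then gives the claimed bound \(q^{t^2+sd(1-\alpha)b}\).

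Now fix \(U\). As in the discussion before \eqref{eq:folded-mass-count}, the condition \(\row(T)\le U\) holds iff \(U^\circ\subseteq\ker\pi_\ell T\) for every \(\ell\). So the profile-obeying maps with row space contained in \(U\) are exactly the tuples \((T_\ell)_\ell\) with \(T_\ell\in\mathfrak V_\ell\cap\Hom_U(B,W)\), and these coordinates range independently. Their number is
\[
q^{a_{\mathfrak V}(U)}, \qquad a_{\mathfrak V}(U)=\sum_{\ell=1}^d\dim\bigl(\mathfrak V_\ell\cap\Hom_U(B,W)\bigr).
\]
This count is an upper bound for the maps with row space exactly \(U\).

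Finally, use hypothesis \eqref{rsd:eq:all-bad-alpha} at this nonzero \(U\). It reads
\[
\frac1{sd}\,a_{\mathfrak V}(U)-(1-\alpha)\dim U<0,
\]
that is, \(a_{\mathfrak V}(U)<sd(1-\alpha)b\). Hence the count for this \(U\) is at most \(q^{sd(1-\alpha)b}\), and summing over the at most \(q^{t^2}\) choices of \(U\) finishes the proof.

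There is no real obstacle. The only point needing care is the identification of rank with \(\dim\row(T)\) and the bijection between maps with \(\row(T)\le U\) and tuples in \(\prod_\ell(\mathfrak V_\ell\cap\Hom_U(B,W))\). Both are already established in the folded preliminaries, so the argument is a direct first-moment-style count that needs no minor or quotient machinery.
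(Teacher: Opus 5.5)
Your proposal is correct and follows essentially the same route as the paper: fix the row space $U=(\ker T)^\circ$ of dimension $b$, bound the profile-obeying maps with that row space by $q^{\sum_\ell\dim(\mathfrak V_\ell\cap\Hom_U(B,W))}$, use \eqref{rsd:eq:all-bad-alpha} to get an exponent below $sd(1-\alpha)b$, and multiply by the at most $q^{t^2}$ choices of $U$. Your write-up is slightly more explicit than the paper's about why the coordinates range independently, but the argument is the same.
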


\begin{proof}
For such a map \(T\), let \(U=R(T)=(\ker T)^\circ\le B^*\).  Then
\(\dim U=b\).  For a fixed row space \(U\), the number of choices for the
\(\ell\)-th coordinate map is at most
\[
      q^{\dim(\mathfrak V_\ell\cap\Hom_U(B,W))}.
\]
Multiplying over \(\ell\) gives at most
\[
      q^{\sum_\ell\dim(\mathfrak V_\ell\cap\Hom_U(B,W))}.
\]
By~\eqref{rsd:eq:all-bad-alpha}, this exponent is less than
\(sd(1-\alpha)b\).  The number of possible \(b\)-dimensional row spaces
\(U\le B^*\) is at most \(q^{t^2}\).  This proves the claim.
\end{proof}

\begin{theorem}[Random normalizers are LCL robust]\label{rsd:thm:random-robust-normalizer}
There is an absolute constant \(C\) such that the following holds.  Fix
\(q\), \(r\in\N\), \(\rho\in(0,1)\), and \(\eps\in(0,1-\rho)\).  Let
\(\mathscr F_{\le r}\) be a folded profile class of entropy exponent \(h\).  Let
\(W=\F_q^s\), let \(m=sd\), assume
\[
       s\ge C\frac{h+r^2+1}{\eps},
\]
and assume \(\rho m\) is an integer.  Let \(N\le W^d\) be a uniformly random
\(\rho m\)-dimensional subspace.  Then, with probability at least
\[
   1-q^{-\Omega(\eps s d)},
\]
the following simultaneous statement holds: for every subspace \(S\subseteq N\),
the pair \((N,S)\) is AEL-robust
\((\rho+\eps,\mathscr F_{\le r})\) folded LCL up to dimension \(r\).
\end{theorem}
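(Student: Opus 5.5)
We reduce the statement to a single bad event that involves only \(N\), and then bound that event by a union bound over profiles and low-rank images. Call \(N\) \emph{bad} if there exist \(t\in[r]\), a profile \(\mathfrak V\in\mathscr F_t\) on a fixed \(t\)-dimensional coefficient space \(B\) satisfying the all-bad condition~\eqref{rsd:eq:all-bad-alpha} at threshold \(\alpha=\rho+\eps\), and a nonzero linear map \(T:B\to W^d\) with \(T(B)\subseteq N\) and \(\pi_\ell T\in\mathfrak V_\ell\) for all \(\ell\).

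If \(N\) is not bad, the conclusion holds simultaneously for every \(S\subseteq N\). Indeed, take any \(B\), any \(\mathfrak V\in\mathscr F_{\dim B}\) and any map \(T\) obeying the profile. If some nonzero \(U\le B^*\) has \(\widehat\Phi^W_\alpha(U,\mathfrak V)\ge0\), the second alternative of Definition~\ref{rsd:def:ael-robust} holds. Otherwise \(\mathfrak V\) is all-bad, so \(T=0\), and therefore \(q_N(T(B))=0\). The stabilizer \(S\) never enters this argument, which is exactly why adversarial choices of \(S\) are allowed. Profiles on other coefficient spaces are handled by transporting them to the fixed \(B\) via an isomorphism, as in Definition~\ref{rsd:def:bounded-profile-class}.

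To bound \(\Pr[N\text{ bad}]\), fix \(t\), \(\mathfrak V\) and a rank \(b\in[1,t]\). By Lemma~\ref{rsd:lem:map-count}, there are at most \(q^{t^2+sd(1-\rho-\eps)b}\) profile-obeying maps \(T\) of rank \(b\). For each such map, the image \(T(B)\) is a fixed \(b\)-dimensional subspace of \(E=W^d\), with \(m=sd\). Lemma~\ref{rsd:lem:containment} then gives
\[
\Pr[T(B)\subseteq N]\le\Gamma_* q^{-(1-\rho)sdb}.
\]
Multiplying these two bounds, each pair \((\mathfrak V,b)\) contributes at most
\[
\Gamma_* q^{t^2-\eps sdb}\le \Gamma_* q^{r^2-\eps sd}.
\]
We then sum over \(b\le r\), over \(|\mathscr F_t|\le q^{hd}\) profiles, and over \(t\le r\). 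This gives
\[
\Pr[N\text{ bad}]\le \Gamma_* r^2 q^{hd+r^2-\eps sd}.
\]
The hypothesis \(s\ge C(h+r^2+1)/\eps\) implies \(hd+r^2\le \eps sd/2\) once \(C\ge 4\), using \(d\ge1\) and \(r^2\le r^2 d\). The constant \(\Gamma_* r^2\) is absorbed as well, because \(r^2\le q^{r^2}\le q^{\eps sd/4}\) for large \(C\). The failure probability is therefore \(q^{-\Omega(\eps sd)}\).

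The step needing the most care is the counting in Lemma~\ref{rsd:lem:map-count} combined with the observation that the containment exponent scales with the rank \(b\) of \(T\), not with \(t\). Because the all-bad hypothesis is imposed on every nonzero \(U\), including the row space of a degenerate \(T\), the per-map count gains exactly \((1-\alpha)\) per dimension of the image. This matches the \((1-\rho)\) per dimension paid by containment, leaving the slack \(\eps sdb\ge\eps sd\). The other point to check is that the \(q^{t^2}\) overhead for choosing row spaces and the entropy \(q^{hd}\) are both linear in \(d\) or constant, so they are dominated by \(\eps sd\) under the stated lower bound on \(s\).
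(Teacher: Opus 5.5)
Your proposal is correct and follows the paper's proof. You drop the condition $q_N(T(B))\neq0$ so the bad event depends only on $N$, which is what makes the conclusion uniform over all $S\subseteq N$; you then union-bound over $t$, profiles, ranks $b$ and rank-$b$ maps, pairing Lemma~\ref{rsd:lem:map-count} with Lemma~\ref{rsd:lem:containment} to get the slack $q^{-\eps sdb}$. The remaining differences are cosmetic: you transport witnesses on other coefficient spaces back to the model space, which preserves the image and the potentials, so you avoid the paper's extra $q^{O(t^2)}$ union over changes of basis, and you make the constant $C$ explicit.
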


\begin{proof}
Let \(\alpha=\rho+\eps\).  If robustness fails for some \(S\subseteq N\), then
there are a coefficient space \(B\) of dimension \(t\le r\), a profile
\(\mathfrak V\in\mathscr F_t\), and a map \(T:B\to N\) whose image is nonzero
modulo \(S\), while every nonzero subprofile has negative potential at threshold
\(\alpha\).  Dropping the requirement that the image be nonzero modulo \(S\), it
is enough to rule out a nonzero map \(T:B\to W^d\) with \(T(B)\subseteq N\) and
with such a bad profile.

Fix \(t\le r\), a model coefficient space \(B=\F_q^t\), and a bad profile
\(\mathfrak V\in\mathscr F_t\).  For rank \(b\in\{1,\ldots,t\}\),
Lemma~\ref{rsd:lem:map-count} bounds the number of consistent rank-\(b\) maps by
\[
   q^{t^2+sd(1-\alpha)b}.
\]
For each fixed rank-\(b\) map, its image is a fixed \(b\)-dimensional subspace of
\(W^d\), and Lemma~\ref{rsd:lem:containment} gives
\[
   \Pr[T(B)\subseteq N]
      \le q^{-(1-\rho)sdb+O(1)}.
\]
Thus the contribution of rank \(b\) maps for the fixed profile is at most
\[
   q^{t^2+sd(1-\alpha)b-(1-\rho)sdb+O(1)}
   =q^{t^2+O(1)-\eps sdb}.
\]
There are at most \(q^{hd}\) profiles in \(\mathscr F_t\), and at most
\(q^{O(t^2)}\) changes of basis for transporting profiles from the model
coefficient space.  Union-bounding over profiles, bases, \(t\le r\), and
\(b\le t\), the failure probability is at most
\[
   \sum_{t=1}^r\sum_{b=1}^t
      q^{hd+O(r^2)-\eps sdb}
   \le q^{-\Omega(\eps s d)}
\]
provided the absolute constant \(C\) is large enough.  The estimate is uniform
over all \(S\subseteq N\), because \(S\) was only used in the discarded condition
\(q_N(T(B))\ne0\).
\end{proof}

\begin{corollary}[Random normalizers are AEL-robust relative designs]\label{rsd:cor:random-robust-relative-normalizer}
There is an absolute constant \(C'\) such that the following holds.  Fix
\(q\), \(r\in\N\), \(\rho\in(0,1)\), and \(\eps\in(0,1-\rho)\).  Let
\(W=\F_q^s\), let \(m=sd\), assume \(s\ge C'r^2/\eps\), and assume
\(\rho m\) is an integer.  Let \(N\le W^d\) be a uniformly random
\(\rho m\)-dimensional subspace.  Then, with probability at least
\[
   1-q^{-\Omega(\eps s d)},
\]
for every subspace \(S\subseteq N\), the pair \((N,S)\) is AEL-robust
\((\rho+\eps)\)-relative up to dimension \(r\).
\end{corollary}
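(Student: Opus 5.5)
The plan is to specialize Theorem~\ref{rsd:thm:random-robust-normalizer} to the kernel-profile class $\mathscr K_{\le r}$ and then translate through Proposition~\ref{rsd:prop:kernel-specialization}. No new probabilistic argument is needed. All the work is done by the LCL-level theorem; the corollary only checks that kernel profiles form an admissible class with the right entropy exponent.

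\emph{Step 1: the profile class.} Take $\mathscr F_{\le r}=\mathscr K_{\le r}$. This is the class whose level $t$ consists of all length-$d$ kernel profiles $(\cZ_1(V_1),\ldots,\cZ_d(V_d))$ with $V_\ell\le V$, $\dim V=t$, and $B=V^*$. By Proposition~\ref{rsd:prop:kernel-specialization} this class has entropy exponent $h\le r^2$. Set $C'\defeq 2C+C$, where $C$ is the absolute constant of Theorem~\ref{rsd:thm:random-robust-normalizer}. Then the hypothesis $s\ge C'r^2/\eps$ implies $s\ge C(h+r^2+1)/\eps$, using $h+r^2+1\le 2r^2+1\le 3r^2$ for $r\ge1$.

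\emph{Step 2: apply the theorem and transfer.} The remaining hypotheses are identical to those of Theorem~\ref{rsd:thm:random-robust-normalizer}: $N$ uniform of dimension $\rho m$ with $m=sd$ integral, and $\eps\in(0,1-\rho)$. So the theorem yields the following. With probability at least $1-q^{-\Omega(\eps sd)}$, for every $S\subseteq N$ the pair $(N,S)$ is AEL-robust $(\rho+\eps,\mathscr K_{\le r})$ folded LCL up to dimension $r$. The last clause of Proposition~\ref{rsd:prop:kernel-specialization} says this is exactly AEL-robust $(\rho+\eps)$-relative subspace design in the sense of Definition~\ref{rsd:def:robust}. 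The equivalence holds pointwise for each $(N,S)$, so it preserves the ``for every $S$'' quantifier. This gives the corollary.

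\emph{Main obstacle.} The only delicate point is the identification in Proposition~\ref{rsd:prop:kernel-specialization} between the two coefficient-space conventions. Definition~\ref{rsd:def:robust} uses a map $T:A\to N$ together with $\phi:V\to A^*$, whereas the folded LCL definition uses $B=V^*$. I will transport $T$ along $\phi^*:A\cong V^*$. Under this transport, the containment $\phi(V_\ell)^\circ\subseteq\ker(\pi_\ell T)$ becomes $\pi_\ell T\in\cZ_\ell(V_\ell)$. I also need that $q_N(T(A))=0$ is invariant under this reparametrization, which holds because it depends only on the image of $T$. Finally, the possibly non-canonical choice of isomorphism is absorbed by the $q^{O(r^2)}$ basis factor already allowed in Definition~\ref{rsd:def:bounded-profile-class}, so it does not change the entropy bound used in Step~1.
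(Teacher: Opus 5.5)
Your proposal is correct and matches the paper's proof. Like the paper, you specialize Theorem~\ref{rsd:thm:random-robust-normalizer} to the kernel-profile class $\mathscr K_{\le r}$ (entropy exponent at most $r^2$ by Proposition~\ref{rsd:prop:kernel-specialization}) and use the same proposition to identify folded-LCL robustness for this class with AEL-robust relative design; your explicit choice $C'=3C$, via $h+r^2+1\le 3r^2$, simply spells out the constant bookkeeping the paper leaves implicit.
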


\begin{proof}
Apply Theorem~\ref{rsd:thm:random-robust-normalizer} to the kernel-profile class
\(\mathscr K_{\le r}\).  Its entropy exponent is at most \(r^2\) by
Proposition~\ref{rsd:prop:kernel-specialization}, and the same proposition
identifies robustness for \(\mathscr K_{\le r}\) with AEL-robust relative
subspace design.
\end{proof}

\subsection{Random nested CSS inner gadgets}

Let \(m=sd\).  Fix two desired normalizer rates \(\rho_X,\rho_Z\in(0,1)\) with
\[
   R_{\rm in}\defeq \rho_X+\rho_Z-1>0.
\]
A CSS inner pair \(C_{2,\rm in}\subseteq C_{1,\rm in}\subseteq W^d\) with
\[
   \dim C_{1,\rm in}=\rho_X m,
   \qquad
   \dim C_{2,\rm in}=(1-\rho_Z)m
\]
has inner quantum rate \(R_{\rm in}\).  Its two relative pairs are
\[
   (N_X,S_X)=(C_{1,\rm in},C_{2,\rm in}),
   \qquad
   (N_Z,S_Z)=(C_{2,\rm in}^\perp,C_{1,\rm in}^\perp).
\]

\begin{theorem}[Existence of robust inner CSS gadgets]\label{rsd:thm:random-inner-css}
Fix \(q,r,\rho_X,\rho_Z\) with \(\rho_X+\rho_Z>1\), and fix
\[
   0<\eps<\min\{1-\rho_X,1-\rho_Z\}.
\]
Let \(\mathscr F^X_{\le r}\) and \(\mathscr F^Z_{\le r}\) be folded profile
classes of length-\(d\) profiles (Definition~\ref{rsd:def:bounded-profile-class})
of entropy exponent at most \(h\).  Let \(W=\F_q^s\), \(m=sd\), assume
\(s\ge C(h+r^2+1)/\eps\), where \(C\) is the constant from
Theorem~\ref{rsd:thm:random-robust-normalizer}, and
assume \(\rho_Xm\) and \((1-\rho_Z)m\) are integers.  Choose a random nested pair
\[
   C_{2,\rm in}\subseteq C_{1,\rm in}\subseteq W^d
\]
with
\[
   \dim C_{1,\rm in}=\rho_Xm,
   \qquad
   \dim C_{2,\rm in}=(1-\rho_Z)m.
\]
Then with probability at least \(1-q^{-\Omega(\eps sd)}\), the pair
\((C_{1,\rm in},C_{2,\rm in})\) is AEL-robust
\((\rho_X+\eps,\mathscr F^X_{\le r})\) folded LCL up to dimension \(r\),
and the pair \((C_{2,\rm in}^\perp,C_{1,\rm in}^\perp)\) is AEL-robust
\((\rho_Z+\eps,\mathscr F^Z_{\le r})\) folded LCL up to dimension \(r\).
In particular, such inner CSS gadgets exist.  For fixed
\((q,r,\eps,\rho_X,\rho_Z,d,s)\) and explicitly listed classes
\(\mathscr F^X_{\le r},\mathscr F^Z_{\le r}\), they can be found by
exhaustive search.
\end{theorem}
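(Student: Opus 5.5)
The plan is to reduce both conclusions to Theorem~\ref{rsd:thm:random-robust-normalizer} applied separately to the two normalizers, and then take a union bound. The key point is that Theorem~\ref{rsd:thm:random-robust-normalizer} only uses the randomness of the normalizer $N$. Its conclusion holds simultaneously for \emph{every} subspace $S\subseteq N$. So we never need to understand the joint distribution of the nested pair beyond the marginal laws of $C_{1,\rm in}$ and $C_{2,\rm in}^\perp$.

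First I would identify the marginals. In the uniform nested ensemble, $C_{1,\rm in}$ is a uniformly random $\rho_X m$-dimensional subspace of $W^d\cong\F_q^m$. This holds because the ensemble is invariant under $GL_m(\F_q)$, which acts transitively on subspaces of a given dimension. Likewise $C_{2,\rm in}$ is uniform of dimension $(1-\rho_Z)m$. Since $\perp$ (taken with respect to the fixed nondegenerate coordinatewise pairing) is a bijection from $(1-\rho_Z)m$-dimensional subspaces onto $\rho_Z m$-dimensional subspaces, $C_{2,\rm in}^\perp$ is uniform of dimension $\rho_Z m$. If one prefers to sample through $\PCSS^W$, one can instead condition on full rank via Lemma~\ref{qlcl:lem:conditioning}; this costs only the constant $\gamma_q^{-1}$, which is absorbed into $q^{-\Omega(\eps sd)}$. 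The integrality hypotheses on $\rho_X m$ and $(1-\rho_Z)m$ give integrality of $\rho_Z m$ as well. The hypothesis $\eps<\min\{1-\rho_X,1-\rho_Z\}$ supplies the condition $\eps\in(0,1-\rho)$ for both applications.

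Next, apply Theorem~\ref{rsd:thm:random-robust-normalizer} twice. The first application uses $N=C_{1,\rm in}$, $\rho=\rho_X$ and class $\mathscr F^X_{\le r}$. Outside an event of probability $q^{-\Omega(\eps sd)}$, $(C_{1,\rm in},S)$ is AEL-robust $(\rho_X+\eps,\mathscr F^X_{\le r})$ for every $S\subseteq C_{1,\rm in}$, and in particular for $S=C_{2,\rm in}$. The second application uses $N=C_{2,\rm in}^\perp$, $\rho=\rho_Z$ and class $\mathscr F^Z_{\le r}$, and we specialize to $S=C_{1,\rm in}^\perp\subseteq C_{2,\rm in}^\perp$. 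This inclusion holds because $C_{2,\rm in}\subseteq C_{1,\rm in}$. The entropy hypothesis $s\ge C(h+r^2+1)/\eps$ matches the theorem for both classes. A union bound then gives total failure probability at most $2q^{-\Omega(\eps sd)}=q^{-\Omega(\eps sd)}$.

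Finally, existence follows because this probability is positive for $sd$ large, or more carefully once the hidden constants are chosen to make it less than $1$. Findability follows because the search space is finite: there are finitely many nested pairs in $\F_q^{sd}$, finitely many listed profiles, coefficient spaces of dimension at most $r$, and finitely many maps $T$. The robustness condition in Definition~\ref{rsd:def:ael-robust} can therefore be checked by brute force.

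The only delicate step I expect is justifying that the $Z$-sector normalizer has the uniform marginal even though $S_Z$ is correlated with it. This is exactly what the ``for every $S\subseteq N$'' clause is designed to handle, so the step reduces to verifying the marginal law as above.
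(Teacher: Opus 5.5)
Your proposal is correct and matches the paper's proof: you establish uniform marginals for $C_{1,\rm in}$ and $C_{2,\rm in}^\perp$, apply Theorem~\ref{rsd:thm:random-robust-normalizer} twice using its uniformity over all $S\subseteq N$, take a union bound, and finish with exhaustive search. The only cosmetic difference is in justifying the uniform marginal of $C_{2,\rm in}$: you use $GL_m(\F_q)$-transitivity, while the paper notes that every $(1-\rho_Z)m$-dimensional subspace lies in the same number of $\rho_X m$-dimensional subspaces.
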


\begin{proof}
Choose \(C_{1,\rm in}\) uniformly among \(\rho_Xm\)-dimensional subspaces and
then \(C_{2,\rm in}\) uniformly among \((1-\rho_Z)m\)-dimensional subspaces of
\(C_{1,\rm in}\).  The marginal distribution of \(C_{1,\rm in}\) is uniform, so
Theorem~\ref{rsd:thm:random-robust-normalizer}, applied with the class
\(\mathscr F^X_{\le r}\), implies that, with probability
\(1-q^{-\Omega(\eps sd)}\), the pair \((C_{1,\rm in},S)\) is AEL-robust
\((\rho_X+\eps,\mathscr F^X_{\le r})\) folded LCL for every
\(S\subseteq C_{1,\rm in}\), hence for \(S=C_{2,\rm in}\).

The marginal distribution of \(C_{2,\rm in}\) is uniform among all subspaces of
dimension \((1-\rho_Z)m\): every such subspace is contained in the same number of
\(\rho_Xm\)-dimensional spaces.  Therefore \(C_{2,\rm in}^\perp\) is uniformly
distributed among \(\rho_Zm\)-dimensional subspaces of \(W^d\).  Applying
Theorem~\ref{rsd:thm:random-robust-normalizer} again, now with the class
\(\mathscr F^Z_{\le r}\), the pair \((C_{2,\rm in}^\perp,S)\) is AEL-robust
\((\rho_Z+\eps,\mathscr F^Z_{\le r})\) folded LCL for every
\(S\subseteq C_{2,\rm in}^\perp\), hence for \(S=C_{1,\rm in}^\perp\).  A union
bound proves the simultaneous statement.

For fixed parameters, the ambient space \(W^d\) has constant dimension.
Exhaustively enumerate nested pairs and check
Definition~\ref{rsd:def:ael-robust} by finite enumeration of coefficient
spaces, basis identifications, the listed profiles of the given classes, and
maps.  The probabilistic argument guarantees that the search succeeds.
\end{proof}

\begin{remark}[Kernel-profile specialization of the gadget theorem]\label{rsd:rem:gadget-kernel}
Taking \(\mathscr F^X_{\le r}=\mathscr F^Z_{\le r}=\mathscr K_{\le r}\), of
entropy exponent at most \(r^2\),
Proposition~\ref{rsd:prop:kernel-specialization} identifies the
conclusion with the statement that \((C_{1,\rm in},C_{2,\rm in})\) and
\((C_{2,\rm in}^\perp,C_{1,\rm in}^\perp)\) are AEL-robust relative subspace
designs up to dimension \(r\) with parameters \(\rho_X+\eps\) and
\(\rho_Z+\eps\); the hypothesis then reads \(s\ge C(2r^2+1)/\eps=O(r^2/\eps)\).
This is the inner hypothesis of Theorem~\ref{rsd:thm:ael-lift}.
\end{remark}

\section{The CSS-AEL construction}\label{sec:css-ael}

We now pass from constant-size robust inner gadgets to long CSS codes.  The construction is an Alon--Edmonds--Luby style edge-expander transform: the inner code controls each left vertex, the outer code controls the sequence of inner logical symbols, and the graph permutation regroups edge symbols into right folded coordinates. Our presentation follows the CSS-AEL template first introduced in~\cite{BGG24}. The new point here is that the transform turns inner LCL robustness and outer logical distance into a global LCL guarantee, of which the relative subspace-design property is the kernel-profile case.

\subsection{Explicitness convention}
A family of linear or CSS codes is called \emph{explicit} if there is a deterministic algorithm that, on input the block-length parameter, outputs generator or parity-check matrices in time polynomial in the block length.  All constants hidden in the notation may depend on the fixed design dimension \(r\), the target rate, and the accuracy parameter \(\eps\), but not on the block length.  When a constant-size object is shown to exist by the probabilistic method, exhaustive search over that constant universe is counted as explicit in this standard asymptotic sense.

\subsection{Expanders}

Let \(G=(V_L,V_R,E_G)\) be a \(d\)-regular bipartite graph with \(|V_L|=|V_R|=n\).  We allow a regular bipartite multigraph, which is the natural output of several lift-based Ramanujan constructions; the proof only uses the normalized biadjacency operator and a fixed ordering of incident edges.  The graph is a \(\lambda\)-spectral expander if the second singular value of its normalized biadjacency matrix is at most \(\lambda\).  For \(j\in V_L\), write
\[
  N_1(j),\ldots,N_d(j)\in V_R
\]
for its right neighbors in left-edge order, with multiplicity if the graph is a multigraph.

A \(d\)-regular bipartite Ramanujan graph has every nontrivial unnormalized singular value at most \(2\sqrt{d-1}\), so its normalized second singular value is at most \(2\sqrt{d-1}/d=O(d^{-1/2})\).  The original explicit algebraic constructions are due to Lubotzky--Phillips--Sarnak and Margulis, with Morgenstern extending the construction to \((q+1)\)-regular graphs for every prime power \(q\). In the parameter theorems we only use the consequence that explicit \(d\)-regular bipartite expanders with \(\lambda=O(d^{-1/2})\) are available.

\begin{lemma}[Expander averaging]\label{rsd:lem:expander-avg}
Let \(G\) be a \(d\)-regular bipartite \(\lambda\)-expander.  For \(x\in\R_{\ge0}^{V_R}\), set
\[
  \mu=\frac1n\sum_{i\in V_R}x_i,
  \qquad
  y_j=\frac1d\sum_{\ell=1}^d x_{N_\ell(j)}.
\]
Then
\[
  \frac1n\sum_{j\in V_L}(y_j-\mu)^2
     \le \lambda^2\|x\|_\infty^2.
\]
\end{lemma}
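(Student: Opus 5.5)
The plan is a spectral argument. Let \(M\) be the normalized \(V_L\times V_R\) biadjacency matrix of \(G\): \(M_{j,i}\) is \(1/d\) times the number of edges between \(j\) and \(i\), counted with multiplicity. Then \(y=Mx\). Let \(\mathbf 1\) denote the all-ones vector on either side. Because \(G\) is \(d\)-regular on both sides, \(M\mathbf 1=\mathbf 1\) and \(M^\top\mathbf 1=\mathbf 1\). The top singular value is therefore \(1\), attained by the pair \((\mathbf 1/\sqrt n,\mathbf 1/\sqrt n)\). By the definition of a \(\lambda\)-spectral expander, every vector \(z\perp\mathbf 1\) on \(V_R\) satisfies \(\|Mz\|_2\le\lambda\|z\|_2\).

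First, decompose \(x=\mu\mathbf 1+z\), where \(\mu\) is the average of \(x\) over \(V_R\). Then \(z\perp\mathbf 1\), and so \(y=Mx=\mu\mathbf 1+Mz\). In particular, \(y_j-\mu=(Mz)_j\).

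Second, note that \(Mz\perp\mathbf 1\), because \(\langle Mz,\mathbf 1\rangle=\langle z,M^\top\mathbf 1\rangle=\langle z,\mathbf 1\rangle=0\). We do not need this fact, but it confirms that \(\mu\) is also the average of \(y\). Combining the two steps with the expander bound gives
\[
\sum_{j\in V_L}(y_j-\mu)^2=\|Mz\|_2^2\le\lambda^2\|z\|_2^2 .
\]

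Third, bound \(\|z\|_2^2\) in terms of \(\|x\|_\infty\). We have
\[
\|z\|_2^2=\sum_{i\in V_R}(x_i-\mu)^2=\sum_{i\in V_R}x_i^2-n\mu^2\le\sum_{i\in V_R}x_i^2\le n\|x\|_\infty^2 .
\]
Dividing by \(n\) gives the claimed inequality \(\frac1n\sum_{j\in V_L}(y_j-\mu)^2\le\lambda^2\|x\|_\infty^2\).

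The only point that needs care is that the second singular value controls \(\|Mz\|_2\) on the orthogonal complement of \(\mathbf 1\). This follows from the singular value decomposition, since \(\mathbf 1/\sqrt n\) is a right singular vector for the top singular value \(1\). It holds equally for multigraphs, because only regularity, which gives the row and column sums of \(M\), is used. The hypothesis \(x\ge0\) is not needed.
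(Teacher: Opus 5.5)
Your proposal is correct and follows essentially the same route as the paper: write $y-\mu\mathbf 1=M(x-\mu\mathbf 1)$ with $x-\mu\mathbf 1\perp\mathbf 1$, apply the second-singular-value bound, and then use $\|x-\mu\mathbf 1\|_2\le\sqrt n\,\|x\|_\infty$. The paper asserts that last inequality without justification; your variance identity supplies it and, as you note, shows that the nonnegativity hypothesis is not needed.
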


\begin{proof}
Let \(A_G\) be the normalized biadjacency matrix.  Then \(y=A_Gx\) and \(A_G\mathbf 1=\mathbf 1\).  Since \(x-\mu\mathbf 1\) is orthogonal to \(\mathbf 1\),
\[
   \|y-\mu\mathbf 1\|_2
     =\|A_G(x-\mu\mathbf 1)\|_2
     \le \lambda\|x-\mu\mathbf 1\|_2
     \le \lambda\sqrt n\|x\|_\infty.
\]
Dividing by \(n\) gives the claim.
\end{proof}

\subsection{Inner and outer CSS data}

Let the inner CSS pair be
\[
   C_{2,\rm in}\subseteq C_{1,\rm in}\subseteq W^d.
\]
Define
\[
  N_X^{\rm in}=C_{1,\rm in},\quad
  S_X^{\rm in}=C_{2,\rm in},
\]
and
\[
  N_Z^{\rm in}=C_{2,\rm in}^\perp,
  \quad
  S_Z^{\rm in}=C_{1,\rm in}^\perp.
\]
The inner logical alphabets are
\[
  L_X^{\rm in}=N_X^{\rm in}/S_X^{\rm in},
  \qquad
  L_Z^{\rm in}=N_Z^{\rm in}/S_Z^{\rm in},
\]
which are dual through the induced pairing.  Let
\[
   \ell\defeq \dim L_X^{\rm in}=\dim C_{1,\rm in}-\dim C_{2,\rm in}.
\]

The outer CSS pair is
\[
   D_2\subseteq D_1\subseteq (L_X^{\rm in})^n.
\]
Its dual sector is
\[
   D_2^\perp/D_1^\perp\subseteq (L_Z^{\rm in})^n,
\]
where orthogonality uses the induced coordinatewise pairing between \(L_X^{\rm in}\) and \(L_Z^{\rm in}\).  Define the outer logical distances
\[
   \delta_X^{\rm out}
      =\frac1n\min_{x\in D_1\setminus D_2}\wt(x),
   \qquad
   \delta_Z^{\rm out}
      =\frac1n\min_{z\in D_2^\perp\setminus D_1^\perp}\wt(z).
\]

\subsection{The construction}

Let \(\Pi_G:(W^d)^{V_L}\to (W^d)^{V_R}\) be the edge permutation that regroups edge labels from left vertices to right vertices.  Let
\[
   q_X:N_X^{\rm in}\to L_X^{\rm in},
   \qquad
   q_Z:N_Z^{\rm in}\to L_Z^{\rm in}
\]
be the quotient maps, extended component-wise to \(V_L\).

\begin{construction}[CSS-AEL pair]\label{rsd:constr:css-ael}
Define the global \(X\)-normalizer and \(X\)-stabilizer spaces by
\[
   C_{1,\rm AEL}
      \defeq
      \Pi_G\Bigl(\{y\in (N_X^{\rm in})^{V_L}:
          (q_X(y_j))_{j\in V_L}\in D_1\}\Bigr),
\]
and
\[
   C_{2,\rm AEL}
      \defeq
      \Pi_G\Bigl(\{y\in (N_X^{\rm in})^{V_L}:
          (q_X(y_j))_{j\in V_L}\in D_2\}\Bigr).
\]
The CSS-AEL code is the CSS code defined by the nested pair
\[
   C_{2,\rm AEL}\subseteq C_{1,\rm AEL}
      \subseteq (W^d)^{V_R}.
\]
Thus the final block length is \(n\) and the final folded alphabet is \(W^d\cong\F_q^{sd}\).
\end{construction}

The corresponding \(Z\)-normalizer and \(Z\)-stabilizer have the following explicit form.

\begin{proposition}[Exact CSS orthogonality]\label{rsd:prop:ael-orthogonality}
With notation as above,
\[
  C_{2,\rm AEL}^\perp
      =\Pi_G\Bigl(\{z\in (N_Z^{\rm in})^{V_L}:
          (q_Z(z_j))_{j\in V_L}\in D_2^\perp\}\Bigr),
\]
and
\[
  C_{1,\rm AEL}^\perp
      =\Pi_G\Bigl(\{z\in (N_Z^{\rm in})^{V_L}:
          (q_Z(z_j))_{j\in V_L}\in D_1^\perp\}\Bigr).
\]
In particular, \(C_{2,\rm AEL}\subseteq C_{1,\rm AEL}\) defines a valid CSS code.
\end{proposition}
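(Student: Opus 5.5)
The plan is to prove the formula for \(C_{2,\rm AEL}^\perp\) by one inclusion plus a dimension count; the formula for \(C_{1,\rm AEL}^\perp\) follows by swapping the roles of \(D_1\) and \(D_2\). Since \(\Pi_G\) is a coordinate permutation of edge labels, it preserves the coordinatewise pairing on \((W^d)^{V_L}\), which is just the sum over edges of \(\ip{\cdot}{\cdot}_W\). So it suffices to work before applying \(\Pi_G\). Set
\[
Y_2=\{y\in (N_X^{\rm in})^{V_L}: q_X(y)\in D_2\},
\qquad
Z_2=\{z\in (N_Z^{\rm in})^{V_L}: q_Z(z)\in D_2^\perp\},
\]
and show \(Y_2^\perp=Z_2\) inside \((W^d)^{V_L}\).

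\textbf{Inclusion \(Z_2\subseteq Y_2^\perp\).} Take \(y\in Y_2\) and \(z\in Z_2\). Then
\[
\ip{y}{z}=\sum_{j}\ip{y_j}{z_j}_{W^d}.
\]
For each \(j\), \(y_j\in C_{1,\rm in}\) and \(z_j\in C_{2,\rm in}^\perp\). The pairing restricted to \(C_{1,\rm in}\times C_{2,\rm in}^\perp\) descends to the induced pairing on \(L_X^{\rm in}\times L_Z^{\rm in}\), because its radicals are \(C_{2,\rm in}\) and \(C_{1,\rm in}^\perp\), as shown in the folded CSS notation section. Hence \(\ip{y_j}{z_j}=\ip{q_X(y_j)}{q_Z(z_j)}\). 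Summing over \(j\), we get \(\ip{y}{z}=\ip{q_X(y)}{q_Z(z)}=0\), since \(q_X(y)\in D_2\) and \(q_Z(z)\in D_2^\perp\) by definition of the outer dual.

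\textbf{Dimension count.} Write \(m=sd\), \(k_1=\dim C_{1,\rm in}\), \(k_2=\dim C_{2,\rm in}\), so \(\ell=k_1-k_2\). The map \(q_X\colon (N_X^{\rm in})^{V_L}\to (L_X^{\rm in})^n\) is surjective with kernel \((S_X^{\rm in})^{n}\). Hence
\[
\dim Y_2=nk_2+\dim D_2 .
\]
In the same way, \(q_Z\) is surjective from \((C_{2,\rm in}^\perp)^n\) with kernel \((C_{1,\rm in}^\perp)^n\), so
\[
\dim Z_2=n(m-k_1)+\dim D_2^\perp .
\]
The induced pairing on \(L_X^{\rm in}\times L_Z^{\rm in}\) is nondegenerate, so it extends coordinatewise to a nondegenerate pairing on \((L_X^{\rm in})^n\times(L_Z^{\rm in})^n\). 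Therefore \(\dim D_2^\perp=n\ell-\dim D_2\), and
\[
\dim Y_2+\dim Z_2=nk_2+n(m-k_1)+n\ell=nm .
\]
Since \(\dim Y_2^\perp=nm-\dim Y_2=\dim Z_2\), the inclusion above is an equality.

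\textbf{Conclusion.} Applying \(\Pi_G\), which is an isometry for the pairing, gives the first displayed formula. The same argument with \(D_1\) in place of \(D_2\) gives the second. Finally, \(D_2\subseteq D_1\) gives \(Y_2\subseteq Y_1\), and hence \(C_{2,\rm AEL}\subseteq C_{1,\rm AEL}\). Equivalently, \(C_{1,\rm AEL}^\perp\subseteq C_{2,\rm AEL}^\perp=(C_{2,\rm AEL})^\perp\), so the nested pair is a valid CSS pair by Definition~\ref{qlcl:def:css-from-pair} and Remark~\ref{rsd:rem:trace-pairing-folded}.

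The only point needing care is that the pairing really descends: that \(\ip{y_j}{z_j}\) depends only on the cosets \(q_X(y_j)\) and \(q_Z(z_j)\), and that the descended pairing is nondegenerate, which is what makes the dimension count work. Both facts are the radical computation already recorded earlier in the paper, so no further obstacle is expected.
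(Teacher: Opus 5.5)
Your proof is correct, and it shares the paper's key ingredient: the blockwise identity $\ip{y_j}{z_j}=\ip{q_X(y_j)}{q_Z(z_j)}$ for $y_j\in N_X^{\rm in}$ and $z_j\in N_Z^{\rm in}$. You obtain the reverse inclusion differently, however.

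The paper proves $Y_2^\perp\subseteq Z_2$ directly. Since $(S_X^{\rm in})^{V_L}\subseteq Y_2$, any $z\in Y_2^\perp$ is blockwise orthogonal to $S_X^{\rm in}=C_{2,\rm in}$. So every $z_j$ already lies in $N_Z^{\rm in}$, and $q_Z(z)$ is defined. Then, because $q_X$ maps $Y_2$ onto $D_2$, the descended identity forces $q_Z(z)\in D_2^\perp$.

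You instead prove only $Z_2\subseteq Y_2^\perp$ and close the gap by a dimension count. That count additionally requires nondegeneracy of the induced pairing $L_X^{\rm in}\times L_Z^{\rm in}\to\F_q$, to get $\dim D_2^\perp=n\ell-\dim D_2$, together with bookkeeping of $\dim S_X^{\rm in}$ and $\dim S_Z^{\rm in}$.

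The paper's route is slightly more economical, since it uses only that the pairing is well defined on cosets. It also makes visible the structural fact that every vector of $C_{2,\rm AEL}^\perp$ has its left blocks in $N_Z^{\rm in}$. Your route follows the standard ``one inclusion plus dimensions'' template and, as a by-product, records $\dim C_{2,\rm AEL}^\perp$ explicitly. Both arguments are complete.
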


\begin{proof}
The permutation \(\Pi_G\) is an isometry, so work in the left grouping.  Let
\[
  \widetilde C_2
  =\{y\in (N_X^{\rm in})^{V_L}:q_X(y)\in D_2\}.
\]
If \(z\in \widetilde C_2^\perp\), then \(z\) is orthogonal to \((S_X^{\rm in})^{V_L}\subseteq \widetilde C_2\).  Hence each left block \(z_j\) lies in
\[
   (S_X^{\rm in})^\perp=(C_{2,\rm in})^\perp=N_Z^{\rm in},
\]
so \(q_Z(z_j)\) is defined.  For \(y_j\in N_X^{\rm in}\) and \(z_j\in N_Z^{\rm in}\), the inner product descends to the logical pairing:
\[
   \ip{y_j}{z_j}_{W^d}
      =\ip{q_X(y_j)}{q_Z(z_j)}_{L_X,L_Z}.
\]
Therefore \(z\) is orthogonal to every \(y\) with \(q_X(y)\in D_2\) if and only if \(q_Z(z)\in D_2^\perp\).  This proves the first identity.  The second is identical with \(D_1\) in place of \(D_2\).  Since \(D_2\subseteq D_1\), the global pair is nested.
\end{proof}

\begin{proposition}[Rate]\label{rsd:prop:ael-rate}
Let
\[
  R_{\rm in}=\frac{\dim C_{1,\rm in}-\dim C_{2,\rm in}}{sd},
  \qquad
  R_{\rm out}=\frac{\dim D_1-
          \dim D_2}{n\ell},
\]
where \(\ell=\dim L_X^{\rm in}\).  Then the CSS-AEL code has quantum rate
\[
   R_{\rm AEL}=R_{\rm in}R_{\rm out}.
\]
\end{proposition}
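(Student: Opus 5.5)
The plan is a direct dimension count of the two global spaces, computed in the left grouping; since \(\Pi_G\) is a bijective linear isometry, it preserves dimensions. The quantum dimension is \(k_{\rm AEL}=\dim C_{1,\rm AEL}-\dim C_{2,\rm AEL}\), and the physical length in \(\F_q\)-coordinates is \(n\cdot sd\). So the claim reduces to
\[
   \dim C_{1,\rm AEL}-\dim C_{2,\rm AEL}=\dim D_1-\dim D_2,
\]
because then
\[
   R_{\rm AEL}=\frac{\dim D_1-\dim D_2}{nsd}
   =\frac{\ell}{sd}\cdot\frac{\dim D_1-\dim D_2}{n\ell}
   =R_{\rm in}R_{\rm out}.
\]

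To compute each dimension, I will consider the componentwise quotient map
\[
   q_X^{V_L}:(N_X^{\rm in})^{V_L}\to (L_X^{\rm in})^{V_L}.
\]
This map is surjective, and its kernel is \((S_X^{\rm in})^{V_L}\). The left-grouped space \(\widetilde C_i=\{y:q_X(y)\in D_i\}\) is exactly the preimage of \(D_i\) under this map. Since the map is surjective, it maps \(\widetilde C_i\) onto \(D_i\). Rank--nullity then gives
\[
   \dim\widetilde C_i=\dim D_i+n\dim S_X^{\rm in}.
\]
Subtracting the cases \(i=1\) and \(i=2\) cancels the common stabilizer term \(n\dim S_X^{\rm in}\), which yields the reduced identity.

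Finally, I will note that the nesting \(C_{2,\rm AEL}\subseteq C_{1,\rm AEL}\) holds by Proposition~\ref{rsd:prop:ael-orthogonality}. Hence the number of encoded qudits is the difference of dimensions, as in Definition~\ref{qlcl:def:css-from-pair}. There is no real obstacle here. The only point needing care is the normalization: \(R_{\rm in}\) is measured per \(\F_q\)-coordinate of \(W^d\), while \(R_{\rm out}\) is measured per \(\F_q\)-dimension of the alphabet \((L_X^{\rm in})^n\). The factor \(\ell\) then cancels as shown in the display above.
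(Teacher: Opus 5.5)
Your proposal is correct and follows essentially the same route as the paper. The paper likewise computes the dimension of the preimage of $D_a$ under $(q_X)^{V_L}$ as $n\dim S_X^{\rm in}+\dim D_a$, cancels the common stabilizer term, and then uses $\ell=sd\,R_{\rm in}$ when dividing by the physical dimension $nsd$.
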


\begin{proof}
The preimage of \(D_a\) under \((q_X)^{V_L}\) has dimension
\[
   n\dim S_X^{\rm in}+\dim D_a
\]
for \(a=1,2\).  Hence
\[
   \dim C_{1,\rm AEL}-\dim C_{2,\rm AEL}
      =\dim D_1-
       \dim D_2
      =n\ell R_{\rm out}.
\]
Since \(\ell=sd\,R_{\rm in}\), division by the physical dimension \(nsd\) gives \(R_{\rm AEL}=R_{\rm in}R_{\rm out}\).
\end{proof}

\subsection{The lifting theorem}

We state the AEL local-to-global step for folded LCL
profiles; this is the form used by the explicit instantiation.  The
subspace-design statement is its kernel-profile corollary.

\begin{definition}[Global folded LCL profile]\label{rsd:def:global-profile}
Let the final AEL alphabet be \(W^d\).  For a coefficient space \(B\), a
\emph{right profile} is a tuple
\[
      \mathfrak V=(\mathfrak V_i)_{i\in V_R},
      \qquad \mathfrak V_i\le\Hom(B,W).
\]
A map \(M:B\to (W^d)^{V_R}\) obeys this profile if, for every right vertex
\(i\in V_R\) and every edge coordinate \(h\in[d]\) in the right block, the edge
projection \(M_{i,h}:B\to W\) lies in \(\mathfrak V_i\).  For \(U\le B^*\), set
\begin{equation}\label{rsd:eq:global-edge-potential}
   \widehat\Phi^{W,\mathrm{edge}}_\alpha(U,\mathfrak V)
      =\frac1{sn}\sum_{i\in V_R}
        \dim\bigl(\mathfrak V_i\cap\Hom_U(B,W)\bigr)
       -(1-\alpha)\dim U .
\end{equation}
The normalization is by the final scalar length \(snd\): each right constraint is
applied to all \(d\) edge symbols in the block, so the factor \(d\) cancels.
\end{definition}

\begin{theorem}[AEL lift for folded LCL]\label{rsd:thm:ael-LCL-lift}
Fix \(r\in\N\) and \(\epsilon>0\), and let the inner edge alphabet be \(W\cong\F_q^s\).  Suppose the inner CSS pair satisfies:
\begin{enumerate}[label=(\roman*),leftmargin=2em]
\item \((N_X^{\rm in},S_X^{\rm in})\) is AEL-robust
\((\alpha_X,\mathscr F^X_{\le r})\) folded LCL up to dimension \(r\);
\item \((N_Z^{\rm in},S_Z^{\rm in})\) is AEL-robust
\((\alpha_Z,\mathscr F^Z_{\le r})\) folded LCL up to dimension \(r\).
\end{enumerate}
Suppose the outer CSS pair has logical distances \(\delta_X^{\rm out},\delta_Z^{\rm out}>0\), and put
\[
   \delta_{\rm out}=\min\{\delta_X^{\rm out},\delta_Z^{\rm out}\}.
\]
If \(G\) is a \(d\)-regular bipartite \(\lambda\)-expander with
\begin{equation}\label{rsd:eq:lambda-condition}
   q^{r^2}\frac{\lambda^2}{\epsilon^2}<\delta_{\rm out},
\end{equation}
then the following holds in the \(X\)-sector.  There is no folded LCL
witness \(M:B\to C_{1,\rm AEL}\), with \(1\le\dim B\le r\), such that
\(M\) is injective, \(M(B)\cap C_{2,\rm AEL}=0\), and \(M\) obeys a right profile
\(\mathfrak V=(\mathfrak V_i)_{i\in V_R}\) (Definition~\ref{rsd:def:global-profile}) satisfying
\begin{enumerate}[label=(\alph*),leftmargin=2em]
\item for every left vertex \(j\), the left-star profile
\[
      (\mathfrak V_{N_1(j)},\ldots,\mathfrak V_{N_d(j)})
\]
belongs to \(\mathscr F^X_{\dim B}\), and
\item every nonzero \(U\le B^*\) has
\[
      \widehat\Phi^{W,\mathrm{edge}}_{\alpha_X+\epsilon}(U,\mathfrak V)<0 .
\]
\end{enumerate}
The analogous statement holds in the \(Z\)-sector with
\((C_{2,\rm AEL}^\perp,C_{1,\rm AEL}^\perp)\), \(\alpha_Z\), and
\(\mathscr F^Z_{\le r}\).
\end{theorem}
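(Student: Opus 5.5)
We argue by contradiction in the \(X\)-sector; the \(Z\)-sector is identical, with Proposition~\ref{rsd:prop:ael-orthogonality} supplying the left description of \(C_{2,\rm AEL}^\perp\) and \(C_{1,\rm AEL}^\perp\). Suppose \(M:B\to C_{1,\rm AEL}\) is such a witness, with \(t=\dim B\le r\). We first pull \(M\) back through \(\Pi_G^{-1}\). This gives, for each left vertex \(j\), a local map \(M^{(j)}:B\to N_X^{\rm in}\subseteq W^d\). Its \(h\)-th edge projection is \(M_{N_h(j),h'}\) for the appropriate right slot \(h'\). By obedience to the right profile, \(\pi_h M^{(j)}\in\mathfrak V_{N_h(j)}\). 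By (a), the left-star profile \(\mathfrak V^{(j)}=(\mathfrak V_{N_1(j)},\ldots,\mathfrak V_{N_d(j)})\) lies in \(\mathscr F^X_t\).

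The first step is the local dichotomy from inner robustness (i). For each \(j\), either \(q_X(M^{(j)}(B))=0\), or there is a nonzero \(U_j\le B^*\) with \(\widehat\Phi^W_{\alpha_X}(U_j,\mathfrak V^{(j)})\ge0\). Let \(J_0\) be the set of left vertices of the first kind, called logically trivial.

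The second step uses the outer distance to show that \(J_0\) is small. For \(x\in B\), the outer word \((q_X(M^{(j)}x))_j\) lies in \(D_1\). It lies in \(D_2\) only if \(Mx\in C_{2,\rm AEL}\), which forces \(x=0\) by injectivity and \(M(B)\cap C_{2,\rm AEL}=0\). So for \(x\ne0\) this word lies in \(D_1\setminus D_2\) and has weight at least \(\delta_X^{\rm out}n\). Hence the map \(\overline M:B\to D_1/D_2\) is injective. Pick \(x\ne0\): every \(j\in J_0\) has \(q_X(M^{(j)}x)=0\), so \(|J_0|\le(1-\delta_{\rm out})n\). Thus at least a \(\delta_{\rm out}\) fraction of left vertices carry a nonzero \(U_j\) with nonnegative local potential.

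The third step, which I expect to be the main obstacle, turns these local certificates into a contradiction with (b) via expander averaging. There are at most \(q^{r^2}\) nonzero subspaces \(U\le B^*\). By pigeonhole, some fixed nonzero \(U\) equals \(U_j\) for a set \(J_U\) of left vertices of density at least \(\delta_{\rm out}q^{-r^2}\). Define \(x_i=\dim(\mathfrak V_i\cap\Hom_U(B,W))/s\in[0,t]\) on \(V_R\). By (b), the right average satisfies \(\mu<(1-\alpha_X-\epsilon)\dim U\). For \(j\in J_U\), the left average \(y_j=\frac1d\sum_h x_{N_h(j)}\) satisfies \(y_j\ge(1-\alpha_X)\dim U\), so \(y_j-\mu>\epsilon\dim U\ge\epsilon\). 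Lemma~\ref{rsd:lem:expander-avg} then gives
\[
\delta_{\rm out}q^{-r^2}\epsilon^2\le\frac1n\sum_j(y_j-\mu)^2\le\lambda^2 t^2,
\]
so \(\delta_{\rm out}\le q^{r^2}\lambda^2 r^2/\epsilon^2\). This contradicts \eqref{rsd:eq:lambda-condition} only up to the factor \(t^2\le r^2\) coming from \(\|x\|_\infty\).

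To remove that factor I would normalize by \(\dim U\), that is, use \(x_i/\dim U\), whose sup norm is at most \(t/\dim U\). If that is still insufficient, I would absorb the stray factor into \(q^{r^2}\), reading the hypothesis with the constant slack it allows. A second delicate point is that the certificate subspaces \(U_j\) vary with \(j\), which is why the pigeonhole step costs \(q^{r^2}\). Only after fixing a common \(U\) can the single right function \(x\) be compared with the left averages.
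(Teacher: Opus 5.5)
Your argument is the paper's argument run in contrapositive order, and the first two steps and the pigeonhole are sound. The paper first union-bounds over the at most $q^{r^2}$ nonzero $U\le B^*$ to show that more than a $1-\delta_{\rm out}$ fraction of left vertices have negative local potential for every $U$. It then uses inner robustness to make those vertices logically trivial, which contradicts the outer distance of $\bar T(B)$. You instead use robustness and outer distance to find a $\delta_{\rm out}$ fraction of logically nontrivial vertices, and then pigeonhole onto a common $U$. These are equivalent.

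The one real gap is the final inequality, and it comes from two unnecessary weakenings on your side. First, the bound $x_i\in[0,t]$ is too loose; even your normalized version keeps sup norm $t/\dim U$. In fact $x_i\le\dim U$, because $\mathfrak V_i\cap\Hom_U(B,W)\le\Hom_U(B,W)\cong W\otimes U$, which has dimension $s\dim U$. Second, you replaced $y_j-\mu>\epsilon\dim U$ by the weaker $y_j-\mu>\epsilon$; keep the factor $\dim U$.

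With both corrections, Lemma~\ref{rsd:lem:expander-avg} gives $\frac{|J_U|}{n}\,\epsilon^2(\dim U)^2\le\lambda^2(\dim U)^2$. Hence $\delta_{\rm out}q^{-r^2}\le\lambda^2/\epsilon^2$, which contradicts \eqref{rsd:eq:lambda-condition} exactly, with no stray $r^2$. This is how the paper obtains $\Pr_j[y_j(U)\ge(1-\alpha_X)\dim U]\le\lambda^2/\epsilon^2$ for each fixed $U$.

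Your fallback of absorbing the extra factor into "slack" in the hypothesis is not needed. As you phrase it, it is also not justified, since \eqref{rsd:eq:lambda-condition} is stated with exactly $q^{r^2}$.
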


\begin{proof}
We prove the \(X\)-sector statement; the \(Z\)-sector is identical after using the
explicit dual description in Proposition~\ref{rsd:prop:ael-orthogonality}, with
outer pair \(D_1^Z=D_2^\perp\) and \(D_2^Z=D_1^\perp\).

Assume that such a witness \(M:B\to C_{1,\rm AEL}\) exists.  Pull it back through
\(\Pi_G\) to the left grouping.  For each left vertex \(j\in V_L\), let
\[
   T_j:B\to N_X^{\rm in}
\]
be the linear map that returns the left block at \(j\), and define
\(\bar T_j=q_X\circ T_j\).  For every \(b\in B\), the word
\((\bar T_j(b))_{j\in V_L}\) lies in \(D_1\).  The induced map
\[
   \bar T:B\to D_1,
   \qquad
   b\mapsto (\bar T_j(b))_{j\in V_L},
\]
is injective modulo \(D_2\): if \(\bar T(b)\in D_2\), then the corresponding
global word lies in \(C_{2,\rm AEL}\), so \(M(b)\in M(B)\cap C_{2,\rm AEL}=\{0\}\).  Since \(M\) is injective,
this gives \(b=0\).  Thus \(\bar T(B)\cap D_2=0\),
and \(\bar T(B)\) is a nonzero outer logical subspace.

Fix a nonzero \(U\le B^*\), and put
\[
   x_i(U)=\frac1s\dim\bigl(\mathfrak V_i\cap\Hom_U(B,W)\bigr),
   \qquad i\in V_R .
\]
Then \(0\le x_i(U)\le\dim U\).  Let
\[
   \mu(U)=\frac1n\sum_{i\in V_R}x_i(U).
\]
The assumed negativity of the global potential gives
\[
   \mu(U)<(1-\alpha_X-\epsilon)\dim U .
\]
For a left vertex \(j\), define
\[
   y_j(U)=\frac1d\sum_{\ell=1}^d x_{N_\ell(j)}(U).
\]
By Lemma~\ref{rsd:lem:expander-avg} and Markov's inequality,
\[
   \Pr_{j\in V_L}\bigl[y_j(U)\ge (1-\alpha_X)\dim U\bigr]
      \le \frac{\lambda^2}{\epsilon^2}.
\]
There are at most \(q^{(\dim B)^2}\le q^{r^2}\) nonzero subspaces
\(U\le B^*\).  Hence, by a union bound and~\eqref{rsd:eq:lambda-condition}, for
more than \(1-\delta_{\rm out}\) of the left vertices \(j\), the inequality
\begin{equation}\label{rsd:eq:left-all-bad}
   \frac1d\sum_{\ell=1}^d
      \frac1s\dim\bigl(\mathfrak V_{N_\ell(j)}\cap\Hom_U(B,W)\bigr)
       <(1-\alpha_X)\dim U
\end{equation}
holds for every nonzero \(U\le B^*\).

Fix such a good left vertex \(j\).  Because the global right profile is obeyed, the 
local map \(T_j:B\to N_X^{\rm in}\) obeys the left-star profile
\((\mathfrak V_{N_1(j)},\ldots,\mathfrak V_{N_d(j)})\).  Inequality
\eqref{rsd:eq:left-all-bad} says exactly that every nonzero subspace has
negative potential at threshold \(\alpha_X\) for this left-star profile.
AEL-robustness of the inner \(X\)-pair therefore forces
\[
   q_X(T_j(B))=0.
\]
Equivalently, the \(j\)-th coordinate of the outer subspace
\(\bar T(B)\le D_1\) is zero.

Thus \(\bar T(B)\le D_1\), \(\bar T(B)\cap D_2=0\), and more than
\((1-\delta_{\rm out})n\) of its coordinates are identically zero.  Any nonzero
vector in \(\bar T(B)\) is therefore a word in \(D_1\setminus D_2\) of weight
less than \(\delta_{\rm out}n\), contradicting
\(\delta_X^{\rm out}\ge\delta_{\rm out}\).  This proves the \(X\)-sector claim.
\end{proof}

It is convenient to name the property established by the lifting theorem.

\begin{definition}[Folded LCL pair]\label{rsd:def:global}
Let \(G\) be the \(d\)-regular bipartite graph of
Construction~\ref{rsd:constr:css-ael}, let \(\mathscr F_{\le r}\) be a folded
profile class of length-\(d\) profiles over the edge alphabet \(W\), and let
\(S\subseteq N\subseteq (W^d)^{V_R}\) be a nested pair.  The pair \((N,S)\) is
\emph{\((\alpha,\mathscr F_{\le r})\) folded LCL up to dimension
\(r\) with respect to \(G\)} if there is no linear map \(M:B\to N\) with
\(1\le\dim B\le r\) such that \(M\) is injective, \(M(B)\cap S=0\), and \(M\)
obeys a right profile \(\mathfrak V=(\mathfrak V_i)_{i\in V_R}\)
(Definition~\ref{rsd:def:global-profile}) with the two properties that every
left-star profile \((\mathfrak V_{N_1(j)},\ldots,\mathfrak V_{N_d(j)})\),
\(j\in V_L\), belongs to \(\mathscr F_{\dim B}\), and that
\(\widehat\Phi^{W,\mathrm{edge}}_{\alpha}(U,\mathfrak V)<0\) for every nonzero
\(U\le B^*\).  A CSS pair \(C_2\subseteq C_1\subseteq (W^d)^{V_R}\) is
\emph{\((\alpha_X,\mathscr F^X_{\le r};\alpha_Z,\mathscr F^Z_{\le r})\) 
 folded LCL up to dimension \(r\)} if \((C_1,C_2)\) is
\((\alpha_X,\mathscr F^X_{\le r})\) folded LCL and
\((C_2^\perp,C_1^\perp)\) is \((\alpha_Z,\mathscr F^Z_{\le r})\) 
folded LCL, both up to dimension \(r\) and with respect to the same graph.
\end{definition}

In this language, the conclusion of Theorem~\ref{rsd:thm:ael-LCL-lift} is
precisely that the CSS-AEL code is
\((\alpha_X+\epsilon,\mathscr F^X_{\le r};\alpha_Z+\epsilon,\mathscr F^Z_{\le r})\) 
 folded LCL up to dimension \(r\) with respect to \(G\).  The explicit
instantiation in the final section is stated in this form.

We now state the lifting theorem
\begin{theorem}[CSS-AEL lifting theorem for relative subspace designs]\label{rsd:thm:ael-lift}
Fix \(r\in\N\) and \(\epsilon>0\).  Suppose the inner CSS pair satisfies:
\begin{enumerate}[label=(\roman*),leftmargin=2em]
\item \((N_X^{\rm in},S_X^{\rm in})\) is AEL-robust \(\alpha_X\)-relative subspace design up to dimension \(r\);
\item \((N_Z^{\rm in},S_Z^{\rm in})\) is AEL-robust \(\alpha_Z\)-relative subspace design up to dimension \(r\).
\end{enumerate}
Suppose the outer CSS pair has logical distances \(\delta_X^{\rm out},\delta_Z^{\rm out}>0\).  Let
\[
   \delta_{\rm out}=\min\{\delta_X^{\rm out},\delta_Z^{\rm out}\}.
\]
If \(G\) is a \(d\)-regular bipartite \(\lambda\)-expander satisfying
\eqref{rsd:eq:lambda-condition} with the present \(q,r,\epsilon,\delta_{\rm out}\), then the CSS-AEL code is an
\((\alpha_X+\epsilon,\alpha_Z+\epsilon)\)-CSS subspace design up to dimension \(r\).
\end{theorem}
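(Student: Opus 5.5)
The plan is to derive the statement as the kernel-profile specialization of Theorem~\ref{rsd:thm:ael-LCL-lift}, combined with the deterministic minimal-bad-profile reduction of Lemma~\ref{rsd:lem:minimal-bad}. By Proposition~\ref{rsd:prop:kernel-specialization}, hypotheses (i) and (ii) say that the inner pairs are AEL-robust $(\alpha_X,\mathscr K_{\le r})$ and $(\alpha_Z,\mathscr K_{\le r})$ folded LCL up to dimension $r$. Hence Theorem~\ref{rsd:thm:ael-LCL-lift} applies with $\mathscr F^X=\mathscr F^Z=\mathscr K_{\le r}$. It remains to show that a failure of the global relative-design property in the $X$-sector yields a forbidden witness as in that theorem; the $Z$-sector is symmetric via Proposition~\ref{rsd:prop:ael-orthogonality}.

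Suppose $(C_{1,\rm AEL},C_{2,\rm AEL})$ is not an $(\alpha_X+\epsilon)$-relative subspace design up to dimension $r$. Here the folded coordinates are the right vertices with alphabet $W^d$. Lemma~\ref{rsd:lem:minimal-bad}, applied with $n$ right coordinates, gives $V$ with $1\le\dim V\le r$ and a contained profile $(V_i)_{i\in V_R}$. This profile is witnessed by $A\le C_{1,\rm AEL}$, $A\cap C_{2,\rm AEL}=0$, and an isomorphism $\phi:V\to A^*$. Every nonzero $U\le V$ satisfies $\Phi_V(U,(V_i),\alpha_X+\epsilon)<0$. Exactly as in Proposition~\ref{rsd:prop:lcl-contained-profile}, set $B=V^*$ and take $M:B\to A$ to be the inverse of $\phi^*$. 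Then $M$ is injective, $M(B)\cap C_{2,\rm AEL}=0$, and at each right vertex $i$ the block map $\pi_iM:B\to W^d$ kills $V_i^\circ$. Consequently every edge projection $M_{i,h}:B\to W$ kills $V_i^\circ$, that is, $M_{i,h}\in\cZ(V_i)$. So $M$ obeys the right profile $\mathfrak V_i=\cZ(V_i)\le\Hom(B,W)$. Each left-star profile $(\cZ(V_{N_1(j)}),\dots,\cZ(V_{N_d(j)}))$ is a kernel profile, hence lies in $\mathscr K_{\dim B}$, which verifies condition (a).

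For condition (b), compute $\widehat\Phi^{W,\mathrm{edge}}_{\alpha_X+\epsilon}(U,\mathfrak V)$ using Lemma~\ref{rsd:lem:kernel-mass}. That lemma gives $\dim(\cZ(V_i)\cap\Hom_U(B,W))=s\dim(U\cap V_i)$, so
\[
\widehat\Phi^{W,\mathrm{edge}}_{\alpha_X+\epsilon}(U,\mathfrak V)=\frac1n\sum_i\dim(U\cap V_i)-(1-\alpha_X-\epsilon)\dim U=\Phi_V(U,(V_i),\alpha_X+\epsilon)<0
\]
for all nonzero $U$. Thus $M$ is precisely a witness excluded by Theorem~\ref{rsd:thm:ael-LCL-lift}, which is a contradiction. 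The $Z$-sector argument is identical with $(C_{2,\rm AEL}^\perp,C_{1,\rm AEL}^\perp)$ and $\alpha_Z$. Together the two sectors give the $(\alpha_X+\epsilon,\alpha_Z+\epsilon)$-CSS subspace design property of Definition~\ref{rsd:def:css-sd}.

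The main obstacle is the bookkeeping of alphabets. The global design tests subspaces at the level of right blocks in $W^d$, while the lift's profiles constrain individual edge maps into $W$. I expect a single block-level kernel $V_i$ to translate into the same constraint $\cZ(V_i)$ on all $d$ edges of that block, since a block map vanishes on $V_i^\circ$ if and only if every edge component does. I also expect the edge normalization by $sn$ in~\eqref{rsd:eq:global-edge-potential} to match the block-level normalization $\frac1n\sum_i$, with the factor $s$ from Lemma~\ref{rsd:lem:kernel-mass} cancelling. I would check both identities carefully before invoking the lift.
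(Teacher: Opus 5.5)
Your proposal is correct and follows the paper's own proof essentially step for step. Like the paper, you take the minimal bad profile from Lemma~\ref{rsd:lem:minimal-bad}, turn it into an injective-logical witness via Proposition~\ref{rsd:prop:lcl-contained-profile} with the same kernel constraint $\cZ(V_i)$ on all $d$ edges of each right block, identify inner robustness and the potential through Proposition~\ref{rsd:prop:kernel-specialization} and Lemma~\ref{rsd:lem:kernel-mass}, and then invoke Theorem~\ref{rsd:thm:ael-LCL-lift}; your explicit check that the $sn$ edge normalization cancels the factor $s$ to recover $\Phi_V(U,(V_i),\alpha_X+\epsilon)$ is exactly the identification the paper uses implicitly.
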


\begin{proof}
We prove the \(X\)-sector statement.  Suppose, toward a contradiction, that
\((C_{1,\rm AEL},C_{2,\rm AEL})\) is not an \((\alpha_X+\epsilon)\)-relative design
up to dimension \(r\).  Lemma~\ref{rsd:lem:minimal-bad} gives a contained
relative profile \((V_i)_{i\in V_R}\) on a vector space \(V\), with
\(1\le\dim V\le r\), whose potential is negative on every nonzero subspace at
threshold \(\alpha_X+\epsilon\).  By
Proposition~\ref{rsd:prop:lcl-contained-profile}, this contained profile is an
injective-logical folded qLCL witness.  The coordinate-zero condition for
the right block is equivalent to requiring each of its \(d\) edge projections to
lie in the kernel profile \(\cZ_i(V_i)\).  Hence it is an LCL
witness for the kernel-profile class.

By Proposition~\ref{rsd:prop:kernel-specialization}, inner relative
robustness is the same as folded-LCL robustness for kernel profiles, and the 
potential is exactly the subspace-profile potential.  Theorem~\ref{rsd:thm:ael-LCL-lift}
therefore rules out the minimal bad witness.  This proves the \(X\)-sector.  The
\(Z\)-sector is identical using Proposition~\ref{rsd:prop:ael-orthogonality} and
the robust \(Z\)-inner pair.  Thus both relative design conditions hold.
\end{proof}

\section{Leverrier--Z\'emor quantum Tanner codes as outer codes}\label{rsd:sec:lz-outer}

The AEL lifting theorem uses the outer code only through three properties: CSS nesting, quantum rate, and positive block logical distance in both sectors.  It does not use LDPC structure, decoding, or any subspace-design property of the outer code.  Therefore any asymptotically good CSS family over the required outer alphabet can serve as the outer layer.  This separation is useful: all qLCL work is done by the constant-size inner gadget, while the outer family supplies only global logical distance.

For the end-to-end instantiation there is an alphabet mismatch that must be handled explicitly.  Leverrier--Z\'emor quantum Tanner codes are scalar binary CSS codes, whereas AEL asks for an outer code whose coordinate alphabet is the inner logical vector space.  We bridge this by scalar extension: tensor the binary outer code with the constant-dimensional binary space $L_X^{\rm in}$, and use the dual tensor extension on the $Z$ side.  The final code has folded physical alphabet $(\F_2^s)^d\cong\F_2^{sd}$; it is not claiming that the AEL outer symbols are single bits.

\subsection{Scalar extension of binary CSS outer codes}

Let \(L_X\) and \(L_Z\) be finite-dimensional dual vector spaces over \(\F_2\), equipped with a nondegenerate pairing \(L_X\times L_Z\to\F_2\).  We identify
\[
  \F_2^n\otimes L_X\cong L_X^n,
  \qquad
  \F_2^n\otimes L_Z\cong L_Z^n.
\]

\begin{lemma}[Vector-alphabet extension]\label{rsd:lem:scalar-extension}
Let \(E_2\subseteq E_1\subseteq \F_2^n\) be a binary nested CSS pair.  Define
\[
   D_1=E_1\otimes L_X\subseteq L_X^n,
   \qquad
   D_2=E_2\otimes L_X\subseteq L_X^n.
\]
Then \(D_2\subseteq D_1\) is a CSS pair over the vector alphabet \(L_X\), and
\[
   D_2^\perp=E_2^\perp\otimes L_Z,
   \qquad
   D_1^\perp=E_1^\perp\otimes L_Z .
\]
Its quantum rate equals that of \(E_2\subseteq E_1\):
\[
  \frac{\dim D_1-\dim D_2}{n\dim L_X}
    =\frac{\dim E_1-\dim E_2}{n}.
\]
Moreover its block logical distances satisfy
\[
  d_X(D_1,D_2)\ge d_X(E_1,E_2),
  \qquad
  d_Z(D_1,D_2)\ge d_Z(E_1,E_2).
\]
\end{lemma}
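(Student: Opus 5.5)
The plan is to work entirely with tensor products and a fixed basis. Choose a basis $f_1,\ldots,f_\ell$ of $L_X$ and let $g_1,\ldots,g_\ell$ be the dual basis of $L_Z$ under the nondegenerate pairing. In these bases the identification $\F_2^n\otimes L_X\cong L_X^n$ writes a vector as $\sum_k v_k\otimes f_k$ with $v_k\in\F_2^n$; its $i$-th block is $\sum_k (v_k)_i f_k$. The coordinatewise pairing on $L_X^n\times L_Z^n$ then becomes
\[
\ip{\textstyle\sum_k v_k\otimes f_k}{\textstyle\sum_k w_k\otimes g_k}=\sum_k\ip{v_k}{w_k}_{\F_2^n}.
\]
So $D_a=E_a\otimes L_X$ is exactly the set of tuples $(v_1,\ldots,v_\ell)$ with every $v_k\in E_a$. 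Nesting $D_2\subseteq D_1$ is immediate from $E_2\subseteq E_1$.

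For the duals, I first show the inclusion $E_a^\perp\otimes L_Z\subseteq D_a^\perp$ directly from the displayed formula. For the reverse, fix $z=\sum_k w_k\otimes g_k\in D_a^\perp$ and test it against $v\otimes f_k$ for $v\in E_a$. This forces $\ip{v}{w_k}=0$, so $w_k\in E_a^\perp$ for each $k$. (A dimension count gives the same conclusion.)

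The rate then follows from $\dim D_a=\ell\dim E_a$: dividing $\ell(\dim E_1-\dim E_2)$ by $n\ell$ gives the claimed identity.

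The distance bound is the only step needing any care. Take $x=\sum_k v_k\otimes f_k\in D_1\setminus D_2$. Some $v_{k_0}$ lies in $E_1\setminus E_2$, since otherwise every $v_k\in E_2$ and $x\in D_2$. Whenever $(v_{k_0})_i\ne0$, the block $x_i=\sum_k(v_k)_i f_k$ is nonzero, because the $f_k$ are independent. So the block support of $x$ contains the support of $v_{k_0}$, and $\wt(x)\ge\wt(v_{k_0})\ge d_X(E_1,E_2)$.

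For the $Z$ sector I use the dual descriptions just proved. Here $z\in D_2^\perp\setminus D_1^\perp$ has components $w_k\in E_2^\perp$, at least one of which lies outside $E_1^\perp$. The same support argument then gives $d_Z(D_1,D_2)\ge d_Z(E_1,E_2)$.
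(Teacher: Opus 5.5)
Your proof is correct and takes essentially the same approach as the paper. The dual identity comes from nondegeneracy of the coordinatewise pairing, the rate from $\dim(E_a\otimes L_X)=\ell\dim E_a$, and your component $v_{k_0}$ is exactly the paper's contraction $(\operatorname{id}\otimes\lambda)(x)\in E_1\setminus E_2$ with $\lambda=\langle\cdot,g_{k_0}\rangle$ the dual-basis functional, so fixing dual bases just makes the paper's basis-free argument (including the reverse inclusion for $D_a^\perp$) fully explicit.
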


\begin{proof}
The formula for orthogonal complements follows from nondegeneracy of the coordinatewise pairing:
\((E\otimes L_X)^\perp=E^\perp\otimes L_Z\) for every \(E\le\F_2^n\).  The rate identity is immediate from \(\dim(E_a\otimes L_X)=\dim(E_a)\dim L_X\).

For distance, let \(x\in D_1\setminus D_2\).  The image of \(x\) in
\((E_1/E_2)\otimes L_X\) is nonzero.  Hence there is a linear functional \(\lambda\in L_X^*\) such that the coordinatewise contraction \((\operatorname{id}\otimes\lambda)(x)\) is nonzero in \(E_1/E_2\).  Thus \((\operatorname{id}\otimes\lambda)(x)\in E_1\setminus E_2\), while its Hamming support is contained in the block support of \(x\).  Therefore \(\wt_{\rm block}(x)\ge d_X(E_1,E_2)\).  The \(Z\)-distance proof is identical, using \(L_Z^*\) and the displayed formulas for duals.
\end{proof}

\subsection{The Leverrier--Z\'emor outer family}

We record the consequence of the Leverrier--Z\'emor quantum Tanner-code construction in the exact form needed here.

\begin{theorem}[LZ outer codes over the inner logical alphabet~\cite{LZ22}]\label{rsd:cor:lz-vector-outer}
Let \(L_X,L_Z\) be any dual pair of binary vector spaces.  For every \(\xi\in(0,1)\) and an infinite sequence of lengths \(n\to\infty\), the Leverrier--Z\'emor family yields an explicit outer CSS family
\[
   D_{2,n}\subseteq D_{1,n}\subseteq L_X^n
\]
of quantum rate at least \(1-\xi\) and block logical distances at least \(\delta_{\rm LZ}(\xi)n\) in both sectors, where $\delta_{\rm LZ}(\xi) = \poly(\xi)$.
\end{theorem}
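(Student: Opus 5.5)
The plan is to take the binary quantum Tanner codes of Leverrier--Z\'emor~\cite{LZ22} and push them through the scalar extension of Lemma~\ref{rsd:lem:scalar-extension}. First I would recall from~\cite{LZ22} the binary statement. For every target rate below one, there is an explicit infinite family of binary CSS codes $E_{2,n}\subseteq E_{1,n}\subseteq\F_2^n$ with two properties. The quantum rate $(\dim E_{1,n}-\dim E_{2,n})/n$ is bounded below by a constant, and both logical distances $d_X(E_{1,n},E_{2,n})$ and $d_Z(E_{1,n},E_{2,n})$ are linear in $n$.

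The point requiring care is the dependence between the rate and the relative distance. The quantum Tanner construction uses a square Cayley complex together with a pair of local codes $C_A,C_B$ of length $\Delta$ and their duals. The quantum rate is $1-2(1-\rho_A)-2(1-\rho_B)$ (or similar, depending on conventions), so rate $1-\xi$ is reached by taking local rates $1-\Theta(\xi)$. The relative distance is then controlled by the product-expansion/robustness parameter of the local codes and the local distances. These are of order $\poly(\xi)$ once $\Delta$ is chosen large enough compared with $1/\xi$.

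So in the first step I would fix $\xi$, choose $\Delta=\Delta(\xi)$ and local codes with rates $1-\Theta(\xi)$ and robustness $\poly(\xi)$. Those codes exist by random choice and can be found by exhaustive search, since $\Delta$ is a constant. I would then invoke the LZ distance theorem to get $\delta_{\rm LZ}(\xi)=\poly(\xi)$ for the infinite sequence of complexes supplied by the explicit Cayley/Ramanujan construction. Explicitness follows because the complex and the local codes are constructed in polynomial time.

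The second step is mechanical. Set $D_{a,n}=E_{a,n}\otimes L_X$ for $a=1,2$. Lemma~\ref{rsd:lem:scalar-extension} then gives three things:
\begin{itemize}
\item the nesting $D_{2,n}\subseteq D_{1,n}$;
\item the dual description $D_{a,n}^\perp=E_{a,n}^\perp\otimes L_Z$;
\item an unchanged quantum rate, at least $1-\xi$.
\end{itemize}
It also gives block logical distances in both sectors that are at least the binary ones, hence at least $\delta_{\rm LZ}(\xi)n$. Generator matrices of $D_{a,n}$ are Kronecker products of the binary generator matrices with an identity, so explicitness is preserved.

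The main obstacle is the first step. The question is whether the published LZ analysis really yields rate arbitrarily close to one, with polynomial distance loss. Their headline theorem is stated for some fixed positive rate. I would go back through their proof and track the constants: the robustness of the dual tensor codes as a function of the local rates and $\Delta$, and the expansion requirement $\lambda\le O(\sqrt\Delta)$. If the rate-$1-\xi$ regime does not come out directly, I would try to appeal to the known robust-testability bounds for random local codes near rate one.
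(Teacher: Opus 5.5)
Your overall route is to take the binary Leverrier--Z\'emor codes and push them through Lemma~\ref{rsd:lem:scalar-extension}. That is the route the paper intends. The paper gives no argument beyond citing~\cite{LZ22}, and it places the scalar-extension lemma immediately before the statement. Your second step is correct as written: nesting, the dual description, the rate, and both block logical distances all transfer, and Kronecker products preserve explicitness.

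The first step, however, uses a parameter choice that fails.
\begin{enumerate}
\item In a quantum Tanner code the qubits are the $|G|\Delta^2$ squares. One sector's generators are codewords of $C_A\otimes C_B$ (dimension $\rho_A\rho_B\Delta^2$) on the local view of each of the $2|G|$ vertices of $V_0$. The other sector's generators are codewords of $C_A^\perp\otimes C_B^\perp$ (dimension $(1-\rho_A)(1-\rho_B)\Delta^2$) on each of the $2|G|$ vertices of $V_1$.
\item The rate bound is therefore $1-2\rho_A\rho_B-2(1-\rho_A)(1-\rho_B)$, not $1-2(1-\rho_A)-2(1-\rho_B)$.
\item With both local rates near one, $\rho_A=\rho_B=1-\eta$, this bound equals $-(1-2\eta)^2<0$. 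The $C_A\otimes C_B$ generators alone number about $2n$, so nothing is guaranteed.
\end{enumerate}
The correct regime is the unbalanced one used in LZ's main theorem: $\rho_A=\rho$ and $\rho_B=1-\rho$. This gives rate at least $(1-2\rho)^2\ge 1-4\rho$, so $\rho=\xi/4$ suffices. It also removes your worry that LZ only handle one fixed rate, since their theorem is stated for every $\rho\in(0,1/2)$.

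What remains open in your write-up is the quantitative claim $\delta_{\rm LZ}(\xi)=\poly(\xi)$. In the unbalanced regime, $C_B$ and $C_A^\perp$ have rate $1-\rho$, so their relative distance is at most about $\rho$ by Singleton. LZ's relative-distance bound also shrinks polynomially in $\Delta$, and it holds only for $\Delta$ above a threshold. That threshold is set by the robustness of random local codes at rates $(\rho,1-\rho)$ and by the expansion requirement. You would need this threshold to be $\poly(1/\xi)$. The paper asserts this dependence without derivation, so tracking those constants is the missing piece; your proposal names it but does not supply it.
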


\section{Explicit parameter instantiation}

We now choose parameters.  The preceding sections separated the proof into a constant-size inner gadget, expander mixing, and outer logical distance.  Following the classical LCL derandomization of Jeronimo--Shagrithaya~\cite{JS26}, the final theorems are stated once, at the level of arbitrary bounded-entropy folded profile classes; no separate subspace-design instantiation is stated, since the design and distance guarantees are the kernel-profile specialization recorded in Remark~\ref{rsd:rem:explicit-design}.

The classes must be supplied uniformly in the inner parameters, which the construction chooses last.

\begin{definition}[Bounded-entropy profile ensemble]\label{rsd:def:profile-ensemble}
Fix \(q\) and \(r\in\N\).  A \emph{folded profile ensemble \(\mathscr F\) of entropy exponent at most \(h\)} assigns to every fold alphabet \(W\cong\F_q^s\) and every inner length \(d\) a folded profile class \(\mathscr F_{\le r}[W,d]\) as in Definition~\ref{rsd:def:bounded-profile-class}, whose entropy exponent is at most \(h\), with \(h\) independent of \((s,d)\).  The ensemble is \emph{explicit} if there is an algorithm that, on input \((s,d)\), outputs the finite lists of profiles in \(\mathscr F_t[W,d]\) for all \(t\le r\).  All uses below have constant \((s,d)\), so these lists are constant-size objects in the sense of the explicitness convention.  The kernel-profile ensemble \(\mathscr K\), assigning to every \((W,d)\) the class \(\mathscr K_{\le r}\) of Proposition~\ref{rsd:prop:kernel-specialization}, is explicit with entropy exponent \(r^2\).
\end{definition}

The construction is explicit once the following constant-size and outer ingredients are fixed.

\begin{enumerate}[leftmargin=2em]
\item \textbf{Degree and expander.}  First choose the AEL degree \(d\) large enough that an explicit \(d\)-regular bipartite expander has second singular value \(\lambda\) satisfying~\eqref{rsd:eq:lambda-condition}.  Near-Ramanujan constructions give \(\lambda=O(d^{-1/2})\), so it suffices to take
\[
  d=O\left(\frac{q^{r^2}}{\epsilon^2\delta_{\rm out}}\right).
\]
\item \textbf{Inner gadget.}  After \(d\) is fixed, choose a nested CSS pair \(C_{2,\rm in}\subseteq C_{1,\rm in}\subseteq (\F_q^s)^d\) whose \(X\)- and \(Z\)-relative pairs are AEL-robust folded LCL up to dimension \(r\) for the two supplied classes.  Theorem~\ref{rsd:thm:random-inner-css} proves such a pair exists with \(s=O((h+r^2)/\eps)\).  Because \((s,d)\) are constants and the ensembles are explicit, exhaustive search produces a deterministic inner gadget.
\item \textbf{Outer CSS family.}  An explicit CSS family \(D_2\subseteq D_1\subseteq (L_X^{\rm in})^n\) of quantum rate close to one and constant block \(X\)- and \(Z\)-logical distances over the logical alphabet produced by the inner gadget.
\end{enumerate}

The same symbol \(d\) denotes the inner length and the graph degree; this is intrinsic to the AEL transform, since one inner coordinate is placed on each incident edge of a left vertex.

\begin{theorem}[Explicit LCL CSS family]\label{rsd:thm:explicit}
Fix \(q\), \(r,h\in\N\), explicit folded profile ensembles \(\mathscr F^X,\mathscr F^Z\) of entropy exponent at most \(h\) (Definition~\ref{rsd:def:profile-ensemble}), a target quantum rate \(R\in(0,1)\), and \(\eps\in(0,1/10)\).  Suppose that, for every constant logical alphabet produced by the inner gadgets below, there is an explicit outer CSS family of quantum rate at least \(1-\eps\) and block logical distances at least \(\delta_{\rm out} n\) in both sectors, where \(\delta_{\rm out}=\Omega(\eps)\) is a known constant independent of \(n\).  Then there are a fold alphabet \(W\cong\F_q^s\) and an inner length \(d\) with fold size
\[
   S_{\rm fold}=sd
      =\poly(r,h,1/\eps)\,q^{O(r^2)},
\]
together with an explicit family of folded \(q\)-ary CSS codes with folded block lengths \(n\to\infty\), folded alphabet \(\F_q^{S_{\rm fold}}\), and quantum rate at least \(R-O(\eps)\), such that every code in the family, with respect to its AEL graph, is
\[
   \bigl(\alpha_X,\mathscr F^X_{\le r}[W,d];\;\alpha_Z,\mathscr F^Z_{\le r}[W,d]\bigr)\text{ folded LCL up to dimension } r
\]
in the sense of Definition~\ref{rsd:def:global}, for thresholds
\[
   \alpha_X,\alpha_Z
      \le \frac{1+R}{2}+O(\eps).
\]
\end{theorem}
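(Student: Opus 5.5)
The proof is a parameter instantiation of the AEL lifting theorem, Theorem~\ref{rsd:thm:ael-LCL-lift}, with inner gadgets supplied by Theorem~\ref{rsd:thm:random-inner-css}. First fix the inner normalizer rates. We want inner quantum rate $R_{\rm in}\ge R+\eps$ in the balanced regime, so set
\[
\rho_X=\rho_Z=\rho:=\frac{1+R+\eps}{2},
\]
giving $R_{\rm in}=2\rho-1=R+\eps$. Since $\eps<1/10$ and $R<1$, we have $\rho<1$; shrinking $\eps$ by a constant factor if needed ensures $\eps<1-\rho$. With an outer family of rate at least $1-\eps$, Proposition~\ref{rsd:prop:ael-rate} gives
\[
R_{\rm AEL}=R_{\rm in}R_{\rm out}\ge (R+\eps)(1-\eps)\ge R-O(\eps).
\]

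The ingredients are then chosen in the order listed before the theorem.

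\begin{enumerate}[leftmargin=2em]
\item \textbf{Degree and expander.} Fix the known constant $\delta_{\rm out}=\Omega(\eps)$ of the outer family. Choose $d$ so that an explicit near-Ramanujan $d$-regular bipartite graph has $\lambda=O(d^{-1/2})$ satisfying~\eqref{rsd:eq:lambda-condition}. This gives $d=O(q^{r^2}/(\eps^2\delta_{\rm out}))=q^{O(r^2)}\poly(1/\eps)$. If explicit Ramanujan graphs exist only for special degrees, round $d$ up to the next admissible degree, which costs only a constant factor.
\item \textbf{Inner gadget.} With $d$ fixed, choose $s=\lceil C(h+r^2+1)/\eps\rceil$, where $C$ is the constant from Theorem~\ref{rsd:thm:random-inner-css}, adjusting $s$ slightly so that $\rho sd$ and $(1-\rho)sd$ are integers. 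Apply Theorem~\ref{rsd:thm:random-inner-css} to the classes $\mathscr F^X_{\le r}[W,d]$ and $\mathscr F^Z_{\le r}[W,d]$, which have entropy exponent at most $h$ uniformly in $(s,d)$ by Definition~\ref{rsd:def:profile-ensemble}. This yields an inner CSS pair whose two sectors are AEL-robust at thresholds $\rho+\eps$. Because the ensembles are explicit and $(s,d)$ are constants, exhaustive search finds such a pair deterministically.
\item \textbf{Outer family.} Take the assumed explicit outer CSS family over $L_X^{\rm in}$, for instance Theorem~\ref{rsd:cor:lz-vector-outer} combined with Lemma~\ref{rsd:lem:scalar-extension}.
\item \textbf{Assembly.} Construction~\ref{rsd:constr:css-ael} assembles the code. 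It is explicit because the parity checks are obtained by composing constant-size inner data, the outer checks and the graph permutation. The fold size is $S_{\rm fold}=sd=\poly(r,h,1/\eps)\,q^{O(r^2)}$, as claimed.
\end{enumerate}

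Next apply Theorem~\ref{rsd:thm:ael-LCL-lift} with inner thresholds $\alpha_X^{\rm in}=\alpha_Z^{\rm in}=\rho+\eps$ and slack $\epsilon=\eps$. Its conclusion, rephrased via Definition~\ref{rsd:def:global}, is that the CSS-AEL code is folded LCL up to dimension $r$ at thresholds
\[
\alpha_X=\alpha_Z=\rho+2\eps=\frac{1+R}{2}+\frac{5\eps}{2}=\frac{1+R}{2}+O(\eps).
\]

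The main obstacle is bookkeeping rather than any new argument. First, the order of choices must avoid circularity: $\delta_{\rm out}$ has to be known before $d$, and $d$ before $s$. This works because the hypothesis makes $\delta_{\rm out}$ a constant independent of the logical alphabet $L_X^{\rm in}$ that the gadget later produces. Second, the Lemma~\ref{rsd:lem:containment} estimate used in the robustness theorem needs $\rho m$ to be an integer. Third, the profile classes must be indexed consistently: Theorem~\ref{rsd:thm:ael-LCL-lift} requires the left-star profiles to belong to $\mathscr F_{\dim B}[W,d]$, and this is exactly the class the gadget was made robust for, so the final guarantee has to be stated with respect to these same classes.
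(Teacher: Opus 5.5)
Your decomposition matches the paper's proof. You fix $d$ from the expander condition using the known $\delta_{\rm out}$, then fix $s$. You obtain the inner pair from Theorem~\ref{rsd:thm:random-inner-css} by exhaustive search, attach the outer family, and conclude with Theorem~\ref{rsd:thm:ael-LCL-lift} and Proposition~\ref{rsd:prop:ael-rate}.

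The gap is in the choice of rates near $R=1$. The statement allows any $R\in(0,1)$ together with any $\eps\in(0,1/10)$. Your claim that $\rho=(1+R+\eps)/2<1$ ``since $\eps<1/10$ and $R<1$'' is false: $R=0.95$, $\eps=0.09$ gives $\rho=1.02$, so no normalizer of that dimension exists. Even when $\rho<1$, running Theorem~\ref{rsd:thm:random-inner-css} with slack $\eps$ requires $\eps<1-\rho=(1-R-\eps)/2$, i.e.\ $R<1-3\eps$. Shrinking $\eps$ by an absolute constant does not repair this. With inner rate $R+c\eps$ and gadget slack $c'\eps$ you need $R<1-(c+2c')\eps$, so every such choice fails once $R$ is close enough to $1$. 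Letting the shrink factor depend on $1-R$ instead would make $s$, and hence $S_{\rm fold}$, grow with $1/(1-R)$. That contradicts the claimed bound $S_{\rm fold}=\poly(r,h,1/\eps)\,q^{O(r^2)}$.

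The missing idea is to take the inner rate \emph{below} $R$ in the high-rate regime, paying for it out of the $O(\eps)$ rate slack. You also need to decouple the robustness slack from the rate padding. The paper does this as follows:
\begin{itemize}
\item It sets $R_{\rm in}=R+\eps$ if $R\le 1-2\eps$ and $R_{\rm in}=1-\eps$ otherwise. This guarantees $1-R_{\rm in}\ge\eps$ and $R_{\rm in}\le R+\eps$ in all cases.
\item It uses gadget slack $\eps_0=\eps/8<(1-R_{\rm in})/2=1-\rho$ and AEL slack $\epsilon=\eps/8$.
\item In the capped case the rate is still at least $(1-\eps)^2\ge R-2\eps$.
\item In both cases the thresholds are at most $(1+R_{\rm in})/2+\eps/4\le(1+R)/2+3\eps/4$.
\end{itemize}
With this change, the rest of your argument goes through as written.
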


\begin{proof}
Set \(\epsilon=\eps/8\) and \(\eps_0=\eps/8\).  Choose the inner target rate
\[
   R_{\rm in}=\begin{cases}
      R+\eps, & R\le 1-2\eps,\\
      1-\eps, & R>1-2\eps,
   \end{cases}
   \qquad
   \rho_X=\rho_Z=\frac{1+R_{\rm in}}2.
\]
Then \(R_{\rm in}=R+O(\eps)\), \(1-R_{\rm in}\ge\eps\), and therefore
\[
   \eps_0<1-\rho_X=1-\rho_Z=\frac{1-R_{\rm in}}2,
\]
so the hypotheses of Theorem~\ref{rsd:thm:random-inner-css} can be met.  Choose the AEL degree first, taking
\[
   d=O\left(\frac{q^{r^2}}{\eps^2\delta_{\rm out}}\right)
\]
large enough for an explicit expander to satisfy~\eqref{rsd:eq:lambda-condition} with the above \(\epsilon\).  Then choose \(s=O((h+r^2)/\eps)\), increasing the constant if necessary so that \(s\ge C(h+r^2+1)/\eps_0\) for the constant \(C\) of Theorem~\ref{rsd:thm:random-robust-normalizer}, and so that the dimensions \(\rho_Xsd\) and \((1-\rho_Z)sd\) are integral.

Apply Theorem~\ref{rsd:thm:random-inner-css} with error parameter \(\eps_0\) and with the classes \(\mathscr F^X_{\le r}[W,d]\) and \(\mathscr F^Z_{\le r}[W,d]\).  This gives a constant-size inner CSS pair over \((\F_q^s)^d\) that is AEL-robust folded LCL for these classes with parameters
\[
   \alpha_X^{\rm in},\alpha_Z^{\rm in}
      \le \frac{1+R_{\rm in}}2+\eps_0.
\]
Because \((s,d)\) are independent of \(n\) and the ensembles are explicit, exhaustive search makes this inner pair explicit.  Now take the assumed explicit outer CSS family over the resulting inner logical alphabet, with rate \(R_{\rm out}\ge1-\eps\) and block logical distances at least \(\delta_{\rm out}n\) in both sectors.

Theorem~\ref{rsd:thm:ael-LCL-lift} shows that each code, with respect to its AEL graph, is folded LCL for the two classes up to dimension \(r\), at thresholds
\[
   \alpha_X,
   \alpha_Z
      \le \frac{1+R_{\rm in}}2+\eps_0+\epsilon
      \le \frac{1+R}{2}+O(\eps),
\]
where the last inequality also covers the case \(R>1-2\eps\).  By Proposition~\ref{rsd:prop:ael-rate}, the final quantum rate is
\[
   R_{\rm AEL}=R_{\rm in}R_{\rm out}
      \ge R-O(\eps).
\]
The fold size is \(sd\), and since \(\delta_{\rm out}=\Omega(\eps)\), the displayed bounds on \(s\) and \(d\) give \(S_{\rm fold}=\poly(r,h,1/\eps)q^{O(r^2)}\).
\end{proof}

\section{Consequences of Derandomization}
\cref{rsd:thm:explicit} shows that for folded profile ensembles \(\mathscr F^X,\mathscr F^Z\), we get explicit constructions of CSS codes that are $\bigl(\alpha_X,\mathscr F^X_{\le r}[W,d];\;\alpha_Z,\mathscr F^Z_{\le r}[W,d]\bigr)\text{ folded LCL up to dimension } r$ as long as the individual sector rates satisfy
\[
   \alpha_X,\alpha_Z
      \le \frac{1+R}{2}+O(\eps),
\]
where $R-O(\eps)$ is the quantum rate.
This immediately implies explicit constructions of quantum list decodable and list recoverable codes with optimal list sizes. We give details below.

\begin{theorem}\label{thm:exp-list-dec}
    Fix \(q\), \(r,h\in\N\), a rate $R_Q \in (0, 1)$ and explicit folded profile ensembles \(\mathscr F^X,\mathscr F^Z\) that contain the folded local profiles for $(\rho, L)$-list decodability, where $\rho$ satisfies:
    \[
        \rho < \frac{L}{L+1}\cdot \frac{(1-R_Q-\eps)}{2}
    \]
    for some constant $\eps>0$.
    Then, there exists an explicit family of folded q-ary CSS codes having quantum rate $R_Q-O(\eps)$ that are $(\rho, L)$-quantum list decodable, and over alphabet $\F_q^{S_{\rm fold}}$, where $S_{\rm fold}=\poly(L,1/\eps)\,q^{O(L^2)}$
\end{theorem}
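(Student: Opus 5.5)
The plan is to instantiate Theorem~\ref{rsd:thm:explicit} with witness dimension $r=L+1$ and the folded list-decoding profile classes, and then translate ``folded LCL with respect to the AEL graph'' into the CSS list-decoding condition of the definition in the introduction. First, I would fix the folded local profiles for the complement of $(\rho,L)$-list decodability. A bad configuration consists of $L+1$ normalizer representatives $c_1,\dots,c_{L+1}$, which come from translating a common-syndrome family by a reference error. They are pairwise distinct modulo the stabilizer and each lies within block distance $\rho n$ of a common center. After the translation, the center becomes $-a^{(1)}$ (a single local list, $\ell=1$). Following \cite{LMS25,JS26}, I would encode this by a coefficient space $B$ of dimension $b=L+1$ (homogenizing the center as in the LCL formalism). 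At each right block the profile $\mathfrak V_i$ records which columns agree with the center symbol. The number of agreement patterns per coordinate is at most $2^{L+1}$, which gives entropy exponent $h=O(L)$ independent of $(s,d)$, so the ensemble is explicit. Pairwise logical distinctness is the condition $\Lambda_S(M)\in\LDist_b$. By Lemma~\ref{qlcl:lem:logical-distinctness} it is captured by the flag.

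Second, I would compute the threshold. By Corollary~\ref{cor:list-dec-joint}, the classical and joint rate thresholds equal $1-\frac{L+1}{L}\rho$. The hypothesis $\rho<\frac{L}{L+1}\cdot\frac{1-R_Q-\eps}{2}$ is equivalent to
\[
\frac{1+R_Q}{2}+\frac{\eps}{2}<1-\frac{L+1}{L}\rho .
\]
So the sector thresholds $\alpha_X,\alpha_Z\le\frac{1+R_Q}{2}+O(\eps)$ from Theorem~\ref{rsd:thm:explicit} lie strictly below the critical value, after shrinking $\eps$ by a constant factor. In the global profile language this means every bad list-decoding profile has $\widehat\Phi^{W,\mathrm{edge}}_{\alpha}(U,\mathfrak V)<0$ on its minimal bad subprofile. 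Moreover, the same holds for every left-star profile, since left stars of a right profile are again agreement profiles in the class. The fold size bound $\poly(r,h,1/\eps)q^{O(r^2)}=\poly(L,1/\eps)q^{O(L^2)}$ then follows directly.

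Third, I would reduce a joint list-decoding violation to a single-sector witness excluded by Definition~\ref{rsd:def:global}. Split each $c_j=(a_j,b_j)$ into its $X$ and $Z$ parts. Following Theorem~\ref{thm:sub-joint-sector-preservation} and the supermodularity argument of Theorem~\ref{thm:sub-asymmetric-sector-reduction}, a joint witness with $T_X+T_Z\in\LDist_b$ and negative classical potential yields, in one of the two sectors, a flag whose potential is negative. I would then pass to a quotient minor, as in Lemma~\ref{rsd:lem:minimal-bad} and Lemma~\ref{rsd:lem:folded-minor}, to obtain an injective-logical map $M:B'\to C_s$ with $M(B')\cap S_s=0$ whose profile is bad on all nonzero subspaces. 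Such a map contradicts the folded LCL property.

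The main obstacle is this last step. The violation involves a block-weight condition on the pair $(a_i,b_i)$ jointly, whereas the guarantees are per sector. I would handle it by noting that $(a_i,b_i)$ agreeing with the center implies that both components agree. Hence each sector inherits a profile at least as constrained, and the classical bound transfers sector-wise. I would also have to check that the minor passage keeps the profile inside the explicitly listed class, which may require closing the ensemble under quotient minors before fixing $h$.
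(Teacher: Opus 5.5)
Your first two steps are essentially the paper's entire proof. The paper takes $r=L+1$, adds slack to the threshold $1-\frac{L+1}{L}\rho$ of Corollary~\ref{cor:list-dec-joint} so that it exceeds $\alpha_X,\alpha_Z\le\frac{1+R_Q}{2}+O(\eps)$, and invokes Theorem~\ref{rsd:thm:explicit}. The gap is in your third step, which translates ``folded LCL with respect to $G$'' (Definition~\ref{rsd:def:global}) into quantum list decodability. The paper leaves this step implicit, and the tools you name for it do not work for the explicit code.

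\emph{Flag thresholds are the wrong tool.} The thresholds $R^{\rm flag}_s$ behind Theorem~\ref{thm:sub-asymmetric-sector-reduction} describe the random ensemble, where each stabilizer direction pays $\codim S$ through $\kappa_s$. The flagged potential equals the flag-free one minus $(m_1-m_0)(\dim U-\dim T)$. So its negativity says nothing about what the AEL code actually excludes, namely witnesses with $\widehat\Phi^{W,\mathrm{edge}}_\alpha<0$ on every nonzero subspace. Stabilizer directions are in fact cheap in the AEL code: a vector of $S_X^{\rm in}$ placed at one left vertex is a stabilizer of block weight at most $d$.

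\emph{Minimal-bad passage.} Lemma~\ref{rsd:lem:minimal-bad} starts from $A\cap S=0$, so its quotients stay injective-logical. For a sector witness with $T_s\ne U_s$, the largest maximizer $W_s$ of the potential below $U_s$ may contain $T_s$. Then $U_s/W_s$ is all-bad but purely stabilizer, and it is not excluded.

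\emph{Sector splitting.} ``The classical bound transfers sector-wise'' fails because neither $T_X$ nor $T_Z$ need lie in $\LDist_b$: distinctness can be split between the two sectors.

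One way to close the step is as follows. For the common agreement profile, use the single supermodular function $\Phi=\widehat\Phi^{W,\mathrm{edge}}_\alpha(\cdot,\mathfrak V)$ with $\alpha=\max(\alpha_X,\alpha_Z)$; all-bad at $\alpha$ implies all-bad at each $\alpha_s$. Let $W_s$ be the largest maximizer of $\Phi$ over subspaces of $U_s$, so that $U_s/W_s$ is all-bad.

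Suppose $T_X\le W_X$ and $T_Z\le W_Z$. Then $P=W_X+W_Z\supseteq T_X+T_Z$ lies in $\LDist_b$, so the classical threshold gives some $Y<P$ with $\Phi(Y)>\Phi(P)$. Let $Y^*$ be the largest maximizer of $\Phi$ below $P$; then $Y^*\ne P$. Supermodularity gives $\Phi(Y^*+W_s)\ge\Phi(Y^*)+\Phi(W_s)-\Phi(Y^*\cap W_s)\ge\Phi(Y^*)$. Hence $W_X,W_Z\le Y^*$, so $Y^*=P$, a contradiction.

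So in some sector the quotient minor by $W_s$ (Lemma~\ref{rsd:lem:folded-minor}) is all-bad and has nonzero logical image. It need not meet $S_s$ trivially, so Definition~\ref{rsd:def:global}, and hence Theorem~\ref{rsd:thm:explicit} as stated, does not exclude it. However, the proof of Theorem~\ref{rsd:thm:ael-LCL-lift} only uses $\bar T(B)\not\subseteq D_2$, so that argument does exclude it. Combine this with closing the profile class under quotient minors, as you anticipated; this raises the entropy exponent by only $O(r^2)$ and completes the argument.
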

\begin{proof}
    We take $r=L+1$ because every bad witness for $(\rho, L)$-quantum list decodability has locality at most $L+1$. Adding some slack to the rate $R_Q$ in \eqref{eq:list-dec-joint}, we get that robust inner CSS gadgets exist at rate $R_Q$ for \(\mathscr F^X,\mathscr F^Z\). Thus, we can invoke Theorem~\ref{rsd:thm:explicit} and obtain explicit $(\rho, L)$-quantum list decodable codes.
\end{proof}

\begin{theorem}\label{thm:exp-list-rec}
    Fix \(q\), \(r,h\in\N\), a rate $R_Q \in (0, 1)$ and explicit folded profile ensembles \(\mathscr F^X,\mathscr F^Z\) that contain the folded local profiles for $(\rho, \ell, L)$-list recoverability, where $\rho$ satisfies:
    \[
        \rho < \frac{(1-R_Q-\eps)}{2}
    \]
    for some constant $\eps>0$, and also
    \[
        L \le \left( \frac{\ell}{R+\eps}\right)^{(R+\eps)/\eps}.
    \]
    Then, there exists an explicit family of folded q-ary CSS codes having quantum rate $R_Q-O(\eps)$ that are $(\rho, \ell, L)$-quantum list decodable, and over alphabet $\F_q^{S_{\rm fold}}$, where $S_{\rm fold}=\poly(\ell, L,1/\eps)\,q^{O(L^2)}$.
\end{theorem}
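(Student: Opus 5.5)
The plan follows the template of Theorem~\ref{thm:exp-list-dec}: reduce to Theorem~\ref{rsd:thm:explicit}, with the rate slack supplied by the classical list-recovery threshold through Theorem~\ref{thm:sub-joint-sector-preservation} and Corollary~\ref{cor:list-rec-joint}.

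\textbf{Step 1: encode the bad event as a folded LCL witness.} By the reduction to the CSS quotient, failure of $(\rho,\ell,L)$-QLR means there are $L+1$ elements of $\mathcal N_Q$ with a common syndrome, pairwise distinct modulo $\mathcal S_Q$, each within disagreement $\rho n$ of the translated lists $T_i$.
\begin{itemize}
\item Translating by a reference error and splitting into the $X$ and $Z$ components gives matrices $A_X,A_Z$ with $b=L+1$ columns.
\item Their joint logical row space $T_X+T_Z$ lies in $\LDist_{L+1}$, by Lemma~\ref{qlcl:lem:logical-distinctness}.
\item The local profiles are the standard list-recovery profiles: at each coordinate, the allowed row space is spanned by the affine constraints forcing $c_i\in T_i$ on the agreeing columns.
\item These profiles have entropy exponent $h=O(\log\ell)+O(L)$, independent of $q$.
\end{itemize}
So we may take $r=L+1$, and the hypothesis that $\mathscr F^X,\mathscr F^Z$ contain these profiles is exactly what is needed.

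\textbf{Step 2: verify the rate condition.} We need the minimal bad witnesses to have negative edge potential at thresholds $\alpha_X,\alpha_Z\le (1+R_Q)/2+O(\eps)$. By Corollary~\ref{cor:list-rec-joint}, under the stated list-size bound we have $R^{\rm joint}(\cF_{\rm LR})=R_{\rm cl}(\cF_{\rm LR})\ge 1-\rho-\eps$. The balanced condition $(1+R_Q)/2+\delta\le R_{\rm cl}$ then follows once
\[
\rho<1-\frac{1+R_Q}{2}-O(\eps)=\frac{1-R_Q}{2}-O(\eps),
\]
which is the hypothesis after rescaling $\eps$. Via the equivalence~\eqref{eq:precise-bi-implies}, this gives $\lambda_s\le 1-\Omega(\eps)$ for every flagged witness in the relevant class. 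This is precisely the input to Theorem~\ref{rsd:thm:random-robust-normalizer}, and hence to Theorem~\ref{rsd:thm:random-inner-css}. Robust inner gadgets therefore exist and can be found by exhaustive search.

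\textbf{Step 3: conclude.} Apply Theorem~\ref{rsd:thm:explicit} with the Leverrier--Z\'emor outer family (Theorem~\ref{rsd:cor:lz-vector-outer}). This yields the explicit qLDPC family of rate $R_Q-O(\eps)$ and fold size $\poly(r,h,1/\eps)q^{O(r^2)}=\poly(\ell,L,1/\eps)q^{O(L^2)}$. Since every global bad list-recovery configuration contains a minimal bad witness obeying a right profile from the class, the absence of such witnesses gives QLR, as in the proof of Theorem~\ref{thm:exp-list-dec}.

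\textbf{Main obstacle.} The hardest step is the passage from a global list-recovery violation to a right profile whose nonzero subspaces all have negative edge potential, with left-star profiles lying in the class. The approach is to use the LCL minimal-witness argument of Lemma~\ref{rsd:lem:minimal-bad}: quotient by a potential-maximizing subspace of maximal dimension. One must then check that the quotiented witness keeps logical distinctness, i.e.\ remains injective-logical. The guarantee that $T_X+T_Z\in\LDist$, together with Theorem~\ref{thm:sub-asymmetric-sector-reduction}, is what makes one of the two sectors necessarily contain such a witness.
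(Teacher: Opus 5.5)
Your route is the paper's: $r=L+1$, rate slack from Corollary~\ref{cor:list-rec-joint}, inner gadgets from Theorem~\ref{rsd:thm:random-inner-css}, then Theorem~\ref{rsd:thm:explicit}. The paper's proof consists of exactly these three moves. Step~2, however, puts the threshold in the wrong place, and the step you defer as the ``main obstacle'' is a genuine gap that your sketch does not close. The paper simply invokes Theorem~\ref{rsd:thm:explicit} at that point, so it does not close it either.

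On Step~2: Theorem~\ref{rsd:thm:random-robust-normalizer} takes no $\lambda$-input; robust gadgets at threshold $\rho+\eps$ exist for every bounded-entropy class. The bound $R_{\rm cl}(\cF_{\rm LR})\ge 1-\rho-\eps$ is needed only at the other end: to show that every QLR violation of the final code yields a witness forbidden by Definition~\ref{rsd:def:global}. Also, $R^{\rm joint}=R_{\rm cl}$ only says that $\max\{R^{\rm flag}_X,R^{\rm flag}_Z\}$ is large for each joint witness. So $\lambda_s\le 1-\Omega(\eps)$ holds in one sector per witness, not for every flagged witness.

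On the conversion: a QLR violation gives a common-profile joint witness with flags $(U_X,T_X),(U_Z,T_Z)$ and $T_X+T_Z\in\LDist_{L+1}$, and in general $T_s\neq U_s$. Theorem~\ref{rsd:thm:ael-LCL-lift} forbids only injective maps $M$ with $M(B)\cap S=0$ whose profile is negative on every nonzero subspace at the unflagged threshold. Theorem~\ref{thm:sub-asymmetric-sector-reduction} and the slopes $\lambda_s$ use the random-ensemble costs $\kappa_s$, in which stabilizer directions pay $\codim S$. In the AEL code, $C_{2,\rm AEL}\supseteq\Pi_G((S_X^{\rm in})^{V_L})$ contains stabilizers of block weight at most $d$, so flagged bounds say nothing about it.

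What you actually need is a sector $s$ and a minor $W'<U_s$ with $W'+T_s=U_s$ and $a_\cV(U_s)-a_\cV(W')<(1-\alpha)n(\dim U_s-\dim W')$. Lemma~\ref{rsd:lem:minimal-bad} would then finish the argument. The supermodularity chain does not supply such a $W'$, for two reasons:
\begin{itemize}
\item The classical bad minor $W$ of $U_X+U_Z$ may contain $T_X+T_Z$. Then $U_X\cap W\supseteq T_X$ and $U_Z\cap(W+U_X)\supseteq T_Z$, so both induced quotients are logically trivial.
\item A bad minor $W_0<T_s$ does not lift to $U_s$. For a complement $C$ of $T_s$ in $U_s$, supermodularity gives $a_\cV(U_s)-a_\cV(W_0+C)\ge a_\cV(T_s)-a_\cV(W_0)$ with the same dimension drop, which is the wrong direction.
\end{itemize}
Finally, the quotiented left-star profiles must lie in $\mathscr F^X,\mathscr F^Z$. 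The hypothesis that the ensembles ``contain the list-recovery profiles'' does not guarantee this. You must assume or arrange closure under these quotients, with entropy exponent still independent of $s$.
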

\begin{proof}
    We take $r=L+1$ because every bad witness for $(\rho, \ell, L)$-quantum list recoverability has locality at most $L+1$. Taking the bound on rate $R_Q$ in \eqref{eq:list-rec-joint}, we get that robust inner CSS gadgets exist at rate $R_Q$ for \(\mathscr F^X,\mathscr F^Z\). Thus, we can invoke Theorem~\ref{rsd:thm:explicit} and obtain explicit $(\rho, \ell, L)$-quantum list recoverable codes.
\end{proof}

\begin{remark}[Kernel-profile specialization: designs and distance]\label{rsd:rem:explicit-design}
The subspace-design and distance guarantees are the kernel-profile case of the theorems above and are not stated separately.  Instantiate \(\mathscr F^X=\mathscr F^Z=\mathscr K\), the kernel-profile ensemble of entropy exponent \(r^2\).  If a code in the resulting family failed the \(\alpha_X\)-relative-design condition in the \(X\)-sector up to dimension \(r\), then, exactly as in the proof of Theorem~\ref{rsd:thm:ael-lift}, Lemma~\ref{rsd:lem:minimal-bad} and Proposition~\ref{rsd:prop:lcl-contained-profile} would produce an injective-logical map \(M\) obeying a kernel profile whose left-star profiles lie in the kernel-profile class and whose potential---equal to the subspace-profile potential by Lemma~\ref{rsd:lem:kernel-mass} and Proposition~\ref{rsd:prop:kernel-specialization}---is negative on every nonzero subspace at threshold \(\alpha_X\).  This is forbidden by the LCL property, and the \(Z\)-sector is identical.  Hence the families of Theorems~\ref{rsd:thm:explicit}, instantiated with \(\mathscr K\), are \((\alpha_X,\alpha_Z)\)-CSS subspace designs up to dimension \(r\) with \(\alpha_X,\alpha_Z\le(1+R)/2+O(\eps)\), and Proposition~\ref{rsd:prop:distance} gives folded block logical distances
\[
   \frac{d_X}{n},\frac{d_Z}{n}\ge\frac{1-R}{2}-O(\eps),
\]
measured in the folded coordinates of alphabet \(\F_q^{S_{\rm fold}}\).
\end{remark}

\section*{AI Disclosure}

The research direction and conceptual aspects of this work are all human.
The authors used versions of ChatGPT from 5.3 to 5.6 Pro for research exploration, 
proof writing, and editorial assistance with the manuscript. The authors take responsibility 
for the mathematical claims, proofs, and citations.

\bibliographystyle{alpha}
\bibliography{references}

\newcommand{\etalchar}[1]{$^{#1}$}
\begin{thebibliography}{BCDZ26b}

\bibitem[AEL95]{AEL95}
N.~Alon, J.~Edmonds, and M.~Luby.
\newblock Linear time erasure codes with nearly optimal recovery.
\newblock In {\em Proceedings of IEEE 36th Annual Foundations of Computer
  Science}, pages 512--519, 1995.

\bibitem[AGL24]{AGL24}
Omar Alrabiah, Venkatesan Guruswami, and Ray Li.
\newblock Randomly punctured reed-solomon codes achieve list-decoding capacity
  over linear-sized fields.
\newblock In Bojan Mohar, Igor Shinkar, and Ryan O'Donnell, editors, {\em
  Proceedings of the 56th Annual {ACM} Symposium on Theory of Computing, {STOC}
  2024, Vancouver, BC, Canada, June 24-28, 2024}, pages 1458--1469. {ACM},
  2024.

\bibitem[BCDZ26a]{BCDZ26b}
Joshua Brakensiek, Yeyuan Chen, Manik Dhar, and Zihan Zhang.
\newblock Combinatorial bounds for list recovery via discrete brascamp-lieb
  inequalities.
\newblock In Aditya Bhaskara and Artur Czumaj, editors, {\em Proceedings of the
  58th Annual {ACM} Symposium on Theory of Computing, {STOC} 2026, Salt Lake
  City, UT, USA, June 22-26, 2026}, pages 365--376. {ACM}, 2026.

\bibitem[BCDZ26b]{BCDZ26a}
Joshua Brakensiek, Yeyuan Chen, Manik Dhar, and Zihan Zhang.
\newblock From random to explicit via subspace designs with applications to
  local properties and matroids.
\newblock In Aditya Bhaskara and Artur Czumaj, editors, {\em Proceedings of the
  58th Annual {ACM} Symposium on Theory of Computing, {STOC} 2026, Salt Lake
  City, UT, USA, June 22-26, 2026}, pages 619--630. {ACM}, 2026.

\bibitem[BE21]{BreuckmannEberhardt2021}
Nikolas~P. Breuckmann and Jens~N. Eberhardt.
\newblock Balanced product quantum codes.
\newblock {\em IEEE Transactions on Information Theory}, 67(10):6653--6674,
  2021.

\bibitem[BGG24]{BGG24}
Thiago Bergamaschi, Louis Golowich, and Sam Gunn.
\newblock Approaching the quantum singleton bound with approximate error
  correction.
\newblock In Bojan Mohar, Igor Shinkar, and Ryan O'Donnell, editors, {\em
  Proceedings of the 56th Annual {ACM} Symposium on Theory of Computing, {STOC}
  2024, Vancouver, BC, Canada, June 24-28, 2024}, pages 1507--1516. {ACM},
  2024.

\bibitem[BJM{\etalchar{+}}24]{BJMST24}
Thiago Bergamaschi, Fernando~Granha Jeronimo, Tushant Mittal, Shashank
  Srivastava, and Madhur Tulsiani.
\newblock List decodable quantum {LDPC} codes.
\newblock {\em CoRR}, abs/2411.04306, 2024.

\bibitem[CS96]{CSS1}
A~Robert Calderbank and Peter~W Shor.
\newblock Good quantum error-correcting codes exist.
\newblock {\em Physical Review A}, 54(2):1098, 1996.

\bibitem[DHLV23]{DHLV23}
Irit Dinur, Min{-}Hsiu Hsieh, Ting{-}Chun Lin, and Thomas Vidick.
\newblock Good quantum {LDPC} codes with linear time decoders.
\newblock In Barna Saha and Rocco~A. Servedio, editors, {\em Proceedings of the
  55th Annual {ACM} Symposium on Theory of Computing, {STOC} 2023, Orlando, FL,
  USA, June 20-23, 2023}, pages 905--918. {ACM}, 2023.

\bibitem[EKZ20]{EvraKaufmanZemor2020}
Shai Evra, Tali Kaufman, and Gilles Z{\'e}mor.
\newblock Decodable quantum {LDPC} codes beyond the $\sqrt{n}$ distance barrier
  using high-dimensional expanders.
\newblock In {\em 61st IEEE Annual Symposium on Foundations of Computer Science
  (FOCS)}, pages 218--227, 2020.

\bibitem[GG26]{GG26}
Rohan Goyal and Venkatesan Guruswami.
\newblock Optimal proximity gaps for subspace-design codes and (random)
  reed-solomon codes.
\newblock In Aditya Bhaskara and Artur Czumaj, editors, {\em Proceedings of the
  58th Annual {ACM} Symposium on Theory of Computing, {STOC} 2026, Salt Lake
  City, UT, USA, June 22-26, 2026}, pages 1558--1568. {ACM}, 2026.

\bibitem[GGH26]{GGH26}
Rohan Goyal, Venkatesan Guruswami, and Jun-Ting Hsieh.
\newblock Explicit constant-alphabet subspace design codes, 2026.

\bibitem[GI02]{GI02}
Venkatesan Guruswami and Piotr Indyk.
\newblock Near-optimal linear-time codes for unique decoding and new
  list-decodable codes over small alphabets.
\newblock pages 812--821, 2002.

\bibitem[GJS27]{GayJeronimoShukul27}
William Gay, Fernando~Granha Jeronimo, and Abhi Shukul.
\newblock Explicit capacity-achieving quantum {LDPC} codes list decodable in
  near-linear time.
\newblock In {\em Proceedings of the 2027 Annual ACM-SIAM Symposium on Discrete
  Algorithms (SODA)}, 2027.
\newblock To appear.

\bibitem[GK13]{GK13}
Venkatesan Guruswami and Swastik Kopparty.
\newblock Explicit subspace designs.
\newblock In {\em 54th Annual {IEEE} Symposium on Foundations of Computer
  Science, {FOCS} 2013, Berkeley, CA, USA, October, 26-29, 2013}, pages
  608--617. {IEEE} Computer Society, 2013.

\bibitem[Got14]{Gottesman2014}
Daniel Gottesman.
\newblock Fault-tolerant quantum computation with constant overhead.
\newblock {\em Quantum Information and Computation}, 14(15--16):1338--1371,
  2014.

\bibitem[GPT23]{GuPattisonTang2023}
Shouzhen Gu, Christopher~A. Pattison, and Eugene Tang.
\newblock An efficient decoder for a linear distance quantum {LDPC} code.
\newblock In {\em Proceedings of the 55th Annual ACM Symposium on Theory of
  Computing (STOC)}, pages 919--932. ACM, 2023.

\bibitem[GX13]{GX13}
Venkatesan Guruswami and Chaoping Xing.
\newblock List decoding reed-solomon, algebraic-geometric, and gabidulin
  subcodes up to the singleton bound.
\newblock In Dan Boneh, Tim Roughgarden, and Joan Feigenbaum, editors, {\em
  Symposium on Theory of Computing Conference, STOC'13, Palo Alto, CA, USA,
  June 1-4, 2013}, pages 843--852. {ACM}, 2013.

\bibitem[HHO21]{HastingsHaahODonnell2021}
Matthew~B. Hastings, Jeongwan Haah, and Ryan O'Donnell.
\newblock Fiber bundle codes: Breaking the {$N^{1/2}\operatorname{polylog}(N)$}
  barrier for quantum {LDPC} codes.
\newblock In {\em Proceedings of the 53rd Annual ACM SIGACT Symposium on Theory
  of Computing (STOC)}, pages 1276--1288. ACM, 2021.

\bibitem[JMST25]{JMST25}
Fernando~Granha Jeronimo, Tushant Mittal, Shashank Srivastava, and Madhur
  Tulsiani.
\newblock Explicit codes approaching generalized singleton bound using
  expanders.
\newblock 2025.

\bibitem[JS26]{JS26}
Fernando~Granha Jeronimo and Nikhil Shagrithaya.
\newblock Probabilistic guarantees to explicit constructions: Local properties
  of linear codes.
\newblock In Aditya Bhaskara and Artur Czumaj, editors, {\em Proceedings of the
  58th Annual {ACM} Symposium on Theory of Computing, {STOC} 2026, Salt Lake
  City, UT, USA, June 22-26, 2026}, pages 806--813. {ACM}, 2026.

\bibitem[Kit03]{Kitaev2003}
A.~Yu. Kitaev.
\newblock Fault-tolerant quantum computation by anyons.
\newblock {\em Annals of Physics}, 303(1):2--30, 2003.

\bibitem[KMRS17]{KMRS17}
Swastik Kopparty, Or~Meir, Noga Ron{-}Zewi, and Shubhangi Saraf.
\newblock High-rate locally correctable and locally testable codes with
  sub-polynomial query complexity.
\newblock {\em J. {ACM}}, 64(2):11:1--11:42, 2017.

\bibitem[KRSW23]{KRZSW23}
Swastik Kopparty, Noga Ron{-}Zewi, Shubhangi Saraf, and Mary Wootters.
\newblock Improved list decoding of folded reed-solomon and multiplicity codes.
\newblock {\em {SIAM} J. Comput.}, 2023.

\bibitem[KT21]{KaufmanTessler2021}
Tali Kaufman and Ran~J. Tessler.
\newblock New cosystolic expanders from tensors imply explicit quantum {LDPC}
  codes with {$\Omega(\sqrt{n}\log^k n)$} distance.
\newblock In {\em Proceedings of the 53rd Annual ACM SIGACT Symposium on Theory
  of Computing (STOC)}, pages 1317--1329. ACM, 2021.

\bibitem[LMS25]{LMS25}
Matan Levi, Jonathan Mosheiff, and Nikhil Shagrithaya.
\newblock Random reed-solomon codes and random linear codes are locally
  equivalent.
\newblock In {\em 66th {IEEE} Annual Symposium on Foundations of Computer
  Science, {FOCS} 2025, Sydney, Australia, December 14-17, 2025}, pages
  2097--2131. {IEEE}, 2025.

\bibitem[LS08]{LS08}
Debbie Leung and Graeme Smith.
\newblock Communicating over adversarial quantum channels using quantum list
  codes.
\newblock {\em IEEE Transactions on Information Theory}, 54(2):883--887, 2008.

\bibitem[LS25]{LS25}
Ray Li and Nikhil Shagrithaya.
\newblock Near-optimal list-recovery of linear code families.
\newblock {\em CoRR}, abs/2502.13877, 2025.

\bibitem[LZ22]{LZ22}
Anthony Leverrier and Gilles Z{\'{e}}mor.
\newblock Quantum tanner codes.
\newblock In {\em 63rd {IEEE} Annual Symposium on Foundations of Computer
  Science, {FOCS} 2022, Denver, CO, USA, October 31 - November 3, 2022}, pages
  872--883. {IEEE}, 2022.

\bibitem[PK22a]{PK22}
Pavel Panteleev and Gleb Kalachev.
\newblock Asymptotically good quantum and locally testable classical {LDPC}
  codes.
\newblock In Stefano Leonardi and Anupam Gupta, editors, {\em {STOC} '22: 54th
  Annual {ACM} {SIGACT} Symposium on Theory of Computing, Rome, Italy, June 20
  - 24, 2022}, pages 375--388. {ACM}, 2022.

\bibitem[PK22b]{PanteleevKalachev2022}
Pavel Panteleev and Gleb Kalachev.
\newblock Quantum {LDPC} codes with almost linear minimum distance.
\newblock {\em IEEE Transactions on Information Theory}, 68(1):213--229, 2022.

\bibitem[Ste96]{CSS2}
Andrew~M Steane.
\newblock Error correcting codes in quantum theory.
\newblock {\em Physical Review Letters}, 77(5):793, 1996.

\bibitem[TZ14]{TillichZemor2014}
Jean-Pierre Tillich and Gilles Z{\'e}mor.
\newblock Quantum {LDPC} codes with positive rate and minimum distance
  proportional to the square root of the blocklength.
\newblock {\em IEEE Transactions on Information Theory}, 60(2):1193--1202,
  2014.

\end{thebibliography}
\end{document}